\keywords{Satisfiability, two-variable fragments, guarded fragment, counting quantifiers, integrity constraints,
key constraints, path-functional dependencies.}
\newtheoremstyle{thmC}%
  {6pt}
  {6pt}
  {\itshape}
  {}
  {\bfseries}
  {{\bfseries .}}
  {5pt plus 1pt minus 1pt}
  {\thmname{#1} \thmnumber{#2} \thmnote{\normalfont#3}}
\theoremstyle{thmC}
\newtheorem{lemC}[thm]{Lemma}
\newcommand{\PFD}{\protect\reflectbox{P}}
\newcommand{\cO}{\ensuremath{D}}  
\newcommand{\cE}{\ensuremath{\mathcal{E}}}  
\newcommand{\cL}{\ensuremath{\mathcal{L}}}  
\newcommand{\gA}{\ensuremath{\mathfrak{A}}}  
\newcommand{\gB}{\ensuremath{\mathfrak{B}}}  
\newcommand{\br}{\ensuremath{\bar{r}}}  
\newcommand{\bt}{\ensuremath{\bar{t}}}
\newcommand{\bs}{\ensuremath{\bar{s}}}
\newcommand{\ba}{\ensuremath{\bar{a}}}
\newcommand{\bb}{\ensuremath{\bar{b}}}
\newcommand{\bc}{\ensuremath{\bar{c}}}
\newcommand{\bdd}{\ensuremath{\bar{d}}}
\newcommand{\baf}{\ensuremath{\bar{f}}}
\newcommand{\bag}{\ensuremath{\bar{g}}}
\newcommand{\bah}{\ensuremath{\bar{h}}}
\newcommand{\xO}{\mathtt{0}}
\newcommand{\xI}{\mathtt{1}}
\newcommand{\s}{\ensuremath{\texttt{s}}}    
\newcommand{\sO}{\ensuremath{\texttt{s0}}}  
\newcommand{\sI}{\ensuremath{\texttt{s1}}}
\renewcommand{\t}{\ensuremath{\texttt{t}}}  
\newcommand{\tO}{\ensuremath{\texttt{t0}}}
\newcommand{\tI}{\ensuremath{\texttt{t1}}}
\newcommand{\lps}{\ensuremath{\Lambda_{\pi, \s}}}
\newcommand{\mpt}{\ensuremath{M_{\pi, \t}}}
\newcommand{\GF}{\ensuremath{\mathcal{G\:\!\!F}}}
\newcommand{\GC}{\ensuremath{\mathcal{GC}^2}}
\newcommand{\GCD}{\ensuremath{\mathcal{GC}^2\mathcal{D}}}
\newcommand{\ALCQI}{\ensuremath{\mathcal{ALC\!QI}}}
\newcommand{\GCDK}{\ensuremath{\mathcal{GC}^2\mathcal{DK}}}
\newcommand{\tp}[2]{\ensuremath{\mathrm{tp}_{#1}[#2]}}
\newcommand{\tpA}[1]{\tp{\gA}{#1}}
\newcommand{\tpStart}[1]{\ensuremath{\mathrm{tp}_1(#1)}}
\newcommand{\tpEnd}[1]{\ensuremath{\mathrm{tp}_2(#1)}}
\renewcommand{\sp}[3]{\ensuremath{\mathrm{sp}_{#1}^{#2}[#3]}}
\newcommand{\spA}[2]{\sp{#1}{\gA}{#2}}
\newcommand{\tl}[3]{\ensuremath{\mathrm{tl}_{#1}^{#2}[#3]}}
\newcommand{\tlA}[2]{\tl{#1}{\gA}{#2}}
\newcommand{\x}{\ensuremath{\mathtt{x}}}
\newcommand{\y}{\ensuremath{\mathtt{y}}}
\newcommand{\bitb}{\ensuremath{\mathtt{b}}}
\newcommand{\branch}[2]{\ensuremath{\mathrm{branch}_{#2}\langle #1 \rangle}}
\newcommand{\fan}[1]{\ensuremath{\mathrm{fan}\langle #1 \rangle}}
\newcommand{\isth}[1]{\ensuremath{\mathrm{isth}\langle #1 \rangle}}
\newcommand{\dst}[1]{\ensuremath{\mathrm{dst}( #1 )}}
\newcommand{\frev}{\ensuremath{f^{-1}}}
\newcommand{\seq}{\ensuremath{\bar{\mathbf r}_\kappa}}
\renewcommand{\phi}{\varphi}
\newcommand{\im}{\ensuremath{1 \le i \le m}}
\renewcommand{\bf}{\ensuremath{\mathbf{f}}}
\newcommand{\bg}{\ensuremath{\mathbf{g}}}
\newcommand{\bv}{\ensuremath{\bm{v}}}
\newcommand{\bu}{\ensuremath{\bm{u}}}
\newcommand{\bw}{\ensuremath{\bm{w}}}
\newcommand{\bO}{\ensuremath{\bm{0}}}
\newcommand{\bC}{\ensuremath{\bm{C}}}
\newcommand{\hy}{\ensuremath{\hat{y}}}
\newcommand{\hz}{\ensuremath{\hat{z}}}
\newcommand{\p}{\ensuremath{\textsc{PTime}}}
\newcommand{\np}{\ensuremath{\textsc{NPTime}}}
\newcommand{\dexp}{\ensuremath{\textsc{ExpTime}}}
\newcommand{\nexp}{\ensuremath{\textsc{NExpTime}}}
\newcommand{\Ccom}{\mbox{CC}}
\newcommand{\CcomS}{\mbox{CCS}}
\newcommand{\CcomT}{\mbox{CCT}}
\newcommand{\ignore}[1]{}
\newcommand{\sizeOf}[1]{\ensuremath{|#1|}}
\def\vertex(#1, #2); {
  \draw[fill] (#1, #2) circle[radius=3pt];
}
\def\triangleUp(#1, #2); {
  \draw[line width=.5pt] (#1 - .065, #2 + .11) -- (#1 - .7, #2 + 1.5)
                         -- (#1 + .7, #2 + 1.5) -- (#1 + .065, #2 + .11);
}
\def\triangleDown(#1, #2); {
  \draw[line width=.5pt] (#1 - .065, #2 - .11) -- (#1 - .7, #2 - 1.5)
                         -- (#1 + .7, #2 - 1.5) -- (#1 + .065, #2 - .11);
}
\def\triangleRight(#1, #2); {
  \draw[line width=.5pt] (#1 + .065, #2 + .11) -- (#1 + 1, #2 + 1.2)
                         -- (#1 + 1, #2 - 1.2) -- (#1 + .065, #2 - .11);
}
\def\triangleBigRight(#1, #2); {
  \draw[line width=.5pt] (#1 + .065, #2 + .11) -- (#1 + 1.25, #2 + 1.2)
                         -- (#1 + 1.25, #2 - 1.2) -- (#1 + .065, #2 - .11);
}
\def\triangleLeft(#1, #2); {
  \draw[line width=.5pt] (#1 - .065, #2 + .11) -- (#1 - 1, #2 + 1.2)
                         -- (#1 - 1, #2 - 1.2) -- (#1 - .065, #2 - .11);
}
\def\triangleBigLeft(#1, #2); {
  \draw[line width=.5pt] (#1 - .065, #2 + .11) -- (#1 - 1.25, #2 + 1.2)
                         -- (#1 - 1.25, #2 - 1.2) -- (#1 - .065, #2 - .11);
}
\def\triangleRotate(#1, #2, #3); {
  \draw[line width=.5pt, rotate around={-#3:(#1,#2)}] (#1 - .065, #2 + .11) -- (#1 - .7, #2 + 1.5)
                         -- (#1 + .7, #2 + 1.5) -- (#1 + .065, #2 + .11);

}
\def\sandglass[(#1, #2), #3, #4]; {
  \vertex(#1, #2);
  \triangleUp(#1, #2);
  \triangleDown(#1, #2);
  
  \node at (#1, #2 + 1) {#3};
  \node at (#1, #2 - 1) {#4};
}
\def\halfSandRight[(#1, #2), #3]; {
  \vertex(#1, #2);
  \triangleRight(#1, #2);
  \node at (#1 + .6, #2) {#3};
}
\def\arcRight[(#1, #2), #3]; {
  \vertex(#1, #2);
  \node at (#1 + .6, #2) {#3};
  \draw (#1+1.2, #2) arc (0:360:17pt);
}
\def\doubleArrow[(#1, #2), (#3, #4), #5, #6]; {
  \draw[line width=.5pt, ->] (#1 + .12, #2 + .05) -- (#3 - .12, #4 + .05);
  \draw[line width=.5pt, <-] (#1 + .12, #2 - .05) -- (#3 - .12, #4 - .05);
  \node at ({#1 + #3)/2}, {(#2+#4)/2 + .25}) {#5};
  \node[below=-8pt] at ({#1 + #3)/2}, {(#2+#4)/2 - .3}) {#6};
}
\def\arrowCollinear[(#1, #2), (#3, #4)]; {
  \draw[line witdh=.4pt, ->] ({(#1 + #3)/2 - .2, (#2 + #4)/2}) --
                             ({(#1 + #3)/2, (#2 + #4)/2});
}
\numberwithin{equation}{section}
\begin{document}

\title[Adding Path-Functional Dependencies to $\GC$]{Adding Path-Functional Dependencies to the Guarded Two-Variable Fragment with Counting}

\author{Georgios Kourtis}
\address{School of Computer Science, The University of Manchester, UK}
\email{\{kourtisg,ipratt\}@cs.man.ac.uk}

\author{Ian Pratt-Hartmann}




\begin{abstract}
  \noindent The satisfiability and finite satisfiability problems for the two-variable
  guarded fragment of first-order logic with counting quantifiers, a database, and path-functional
  dependencies are both $\dexp$-complete.
\end{abstract}

\maketitle

\section{Introduction}
\label{sec:introduction}
In the theory of information systems, a {\em path-functional dependency} is a data 
constraint stating that individuals yielding identical values on application of various sequences of functions must themselves be identical. For example, in a database of customers, we may wish to impose the
condition that no two individuals have the same customer ID:
\begin{align}
& \forall x \forall y (\mathrm{custID}(x) = \mathrm{custID}(y) \rightarrow x = y).
\label{eq:pfd1}
\end{align}
Alternatively,
we may wish to impose the
condition that no two customers have both the same name and postal codes:
\begin{multline}
\forall x \forall y (\mathrm{name}(x) = \mathrm{name}(y) \, \wedge\\ \mathrm{postCode}(\mathrm{address}(x)) = 
 \mathrm{postCode}(\mathrm{address}(y)) \rightarrow x = y).
\label{eq:pfd2}
\end{multline}
The formula in~\eqref{eq:pfd1} has a single equality on the left-hand side of the implication; and therefore
we speak of the condition it expresses as a {\em unary} path-functional dependency. By contrast,
\eqref{eq:pfd2} has a conjunction of two equalities on the left-hand side of the implication; and
we speak of the condition it expresses as a {\em binary} path-functional dependency. 

Path-functional dependencies were introduced by Weddell~\cite{weddell89} (see also~\cite{iw94,bp00}), and are of particular interest when combined with data-integrity constraints represented using formulas of some \textit{description logics}~\cite{tw05a,tw05b}. In this connection,
Toman and Weddell~\cite{tw08} report various results on the decidability of the satisfiability problem for
description logics extended with various forms of path-functional constraints. The purpose of the present paper is to extend those results to a more general logical setting, namely that of the two-variable guarded fragment with counting quantifiers.

The \textit{guarded fragment} of first-order logic, denoted $\GF$, was introduced by Andr\'eka \emph{et~al.} \cite{andreka1998modal} as a generalization of modal logic, in an attempt to explain the latter's good computational behaviour (see e.g.~\cite{vardi1996modal,gradel1999modal}).
Roughly speaking, $\GF$ is the fragment of first-order logic in which all quantification is relativized
(guarded) by atoms featuring all variables free in the context of quantification. Thus, for example, in the
guarded formula 
\begin{equation*}
\forall x (\mathrm{customer}(x) \rightarrow \exists y (\mathrm{lastName}(x,y) \wedge \mathrm{string}(y))),
\end{equation*}
stating that every customer has a last name (which is of type string),
the quantifier $\exists y $ is relativized by the binary atom $\mbox{lastName}(x,y)$.
This fragment has the so-called finite model property: if a $\GF$-formula is satisfied in some structure, it is satisfied in some finite structure. The problem of determining whether a given
$\GF$-formula is satisfiable is 2-$\dexp$-complete \cite{gradel1999restraining}, dropping
to $\dexp$-complete if the number of variables in formulas is less than some fixed limit $k \geq 2$. Of particular interest in this regard is the case $k=2$, since this fragment remains 
$\dexp$-complete even when we add so-called \textit{counting quantifiers}
(`there exist at most/at least/exactly $m$ $x$ such that \dots'). The resulting logic, denoted $\GC$,
allows us to state, for example, that 
customers are uniquely identified by their 
customer identification numbers:
\begin{equation*}
\forall x (\mathrm{string}(x) \rightarrow \exists_{\leq 1} y (\mathrm{custID}(y,x) \wedge \mathrm{customer}(y))).
\end{equation*}
Indeed, $\GC$ subsumes the well-known description logic $\ALCQI$.

The logic $\GC$ lacks the finite model property. Nevertheless,  
the satisfiability and finite satisfiability problems for this fragment are both decidable, and in fact $\dexp$-complete~\cite{kazakov2004polynomial, pratt07}. 

From a logical point of view, a \textit{database} is simply a finite set of ground literals---i.e.~atomic statements or negated atomic statements featuring only constants as arguments: 
\begin{align*}
& \mathrm{custID}(\#111001, 5030875), \quad
 \mathrm{firstName}(\#111001, \mbox{`Fred'}), \quad \dots
\end{align*}
We assume that such data are subject to \emph{constraints} 
presented in the form of a logical theory---as it might be, a collection of formulas of $\GC$, for instance.
Those constraints may imply the existence of individuals not mentioned in the database: indeed, since 
$\GC$ lacks the finite model property, they may consistently imply the existence of infinitely many such individuals. We denote by $\GCD$ the logic that results from adding a database to $\GC$; and we denote by 
$\GCDK$ the logic that results from adding unary and binary path-functional dependencies to $\GCD$. (A formal definition
is given in Sec.~\ref{sec:preliminaries}.)

The \textit{satisfiability problem} for $\GCDK$ asks the following:
given a database $\Delta$, a $\GC$-formula $\phi$ and a finite set of unary and binary path-functional dependencies $K$, are the data in $\Delta$ consistent with the constraints expressed in $\phi$ and $K$?
The \textit{finite satisfiability problem} for $\GCDK$ asks the same question, but subject to the assumption of a finite universe. We show that both of these problems are $\dexp$-complete, the
same as the corresponding problems for $\GC$ alone. That is: adding databases and unary or binary path-functional dependencies to $\GC$ does not change the complexity class of the  satisfiability and finite satisfiability problems.

Our strategy will be to reduce the satisfiability and finite satisfiability
problems for $\GCDK$ to the corresponding problems for 
$\GCD$, by showing how unary and binary path-functional dependencies can be systematically eliminated. The case of unary path-functional dependencies is simple, and is dealt with in the preliminary material of Sec.~\ref{sec:preliminaries}. Binary path-functional dependencies, however, present a greater challenge, and occupy the whole of Sections~\ref{sec:decompositions}--\ref{sec:gcdk}.
Our point of departure here is the familiar observation that models of $\GC$-formulas, when 
viewed as graphs, can be assumed to contain no `short' cycles. We use this observation to characterize putative violations of binary path-functional dependencies in terms of 
the occurrence of certain acyclic subgraphs, which can then be forbidden by writing additional $\GC$-formulas.
The remainder of the paper, Sections~\ref{sec:types}--\ref{sec:main_result}, is devoted to establishing that the satisfiability and finite satisfiability problems for
$\GCD$ are both in $\dexp$. The argument here 
proceeds by reduction to linear programming feasibility, closely following the proof given in~\cite{pratt07} that the finite satisfiability of $\GC$ is in $\dexp$. We show how to modify that 
proof so as to accommodate the presence of a database.

\section{Preliminaries}
\label{sec:preliminaries}

\subsection{The logics $\GC$, $\GCD$ and $\GCDK$}
A \textit{literal} is an atomic formula or the negation of an atomic formula; a literal is \textit{ground} if all arguments occurring in it are constants. If $\phi$ is any formula, the \textit{length} of $\phi$, denoted $\|\phi\|$, is the number of symbols it contains. In addition to the usual boolean connectives, quantifiers $\forall$, $\exists$,
and equality predicate, we employ the \textit{counting  quantifiers} $\exists_{\leq C}$, $\exists_{\geq C}$ and $\exists_{= C}$ (for all $C \geq 0$). A \textit{sentence} is a formula with no free variables.
In most of this paper, we consider \textit{two-variable} formulas---that is, those whose only variables (free or bound) are $x$ and $y$. Furthermore, we restrict attention to signatures comprised of individual constants or predicates of arity 1 or 2; in particular, function symbols are not allowed. (Function symbols were used in Section~\ref{sec:introduction}
for the sake of presentation, but we
introduce different notation for path-functional dependencies in the sequel.) We call literals involving only symbols from some signature $\sigma$ $\sigma$-\textit{literals}.

An atomic formula of the form $p(x, y)$ or $p(y, x)$ is called a \emph{guard}.
The guarded two-variable fragment with counting, denoted $\GC$ is the smallest set of formulas satisfying the following conditions:
\begin{itemize}
  \item unary or binary literals with all arguments in $\{x,y\}$ are formulas, with $=$ allowed as a binary predicate;
  \item the set of formulas is closed under Boolean combinations;
  \item if $\phi$ is a $\GC$-formula with at most one free variable and $u$ is a variable (i.e.~either
        $x$ or $y$), then $\forall u.\phi$ and $\exists u.\phi$ are formulas.
  \item if $\phi$ is a formula, $\alpha$ is a guard, $u$ is a variable and $C$ is a bit-string, then $\exists_{\leq C}u(\alpha \land \phi)$, $\exists_{\geq C}u(\alpha \land \phi)$ and $\exists_{= C}u(\alpha \land \phi)$ are formulas.
\end{itemize}

We read $\exists_{\leq C} u. \phi$ as `there exist at least $C$ $u$ such that $\phi$', and similarly for
$\exists_{\geq C}$ and $\exists_{= C}$. The formal semantics is as expected. To improve readability, we write 
$\exists_{\leq 0} u(\alpha \land \neg \phi)$ as $\forall u(\alpha \rightarrow \phi)$, and $\exists_{\geq 1}u(\alpha \land \phi)$ as $\exists u(\alpha \land \phi)$. Where no confusion results, we equivocate between bit-strings and the integers they represent, it being understood that the size of the expression $\exists_{= C}$ is approximately {$\log C$}
(and not $C$). Observe that 
formulas of the form $\exists_{\leq C}x . p(x)$ are not in $\GC$, because counting quantifiers must be guarded by  
atoms featuring both variables. (This is not a gratuitous restriction: adding such formulas to $\GC$ renders the
fragment $\nexp$-time hard.) In the sequel, formulas that are obviously logically equivalent to a $\GC$-formula will typically be counted as $\GC$-formulas by courtesy.

A \emph{database} is a set $\Delta$ of
ground (function-free) literals. We call $\Delta$ \emph{consistent} if it includes no pair of ground literals 
$\{\alpha,\neg \alpha\}$. Where the signature
$\sigma$ is clear from context, we call $\Delta$
\emph{complete} if for any ground $\sigma$-literal $\lambda \not \in \Delta$, $\Delta \cup \{ \lambda \}$ is inconsistent. 
Given a database $\Delta$,
a \emph{completion} for $\Delta$ is any complete set $\Delta^\ast \supseteq \Delta$
of ground (function-free) $\sigma$-literals. It is obvious that every consistent database has a
consistent completion.  
Databases are interpreted as sets of atomic formulas in the expected way. For convenience, we shall
employ the unique names assumption throughout: distinct individual constants are interpreted as distinct individuals. Hence, when a particular interpretation is clear from context, we sometimes equivocate
between constants and their denotations to streamline the presentation. To further reduce notational clutter,
we allow ourselves to treat a database $\Delta$ as a single conjunctive formula---i.e., writing
$\Delta$ in place of $\bigwedge \Delta$.

The \textit{two-variable guarded fragment with counting and databases}, denoted $\GCD$, is defined exactly as for $\GC$,
except that we have the additional syntax rule
\begin{itemize}
  \item ground unary or binary literals are formulas.
  \end{itemize}
Although $\GCD$ allows formulas in which ground literals appear within the scope of quantifiers, in practice, it is much more natural to separate out the `$\GC$-part' from the
`database-part': thus all $\GCD$-formulas encountered in the sequel will have the form $\phi \wedge \Delta$, where $\phi$ is a $\GC$-formula, and $\Delta$ a database. Notice, however, that, in $\GCD$, it is forbidden to mix variables and
individual constants in atoms: thus, for example, $p(x,c)$ is not a $\GCD$-formula. This
is an essential restriction; if mixing variables and constants is allowed, it is
easy to encode a grid of exponential size and, as a result, runs of an exponential-time
Turing machine, leading to an \textsc{NExpTime} lower bound.
If $\gA$ is any structure, we call those elements interpreting an individual constant {\em active}, and we refer 
to the set of all such elements as the {\em active domain} (or sometimes, informally, as \textit{the database}). 
Elements which are not active are called {\em passive}.

We assume that any signature $\sigma$ features  a (possibly empty) distinguished subset of binary predicates, which we refer to as \emph{key predicates}. Key predicates are always interpreted as the graphs of irreflexive functions. That is, if $f$ is a key predicate, then in any structure $\gA$ interpreting $f$, $\gA \models \forall x \exists_{\leq 1} y \, f(x, y)$ and $\gA \models \forall x \exists y (f(x, y) \wedge x \neq y)$. (We remark that this formula is in $\GC$.) 
We use the (possibly subscripted or otherwise decorated) letters $f$, $g$, $h$ to range over key predicates,
and we use $\baf$, $\bag$, $\bah$ for words over the alphabet of key predicates---i.e., finite sequences of key predicates.
Warning: $\baf_1$, $\baf_2$ etc.~will always be taken to denote whole sequences of key predicates, not the individual elements of some such sequence $\baf$. 

If $\gA$ is a structure, $\bar{f}$ a word $f_0 \cdots f_{k-1}$ over the alphabet of key predicates interpreted by $\gA$,
and $a \in A$, we write $\baf^\gA(a)$
to denote the result of successively applying the corresponding functions in $\baf$ to $a$.
More formally, $\baf^\gA(a) = a_k$ where
$a_0 = a$ and, for all $i$ ($0 \leq i < k$),
$a_{i+1}$ is the unique $b \in A$ such that $\gA \models f_i(a_i,b)$.
An $n$-ary {\em path-functional dependency} $\kappa$ is an expression
\begin{equation}
\PFD[\baf_1, \dots, \baf_n]
\label{eq:kc}
\end{equation}
where, for all $i$ ($1 \leq i \leq n$), $\baf_i$ is a word over the alphabet of key predicates.
Thus, $\PFD[\ \ ]$ is simply a piece of logical syntax that allows us to construct a formula 
from an $n$-tuple of sequences of key predicates.
For any interpretation $\gA$, we take the dependency~\eqref{eq:kc} to be {\em satisfied} in $\gA$, and write 
$\gA \models \kappa$, if, for all $a, b \in A$, 
\begin{center}
	$\baf_i^\gA(a) =  \baf_i^\gA(b)$ for all $i$ ($1 \leq i \leq n$) implies $a = b$.
\end{center}
That is to say, $\kappa$ is satisfied in $\gA$ if any two elements of $A$ which agree on the result of applying each of the function sequences corresponding to the words $\baf_1$ to $\baf_n$
are in fact identical. In this paper we shall consider only unary and binary path-functional dependencies, i.e.~$n = 1 \text{ or }2$.
The \textit{two-variable guarded fragment with counting, databases and \textup{(}unary or binary\textup{)} path-functional dependencies}, denoted $\GCDK$, is defined exactly as for $\GCD$,
except that we have the additional syntax rule
\begin{itemize}
  \item all unary and binary path-functional dependencies are formulas.
\end{itemize}
Again, to reduce notational clutter, 
we treat a set of path-functional dependencies
$K$ as a single conjunctive formula---i.e., writing $K$ in place of $\bigwedge K$.
Although $\GCDK$ allows formulas in which path-functional dependencies
appear within the scope of quantifiers, in practice, it is much more natural to separate out the `$\GC$-part' from the
database and the path-functional dependencies: thus all $\GCDK$-formulas encountered in the sequel will have the form $\phi \wedge \Delta \wedge K$, where $\phi$ is a sentence of $\GC$, $\Delta$ is database, and $K$ a set of path-functional dependencies.

Note that, although key predicates give us the ability to \textit{express} functional dependencies, they remain, syntactically speaking, predicates, not function-symbols. In fact in the logics considered here, there are no function-symbols. In particular the expressions~\eqref{eq:pfd1} and~\eqref{eq:pfd2} are for expository purposes only: in $\GCDK$, we should write, respectively, the unary and binary path-functional dependencies
\begin{equation*}
\PFD[\mathrm{custID}], \qquad \PFD[\mathrm{name}, \mathrm{postCode}~\mathrm{address}], 
\end{equation*}
where 
$\mbox{custID}$, $\mbox{postCode}$ and $\mbox{address}$ are binary (key) predicates.

Taking key predicates to denote total (as opposed to partial) functions represents no essential 
restriction, as partial functions can always be encoded using total functions interpreted over expanded domains featuring `dummy' objects. The restriction to \textit{irreflexive} functions, though not so easily
eliminable, is in most cases perfectly natural (the telephone number of a person is not a person), and greatly simplifies much of the ensuing argumentation, obviating the constant need to consider various special cases.

If $\cL$ is any of the languages $\GC$, $\GCD$ or $\GCDK$, a sentence $\phi$ of $\cL$ is (\textit{finitely})
\textit{satisfiable} if, for some (finite) structure $\gA$ interpreting its signature, $\gA \models \phi$.
We define the ({\em finite}) {\em satisfiability} problem for any of the logics
in the expected way: given a formula $\phi$ of $\cL$, return Y if $\phi$ is (finitely) satisfiable; N otherwise. It was shown in~\cite{kazakov2004polynomial} that 
the satisfiability problem for $\GC$ is \dexp-complete, and in~\cite{pratt07} that the finite satisfiability problem for $\GC$ is \dexp-complete. In this paper, we show that the same complexity bounds 
 apply to $\GCD$ and $\GCDK$. 
  
The following lemma assures us that we may confine attention to $\GCDK$ formulas of the standard form 
$\phi \wedge \Delta \wedge K$.
\begin{lem}
\label{lma:splitting}
Let $\psi$ be a $\GCDK$ formula. We can compute, in time bounded by exponential function of $\sizeOf{\psi}$, 
a set $\Psi$ of $\GCDK$ formulas with the following properties:
(i) each formula in $\Psi$ is bounded in size by
a polynomial function of $\sizeOf{\psi}$ and has the form $\phi \wedge \Delta \wedge K$, 
where $\phi$ is a $\GC$-formula, $\Delta$ a complete database and $K$ a set of path-functional dependencies; (ii) $\phi$ is (finitely) satisfiable if and only if some member 
of $\Psi$ is.
\end{lem}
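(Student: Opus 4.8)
The plan is to treat every ground literal and every path-functional dependency occurring in $\psi$ as a propositional atom. The key observation is that each such subformula is a \emph{sentence}: its truth-value in a structure $\gA$ does not depend on the interpretation of $x$ or $y$. Let $\sigma_1,\dots,\sigma_m$ enumerate the (at most $\sizeOf{\psi}$) distinct ground literals and path-functional dependencies appearing in $\psi$. For each of the $2^m$ truth-assignments $\tau$ to $\sigma_1,\dots,\sigma_m$, I would form the formula $\psi_\tau$ obtained by replacing every occurrence of $\sigma_i$ in $\psi$ by $\top$ or $\bot$ according to $\tau$, and then simplifying. Since the only non-$\GC$ ingredients of $\psi$ are exactly these sentences, and since no $\sigma_i$ can appear as a guard (guards are atoms in $x,y$, and dependencies are not atoms), substituting $\top/\bot$ preserves both the `at most one free variable' condition and guardedness; hence $\psi_\tau$ is, by courtesy, a $\GC$-sentence, which I call $\phi_\tau$.

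It then remains to record $\tau$ by a database $\Delta_\tau$ and a set $K_\tau$ of dependencies. For a ground literal $\lambda$ I would put $\lambda$ into $\Delta_\tau$ when $\tau(\lambda)=\top$ and $\neg\lambda$ when $\tau(\lambda)=\bot$, discarding the guess if $\Delta_\tau$ becomes inconsistent. For a dependency $\kappa$ with $\tau(\kappa)=\top$ I would simply place $\kappa$ in $K_\tau$. The substantive case is a dependency $\kappa=\PFD[\baf_1,\dots,\baf_n]$ with $\tau(\kappa)=\bot$, which must be \emph{falsified}. Here I would introduce two fresh, hence distinct, constants $c,d$ together with, for each $i$, fresh constants naming the successive images of $c$ and of $d$ under the functions in $\baf_i$, and ground facts over the relevant key predicates forcing both the $\baf_i$-path from $c$ and the $\baf_i$-path from $d$ to terminate at a single common constant. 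These facts assert $\baf_i^\gA(c)=\baf_i^\gA(d)$ for every $i$ while $c\neq d$, i.e.\ they encode a violation of $\kappa$. Because key predicates denote functions, paths sharing a prefix (or whose images coincide) must be named by a common constant; this merging is forced by determinism and, since the functions are irreflexive, introduces no inconsistency. Finally I would guess a complete, consistent database $\Delta_\tau^\ast\supseteq\Delta_\tau$ over the signature and the enlarged set of constants, and output $\phi_\tau\wedge\Delta_\tau^\ast\wedge K_\tau$; the set $\Psi$ collects all formulas so produced over all consistent guesses.

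For correctness, suppose first $\gA\models\psi$. Taking $\tau(\sigma_i)$ to be the actual truth-value of $\sigma_i$ in $\gA$ gives $\gA\models\phi_\tau$; expanding $\gA$ to interpret the fresh constants by genuine path-witnesses of each falsified $\kappa$ (which exist because $\gA\models\neg\kappa$) affects neither $\phi_\tau$ nor the asserted dependencies, and taking $\Delta_\tau^\ast$ to be the resulting complete diagram over the constants yields a model of $\phi_\tau\wedge\Delta_\tau^\ast\wedge K_\tau$. Conversely, any model $\gB$ of a member of $\Psi$ gives each $\sigma_i$ exactly its $\tau$-value: the completed database fixes the ground literals, $K_\tau$ forces the asserted dependencies, and the gadgets force $\gB\models\neg\kappa$ for the falsified ones. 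Hence $\gB\models\phi_\tau\equiv\psi_\tau$, and since the $\sigma_i$ take their $\tau$-values, $\gB\models\psi$. Both constructions preserve finiteness, so the argument applies verbatim to finite satisfiability.

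For the resource bounds, $m\le\sizeOf{\psi}$, so there are at most $2^{\sizeOf{\psi}}$ assignments; the falsification gadgets add only $\bigO{\sizeOf{\psi}}$ constants in total, so each $\Delta_\tau^\ast$ ranges over polynomially many ground literals and admits at most $2^{\mathrm{poly}(\sizeOf{\psi})}$ completions. Thus $\Psi$ has at most exponentially many members, is computable in exponential time, and each member $\phi_\tau\wedge\Delta_\tau^\ast\wedge K_\tau$ has size polynomial in $\sizeOf{\psi}$. The main obstacle is the falsification of a negated dependency: with only two variables one cannot directly assert that two elements coincide at the end of a long function-path, so the effect must be off-loaded onto the database by the chain-of-fresh-constants gadget above, and one must verify that the functionality and irreflexivity of key predicates leave this gadget satisfiable.
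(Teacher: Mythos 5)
Your proposal is correct and takes essentially the same route as the paper's own (sketch) proof: guess a truth-value assignment to all ground atoms and path-functional dependencies occurring in $\psi$, substitute $\top$/$\bot$ accordingly to obtain the $\GC$-part, record positive guesses in $\Delta$ and $K$, and encode each falsified dependency by ground literals over fresh constants witnessing a violation. Your write-up is in fact more detailed than the paper's sketch, which leaves the violation gadget, the completion of $\Delta$, and the correctness argument implicit.
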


\begin{proof}[Sketch proof]
Consider any mapping $\theta$ from the set of path-functional dependencies and ground atoms occurring in
$\psi$ to the logical constants $\{\top, \bot\}$, and let $K = \Delta = \emptyset$.
For every path-functional dependency $\kappa$ occurring in
$\psi$, if $\theta(\kappa) = \top$, add $\kappa$ to $K$; otherwise, add to $\Delta$ a sequence of literals (possibly with fresh individual constants) encoding a violation of $\kappa$ in the obvious way.  
Let $\psi'$ be the result of
replacing each path-functional dependency $\kappa$ in $\psi$ by $\theta(\kappa)$.
For every ground atom $\alpha$ occurring in
$\psi'$, if $\theta(\alpha) = \top$, add $\alpha$ to $\Delta$; otherwise, add $\neg \alpha$ to $\Delta$.
Let $\phi$ be the result of
replacing each ground atom $\alpha$ in $\psi'$ by $\theta(\alpha)$.
Let $\Psi$ be the set of
of all formulas obtained in this way, for all possible mappings $\theta$.
\end{proof}

The logic $\GC$ sometimes surprises us with its expressive power. The next lemma provides a simple example, that 
will prove useful in the sequel. Let $\dst{x, y, z}$ be an abbreviation
for the formula $x \neq y \land y \neq z \land x \neq z$, stating that $x$, $y$ and $z$ are distinct.
\begin{lem}
	\label{lem:rewrite}
	Let \begin{align*}
	\phi_1(x) &\;:=\; \exists y \exists z \big( \dst{x, y, z} \land \alpha(x, y) \land \beta(y, z) \big), \\
	\phi_2(y) &\;:=\; \exists x \exists z \big( \dst{x, y, z} \land \alpha(x, y) \land \beta(y, z) \big),
	\end{align*}
	be formulas over $\sigma$, where $\alpha(x, y)$ and $\beta(y, z)$ are $\GC$-formulas. 
	Then, we can compute, in polynomial time, $\GC$-formulas
	$\phi_1^\ast(x)$ and $\phi_2^\ast(y)$, which are logically equivalent to $\phi_1(x)$ and $\phi_2(y)$
	respectively.
\end{lem}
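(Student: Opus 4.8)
The plan is to exploit the \emph{path shape} common to both formulas: the variable $y$ is joined to $x$ by $\alpha$ and to $z$ by $\beta$, whereas the two endpoints $x$ and $z$ are tied together only by the inequality $x \neq z$ supplied by $\dst{x, y, z}$. Since $z$ occurs solely in $\beta(y,z)$ and in the distinctness conjuncts, I would first rewrite $\phi_1(x)$ as $\exists y\,(\alpha(x,y) \land x \neq y \land \gamma(x,y))$, where $\gamma(x,y) := \exists z\,(\beta(y,z) \land z \neq y \land z \neq x)$, and peel $z$ off $\phi_2$ in the same way. The whole problem then reduces to re-expressing $\gamma(x,y)$ using only the variables $x$ and $y$; the one genuine obstruction is the conjunct $z \neq x$, which links the bound variable $z$ back to the other endpoint $x$, and which a naive two-variable rewrite cannot capture.

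The key step is a case split on the number of $\beta$-successors of $y$ that are distinct from $y$. If there are at least two of them, then, whatever $x$ happens to be, at least one differs from $x$; so $z \neq x$ is automatically satisfiable and need not refer to $x$ at all. If there is exactly one such successor, it differs from $x$ iff $x$ is not itself that successor, i.e.\ iff $\lnot\beta(y,x)$. Writing $\theta_1(y) := \exists_{\geq 2} z\,(\beta(y,z) \land z \neq y)$ and $\theta_2(y) := \exists z\,(\beta(y,z) \land z \neq y)$, this yields $\gamma(x,y) \equiv \theta_1(y) \lor (\theta_2(y) \land \lnot\beta(y,x))$. Distributing $\exists y$ over the resulting disjunction, I set
\[
\phi_1^\ast(x) := \exists y\,(\alpha(x,y) \land x \neq y \land \theta_1(y)) \;\lor\; \exists y\,(\alpha(x,y) \land x \neq y \land \theta_2(y) \land \lnot\beta(y,x)).
\]
For $\phi_2$ the same elimination of $\exists z$ applies, and since $\theta_1,\theta_2$ do not mention $x$, I can pull the outer $\exists x$ inward to obtain
\[
\phi_2^\ast(y) := (\theta_1(y) \land \exists x\,(\alpha(x,y) \land x \neq y)) \;\lor\; (\theta_2(y) \land \exists x\,(\alpha(x,y) \land x \neq y \land \lnot\beta(y,x))).
\]

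Three things then need to be checked. First, the equivalences $\phi_i \equiv \phi_i^\ast$, which come out by unwinding the case split above, the only delicate point being the exactly-one-successor case, where one must confirm that the surviving test $\lnot\beta(y,x)$ really does decide whether the unique successor coincides with the endpoint. Second, that $\phi_1^\ast$ and $\phi_2^\ast$ lie in $\GC$: every quantifier introduced binds just one of $x,y$ and is relativized by $\beta$ (for the $\theta_i$) or by $\alpha$ (for the $\exists x$-blocks and the outer $\exists y$), so the guardedness discipline is respected and only the constant counting bound $2$ is ever used. Third, that the construction runs in polynomial time: each of $\alpha$ and $\beta$ is duplicated only boundedly often, so $\phi_1^\ast$ and $\phi_2^\ast$ have size polynomial in $\|\phi_1\| + \|\phi_2\|$. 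I expect the main obstacle to be exactly the treatment of the endpoint-distinctness conjunct $z \neq x$ — arranging the exactly-one-successor case so that no trace of the eliminated third variable survives — together with the bookkeeping that confirms the counting subformulas $\theta_1,\theta_2$ stay within the guarded fragment.
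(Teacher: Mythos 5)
Your proposal is correct and takes essentially the same approach as the paper: both eliminate the third variable by testing whether the remaining endpoint satisfies $\beta(y,x)$, using the counting quantifier $\exists_{\geq 2}$ to guarantee a successor distinct from that endpoint when needed. Your two disjuncts ($\exists_{\geq 2}$ successors, or $\exists_{\geq 1}$ successor together with $\lnot\beta(y,x)$) are just a Boolean rearrangement of the paper's two mutually exclusive cases ($\lnot\beta(y,x)$ with $\exists_{\geq 1}$, or $\beta(y,x)$ with $\exists_{\geq 2}$), so the constructed formulas are logically equivalent to the paper's.
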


\begin{proof}
Let
\begin{gather*}
\hspace{-50pt}
\phi_1^\ast(x) \;:=\; \exists y \big( x \neq y \land \alpha(x, y) \land \lnot \beta(y, x) \land \exists x  ( x \neq y \land \beta(y, x)) \big) \\
\hspace{65pt} \lor \, \exists y \big( x \neq y \land \alpha(x, y) \land \beta(y, x) \land \exists_{\geq 2} x ( x \neq y \land \beta(y, x)) \big);
\\[8pt] 
\hspace{-50pt}
\phi_2^\ast(y) \;:=\; \exists x \big(x \neq y \land \alpha(x, y) \land \lnot \beta(y, x)  \land \exists x (x \neq y \land \beta(y, x)) \big) \\
\hspace{65pt} \lor \, \exists x \big(x \neq y \land \alpha(x, y) \land \beta(y, x) \land \exists_{\geq 2} x (x \neq y \land \beta(y, x)) \big).
\tag*{\qEd}
\end{gather*}
\def\popQED{}
\end{proof}

\subsection{Graphs of structures}  
If $f$ is a key predicate, we introduce a new binary predicate $f^{-1}$, referred to as the {\em converse} of $f$, and subject to the  requirement that, for any structure $\gA$, $\gA \models \forall x \forall y \big( f(x, y) \leftrightarrow f^{-1}(y, x) \big)$. (We remark that this formula is in $\GC$.) There is no requirement that $f^{-1}$ be functional, though of course it must be irreflexive. 
We refer to any key predicate $f$ or its converse $f^{-1}$ as a {\em graph predicate}. We use the (possibly subscripted or otherwise decorated) letters $r$, $s$, $t$ to range over graph predicates. If $r = f^{-1}$, we
take  $r^{-1}$ to denote $f$. In the sequel, we fix a signature $\sigma$ such that, for any key predicate $f$ in $\sigma$, $f^{-1}$ is
also in $\sigma$. That is: the set of graph predicates of $\sigma$ is closed under converse.
We use letters $\br$, $\bs$, $\bt$ for words over the alphabet of graph predicates---i.e., finite sequences of graph predicates. 
If $\br = r_1 \cdots r_\ell$, we write $\br^{-1}$ for the word $r_\ell^{-1} \cdots r_1^{-1}$. 

Let $\gA$ be a structure interpreting $\sigma$ over domain $A$, and let $E = E_1 \cup E_2$ 
be the set of unordered pairs of elements of $A$, given by
\begin{align*}
E_1 = & \{(a,b) \in A^2 \mid \gA \models r(a,b) \text{ for some graph predicate $r$ of $\sigma$}\}\\
E_2 = & \{(a,b) \in A^2 \mid \text{$a$ and $b$ are distinct and both active}\}.  
\end{align*}
Then
$G = (A,E)$ is a (possibly infinite) graph, and we refer to $G$ as the {\em graph of} $\gA$. The edges of the graph are essentially the union of the interpretations of graph predicates, but with the database totally connected.
By a \textit{cycle} in $\gA$, we mean a finite sequence of distinct elements $\ba = a_0 \cdots a_{\ell-1}$, with $\ell \geq 3$, such that, writing $a_\ell = a_0$,
 $(a_i,a_{i+1}) \in E$ for all $i$ ($0 \leq i < \ell$). The {\em length} of $\ba$ is $\ell$.
A cycle in $\gA$ is said to be \textit{active} if all its elements are active
(i.e.~are interpretations of constants), \textit{passive} if all its elements are
passive, and {\em mixed} if it contains both passive and active elements. We call $\gA$ $\ell$-\textit{quasi-acyclic} if
all passive or mixed cycles in $\gA$ have length greater than $\ell$.  

The following lemma is a kind of pumping lemma for interpretations for the logic $\GCD$. It 
states that, if a $\GCD$-formula $\phi$ is satisfied in a (finite) structure $\gA$, then $\phi$ is also satisfied in a (finite) structure $\gB$ containing no short passive or mixed cycles in $\gB$. 
\begin{lemC}[{\cite[Lemma 13]{pratt2009data}}]
	\label{lem:big_cycles}
	Let $\phi$ be a $\GCD$-formula.
	Suppose $\gA \models \phi$, with $D \subseteq A$ 
	the set of active elements, and let $\ell > 0$. Then there exists an $\ell$-quasi-acyclic  
	model $\gB \models \phi$ such that $D \subseteq B$ and $\gA|_{D} = \gB|_{D}$.
	Moreover, if $\gA$ is finite, then we can ensure that $\gB$ is finite.
\end{lemC}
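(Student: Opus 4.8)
The plan is to prove this "pumping lemma" by locally surgery-ing out short cycles: whenever a passive or mixed cycle of length $\leq \ell$ appears, we duplicate part of the structure and re-route edges so that the offending cycle is broken, while preserving all the local (one- and two-element) realized types that $\GCD$-formulas can detect. Since $\GCD$-formulas are invariant under the isomorphism type of a single element's neighbourhood (guarded quantification only ever inspects an element together with its $r$-neighbours), the key invariant to maintain is: every element keeps the same $1$-type and the same multiset of realized $2$-types on its incident graph edges, and the active domain $D$ together with its induced substructure $\gA|_D$ is left completely untouched.

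The construction I would use is the standard "unravelling with layers" or "tree-like blow-up" adapted to keep the database fixed. First I would take several disjoint isomorphic copies of the passive part of $\gA$ (the subgraph induced on $A \setminus D$), indexed by a cyclic group $\mathbb{Z}/N$ for a modulus $N > \ell$. For each passive element $a$ I write $a^{(k)}$ for its copy in layer $k$. I then re-route edges as follows: an edge $r(a,b)$ between two passive elements is replaced by edges $r(a^{(k)}, b^{(k+1)})$, shifting the layer by one; edges between an active element $c \in D$ and a passive element $b$ are attached from $c$ to $b^{(k)}$ for exactly one chosen layer $k$ (say $k=0$), so that $c$'s outgoing/incoming functional behaviour is preserved, while $b$'s copies each individually see a correctly-typed active neighbour only in the layer where that copy connects to the database. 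The active–active edges and the interpretation of all predicates on $D$ are copied verbatim, giving $\gA|_D = \gB|_D$. One must be careful that key predicates remain graphs of total irreflexive functions after re-routing: because each functional edge is shifted to a fresh layer and the layers are isomorphic, functionality and totality are preserved, and irreflexivity holds since distinct layers give distinct elements.

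The main technical point, and the step I expect to be the obstacle, is verifying that every \emph{passive or mixed} cycle in the new structure $\gB$ is forced to be long while simultaneously checking that no counting constraint of $\phi$ is violated. For cycle-length: any closed walk through the passive layers accumulates a net layer-shift that must vanish modulo $N$, and since each purely-passive edge advances the layer by $+1$ (and its converse by $-1$), a short cycle would have to make a net shift of $0$ using fewer than $N$ edges, which is impossible unless it is confined to a single layer—but a single layer carries no passive edges at all by construction, and any excursion through $D$ re-enters the passive part in a controlled way; choosing $N$ large enough (e.g.\ $N = \ell + 1$) rules out all short passive and mixed cycles. For counting: because each $b^{(k)}$ has exactly the same incident graph-edge $2$-types as the original $b$ (one predecessor and one successor per functional edge, relocated to a neighbouring layer), and its $1$-type is copied directly, every guarded counting subformula $\exists_{\leq C} u(\alpha \wedge \psi)$, $\exists_{\geq C}$, or $\exists_{=C}$ evaluated at $b^{(k)}$ receives exactly the same witness count as at $b$ in $\gA$; the verification is a routine induction on formula structure using that guarded quantification ranges only over graph-neighbours. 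Finiteness is immediate: if $\gA$ is finite then $\gB$ consists of $D$ together with $N = \ell+1$ finite copies of the passive part, hence is finite.
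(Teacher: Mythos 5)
You should first note that the paper does not prove this lemma at all: it is imported verbatim from \cite[Lemma 13]{pratt2009data}, so your argument can only be judged on its own merits. On those merits it has two fatal gaps. First, the cyclic layer-shift does not eliminate short cycles. Cycles in the graph of a structure are undirected, and an edge traversed against its chosen orientation contributes $-1$ to the net layer shift, so $+1$ and $-1$ steps can cancel well within $N$ steps; your claim that a net shift of $0$ in fewer than $N$ edges is ``impossible unless confined to a single layer'' is simply false. Concretely, take passive elements $a_0,a_1,a_2,a_3$ with $f(a_0,a_1)$, $g(a_0,a_3)$, $f'(a_2,a_1)$, $g'(a_2,a_3)$: all four edges shift upward into $a_1$ or $a_3$, so in $\gB$ the elements $a_0^{(0)}, a_1^{(1)}, a_2^{(0)}, a_3^{(1)}$ form a $4$-cycle. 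This is exactly the configuration of two elements agreeing on two key-paths, i.e.\ precisely the cycles a critical violation of a binary dependency $\PFD[\baf,\bag]$ produces, so your construction fails on the cycles the lemma exists to remove. Killing short cycles by a covering construction requires per-edge bijections between layers chosen so that the cover has large girth (or an unravelling argument); a uniform shift in $\mathbb{Z}/N$ never suffices.

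Second, the database boundary is mishandled, and here the construction contradicts its own verification. If the active--passive edges incident to a passive $b$ are attached only to the copy $b^{(0)}$, then for $k \neq 0$ the copy $b^{(k)}$ loses every witness that $b$ obtained from database elements; since key predicates must be interpreted as \emph{total} functions and the counting conjuncts of a normal-form $\GC$-formula are of the form $\forall x \exists_{=C}y(\ldots)$, this makes $\gB \not\models \phi$ outright (e.g.\ when $b$'s unique $f$-successor is active, $b^{(k)}$ has no $f$-successor at all). Attaching the edges to \emph{all} layers instead multiplies the in-counts at each database element by $N$, again violating $\phi$. So the invariant you rely on --- ``each $b^{(k)}$ has exactly the same incident graph-edge $2$-types as the original $b$'' --- cannot hold under either choice. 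The missing idea is to supply the off-layer copies with fresh \emph{passive} witnesses realizing the same $2$-types that the database elements realized for $b$, and to iterate this (those doppelg\"angers need witnesses in turn) without recreating short cycles; this tree-like witness propagation is where guardedness is used essentially --- exactly as the paper remarks immediately after the lemma --- and it is the real content of the cited proof in \cite{pratt2009data}, which your construction does not capture.
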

\noindent 
We remark 
that the proof of Lemma~\ref{lem:big_cycles} makes essential use of guardedness. 

Having discussed the graphs defined by $\GCD$-structures, we turn now to configurations of small collections of elements
of those structures. Let $\gA$ be a structure interpreting $\sigma$ over domain $A$, let $\ba = a_0 \cdots a_{\ell}$ ($\ell \geq 0$)
be a word over the alphabet $A$, and let
$\br = r_0, \dots r_{\ell-1}$ be a word over the alphabet of graph predicates of $\sigma$.
The pair $[\ba,\br]$ is a {\em walk} of \textit{length} $\ell$ if, for all $i$ ($0 \leq i < \ell$) $\gA \models r_i[a_i,a_{i+1}]$. Alternatively,
we say that $\ba$ is an $\br$-{\em walk}; and where $\br$ is clear from context,
we speak of {\em the walk} $\ba$. There is no requirement that the $a_i$ all be distinct, though, of course, the irreflexivity of $r_i$ means that $a_i \neq a_{i+1}$ ($0 \leq i < \ell$); however, if the $a_i$ \textit{are} all distinct, then we speak of the {\em path} $\ba$.
Now let $\ba'= a_0, \dots, a_{\ell-1}$, with 
$\br = r_0, \dots r_{\ell-1}$ be as before. The pair $(\ba',\br)$ is a {\em tour} of \textit{length} $\ell$ if, for all $i$ ($0 \leq i < \ell$) $\gA \models r_i[a_i,a_{i+1}]$ where addition in indices is performed modulo $\ell$. 
Again, we speak of $\ba'$ as being an $\br$-\textit{tour}, etc. Intuitively, we identify
the tour $(\ba',\br)$ with the walk $[\ba'{}a_0,\br]$, starting and ending at $a_0$. Note that $(\epsilon,\epsilon)$ (corresponding to $\ell=0$) is a tour, namely, the empty tour
starting (and ending) at any element $a$. 
A path $\bar{a}$ is said to be {\em active} if all its elements are active, {\em passive} if all its elements are passive, and {\em mixed} otherwise; similarly for (non-empty) tours. In the case of the empty tour,
we count it as being active if we think of it as beginning at an active element, and passive otherwise: 
this slight informality should cause no confusion in practice. 
Notice that a tour of length at least 3 in which all elements are distinct is a cycle in $\gA$.

Consider now any walk $[\ba,\baf]$ in $\gA$, where
$\ba = a_0 \cdots a_\ell$. Noting that the $a_i$ are not necessarily distinct, let $V = \{a_i \mid 0 \leq i \leq  \ell\}$, and let us say that
$a$ and $b$ in $V$ are \textit{neighbours} if, for some $i$ ($0 \leq i < \ell$), either $a= a_{i}$ and $b= a_{i+1}$, or 
$b= a_{i}$ and $a= a_{i+1}$. Let $E$ be the set of unordered pairs $(a,b)$
from $V$ such that $a$ and $b$ are neighbours. Then $G = (V,E)$ 
is a graph, which we refer to as the {\em locus} of the walk $[\ba,\baf]$. 
If $(\ba,\baf)$ is a non-empty tour in $\gA$ beginning at $a_0$, then we take the locus of $(\ba,\baf)$ to be the locus of the walk $[\ba a_0,\baf]$.
Thus loci are static records of all the steps taken during a walk or tour, but with information about the order of those steps suppressed. Fig.~\ref{fig:locus} shows a possible locus of a tour 
$(a_0a_1a_2a_3a_4a_3a_5a_3a_4a_3a_2a_1,f_0\cdots f_{11})$ of length 12; the element $a_3$ is encountered four times in this tour. It is easy to see that the locus of a walk (or tour) in $\gA$ is a subgraph of the graph of $\gA$, though of course it will in general not
be an induced subgraph. We call a walk (or tour) \textit{acyclic} if its locus contains no cycles. The tour whose locus is depicted in Fig.~\ref{fig:locus} is acyclic.
It should be obvious that, if $\gA$ is an 
$\ell$-quasi-acyclic structure, and $(\ba,\baf)$ a passive tour in $\gA$ of length at most $\ell$, then 
$(\ba,\baf)$ must be acyclic. 
\ignore{Moreover, bearing in mind that every active element has an edge to every other in the graph of $\gA$, if $[a\ba b,\baf]$ is a walk with $a$ and $b$ 
active elements and $\ba$ a non-empty sequence of at most $\ell-2$ passive elements, then $[a\ba b,\baf]$ is acyclic 
and $a = b$.}

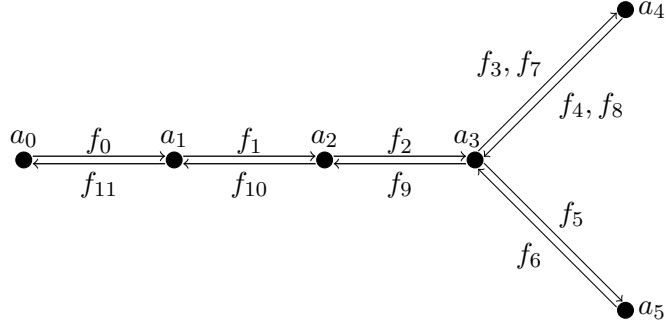
\begin{figure}[ht]
\centering
\begin{tikzpicture}

\vertex(0, 2);
\vertex(2, 2);
\vertex(4, 2);

\node[above=1pt] at (0,2) {$a_0$};
\node[above=1pt] at (2,2) {$a_1$};
\node[above=1pt] at (4,2) {$a_2$};
\node[above=1pt] at (5.9,2) {$a_3$};
\node[right=1pt] at (8,4) {$a_4$};
\node[right=1pt] at (8,0) {$a_5$};

\doubleArrow[(0, 2), (2, 2), $f_0$, $f_{11}$];	
\doubleArrow[(2, 2), (4, 2), $f_1$, $f_{10}$];	
\doubleArrow[(4, 2), (6, 2), $f_2$, $f_{9}$];

\tikzset{
  vertex/.style={circle, draw=black, fill=black, inner sep=0pt, minimum size=6pt}
}

\node[vertex] (A) at (6, 2) {};
\node[vertex] (B) at (8, 4) {};
\node[vertex] (C) at (8, 0) {};

\pgfmathsetmacro{\shift}{.23ex}

\path[->, transform canvas={xshift=-\shift,yshift=\shift}] (A) edge node[midway, above left=-2pt] {$f_3,f_7$} (B);
\path[->, transform canvas={xshift=\shift,yshift=-\shift}] (B) edge node[midway, below right=-2pt] {$f_4,f_8$} (A);

\path[->, transform canvas={xshift=\shift,yshift=\shift}] (A) edge node[midway, above right=-2pt] {$f_5$} (C);
\path[->, transform canvas={xshift=-\shift,yshift=-\shift}] (C) edge node[midway, below left=-2pt] {$f_6$} (A);
\end{tikzpicture}
\caption{A tour $(a_0a_1a_2a_3a_4a_3a_5a_3a_4a_3a_2a_1,f_0\cdots f_{11})$ and its locus.}
\label{fig:locus}
\end{figure}
\subsection{Path-functional dependencies and their violations}
With these preliminaries behind us, we turn our attention to the analysis of path-functional dependencies in particular structures. We start, for simplicity, with the unary case.
Our goal will be to encode a given unary path-functional dependency as a certain $\GC$
formula.
Let $\kappa$ be a unary path-functional dependency $\PFD[\baf]$, where
$\baf = f_0 \cdots f_{k-1}$.
If $0 \leq i \leq k$, denote 
the prefix $f_0 \cdots f_{i-1}$ by $\baf_i$, and let $\kappa_i = \PFD[\baf_i]$; we 
call $\kappa_i$ a \textit{prefix} of $\kappa$. 
It is easily seen that $\kappa$ entails each of its prefixes---i.e., $\gA \models \kappa \rightarrow \kappa_i$.
Moreover, the
empty unary path-functional dependency, $\PFD[\epsilon]$, is trivially valid.
A set $K$ of unary path-functional dependencies is \textit{prefix-closed}, if for any $\kappa \in K$, every prefix of $\kappa$ is in $K$.  Given any set $K$ of unary path-functional dependencies, we may---without affecting satisfiability---ensure that it is prefix closed by adding at most linearly many additional unary path-functional dependencies.

Suppose now that $\kappa$ is violated in some $\sigma$-structure
$\gA$, and let $a, b \in A$ be a violating pair for $\kappa$, i.e.~$a \neq b$ but the $\baf$-walks starting at $a$ and $b$ 
end in the same element.
Define the sequence of elements
$\ba = a_0 \cdots a_k$ where $a_0 = a$ and, for all $i$ ($0 \leq i < k$), $\gA \models f_i(a_i,a_{i+1})$;
and define $\bb = b_0 \cdots b_k$ where $b_0 = b$ and, for all $i$ ($0 \leq i < k$), $\gA \models f_i(b_i,b_{i+1})$. Noting that $a_k = b_k$,
let $i$ be the smallest number $0 \leq i \leq k$ such that $a_i = b_i$, i.e. the point at which $\ba$ and $\bb$ converge. Clearly $i > 0$, since $a$
and $b$ are distinct. 
We say that the violation of $\kappa$ at $a, b$ is \emph{critical} if $i = k$, i.e.~if $\ba$ and $\bb$ converge only at their last element. It is then obvious that, if $K$ is prefix closed, and some $\kappa \in K$ is violated, then some
$\kappa' \in K$ is critically violated. 

Checking for critical violations of unary path-functional dependencies is simple. Bearing in mind that
$\PFD[\epsilon]$ is trivial valid, suppose
$\kappa= \PFD[\baf f]$.
For any prefix $\baf'$ of $\baf$ (proper or improper), let $p_{\baf'}$ be a fresh unary predicate, and let $P_\kappa$ be the  set $\GC$-formulas
\begin{equation*}
\big\{ \forall x. p_\epsilon (x) \} \cup
\big\{\forall x 
   \big( p_{\baf' \! f'}(x) \leftrightarrow 
             \exists y (p_{\baf'} (y) \land f'(y, x)) \big)
\mid \text{$\baf' \! f'$ a prefix of $\baf$} \}. 
\end{equation*}
In the presence of $P_\kappa$, we may read 
$p_{\baf'}(x)$ as stating that $x$ is at the end of an $\baf'$-walk in $\gA$.
Thus, in a model $\gA \models P_\kappa$, $\kappa= \PFD[\baf f]$ has a critical violation just in case
the formula 
\begin{align*}
\psi_\kappa(y):= \exists x \exists z ( \dst{x, y, z}  \land 
(p_{\baf}(x) \land f(x,y)) \land (p_{\baf}(z) \land f(z,y)).
\end{align*}
is satisfied in $\gA$. (In particular, if $\kappa$ has a critical violation at $a$, $b$, $\psi_\kappa$ will be satisfied at the 
(common) final elements of the $\baf$-walks starting at $a$ and $b$.)
By Lemma~\ref{lem:rewrite}, $\psi_\kappa(y)$ may be equivalently written as a 
$\GC$-formula $\psi_\kappa^*(y)$. Thus, if $K$ is a prefix-closed set of unary path-functional dependencies, 
a $\GCDK$-formula $\phi \wedge K \wedge \Delta$ is satisfiable over some domain $A$ if and only if
the $\GCD$-formula
\begin{align*}
\phi \wedge \bigwedge \bigcup \{P_\kappa \mid \kappa \in K\}  \wedge \bigwedge_{\kappa \in K} \forall y \, \neg \psi^*_\kappa(y) \wedge \Delta
\end{align*}
is satisfiable over the same domain. This provides a polynomial time reduction from the (finite) satisfiability problem for $\GCDK$ restricted to {\em unary}
path-functional dependencies to the corresponding problem for $\GCD$. Since we show in the sequel that the latter problem is in $\dexp$, so is the former.

We now turn our attention to the more difficult case of binary path-functional dependencies.
Let $\kappa$ be a binary path-functional dependency $\PFD[\baf, \bag]$, where
$\baf = f_0 \cdots f_{k-1}$.
For all $i$ ($0 \leq i < k$), denote by $\baf_i$, 
the prefix $f_1 \cdots f_{i-1}$ of $\baf$, and let $\kappa_i = \PFD[\baf_i, \bag]$. 
We call $\kappa_i$ a \textit{left-prefix} of $\kappa$.
Thus $\kappa$ entails each of its left-prefixes; moreover, $\PFD[\epsilon, \bag]$ is trivially valid. 
If $K$ is a set of binary path-functional dependencies, say that $K$ is \textit{left-prefix-closed}, if for $\kappa \in K$, any left-prefix of $\kappa$ is in $K$. Any set $K$ of binary path-functional dependencies may---without affecting satisfiability---be made left-prefix-closed by adding at most linearly many additional binary path-functional dependencies.
We could instead have spoken of right-prefix-closed sets of functional dependencies, defined in the obvious way; the choice between these notions is completely arbitrary.

Suppose now that $\kappa = \PFD[\baf,\bag]$ is violated in some $\sigma$-structure
$\gA$, and let $a, b \in A$ be a violating pair for $\PFD[\baf, \bag]$, i.e.~$a \neq b$ but the 
$\baf$-walks starting at $a$ and $b$ 
 end in the same element, and moreover 
the $\bag$-walks starting at $a$ and $b$ end in the same element. Writing $\baf = f_0 \cdots f_{k-1}$, 
define the sequence of elements
$\ba = a_0 \cdots a_k$ where $a_0 = a$ and, for all $i$ ($0 \leq i < k$), $\gA \models f_i(a_i,a_{i+1})$.  Similarly
define $\bb = b_0 \cdots b_k$ where $b_0 = b$ and, for all $i$ ($0 \leq i < k$), $\gA \models f_i(b_i,b_{i+1})$. We
call the violation of $\kappa$ at $a, b$ \emph{critical} if $a_i \neq b_i$ for all $i$ ($0 \leq i < k$).
It is again obvious that, if $K$ is a left-prefix-closed set of binary path-functional dependencies, and some $\kappa \in K$ is violated, then some $\kappa' \in K$ is critically violated. Thus, as with
unary path-functional dependencies, so with their binary counterparts, we may confine our attention to critical violations.

The difficulty is that critical violations of binary path-functional dependencies cannot be straightforwardly expressed 
using $\GC$-formulas as in the unary case. To understand the problem, consider a binary path-functional dependency 
$\kappa = \PFD[\baf f,\bag]$, which has a
critical violation at $a, b$. Writing $\baf = f_0 \cdots f_{k-1}$ and $f_k = f$, define
$\ba = a_0 \cdots a_{k+1}$ where $a_0 = a$ and, for all $i$ ($0 \leq i \leq k$), $\gA \models f_i(a_i,a_{i+1})$, and
define $\bb = b_0 \cdots b_{k+1}$ where $b_0 = b$ and, for all $i$ ($0 \leq i \leq k$), $\gA \models f_i(b_i,b_{i+1})$. 
Thus, $a_{k+1} = b_{k+1}$.
Writing $\bag = g_0 \cdots g_{\ell-1}$,
define $\ba' = a'_0 \cdots a'_{\ell}$ where $a'_0 = a$ and, for all $i$ ($0 \leq i < \ell$), 
$\gA \models g_i(a'_i,a'_{i+1})$; and define $\bb'$ similarly, but starting with $b'_0 = b$. 
Thus, $a'_{\ell} = b'_{\ell}$. Referring to Fig.~\ref{fig:violation_critical}, it follows that
\begin{align*}
& a_0 a_1 \cdots a_{k} a_{k+1} b_{k} \cdots b_1 b_0 b'_1 \cdots b'_{\ell} a'_{\ell-1} \cdots a'_1
\end{align*}
is an $\baf f f^{-1} \baf^{-1} \bag \bag^{-1}$-tour in $\gA$. The problem is how to characterize the existence of such tours with only $\GC$-formulas and conditions on the database at our disposal.

Let us consider the tour of Fig.~\ref{fig:violation_critical} more closely.
Since key predicates are by assumption irreflexive, we
know that neighbouring elements in this tour are distinct. Furthermore, since the violation of $\kappa$ is by assumption {critical},
we also know that $a_k \neq b_k$. That is, the elements $a_{k}$, $a_{k+1}$ and $b_{k}$ are all distinct. 
\begin{figure}
	\centering
	\begin{tikzpicture}
	
	\vertex(0, 0);
	\vertex(0, 3);
	\vertex(3, 0);
	\vertex(3, 3);
	
	\vertex(2, 3);
	\vertex(3, 2);
	
	\draw[line width=.5pt, ->] (0, 2.85) -- (0, .15);
	\draw[line width=.5pt, ->] (.15, 3) -- (1.85, 3);
	\draw[line width=.5pt, ->] (2.85, 0) -- (.15, 0);
	\draw[line width=.5pt, ->] (3, 0.15) -- (3, 1.85);
	\draw[line width=.5pt, ->] (2.15, 3) -- (2.85, 3);
	\draw[line width=.5pt, ->] (3, 2.15) -- (3, 2.85);
	
	\node[above left=1pt] at (0, 3) {$a= a_0$};
	\node[below right] at (3, 0) {$b = b_0$};
	\node[above=1pt] at (2, 3) {};
	\node[below=1pt] at (2, 3) {$c= a_k$};
	\node[above right] at (3, 3) {$d= a_{k+1} = b_{k+1}$};
	\node[left=1pt] at (3, 2) {$e= b_k$};
	\node[right=1pt] at (3, 2) {};
	
	\node[above] at (1, 3) {$\baf$};
	\node[below] at (1.5, 0) {$\bag$};
	\node[left]  at (0, 1.5) {$\bag$};
	\node[right] at (3, 1) {$\baf$};
	\node[above] at (2.5, 3) {$f$};
	\node[right=1pt] at (3, 2.5) {$f$};
	
	\end{tikzpicture}
	
	\caption{A critical violating tour of $\PFD[\baf f, \bag]$. The sequences $a_0 \cdots a_k$
		and $b_0 \cdots b_k$ are
		$\baf$-walks; the elements
		$c$, $d$ and $e$ are \emph{distinct}.}
	\label{fig:violation_critical}
\end{figure}
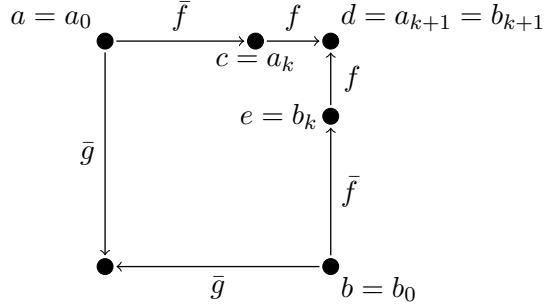
\ignore{\begin{figure}[ht]
\centering
\begin{tikzpicture}

\vertex(0, 0);
\vertex(0, 3);
\vertex(3, 0);
\vertex(3, 3);
\vertex(4.25, 4.25);
\vertex(-1.25, -1.25);

\draw[line width=.5pt, ->] (0, 2.85) -- (0, .15);
\draw[line width=.5pt, ->] (.15, 3) -- (2.85, 3);
\draw[line width=.5pt, ->] (2.85, 0) -- (.15, 0);
\draw[line width=.5pt, ->] (3, 0.15) -- (3, 2.85);
\draw[line width=.5pt, ->] (3.1, 3.1) -- (4.15, 4.15);
\draw[line width=.5pt, ->] (-.1, -.1) -- (-1.15, -1.15);

\node[above left] at (0, 3) {$a$};
\node[below right] at (3, 0) {$b$};

\node[above] at (1.5, 3) {$\baf'$};
\node[below] at (1.5, 0) {$\bag'$};
\node[left]  at (0, 1.5) {$\bag'$};
\node[right] at (3, 1.5) {$\baf'$};

\node[below=2pt] at (3.75, 3.75) {$\baf''$};
\node[above=1pt] at (-.75, -.75) {$\bag''$};

\end{tikzpicture}
\caption{A violating pair $a, b$ for a path-functional dependency
$\PFD[\baf, \bag]$. Note that the arrows denote sequences
of functional predicates and not single (directed) edges. In that context, $\baf = \baf' \! \baf''$ 
and $\bag = \bag' \bag''$. Note, also, that $\baf''$ and $\bag''$ can be empty. The pair $a, b$ is
critical just in case $\baf'' = \epsilon$.}
\label{fig:violation}
\end{figure}}
We can make this observation---which is fundamental to the
entire development of Section~\ref{sec:decompositions}, \ref{sec:violations} and~\ref{sec:tour_predicates}---work for us if we rotate the tour
so that it starts at $a_{k}$.
Write $c = a_{k}$, $d = a_{k+1} = b_{k+1}$ and $e = b_{k}$, so that
$\gA \models f(c,d)$ and $\gA \models f(e,d)$, with
$c$, $d$ and $e$ all distinct. This yields a tour
\begin{equation*}
(cde\bar{e}, f f^{-1} \baf^{-1} \bag \bag^{-1}\baf f)
\end{equation*}
for some sequence of elements $\bar{e}$. For brevity,
we shall always write
$\seq$ in the sequel to denote the `rotated' sequence of graph predicates 
\begin{equation*}
\seq =  f f^{-1} \baf^{-1} \bag \bag^{-1}\baf f
\end{equation*}
obtained from the binary path-functional dependency $\PFD[\baf f, \bag]$.
Thus, $\kappa$ is critically violated in a structure $\gA$ if and only if $\gA$ contains
an $\seq$-tour whose first three elements are distinct. 

Note that the diagram of Fig.~\ref{fig:violation_critical} depicts an $\seq$-tour, and not its locus. In 
particular, there is no assumption that the sequences $\ba$, $\bb$, $\ba'$ and $\bb'$ do not contain repeated elements, and
no assumption that they are disjoint. Indeed, we recall Lemma~\ref{lem:big_cycles}, which allows us to restrict attention to
$\ell$-quasi-acyclic structures for various $\ell$. If $\ell > |\seq|$, then all violations of $\kappa$ 
involving only passive elements will yield acyclic $\seq$-tours. This forms our main line of attack: Sections~\ref{sec:decompositions} and \ref{sec:violations} are concerned
with the classification of $\seq$-tours; 
Section~\ref{sec:tour_predicates} uses this classification to encode the non-existence of critical violations
using $\GC$-formulas and conditions on the database; finally, Section~\ref{sec:gcdk} assembles all these observations to
yield a reduction of the (finite) satisfiability problem for $\GCDK$ to the corresponding problem for $\GCD$.

\section{Decompositions of Walks and Tours}
\label{sec:decompositions}
We have seen that critical violations of binary path-functional dependencies in a structure
correspond to the existence of certain tours in the graph of that structure.
This section presents the basic tools we require in the sequel for decomposing walks and tours in structures.
None of the reasoning involved goes beyond elementary graph theory. Lemmas~\ref{lma:spine}--\ref{lma:retrace}
concern acyclic walks and tours;  Lemmas~\ref{lem:simple_db1}--\ref{lem:isthmus} concern tours in $\ell$-quasi-acyclic structures.

\begin{lem}
Let $\gA$ be a structure and $[\ba, \bt]$ be an acyclic
walk in $\gA$. 
Then, for some $m \geq1$, the sequences $\ba$ and $\bt$ can be decomposed 
\begin{align*}
\ba \: &= \: \bb_0 b_0 \cdots \bb_{m-1} b_{m-1} \bb_{m} b_m\\
\bt \: &= \: \br_0 r_0 \cdots  \br_{m-1} r_{m-1} \br_{m}
\end{align*}
such that: (i) the sequences $\bb_1, \dots, \bb_m$ are pairwise disjoint; (ii)
for all $j$ ($0 \leq j \leq m$), 
$(\bb_j, \br_j)$ is a tour starting at $b_j$; and
(iii) $[b_0 \cdots b_k, r_0 \cdots r_{m-1}]$ is a path in $\gA$ (Fig.~\ref{fig:backbone}).
\label{lma:spine}
\end{lem}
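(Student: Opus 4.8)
The plan is to reduce everything to the tree structure of the walk and then organise the walk's motion around a single geodesic. Since $[\ba,\bt]$ is acyclic, its locus is a connected graph with no cycles, i.e.\ a tree $T$ (every vertex and edge of which is visited by the walk). Any decomposition of the required shape forces $b_0\cdots b_m$ to be a \emph{simple} path (condition (iii)) running from $b_0=a_0$ to $b_m=a_\ell$, and in a tree there is exactly one such path, namely the geodesic from $a_0$ to $a_\ell$. Thus $m$ and the vertices $b_0,\dots,b_m$ are already determined, and the real content of the lemma is to show that the remaining motion of the walk can be parcelled out into tours $(\bb_j,\br_j)$ hung at the $b_j$, with $\bb_1,\dots,\bb_m$ pairwise disjoint.

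I would argue by induction on the length $\ell$. If $\ba$ has no repeated vertices it is a simple walk, hence (being a walk on a tree) equal to the geodesic, and we take $m=\ell$, $b_j=a_j$ and every $\bb_j$ empty. Otherwise I use the elementary fact that a walk on a tree which never reverses an edge (no index $i$ with $a_{i-1}=a_{i+1}$) is a simple path; since $\ba$ is not simple, some such \emph{backtrack} $a_{i-1}a_ia_{i+1}$ exists. Deleting it yields a strictly shorter acyclic walk $\ba^-$ from $a_0$ to $a_\ell$ whose locus is a subtree of $T$, and in particular with the same geodesic. Applying the induction hypothesis to $\ba^-$ and then reinserting the deleted detour as the length-$2$ tour $(a_{i-1}a_i,\,r_{i-1}r_i)$ based at $a_{i-1}$ — spliced into whichever block of the decomposition carries the relevant occurrence of $a_{i-1}$ — reproduces $\ba$ and keeps the block shape intact. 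This already gives a decomposition of the right shape satisfying (ii) and (iii): each $(\bb_j,\br_j)$ begins and ends at $b_j$ and so is a tour starting there, and the $b_j$, lying on a geodesic, are distinct and joined by the surviving single edges $r_0,\dots,r_{m-1}$, which form the backbone path.

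The step I expect to be the real obstacle is disjointness, condition (i), and it is what must be engineered into the reinsertion above. The difficulty is that the walk may traverse one backbone edge $\{b_j,b_{j+1}\}$ many times (necessarily an odd number, since its net contribution is a single forward crossing), so the excess back-and-forth motion has to be absorbed into the tours; and the obvious groupings — e.g.\ letting $\bb_j$ be everything the walk does between its last visit to $b_j$ and its last visit to $b_{j+1}$ — produce tours at adjacent vertices that \emph{share} $b_{j+1}$, so that $\bb_j\cap\bb_{j+1}\neq\varnothing$. The cure is to \emph{consolidate}: for each backbone edge one nominates a single forward crossing as the progress edge $r_j$ in such a way that the leftover closed excursions across that edge become contiguous and hang from one common endpoint, so that the nontrivial tours occupy pairwise disjoint regions of $T$. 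For instance, a walk that oscillates as $b_0b_1b_2b_1b_2b_3b_2b_3$ along a path is naturally parsed with a single nontrivial tour $(b_2b_1b_2b_3)$ based at $b_2$, all the other tours being empty; the naive last-visit grouping instead splits this into overlapping tours at $b_1$ and $b_2$.

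Finally, since deleting non-endpoint leaves of $T$ (each visit to such a leaf is a backtrack, handled exactly as above) reduces the general case to a walk on a path — the only trees whose leaves are precisely their two endpoints — the heart of the matter is the path case, where the consolidation can be made explicit: traversing the geodesic, at each maximal run of repeatedly-crossed edges one gathers the whole oscillation into a single tour based at an appropriate vertex of that run, leaving the remaining tours in the run empty. I would also record the boundary convention: when $a_0=a_\ell$ the geodesic is a single vertex and the whole walk is one tour ($m=0$), so the stated bound $m\ge 1$ is to be read as applying to walks with distinct endpoints.
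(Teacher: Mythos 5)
Your treatment of conditions (ii) and (iii) is sound: the locus of an acyclic walk is a tree, the spine is forced to be the tree geodesic from $a_0$ to $a_\ell$, and any admissible decomposition amounts to choosing, for each geodesic edge $\{b_j,b_{j+1}\}$, one forward crossing of that edge to serve as $r_j$, these crossings occurring in temporal order; your backtrack-removal induction realizes such a choice. It is, however, much heavier than the paper's proof, which is a single forward pass needing no tree-theoretic apparatus: having reached $b_i$, let $\iota(i{+}1)-1$ be the \emph{last} index at which the walk visits $b_i$, let $b_{i+1}$ be the very next element, and let $\bb_i$ be everything from the arrival at $b_i$ up to that last visit.

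The genuine gap is precisely the step you flagged as ``the real obstacle'': the consolidation you appeal to for condition (i) does not exist, because element-wise disjointness of the tours is not achievable in general --- on your reading of (i), the lemma is simply false, so no amount of clever nomination of progress edges can prove it. Take the tree consisting of the path $b_0 - b_1 - b_2 - b_3$ with a leaf $c$ attached to $b_1$ and a leaf $d$ attached to $b_2$, and the acyclic walk
\begin{equation*}
b_0,\ b_1,\ c,\ b_1,\ b_2,\ b_1,\ c,\ b_1,\ b_2,\ d,\ b_2,\ b_3 .
\end{equation*}
The spine must be $b_0 b_1 b_2 b_3$; the steps $r_0$ and $r_2$ are forced (each of the edges $\{b_0,b_1\}$ and $\{b_2,b_3\}$ is crossed exactly once); and $r_1$ must be one of the two steps $b_1 \to b_2$. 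Choosing the earlier one yields $\bb_1 = b_1\, c$ and $\bb_2 = b_2\, b_1\, c\, b_1\, b_2\, d$, which share $b_1$ (and $c$); choosing the later one yields $\bb_1 = b_1\, c\, b_1\, b_2\, b_1\, c$ and $\bb_2 = b_2\, d$, which share $b_2$. Hence \emph{every} admissible decomposition has $\bb_1 \cap \bb_2 \neq \emptyset$. The only reading under which (i) is true --- and the only thing the paper's remark ``the disjointness of the $\bb_i$ is immediate'' can refer to --- is that the $\bb_j$ are pairwise disjoint \emph{as segments of the walk}: they are carved out of non-overlapping index intervals of $\ba$. That is automatic for any decomposition of the stated shape, requires no engineering, and is all that could ever be used (downstream, the paper in fact only ever invokes the spine property (iii)). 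So your proposal devotes its central effort to establishing a claim that is false, and cannot be completed as designed; dropping that effort and recording the trivial segment-disjointness leaves a correct, if roundabout, proof.
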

\begin{proof}
Let $\ba = a_0 \cdots a_\ell$ ($\ell > 0$).
Let $\iota(0) = 0$ and define $b_0 = a_{\iota(0)} = a_0$. Suppose now that $\iota(i)$ and $b_i$ have been defined,
with $i \leq \ell$.
If $b_i \neq a_\ell$, then let $\iota(i+1)$ be the largest number $j$ ($\iota(i) < j \leq \ell$) such that
$a_{j-1} = b_{i}$, and define $b_{i+1} = a_{\iota(i+1)}$ and $\bb_{i} = a_{\iota(i)} \cdots a_{\iota(i+1)-2}$. 
Likewise, define $r_{i} = t_{\iota(i+1)-1}$ and $\br_{i} = t_{\iota(i)} \cdots t_{\iota(i+1)-2}$.
Then $\gA \models r_i(b_i, b_{i+1})$ and  $(\bb_{i},\br_i)$ is a possibly empty acyclic tour starting at $b_i$. 
When, eventually, $b_{i} = a_\ell$, define $m = i$, $\bb_{m} = a_{\iota(i)} \cdots a_{\ell-1}$
and $\br_{m} = f_{\iota(i)} \cdots f_{\ell-1}$. The disjointness of the
$\bb_i$ is immediate.
\end{proof}

We call the path $[b_0 \cdots b_{k},r_0 \cdots r_{k-1}]$ in the decomposition of $[\ba, \bt]$ given by Lemma~\ref{lma:spine} the \emph{spine} of $[\ba, \bt]$. Where the sequence
$\bt$ is of no interest, we simply say that $b_0 \ldots b_{k}$ is the \textit{spine} of $\ba$.

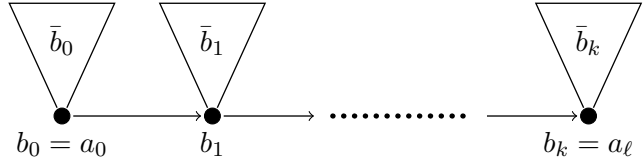
\begin{figure}

\centering
\begin{tikzpicture}

\vertex(0, 0);
\vertex(2, 0);
\vertex(7, 0);

\triangleUp(0, 0);
\triangleUp(2, 0);
\triangleUp(7, 0);

\draw[dotted, line width=2pt, line cap=round, dash pattern=on 0pt off 2\pgflinewidth] (3.6, 0) -- (5.4, 0);

\draw[line width=.5pt, ->] (.15, 0) -- (1.85, 0);
\draw[line width=.5pt, ->] (2.15, 0) -- (3.35, 0);
\draw[line width=.5pt, ->] (5.65, 0) -- (6.85, 0);

\node[below=2pt] at (0, 0) {$b_0 = a_0$};
\node[below=2pt] at (2, 0) {$b_1$};
\node[below=2pt] at (7, 0) {$b_k = a_\ell$};

\node at (0, 1) {$\bb_0$};
\node at (2, 1) {$\bb_1$};
\node at (7, 1) {$\bb_k$};

\node[above=-2pt] at (1, 0) {~};
\node[above=-2pt] at (2.8, 0) {~};
\node[above=-2pt] at (6.3, 0) {~};

\end{tikzpicture}
\caption{A decomposition of a walk $a_0 \cdots a_\ell$ into its spine
$b_0 \cdots b_k$, with a (possibly empty)  acyclic subtour at each $b_i$, $0 \leq i \leq k$.}
\label{fig:backbone}
\end{figure}

\begin{lem}
Let $\gA$ be a structure and $\ba = a_0, \ldots, a_{\ell - 1}$ a non-empty acyclic tour in $\gA$. Let $a$ be any element of $\ba$ not equal to $a_0$, let $i$, $j$
($0 < i \leq j < \ell$) be the smallest and the largest index
respectively such that $a_i = a_j = a$, and let 
$a_\ell = a_0$. Then $a_{j+1} = a_{i-1}$.
\label{lma:retrace}
\end{lem}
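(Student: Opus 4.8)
The plan is to read off the promised identity from the tree structure that acyclicity forces on the tour. First I would observe that the locus of $\ba$ is connected, being traced out by the single closed walk $a_0, a_1, \dots, a_{\ell-1}, a_\ell = a_0$ in which consecutive elements are neighbours; and since the tour is acyclic, its locus contains no cycle. Hence the locus $T$ is a \emph{tree}, and the entire argument takes place inside $T$, using nothing beyond elementary graph theory, as the section announces.

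The key structural fact I would isolate is that $a$ is a cut vertex of $T$ with a clean separation property. Deleting $a$ from the tree $T$ splits it into exactly $\deg_T(a)$ connected components, each joined to $a$ by a single edge. The point to record is that every edge of $T$ \emph{not} incident with $a$ has both endpoints in the same component; consequently any consecutive pair $a_p, a_{p+1}$ with $a_p \neq a$ and $a_{p+1} \neq a$ lies within one component, so a maximal stretch of the walk that avoids $a$ is confined to a single component of $T - a$. Since $a \neq a_0$, I fix the component $C$ containing $a_0$ and let $n$ be the unique neighbour of $a$ lying in $C$.

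Now I apply this to the two extreme visits of $a$. Because $i$ is the smallest index with $a_i = a$, every element of the prefix $a_0, \dots, a_{i-1}$ differs from $a$, so this prefix lies entirely in $C$; as $a_{i-1}$ is a neighbour of $a_i = a$ lying in $C$, it must equal $n$. Symmetrically, because $j$ is the largest index with $a_j = a$ and $a_\ell = a_0 \in C$, every element of the suffix $a_{j+1}, \dots, a_\ell$ differs from $a$, so the suffix lies in $C$; since $a_{j+1}$ is a neighbour of $a_j = a$ lying in $C$, again $a_{j+1} = n$. Hence $a_{i-1} = n = a_{j+1}$, which is the claim. The indices are legitimate: $i \geq 1$ gives $i - 1 \geq 0$, and $j < \ell$ gives $j+1 \leq \ell$ with $a_\ell = a_0$; the degenerate case $i = j$ is covered verbatim (the walk enters and leaves $a$ through $n$).

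The only step that genuinely relies on acyclicity—and hence the one I would justify rather than assert—is the separation claim that a walk segment avoiding $a$ cannot pass between components of $T - a$. In a general graph this can fail, but in a tree it is immediate, since deleting a single vertex cannot merge two components and every edge not incident with $a$ is internal to one component. Once that is in hand, matching the first and the last visits of $a$ to the \emph{same} neighbour $n$ is pure bookkeeping, so I expect no real obstacle beyond stating the cut-vertex property carefully.
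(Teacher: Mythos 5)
Your proof is correct, but it takes a genuinely different route from the paper's. The paper first extracts the \emph{spines} (via Lemma~\ref{lma:spine}) of the two walks $a_0,\ldots,a_{i-1}$ and $a_{j+1},\ldots,a_\ell$, obtaining a path $\bb$ from $a_0$ to $a_{i-1}$ and a path $\bb'$ from $a_{j+1}$ to $a_0$; it then splices these together through the vertex $a$ (cutting $\bb'$ at its first intersection with $\bb$) to exhibit an explicit cycle in the locus whenever $a_{i-1} \neq a_{j+1}$, contradicting acyclicity. You instead observe that the locus is connected and cycle-free, hence a tree, and invoke the cut-vertex structure of trees: the components of $T - a$ are in bijection with the neighbours of $a$, and any stretch of the tour avoiding $a$ stays inside one component; since both the prefix before the first visit and the suffix after the last visit live in the component of $a_0$, both must enter/leave $a$ through that component's unique neighbour $n$. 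Your argument is more conceptual and self-contained (it does not use the spine machinery at all), and it isolates the real reason the lemma holds---$a$ is a cut vertex of a tree---whereas the paper's proof has the virtue of reusing the decomposition tools just developed and of producing the forbidden cycle concretely, in the style of Fig.~\ref{fig:simple-helper}. Both arguments handle the degenerate case $i = j$ and the endpoint bookkeeping ($i \geq 1$, $j+1 \leq \ell$, $a_\ell = a_0 \neq a$) correctly.
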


\begin{proof}
Observe first that the words $a_0, \ldots, a_{i - 1}$ and $a_{j+1}, \ldots, a_{\ell}$ are non-empty, since
$a_i = a_j \neq a_0 = a_\ell$; thus, 
these words define walks in $\gA$.
Let $\bb$ be the spine of the walk $a_0, \ldots, a_{i - 1}$, and let
$\bb'$ be the spine of the walk $a_{j+1}, \ldots, a_{\ell}$. 
Thus,
$\bb$ is a path from $a_0$ to $a_{i-1}$ and 
$\bb'$ a path from $a_{j+1}$ to $a_\ell$. 
Let $\bb''c$ be the shortest prefix of $\bb'$ that intersects $\bb$,
and let $\bb'''$ be the suffix of $\bb$ beginning with the element $c$.
Then, if $a_{i-1} \neq a_{j+1}$, $\bb''' a_i \bb''$ is a cycle, contrary to hypothesis.
(See Fig.~\ref{fig:simple-helper} for an illustration.)
\end{proof}

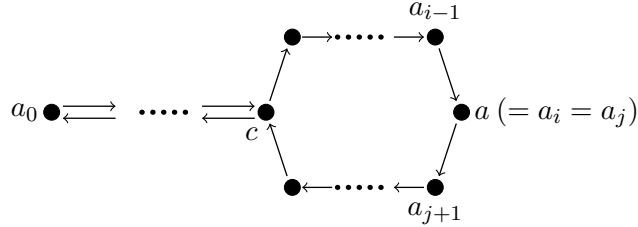
\begin{figure}
\centering
\begin{tikzpicture}

\vertex(0, 0);
\node[left=1pt] at (0, 0) {$a_0$};
\draw[line width=.5pt, ->] (.15, .08) -- (.85, .08);
\draw[line width=.5pt, <-] (.15, -.08) -- (.85, -.08);
\draw[dotted, line width=2pt, line cap=round, dash pattern=on 0pt off 2\pgflinewidth] (1.2, 0) -- (1.8, 0);
\draw[line width=.5pt, ->] (2, .08) -- (2.7, .08);
\draw[line width=.5pt, <-] (2, -.08) -- (2.7, -.08);
\vertex(2.85, 0); \node[below=2pt] at (2.85, 0) {$c \quad$};

\vertex(5.45, 0);
\node[right=1pt] at (5.45, 0) {$a \: (= a_i = a_j)$};

\vertex(5.1, -1);
\node[below=2pt] at (5.1, -1) {$a_{j+1}$};
\draw[line width=.5pt, ->] (4.95, -1) -- (4.55, -1);
\vertex(3.2, -1);
\draw[line width=.5pt, ->] (3.73, -1) -- (3.33, -1);
\draw[dotted, line width=2pt, line cap=round, dash pattern=on 0pt off 2\pgflinewidth] (4.38, -1) -- (3.76, -1);
\draw[line width=.5pt, ->] (5.38, -.15) -- (5.15, -.85);
\draw[line width=.5pt, ->] (3.13, -.85) -- (2.9, -.15);

\vertex(5.1, 1);
\node[above=2pt] at (5.1, 1) {$a_{i-1}$};
\draw[line width=.5pt, <-] (4.95, 1) -- (4.55, 1);
\vertex(3.2, 1);
\draw[line width=.5pt, <-] (3.73, 1) -- (3.33, 1);
\draw[dotted, line width=2pt, line cap=round, dash pattern=on 0pt off 2\pgflinewidth] (4.38, 1) -- (3.76, 1);
\draw[line width=.5pt, <-] (5.38, .15) -- (5.15, .85);
\draw[line width=.5pt, <-] (3.13, .85) -- (2.9, .15);

\end{tikzpicture}
\caption{A cycle when $a_{i-1} \neq a_{j+1}$, as described in the proof of Lemma~\ref{lma:retrace}; $\bb'''$ is
the path from $c$ to $a_{i-1}$ and $\bb''c$ is the path from $a_{j+1}$ to $c$.}
\label{fig:simple-helper}
\end{figure}

The following lemmas concern tours in $\ell$-quasi-acyclic structures. We start with the observation that, if a tour of length $\ell$ in such a structure exits the database at some 
point, then it must re-enter at the same point. Recall that a mixed tour is one featuring both active and passive elements; and remember that, in the graph of a structure $\gA$, the database is totally connected. 
\begin{lem}
\label{lem:simple_db1}
Let $\gA$ be an $\ell$-quasi-acyclic structure and $\ba = a_0 \cdots a_{\ell - 1}$
a tour in $\gA$. Suppose $i$ ($0 \leq i < \ell$) is such that $a_0 \cdots a_i$ are all active, but $a_{i+1}$ passive. 
Writing $a_\ell = a_0$, let
$j$ be the smallest index $j$ ($i < j < \ell$) such that $a_{j+1}$ is active. Then $a_{i+1} = a_{j}$
and $a_{j+1} = a_{i}$. 
\end{lem}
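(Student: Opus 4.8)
The plan is to exploit $\ell$-quasi-acyclicity to show that the sub-walk running from the last active element $a_i$, out through the passive elements $a_{i+1}\cdots a_j$, and back to the first active element $a_{j+1}$ is \emph{acyclic}, and then to read off the two equalities from the structure of acyclic walks and tours. First I would record the shape of the tour: by the minimality of $j$, the elements $a_{i+1},\dots,a_j$ are all passive, whereas $a_i$ and $a_{j+1}$ are active, so $W = a_i a_{i+1}\cdots a_j a_{j+1}$ is a walk whose two endpoints are active and whose interior vertices are all passive. The key preliminary observation is that $W$ is acyclic. Indeed, the locus of the whole tour $\ba$ has at most $\ell$ vertices, so every cycle it contains has length at most $\ell$; since such a cycle is a cycle of $\gA$ and $\gA$ is $\ell$-quasi-acyclic, it cannot be passive or mixed, hence must be active. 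But the only active vertices available to $W$ are $a_i$ and $a_{j+1}$, and a cycle needs at least three distinct vertices; so the locus of $W$, being a subgraph of the locus of $\ba$, contains no cycle at all.

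Next I would prove $a_{j+1} = a_i$. Suppose not. Then $a_i$ and $a_{j+1}$ are distinct active elements, hence joined by a database edge in the graph of $\gA$. Since $W$ is acyclic its locus is a tree, and the spine of $W$ (Lemma~\ref{lma:spine}), which is a path from $a_i$ to $a_{j+1}$, together with this database edge forms a cycle. As $a_i$ and $a_{j+1}$ are not neighbours in the locus of $W$ (a consecutive pair of $W$ cannot equal $\{a_i,a_{j+1}\}$, since that would force a passive element to equal an active one), the spine has a passive interior vertex, so this cycle is mixed. Its length is at most the number of vertices of $W$, namely $j-i+2$, and a short check gives $j-i+2 \le \ell$ in every case: if $i \ge 1$ this follows from $j \le \ell-1$, while if $i = 0$ then $j = \ell-1$ would force $a_{j+1} = a_\ell = a_0 = a_i$, contrary to assumption, so $j \le \ell-2$. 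This contradicts $\ell$-quasi-acyclicity, establishing $a_{j+1} = a_i$.

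Finally, with $a_{j+1}=a_i$ in hand, the segment $T = a_i a_{i+1}\cdots a_j$ is a non-empty acyclic tour of length $m = j-i+1 \ge 2$ whose unique active element is its start $a_i$. I would apply Lemma~\ref{lma:retrace} to $T$ with the passive element $a_{i+1}$, which differs from the active start. Its first occurrence in $T$ is at position $1$, so if $j^\ast$ denotes its last occurrence the lemma says that the element at position $j^\ast+1$ of $T$ (indices read modulo $m$) equals the element at position $0$, namely $a_i$. Since $a_i$ is the only active element of $T$ and $a_{i+1},\dots,a_j$ are all passive, position $j^\ast+1$ can carry an active element only when $j^\ast+1 \equiv 0 \pmod m$, i.e. $j^\ast = m-1$; hence $a_{i+1} = a_j$, as required.

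The main obstacle I anticipate is the acyclicity step combined with the length bookkeeping in the proof of $a_{j+1}=a_i$: one must argue cleanly that quasi-acyclicity rules out passive and mixed cycles inside the tour's locus, isolate the degenerate case $i=0,\ j=\ell-1$ (where the first equality holds for free), and verify that the extracted mixed cycle has length at most $\ell$. Once those points are settled, the second equality falls out mechanically from Lemma~\ref{lma:retrace}.
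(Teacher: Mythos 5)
Your proof is correct. For the first equality ($a_{j+1} = a_i$) you follow essentially the same route as the paper: extract a spine path through the passive segment, close it up with the database edge joining the two distinct active endpoints, and contradict $\ell$-quasi-acyclicity with the resulting mixed cycle. You are in fact more explicit than the paper on two points it leaves tacit: that the relevant walk is acyclic (so that Lemma~\ref{lma:spine} applies), which you derive neatly from the observation that any cycle in the locus of the tour has length at most $\ell$ and hence must be active, and that the cycle produced has length at most $\ell$ even in the corner case $i=0$, $j=\ell-1$. For the second equality ($a_{i+1}=a_j$) the two arguments genuinely diverge: the paper stays with the spine, noting that if $a_{i+1}\neq a_j$ then prepending $a_i=a_{j+1}$ to the spine of the passive walk yields a second mixed cycle of length at least $3$ and at most $\ell$, contradicting quasi-acyclicity once more; you instead use the already-established equality $a_{j+1}=a_i$ to close the segment $a_i a_{i+1}\cdots a_j$ into a non-empty acyclic tour and invoke Lemma~\ref{lma:retrace} on the element $a_{i+1}$, concluding that the successor of its last occurrence must equal $a_i$, which---since $a_i$ is the tour's unique active element---forces that last occurrence to sit at the final position, i.e.\ $a_j=a_{i+1}$. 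Both mechanisms are sound; the paper's re-invokes quasi-acyclicity of the ambient structure a second time, whereas yours re-uses only the acyclicity of the tour plus the retracing lemma, which is a clean division of labour and mirrors how the paper itself deploys Lemma~\ref{lma:retrace} in later arguments such as Lemma~\ref{lem:cond_i}.
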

\begin{proof}
By construction, $a_{i+1} \cdots a_j$ is a passive walk from $a_{i+1}$ to $a_j$.
Let the spine of this walk be $\bb$; note that the sequence $\bb$ is non-empty, and its 
initial and final elements are, respectively, $a_{i+1}$ and
$a_j$.
Suppose first that $a_{i} \neq a_{j+1}$. Since these elements are active, they are joined by an 
edge in the graph of $\gA$, whence $a_i \bb a_{j+1}$ is a mixed cycle in $\gA$ of length (at least 3 and) at most $\ell$. 
Since $\gA$ is $\ell$-quasi-acyclic this is a contradiction, whence $a_{i} = a_{j+1}$.
Suppose, now that $a_{i+1} \neq a_{j}$, whence $|\bb| \geq 2$. Then, since $a_{i} = a_{j+1}$, it follows that $a_{i} \bb$ is a 
mixed cycle of length (at least 3 and) at most $\ell$. Again, this is a contradiction, whence $a_{i+1} = a_{j}$.
\end{proof}
The next lemma is a similar to the last, but with `enter' and `exit' transposed. The reasoning is essentially 
identical.
\begin{lem}
\label{lem:simple_db2}
Let $\gA$ be an $\ell$-quasi-acyclic structure and $\ba = a_0 \cdots a_{\ell - 1}$
a tour in $\gA$. Suppose $i$ ($0 \leq i < \ell$) is such that $a_0 \cdots a_i$ are all passive, but $a_{i+1}$ active. 
Writing $a_\ell = a_0$, let
$j$ be the largest index $j$ ($i < j < \ell$) such that $a_j$ is active and 
$a_{j+1}$ is passive. Then $a_{i+1} = a_{j}$
and $a_{j+1} = a_{i}$. 
\end{lem}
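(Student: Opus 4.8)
The plan is to follow the proof of Lemma~\ref{lem:simple_db1} in spirit, interchanging the roles of the active and passive portions of the tour. There the relevant passive segment was the short inner excursion $a_{i+1}\cdots a_j$; here the passive segment is instead the outer portion wrapping around index $0$, namely $a_{j+1}\cdots a_{\ell-1}a_0\cdots a_i$ (recall $a_\ell=a_0$), while the active excursion $a_{i+1}\cdots a_j$ now plays the inner role.

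First I would check that this wrap-around segment is passive throughout. This is the one point not immediate ``by construction'', and it follows from the maximality of $j$: if some $a_m$ with $j<m<\ell$ were active, then choosing the largest such $m$, its successor $a_{m+1}$ would be passive (because $a_0$ is passive and by maximality of $m$), giving an active-to-passive transition at an index exceeding $j$, contrary to the choice of $j$. Hence $a_{j+1}\cdots a_{\ell-1}$ are all passive, and with the hypothesis on $a_0\cdots a_i$ this exhibits a passive walk from $a_{j+1}$ to $a_i$. I would then pass to its spine $\bb$ via Lemma~\ref{lma:spine}, obtaining a passive path from $a_{j+1}$ to $a_i$.

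The rest is the two-case cycle argument. If $a_{i+1}\neq a_j$, then, both being active, they are joined by a database edge, and $a_j\,\bb\,a_{i+1}$ is a mixed cycle; its length is at most $\ell$ since it arises from the tour by contracting the passive walk to $\bb$ and replacing the whole active excursion by one database edge (that excursion being non-trivial, as $a_{i+1}\neq a_j$ forces $j\geq i+2$). By $\ell$-quasi-acyclicity this is impossible, so $a_{i+1}=a_j$. Granting this, if $a_{j+1}\neq a_i$ then $\bb$ has at least two vertices and, since $a_{i+1}=a_j$, the walk $a_j\,\bb$ closes into a mixed cycle of length at most $\ell$ (the tour edge from $a_i$ to $a_{i+1}=a_j$ providing the closing edge); again a contradiction, so $a_{j+1}=a_i$.

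The only genuinely new ingredient relative to Lemma~\ref{lem:simple_db1} is the passivity of the wrap-around segment, which I expect to be the main---though minor---obstacle; the cycle-length bookkeeping, in particular verifying that both constructed cycles have length at most $\ell$, is otherwise identical to the earlier lemma.
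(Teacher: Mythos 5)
Your proof is correct and is exactly the argument the paper intends: the paper gives no separate proof of this lemma, saying only that the reasoning is ``essentially identical'' to that of Lemma~\ref{lem:simple_db1}, and your transposition of that proof (spine of the passive segment, then the two cycle-contradiction cases with the database edge supplied by quasi-acyclicity) is the right one. You also correctly identified and discharged the one genuinely new point that the paper leaves implicit---that the wrap-around segment $a_{j+1}\cdots a_{\ell-1}a_0\cdots a_i$ is passive, by maximality of $j$---together with the length bookkeeping $j \geq i+2$ needed for the first cycle bound.
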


The final two lemmas of this section will play an important role in
Sec.~\ref{sec:gcdk}.

\begin{lem}
\label{lem:isthmus}
Let $\gA$ be an $\ell$-quasi-acyclic structure, and
$(\ba, \bt)$ a mixed tour in $\gA$ of length $\ell$, beginning with some passive element $a$. Then $(\ba,\bt)$ can be decomposed as $(\bc \ba^* \bb, \bs \bt^* \br)$
such that, for some active element $b$:
(i) $[\bb a; \br]$ is an acyclic walk from $b$ to $a$ in which $b$ occurs exactly once; (ii) $[\bc b; \bs]$ is an acyclic walk from $a$ to $b$ in which $b$ occurs exactly once; and
(iii) $(\ba^*; \bt^*)$ is a tour beginning at $b$. 
\end{lem}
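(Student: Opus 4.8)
The plan is to locate the required active element $b$ as the point where the tour first enters the database, and to use Lemma~\ref{lem:simple_db2} to show that the very same element is also the point of final exit. Concretely, write $\ba = a_0 \cdots a_{\ell-1}$ with $a_0 = a$ and $a_\ell = a_0$, and let $i$ be the largest index for which $a_0, \dots, a_i$ are all passive. Since the tour is mixed, $a_{i+1}$ exists and is active. Let $j$ be the largest index ($i < j < \ell$) with $a_j$ active and $a_{j+1}$ passive; such a $j$ exists because the tour is a cycle returning to the passive element $a_0 = a_\ell$, so the entry into the database at $a_{i+1}$ must be followed by some later exit. Applying Lemma~\ref{lem:simple_db2} now yields $a_{i+1} = a_j$ and $a_{j+1} = a_i$, and I set $b := a_{i+1} = a_j$.

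With this $b$ I would take the decomposition $\bc = a_0 \cdots a_i$, $\ba^* = a_{i+1} \cdots a_{j-1}$, and $\bb = a_j \cdots a_{\ell-1}$, letting $\bs, \bt^*, \br$ be the corresponding segments of $\bt$; then $\ba = \bc \ba^* \bb$ by construction. For (iii), the edge of $\bt$ issuing from $a_{j-1}$ in the original tour leads to $a_j = b = a_{i+1}$, which is precisely the wrap-around edge that turns $(\ba^*, \bt^*)$ into a tour beginning at $b$ (and $\ba^*$ may be empty, giving the empty tour at $b$, which causes no trouble). For (ii), the walk $[\bc b; \bs] = a_0 \cdots a_i a_{i+1}$ runs from $a$ to $b$; and for (i), the walk $[\bb a; \br] = a_j \cdots a_{\ell-1} a_0$ runs from $b = a_j$ to $a = a_0$. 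Checking that the wrap-around edge for $\ba^*$ is correctly inherited is one of the two places that needs care.

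It remains to verify acyclicity and the ``exactly once'' conditions, which is where $\ell$-quasi-acyclicity enters. In walk (ii) the prefix $a_0 \cdots a_i$ is passive by the choice of $i$, so $b$ appears only at the endpoint; by the maximality of $j$ the suffix $a_{j+1} \cdots a_{\ell-1} a_0$ is likewise passive, so in walk (i) too $b$ appears only at its (initial) endpoint. Since $b$ is then a neighbour of just one element, it has degree one in the locus of each walk and so lies on no cycle; any cycle in either locus would therefore lie wholly within the passive segment and constitute a passive cycle of length at most $\ell$, contradicting $\ell$-quasi-acyclicity. Hence both walks are acyclic. I expect this acyclicity bookkeeping to be the only genuinely delicate point; the structural skeleton of the decomposition is handed to us directly by Lemma~\ref{lem:simple_db2}.
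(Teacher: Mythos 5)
Your proof is correct, and the decomposition it produces is exactly the paper's: $b$ is the first active element of the tour, and the tour is cut at the first and last occurrences of $b$. Two differences in justification are worth noting. First, you reach the second cut point by taking the last active-to-passive transition and invoking Lemma~\ref{lem:simple_db2}; the paper simply takes $j$ to be the greatest index with $a_j = b$. By your own observation that everything after your $j$ is passive, the two indices coincide, so your detour through Lemma~\ref{lem:simple_db2} is sound but strictly unnecessary. Second---and this is a genuine divergence---your acyclicity argument is local where the paper's is global. The paper's proof disposes of acyclicity in one line, asserting that the whole mixed tour $(\ba,\bt)$ is acyclic and letting the two walks inherit acyclicity from it; taken literally, that claim is too strong, since the locus of a mixed tour may contain a cycle consisting entirely of active elements, and active cycles are not constrained by $\ell$-quasi-acyclicity (which is precisely why item (iii) of the lemma asserts only that $(\ba^*,\bt^*)$ is a tour, not an acyclic one). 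Your argument---each outer walk contains the single active element $b$, which has degree one in its locus, so any cycle in that locus would be a passive cycle of length at most $\ell$, contradicting $\ell$-quasi-acyclicity---applies to exactly the two walks for which acyclicity is claimed, and is the argument the lemma actually needs; indeed, since every other vertex of these walks is passive, even a cycle through $b$ would be mixed and hence forbidden, so the conclusion is doubly secured. In short, your version costs a little more bookkeeping but is, on the one delicate point, more rigorous than the paper's own proof.
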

\begin{proof}
Write $\ba = a_0 \cdots a_{\ell -1}$,
let $b = a_i$ be the first active element of $\ba$, and let $j$ ($i \leq j < \ell$) 
be the greatest index such that $a_j = b$. Since $a$ is passive, $i > 0$.
Write $\bc = a_0 \cdots a_{i-1}$, $\ba^* = a_{i} \cdots a_{j-1}$ and
$\bb = a_j \cdots a_{\ell-1}$. Let $\bs$, $\bt^*$ and $\br$ be the corresponding decomposition of $\bt$. Since $(\ba, \bt)$ is a mixed tour, it is acyclic; so therefore are
the walks $[\bb a; \br]$ and $[\bc b; \bs]$ which it contains.
\end{proof}

\begin{lem}
\label{lem:dbcycle_decomp}
Suppose $\gA$ is an $\ell$-quasi-acyclic structure containing a tour, $(\ba^*;\bt^*)$, with $|t^*| = \ell$, and suppose the initial element of $\ba^*$ is active.
Then there exist decompositions 
$\ba^* = \bb_0 b_0  \cdots \bb_{m-1} b_{m-1} \bb_m$ and 
$\bt^* = \br_0 r_0 \cdots \br_{m-1} r_{m-1} \br_m$,
such that: (i) $(b_0 \cdots b_{m-1}, r_0 \cdots r_{m-1})$ is an active tour; (ii)
each  
$(\bb_i; \br_i)$ is an acyclic tour beginning at $b_i$ ($1 \leq i < m$),
and $(\bb_m, \br_m)$ is an acyclic tour beginning at $b_0$ (Fig.~\ref{fig:dbcycle}).
\end{lem}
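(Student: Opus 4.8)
The plan is to read the backbone tour $(b_0 \cdots b_{m-1}, r_0 \cdots r_{m-1})$ off the active elements of $\ba^*$ in cyclic order, and to collect the passive detours between them into the blocks $\bb_i$. Write $\ba^* = a_0 \cdots a_{\ell-1}$ with $a_\ell = a_0$ and $a_0$ active. The decisive structural fact is that passive detours never carry us between \emph{distinct} active elements. Consider a maximal run of passive elements $a_{p+1} \cdots a_q$, flanked by the active elements $a_p$ and $a_{q+1}$. Applying Lemma~\ref{lem:simple_db1} to the rotation of the tour that places $a_p$ at position $0$ (so that a single active element is immediately followed by a passive one) gives $a_{q+1} = a_p$ and $a_{p+1} = a_q$; hence every passive detour departs from and returns to one and the same active element, and the passive stretch $a_{p+1} \cdots a_q$ is itself a passive tour based at $a_{p+1}$.

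Granting this, the only steps of $\ba^*$ joining distinct active elements are the direct active-to-active edges $r$ of the tour, for which $\gA \models r(a, a')$ with $a \neq a'$ both active. Reading these advancing edges in cyclic order around $\ba^*$ produces the active elements $b_0, \ldots, b_{m-1}$ and graph predicates $r_0, \ldots, r_{m-1}$; since the tour is closed, $r_{m-1}$ leads back to $b_0$, and since each $r_i$ is irreflexive we have $b_i \neq b_{i+1}$, so this really is an active tour and (i) holds. I then hang each detour on its base: for $1 \le i < m$ the block $\bb_i$ gathers the passive detours made during the visit to $b_i$ lying between the advances $r_{i-1}$ and $r_i$, while the detours during the (cyclically single) visit to the starting vertex $b_0$ are split by the cut point between the opening block $\bb_0$ and the closing block $\bb_m$, both tours based at $b_0$. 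By construction each block is a tour beginning at its active base.

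For (ii) it remains to check that each block is acyclic, and here I would invoke $\ell$-quasi-acyclicity directly. A block $(\bb_i, \br_i)$ is a sub-tour of $\ba^*$, hence of length at most $\ell$, and by construction its only active vertex is its base. Its locus is a subgraph of the graph of $\gA$ with at most $\ell$ edges, so any cycle it contains has at most $\ell$ vertices; having at least three vertices and at most one active one, such a cycle has at least two passive vertices and is therefore a passive or mixed cycle of length at most $\ell$, contradicting $\ell$-quasi-acyclicity. Hence no block's locus contains a cycle.

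The mathematics is light; the real work is bookkeeping. One must bundle the several passive detours that a single visit to an active element may contain (so that no backbone edge is ever forced to be a self-loop on a repeated vertex), track the cyclic wrap-around that distributes the detours at $b_0$ across $\bb_0$ and $\bb_m$, and treat the degenerate tours that make no active-to-active advance. A cleaner route that hides much of this is induction on the number of passive elements of $\ba^*$: peel off one maximal passive detour, which returns to its base by Lemma~\ref{lem:simple_db1} and is acyclic by the argument just given; apply the induction hypothesis to the shorter tour obtained by deleting it; and reinsert the detour into the block at its base, the only thing to verify being that merging an acyclic passive tour into an existing block at the same active vertex preserves acyclicity, which is again immediate from $\ell$-quasi-acyclicity.
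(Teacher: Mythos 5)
Your proof is correct and takes essentially the same route as the paper's: both use Lemma~\ref{lem:simple_db1} (applied to rotations of the tour) to show that every maximal passive excursion returns to the active element it departed from, then read off the active backbone from the active-to-active steps and collect the excursions into blocks based at the active elements, with the blocks at $b_0$ split between $\bb_0$ and $\bb_m$. The differences are only presentational---the paper organizes the same decomposition as an explicit recursion on indices $\iota(i)$, and it compresses your (correct and more carefully argued) quasi-acyclicity step for the blocks into the gloss ``mixed---hence acyclic,'' which is valid there precisely because, as you note, each block contains at most one active element.
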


\begin{proof}
Write $\ba^* = a_0, \ldots, a_{\ell - 1}$, with $a_0$ active. As usual, let $a_\ell = a_0$.
Let $\iota(0) = 0$ and define $b_0 = a_{\iota(0)} = a_0$. Suppose that $\iota(i)$ and $b_i$ have been defined,
with $b_i$ active and $\iota(i) \leq \ell$.
Let $\iota(i+1)$ be the smallest number $j$ ($\iota(i) < j \leq \ell$) such that
$a_{j}$ is active and distinct from $b_i$, assuming first that such an element exists.
Note that, if $\iota(i+1) \neq \iota(i) +1$, then $a_{\iota(i)+1}$ is passive, whence, by Lemma~\ref{lem:simple_db1},
the next active element in the tour $\ba$ after $a_{\iota(i)+1}$ is $a_{\iota(i)} = b_i$. By applying this argument repeatedly, we
see that $a_{\iota(i+1)-1} = b_i$. Define $b_{i+1} = a_{\iota(i+1)}$ and $\bb_{i} = a_{\iota(i)} \cdots a_{\iota(i+1)-2}$. 
Likewise, define $r_{i} = t_{\iota(i+1)-1}$ and $\br_{i} = t_{\iota(i)} \cdots t_{\iota(i+1)-2}$. 
Thus $(\bb_{i},\br_{i})$ is a (possibly empty) mixed---hence acyclic---tour
starting at $b_i$, and $\gA \models r_i(b_i,b_{i+1})$.
Now suppose that there is no number $j$ ($\iota(i) < j \leq \ell$) such that
$a_{j}$ is active and distinct from $b_i$. Set $m = \iota(i)$, and
define $\bb_{m} = a_{\iota(i)} \cdots a_{\ell-1}$. 
Likewise, define $\br_{m} = t_{m} \cdots t_{\ell-1}$.
If $\bb_{m}$ is non-empty, it leaves the database, and, whenever it does so, 
it re-enters at the same point, by Lemma~\ref{lem:simple_db1}.
Hence $(\bb_{m},\br_m)$ is a (possibly empty) mixed---hence acyclic---tour
starting at $b_m$ whence $b_m = a_\ell = a_0$. 
\end{proof}

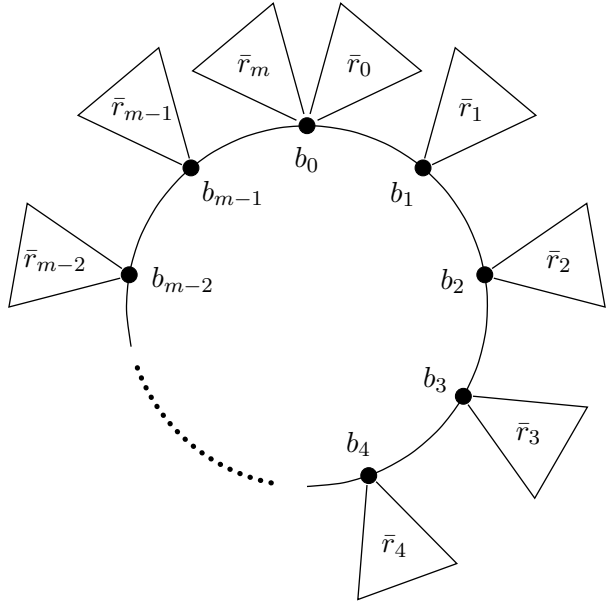
\begin{figure}[ht]
\centering
\begin{tikzpicture}

\def\r{2.4}

\foreach \i/\txt in {1/1,2/2,3/3,4/4,7/{m-2},8/{m-1}} {
  \pgfmathparse{\r*cos(90 - \i*40)} \let\x\pgfmathresult
  \pgfmathparse{\r*sin(90 - \i*40)} \let\y\pgfmathresult
  \vertex(\x, \y);
  \triangleRotate(\x, \y, \i*40);
  
  \pgfmathparse{(\r+1)*cos(90 - \i*40)} \let\xtxt\pgfmathresult
  \pgfmathparse{(\r+1)*sin(90 - \i*40)} \let\ytxt\pgfmathresult
  \node at (\xtxt, \ytxt) {$\br_{\txt}$};
  
  \pgfmathparse{(\r-.45)*cos(90 - \i*40)} \let\xb\pgfmathresult
  \pgfmathparse{(\r-.45)*sin(90 - \i*40)} \let\yb\pgfmathresult
  \node[right=-8pt] at (\xb, \yb) {$b_{\txt}$};  
}

\vertex(0, \r);
\triangleRotate(.02, \r + .03, 40);
\node at (.7, \r + .8) {$\br_0$};
\triangleRotate(-.02, \r + .03, -40);
\node at (-.7, \r + .8) {$\br_m$};
\node at (0, \r - .45) {$b_0$};

\draw[line width=.5pt, domain=-90:193] plot[smooth] ({\r*cos(\x)}, {\r*sin(\x)});
\draw[dotted, line width=2pt, line cap=round, dash pattern=on 0pt off 2\pgflinewidth, domain=200:260] plot[smooth] ({(\r*cos(\x)}, {\r*sin(\x)});

\end{tikzpicture}

\caption{Decomposition of an $\baf$-tour $(\ba^*, \bt^*)$ of length $\ell >0$, beginning in the database, in an
$\ell$-quasi-acyclic structure (Lemma~\ref{lem:dbcycle_decomp}).}
\label{fig:dbcycle}
\end{figure}

\section{Detection of Violations}
\label{sec:violations}

We now show how to detect violations of binary path-functional dependencies,
using the decompositions of tours developed in the previous section.
Recall from Section~\ref{sec:preliminaries} that a critical violation of a binary path-functional
dependency $\kappa$ in a structure $\gA$ is identified with an $\seq$-tour whose first three elements are distinct. Such a tour can be decomposed in various ways,
depending on which of these three elements belong to the active domain (i.e., the database). Altogether,
we isolate nine configurations of $\seq$-tours, illustrated in Fig.~\ref{fig:tours}, with the
circle representing a sub-tour in
the database (which is further decomposed as in Fig.~\ref{fig:dbcycle})
and triangles representing acyclic sub-tours. The task of this section is to establish that 
the nine configurations of Fig.~\ref{fig:tours} exhaust the possible $\seq$-tours. 
In Section~\ref{sec:tour_predicates}, we show how to use $\GC$-formulas together with 
conditions on the database to rule out each of these configurations, and thus to guarantee that $\kappa$
is not critically violated.
In the following sequence of lemmas, we fix $\kappa = \PFD[\baf f, \bag]$, where
$\bag \neq \epsilon$, 
as usual writing $\seq$ for $f f^{-1} \baf^{-1} \bag \bag^{-1} \baf$. Notice that, by construction,
$\sizeOf{\seq} \geq 4$.

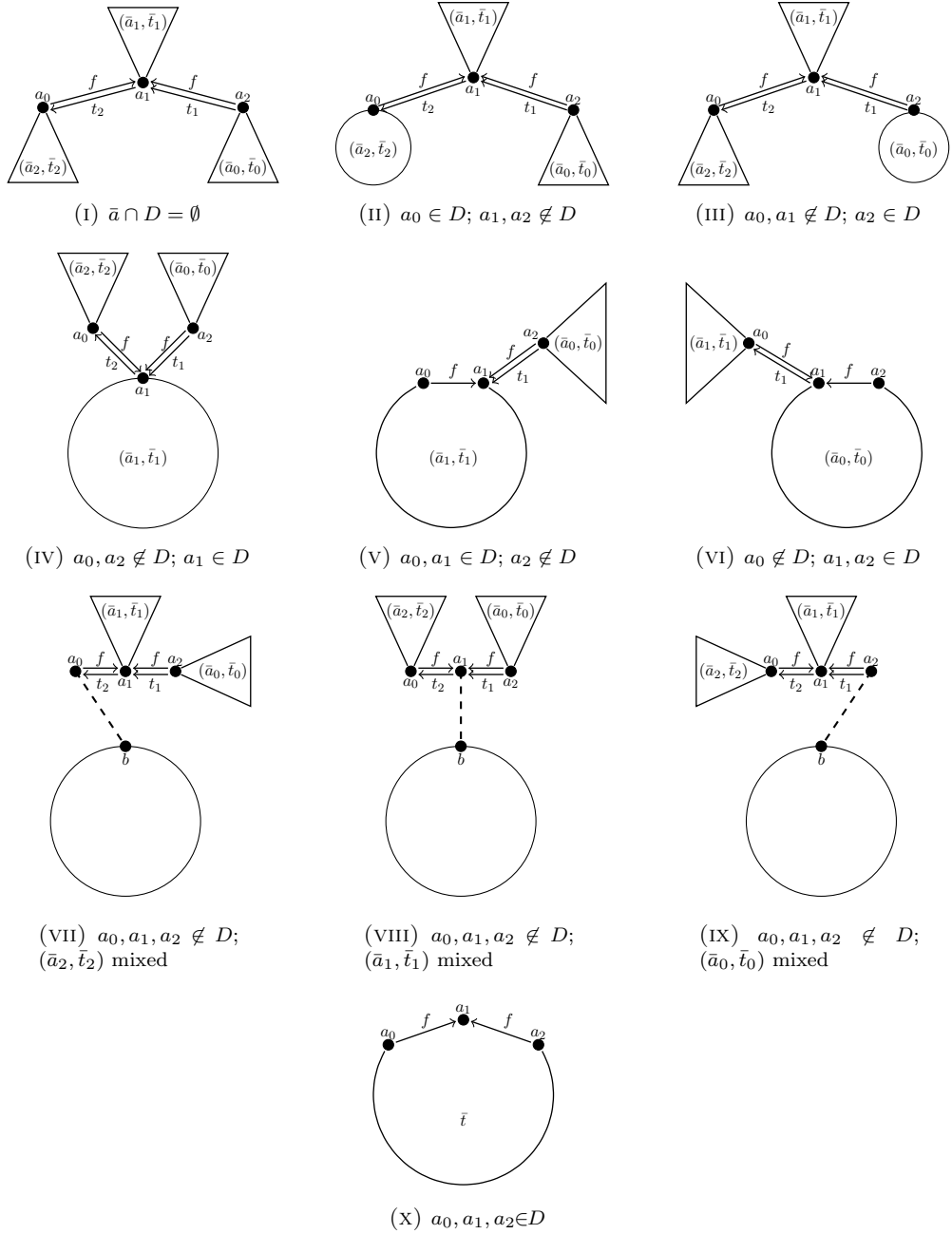
\begin{figure}
\centering
\begin{tabular}{c@{\hskip -20pt} c@{\hskip -20pt} c}

\subfloat[\scriptsize{$\ba \cap D = \emptyset$}]{  
\centering
\begin{tikzpicture}[scale=.7, every node/.style={scale=0.6}]

\vertex(0, 0);
\vertex(2, .5);
\vertex(4, 0);

\triangleDown(0, 0);
\triangleUp(2, .5);
\triangleDown(4, 0);

\draw[line width=.5pt, ->] (.15, .07) -- (1.85, .52);
\draw[line width=.5pt, <-] (.15, -.05) -- (1.85, .4);
\draw[line width=.5pt, ->] (3.85, .07) -- (2.15, .52);
\draw[line width=.5pt, ->] (3.85, -.05) -- (2.15, .39);

\node[above] at (0, 0) {$a_0$};
\node[below=2pt] at (2, .5) {$a_1$};
\node[above] at (4, 0) {$a_2$};

\node at (0, -1.2) {$(\ba_2,\bt_2)$};
\node at (2, 1.7) {$(\ba_1,\bt_1)$};
\node at (4, -1.2) {$(\ba_0,\bt_0)$};

\node at (1, .5) {$f$};
\node at (3, .5) {$f$};
\node at (1.1, -.05) {$t_2$};
\node at (3, -.05) {$t_1$};

\draw[line width=.5pt, white] (-1, -1.65) -- (5, -1.65);

\end{tikzpicture}}  

&  

\subfloat[\scriptsize{$a_0 \in D$; $a_1, a_2 \not \in D$}]{  
\centering
\begin{tikzpicture}[scale=.7, every node/.style={scale=0.6}]

\draw (0, 1.25) circle (.75cm);

\vertex(0, 2);
\vertex(2, 2.65);
\vertex(4, 2);

\node[above] at (0, 2) {$a_0$};
\node[below=2pt] at (2, 2.65) {$a_1$};
\node[above] at (4, 2) {$a_2$};

\triangleUp(2, 2.65);
\triangleDown(4, 2);

\node at (0, 1.2) {$(\ba_2,\bt_2)$};
\node at (2, 3.85) {$(\ba_1,\bt_1)$};
\node at (4, 0.8) {$(\ba_0,\bt_0)$};

\draw[line width=.5pt, ->] (.1, 2.08) -- (1.83, 2.67);
\draw[line width=.5pt, <-] (.15, 2) -- (1.85, 2.58);
\draw[line width=.5pt, ->] (3.87, 2.1) -- (2.15, 2.68);
\draw[line width=.5pt, ->] (3.87, 2.0) -- (2.15, 2.58);

\node at (1, 2.6) {$f$};
\node at (3.2, 2.6) {$f$};
\node at (1.1, 2.06) {$t_2$};
\node at (3.15, 2.04) {$t_1$};

\draw[line width=.5pt, white] (-1, .4) -- (5, .4);

\end{tikzpicture}}  

&  

\subfloat[\scriptsize{$a_0, a_1 \not \in D$; $a_2 \in D$}]{  
\centering
\begin{tikzpicture}[scale=.7, every node/.style={scale=0.6}]

\draw (4, 1.25) circle (.7cm);

\vertex(0, 2);
\vertex(2, 2.65);
\vertex(4, 2);

\node[above] at (0, 2) {$a_0$};
\node[below=2pt] at (2, 2.65) {$a_1$};
\node[above] at (4, 2) {$a_2$};

\triangleUp(2, 2.65);
\triangleDown(0, 2);

\node at (0, 0.8) {$(\ba_2,\bt_2)$};
\node at (2, 3.85) {$(\ba_1,\bt_1)$};
\node at (4, 1.2) {$(\ba_0,\bt_0)$};

\draw[line width=.5pt, ->] (.1, 2.08) -- (1.83, 2.67);
\draw[line width=.5pt, <-] (.15, 2) -- (1.85, 2.58);
\draw[line width=.5pt, ->] (3.87, 2.1) -- (2.15, 2.68);
\draw[line width=.5pt, ->] (3.87, 2.0) -- (2.15, 2.58);

\node at (1, 2.6) {$f$};
\node at (3.2, 2.6) {$f$};
\node at (1.1, 2.06) {$t_2$};
\node at (3.15, 2.04) {$t_1$};

\draw[line width=.5pt, white] (-1.5, .4) -- (5.5, .4);

\end{tikzpicture}}  

\\  

\subfloat[\scriptsize{$a_0, a_2 \not \in  D$;  $a_1 \in D$}]{  
\centering
\begin{tikzpicture}[scale=.7, every node/.style={scale=0.6}]

\vertex(0, 0);
\vertex(-1, 1);
\vertex(1, 1);

\draw (0, -1.5) circle (1.5cm);

\triangleUp(-1, 1);
\triangleUp(1, 1);

\draw[line width=.5pt, transform canvas={yshift=.1ex, xshift=.25ex}, ->] (-.9, .9) -- (-.1, .1);
\draw[line width=.5pt, transform canvas={yshift=-.1ex, xshift=-.25ex}, <-] (-.9, .9) -- (-.1, .1);
\draw[line width=.5pt, transform canvas={yshift=.22ex, xshift=-.2ex}, ->] (.9, .9) -- (.1, .1);
\draw[line width=.5pt, transform canvas={yshift=-.22ex, xshift=.1ex}, ->] (.9, .9) -- (.1, .1);

\node at (.45, .7) {$f$};
\node at (.72, .3) {$t_1$};
\node at (-.35, .7) {$f$};
\node at (-.62, .3) {$t_2$};

\node at (-1, 2.2) {$(\ba_2,\bt_2)$};
\node at (0, -1.6) {$(\ba_1,\bt_1)$};
\node at (1, 2.2) {$(\ba_0,\bt_0)$};

\node[below left=-1pt] at (-1, 1) {$a_0$};
\node[below right=-1pt] at (1, 1) {$a_2$};
\node[below=2pt] at (0, 0) {$a_1$};

\draw[line width=.5pt, white] (-3.1, -3.1) -- (3.1, -3.1);

\end{tikzpicture}}  

&  

\subfloat[\scriptsize{$a_0, a_1 \in D$; $a_2 \not \in D$}]{  
\centering
\begin{tikzpicture}[scale=.7, every node/.style={scale=0.6}]

\vertex(0, 0);
\vertex(1.2, 0);

\node[above] at (0, 0) {$a_0$};
\node[above=1pt] at (1.2, 0) {$a_1\;$};
\node[above left=-1pt] at (2.4, .8) {$a_2$};

\draw[line width=.5pt, ->] (0.15, 0) -- (1.05, 0);
\node at (.6, .25) {$f$};

\vertex(2.4, .8);
\draw[line width=.5pt, transform canvas={yshift=.25ex, xshift=-.15ex}, <-] (1.35, 0.05) -- (2.28, .73);
\draw[line width=.5pt, transform canvas={yshift=-.2ex, xshift=.1ex}, <-] (1.35, 0.05) -- (2.28, .73);

\node at (1.75, .63) {$f$};
\node at (2, .13) {$t_1$};

\triangleBigRight(2.4, .8);
\node at (3.1, .8) {$(\ba_0,\bt_0)$};

\def\r{1.5};
\draw[line width=.5pt, domain=-242:60, xshift=.56cm, yshift=-1.39cm] plot[smooth] ({\r*cos(\x)}, {\r*sin(\x)});
\node at (.6, -1.5) {$(\ba_1,\bt_1)$};

\draw[line width=.5pt, white] (-2, -3) -- (4, -3);

\end{tikzpicture}}  

&  

\subfloat[\scriptsize{$a_0 \not \in D$; $a_1, a_2 \in D$}]{  
\centering
\begin{tikzpicture}[scale=.7, every node/.style={scale=0.6}]

\vertex(0, 0);
\vertex(1.2, 0);

\node[above] at (0, 0) {$a_1$};
\node[above] at (1.2, 0) {$a_2$};

\draw[line width=.5pt, <-] (0.15, 0) -- (1.05, 0);
\node at (.6, .25) {$f$};

\vertex(-1.4, .8);
\draw[line width=.5pt, transform canvas={yshift=.2ex, xshift=.1ex}, <-] (-.15, 0.05) -- (-1.28, .73);
\draw[line width=.5pt, transform canvas={yshift=-.2ex, xshift=-.1ex}, ->] (-.15, 0.05) -- (-1.28, .73);

\node at (-.65, .65) {$f$};
\node at (-.75, .12) {$t_1$};

\triangleBigLeft(-1.4, .8);
\node at (-2.1, .8) {$(\ba_1,\bt_1)$};
\node[above right=-1pt] at (-1.4, .8) {$a_0$};

\def\r{1.5};
\draw[line width=.5pt, domain=-242:60, xshift=.56cm, yshift=-1.39cm] plot[smooth] ({\r*cos(\x)}, {\r*sin(\x)});
\node at (.6, -1.5) {$(\ba_0,\bt_0)$};

\draw[line width=.5pt, white] (-3.2, -3) -- (3, -3);

\end{tikzpicture}}  

\\  

\subfloat[\scriptsize{$a_0, a_1, a_2 \not \in D$; $(\ba_2,\bt_2)$ mixed}]{  
\centering
\begin{tikzpicture}[scale=.7, every node/.style={scale=0.6}]

\node[above] at (0, 0) {$a_0$};
\node[below=1pt] at (1, 0) {$a_1$};
\node[above] at (2, 0) {$a_2$};

\vertex(0, 0);
\vertex(1, 0);
\vertex(2, 0);
\vertex(1, -1.5);

\draw (1, -3) circle (1.5cm);

\triangleUp(1, 0);
\begin{scope}[rotate around={-90:(2,0)}]
  \triangleUp(2, 0);
\end{scope}

\node at (2.95, 0) {$(\ba_0, \bt_0)$};
\node at (1, 1.2) {$(\ba_1, \bt_1)$};

\draw[line width=.5pt, transform canvas={yshift=.27ex}, ->] (.15, 0) -- (.85, 0);
\draw[line width=.5pt, transform canvas={yshift=-.27ex}, <-] (.15, 0) -- (.85, 0);
\draw[line width=.5pt, transform canvas={yshift=.27ex}, <-] (1.15, 0) -- (1.85, 0);
\draw[line width=.5pt, transform canvas={yshift=-.27ex}, <-] (1.15, 0) -- (1.85, 0);

\draw[thick, dashed] (0, 0) -- (1, -1.5);
\node[below=1pt] at (1, -1.5) {$b$};

\node at (.5, .25) {$f$};
\node at (.55, -.28) {$t_2$};
\node at (1.6, .25) {$f$};
\node at (1.6, -.28) {$t_1$};

\draw[line width=.5pt, white] (-2.3, -4.8) -- (5, -4.8);

\end{tikzpicture}}  

&  

\subfloat[\scriptsize{$a_0, a_1, a_2 \not \in D$; $(\ba_1,\bt_1)$ mixed}]{  
\centering
\begin{tikzpicture}[scale=.7, every node/.style={scale=0.6}]

\node[below=1pt] at (0, 0) {$a_0$};
\node[above] at (1, 0) {$a_1$};
\node[below=1pt] at (2, 0) {$a_2$};

\vertex(0, 0);
\vertex(1, 0);
\vertex(2, 0);
\vertex(1, -1.5);

\draw (1, -3) circle (1.5cm);

\triangleUp(0, 0);
\triangleUp(2, 0);

\node at (0, 1.2) {$(\ba_2, \bt_2)$};
\node at (2, 1.2) {$(\ba_0, \bt_0)$};

\draw[line width=.5pt, transform canvas={yshift=.27ex}, ->] (.15, 0) -- (.85, 0);
\draw[line width=.5pt, transform canvas={yshift=-.27ex}, <-] (.15, 0) -- (.85, 0);
\draw[line width=.5pt, transform canvas={yshift=.27ex}, <-] (1.15, 0) -- (1.85, 0);
\draw[line width=.5pt, transform canvas={yshift=-.27ex}, <-] (1.15, 0) -- (1.85, 0);

\draw[thick, dashed] (1, 0) -- (1, -1.5);
\node[below=1pt] at (1, -1.5) {$b$};

\node at (.5, .25) {$f$};
\node at (.55, -.28) {$t_2$};
\node at (1.6, .25) {$f$};
\node at (1.55, -.28) {$t_1$};

\draw[line width=.5pt, white] (-2.5, -4.8) -- (5, -4.8);

\end{tikzpicture}}  

&  

\subfloat[\scriptsize{$a_0, a_1, a_2 \not \in D$; $(\ba_0,\bt_0)$ mixed}]{  
\centering
\begin{tikzpicture}[scale=.7, every node/.style={scale=0.6}]

\node[above] at (0, 0) {$a_0$};
\node[below=1pt] at (1, 0) {$a_1$};
\node[above] at (2, 0) {$a_2$};

\vertex(0, 0);
\vertex(1, 0);
\vertex(2, 0);
\vertex(1, -1.5);

\draw (1, -3) circle (1.5cm);

\begin{scope}[rotate around={90:(0,0)}]
  \triangleUp(0, 0);
\end{scope}
\triangleUp(1, 0);

\node at (-0.95, 0) {$(\ba_2,\bt_2)$};
\node at (1, 1.2) {$(\ba_1, \bt_1)$};

\draw[line width=.5pt, transform canvas={yshift=.27ex}, ->] (.15, 0) -- (.85, 0);
\draw[line width=.5pt, transform canvas={yshift=-.27ex}, <-] (.15, 0) -- (.85, 0);
\draw[line width=.5pt, transform canvas={yshift=.27ex}, <-] (1.15, 0) -- (1.85, 0);
\draw[line width=.5pt, transform canvas={yshift=-.27ex}, <-] (1.15, 0) -- (1.85, 0);

\draw[thick, dashed] (2, 0) -- (1, -1.5);
\node[below=1pt] at (1, -1.5) {$b$};

\node at (.5, .25) {$f$};
\node at (.5, -.28) {$t_2$};
\node at (1.6, .25) {$f$};
\node at (1.5, -.28) {$t_1$};

\draw[line width=.5pt, white] (-3, -4.8) -- (4.7, -4.8);

\end{tikzpicture}}  

\\  


&  

\subfloat[\scriptsize{$a_0, a_1, a_2 \hspace{-1mm} \in \hspace{-1mm} D$}]{ 
\centering
\begin{tikzpicture}[scale=.7, every node/.style={scale=0.6}]

\vertex(0, 0);
\vertex(1.5, .5);
\vertex(3, 0);

\draw[line width=.5pt, ->] (0.15, .05) -- (1.35, .47);
\node at (.75, .48) {$f$};

\draw[line width=.5pt, <-] (1.65, .47) -- (2.85, .05);
\node at (2.4, .48) {$f$};

\def\r{1.8};
\draw[line width=.5pt, domain=-209:29, xshift=1.5cm, yshift=-1cm] plot[smooth] ({\r*cos(\x)}, {\r*sin(\x)});
\node at (1.5, -1.5) {$\bt$};

\node[above] at (0, 0) {$a_0$};
\node[above] at (1.5, .5) {$a_1$};
\node[above] at (3, 0) {$a_2$};

\draw[line width=.5pt, white] (-2, -3) -- (5.4, -3);

\end{tikzpicture}}  

&  


\end{tabular}
\caption{Possible configurations of a violating $\seq$-tour $\bar{a}$ with initial
(distinct) elements $a_0, a_1$ and $a_2$, in a 
$\sizeOf{\seq}$-quasi-acyclic
structure with active domain $D$. Triangles indicate acyclic subtours; the
circle indicates a subtour involving elements of $D$; see Lemmas~\ref{lem:cond_i}--\ref{lem:cond_x}.}
\label{fig:tours}

\end{figure}

\begin{lem}
\label{lem:cond_i}
Suppose $\gA$ is a $\sizeOf{\seq}$-quasi-acyclic structure containing a passive tour, $(\ba,\seq)$, whose first three elements, $a_0$, $a_1$ and $a_2$, are
distinct. Then there exist decompositions 
$\ba = a_0 a_1 \ba_0 a_2 \ba_1 a_1 \ba_2$ and $\seq = f \frev \bt_0 t_1 \bt_1 t_2 \bt_2$, such that 
(i) $(\ba_0,\bt_0)$ is an acyclic tour in $\gA$ starting at $a_2$;
(ii) $(\ba_1,\bt_1)$ is an acyclic tour in $\gA$ starting at $a_1$;
(iii) $(\ba_2,\bt_2)$ is an acyclic tour in $\gA$ starting at $a_0$;
(iv) $t_2(a_1,a_0)$ and $t_1(a_2,a_1)$ are true  in $\gA$ (Fig.~\ref{fig:tours}(\textsc{i})).
\end{lem}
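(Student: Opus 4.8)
The plan is to read the required decomposition straight off the cyclic structure of the tour, with Lemma~\ref{lma:retrace} doing the real work. First I would observe that $(\ba,\seq)$ is a passive tour of length exactly $\sizeOf{\seq}$ in a $\sizeOf{\seq}$-quasi-acyclic structure, and hence acyclic, by the observation recorded in Sec.~\ref{sec:preliminaries}. Writing $\ba = a_0 \cdots a_{N-1}$ and $\seq = r_0 \cdots r_{N-1}$ with $N = \sizeOf{\seq}$ and indices taken modulo $N$, the first two letters of $\seq$ are $r_0 = f$ and $r_1 = \frev$, so $a_0 \xrightarrow{f} a_1 \xrightarrow{\frev} a_2$. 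Since $a_0$, $a_1$, $a_2$ are distinct, $a_1$ first occurs at index $1$ and $a_2$ first occurs at index $2$.

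Next I would apply Lemma~\ref{lma:retrace} twice to the acyclic tour $\ba$. For the element $a_2$ ($\neq a_0$), with smallest index $2$ and largest index $j_2$, the lemma yields $a_{j_2+1} = a_1$, the predecessor of the first occurrence of $a_2$; setting $t_1 = r_{j_2}$ gives $t_1(a_2,a_1)$ in $\gA$. For the element $a_1$ ($\neq a_0$), with smallest index $1$ and largest index $j_1$, the lemma yields $a_{j_1+1} = a_0$; setting $t_2 = r_{j_1}$ gives $t_2(a_1,a_0)$ in $\gA$, and together with $t_1(a_2,a_1)$ this establishes (iv). Crucially, since $a_{j_2+1} = a_1$ exhibits an occurrence of $a_1$, we have $j_1 \ge j_2+1 > j_2$, so the two retraced edges sit in the order demanded by the decomposition.

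Finally, with $2 \le j_2 < j_1 < N$ fixed, I would cut $\ba$ at $j_2$ and $j_1$ into three closed sub-walks: $(\ba_0,\bt_0)$ on indices $2,\dots,j_2$, based at $a_2$ (as $a_2 = a_{j_2}$); $(\ba_1,\bt_1)$ on indices $j_2+1,\dots,j_1$, based at $a_1$ (as $a_1 = a_{j_2+1} = a_{j_1}$); and $(\ba_2,\bt_2)$ on indices $j_1+1,\dots,N-1$ closing back to $a_0$, based at $a_0$. Reassembling gives precisely $\ba = a_0 a_1 \ba_0 a_2 \ba_1 a_1 \ba_2$ and $\seq = f\,\frev\,\bt_0\,t_1\,\bt_1\,t_2\,\bt_2$. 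Each sub-tour is a closed sub-walk of $\ba$, so its locus is a subgraph of the (acyclic) locus of $\ba$ and hence acyclic, giving (i)--(iii); any of them may be empty, which is harmless since the empty tour is admitted. The only delicate point---and the main obstacle---is the index bookkeeping: tying the two invocations of Lemma~\ref{lma:retrace} to the correct first and last occurrences and confirming $j_2 < j_1$, so that the three segments genuinely partition the tour left-to-right. Once that ordering is secured, conditions (i)--(iv) are immediate.
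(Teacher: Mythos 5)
Your proposal is correct and follows essentially the same route as the paper's own proof: establish acyclicity of the passive tour from $\sizeOf{\seq}$-quasi-acyclicity, apply Lemma~\ref{lma:retrace} to the last occurrences of $a_2$ and then $a_1$ (your $j_2$, $j_1$ are exactly the paper's $i$, $j$), and cut the tour at those two indices to obtain the three sub-tours. Your explicit verification that $j_1 \ge j_2+1$ and the locus-subgraph justification of acyclicity are just slightly more detailed renderings of steps the paper leaves implicit.
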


\begin{proof}
Since $(\ba,\seq)$ is passive, it is acyclic.
Write  $\ba = a_0 \cdots a_{\ell - 1}$, $\br = r_0 \cdots r_{\ell - 1}$ and $a_\ell = a_0$; thus, $r_0 = f$ and $r_1 = \frev$. Let $i$ be the largest index ($2 \leq i < \ell$) such that $a_i = a_2$, let
$\ba_0 = a_2 \cdots a_{i-1}$
and $\bt_0 = r_2 \cdots r_{i-1}$. Thus, $(\ba_0,\bt_0)$ is an acyclic tour starting at $a_2$. By Lemma~\ref{lma:retrace}, $a_{i+1} = a_1$; write $t_1 = r_i$. Similarly, let $j$ be the largest index
($i+1 \leq j < \ell$) such that $a_j = a_1$, and let $\ba_1 = a_{i+1} \cdots a_{j-1}$
and $\bt_1 = r_{i+1} \cdots r_{j-1}$; thus, $(\ba_1,\bt_1)$ is an acyclic tour starting at $a_1$. 
By Lemma~\ref{lma:retrace} again, $a_{j+1} = a_0$; write $t_2 = r_j$. Let $\ba_2 = a_{j+1}, \ldots, a_{\ell-1}$ and $\bt_2 = r_{j+1}, \ldots, r_{\ell-1}$. This completes the tour (going back to $a_0$) and
$(\ba_2,\bt_2)$ is acyclic.
\end{proof}
\begin{lem}
\label{lem:cond_ii}
Suppose $\gA$ is a $\sizeOf{\seq}$-quasi-acyclic structure containing a mixed tour, $(\ba,\seq)$, whose first three elements, $a_0$, $a_1$ and $a_2$, are
distinct. Suppose further that $a_0$ is active, but $a_1$ and $a_2$ are passive. 
Then there exist decompositions 
$\ba = a_0 a_1 \ba_0 a_2 \ba_1 a_1 \ba_2$ and $\seq = f \frev \bt_0 t_1 \bt_1 t_2 \bt_2$,
such that 
(i) $(\ba_0,\bt_0)$ is an acyclic tour in $\gA$ starting at $a_2$;
(ii) $(\ba_1,\bt_1)$ is an acyclic tour in $\gA$ starting at $a_1$;
(iii) $(\ba_2,\bt_2)$ is a tour in $\gA$ starting at $a_0$;
(iv) $t_2(a_1,a_0)$ and $t_1(a_2,a_1)$ are true  in $\gA$ (Fig.~\ref{fig:tours}(\textsc{ii})).
\end{lem}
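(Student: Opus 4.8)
The plan is to follow the proof of Lemma~\ref{lem:cond_i} as closely as possible, peeling off the three sub-tours $(\ba_0,\bt_0)$, $(\ba_1,\bt_1)$ and $(\ba_2,\bt_2)$ by the same retracing argument. The one genuinely new feature is that $(\ba,\seq)$ is now \emph{mixed}, so it need not be acyclic: its locus may contain an active cycle, and hence Lemma~\ref{lma:retrace} can no longer be applied to the tour as a whole. The observation that rescues the argument is that both $a_1$ and $a_2$ lie on the \emph{initial passive excursion} of the tour (the maximal stretch of passive elements immediately following the active element $a_0$), and this excursion \emph{is} acyclic, so the retracing can be carried out there.

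Concretely, I would write $\ba = a_0 \cdots a_{\ell-1}$, $\br = r_0 \cdots r_{\ell-1}$ and $a_\ell = a_0$, so that $r_0 = f$ and $r_1 = \frev$. Since $a_0$ is active while $a_1$ is passive, Lemma~\ref{lem:simple_db1} applies with $i = 0$: taking $j$ to be the least index with $a_{j+1}$ active, it yields $a_j = a_1$ and $a_{j+1} = a_0$, and (by minimality of $j$) the elements $a_1 \cdots a_j$ are all passive; as $a_2$ is passive we have $j \geq 2$. This pins down the point at which the tour re-enters the database. I then set $t_2 = r_j$, so that $t_2(a_1,a_0)$ holds, and take $(\ba_2,\bt_2) = (a_{j+1}\cdots a_{\ell-1},\, r_{j+1}\cdots r_{\ell-1})$; this is a tour beginning at $a_{j+1} = a_0$ and closing at $a_\ell = a_0$, which is exactly clause~(iii), and no acyclicity is claimed because this block may visit the database. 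For the passive part, note that $T = (a_1\cdots a_{j-1},\, r_1 \cdots r_{j-1})$ is a non-empty passive tour starting at $a_1$ and closing at $a_j = a_1$; having length $j - 1 < \sizeOf{\seq}$ in a $\sizeOf{\seq}$-quasi-acyclic structure, it is acyclic. Applying Lemma~\ref{lma:retrace} to $T$ with the element $a_2$, whose first occurrence in $T$ is immediately after the start $a_1$, and letting $i$ be the greatest index with $a_i = a_2$, I obtain $a_{i+1} = a_1$. Setting $t_1 = r_i$ gives $t_1(a_2,a_1)$; the tour $(\ba_0,\bt_0) = (a_2 \cdots a_{i-1},\, r_2 \cdots r_{i-1})$ begins at $a_2$ and the tour $(\ba_1,\bt_1) = (a_{i+1}\cdots a_{j-1},\, r_{i+1}\cdots r_{j-1})$ begins at $a_1$, both passive and hence acyclic, which gives clauses~(i) and~(ii). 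Reading off the blocks yields $\ba = a_0 a_1 \ba_0 a_2 \ba_1 a_1 \ba_2$ and $\seq = f\,\frev\,\bt_0\, t_1\, \bt_1\, t_2\, \bt_2$, as required (Fig.~\ref{fig:tours}(\textsc{ii})).

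The step I expect to demand the most care is the one replacing the global acyclicity exploited in Lemma~\ref{lem:cond_i} by this local analysis of the initial passive excursion. One must check that Lemma~\ref{lem:simple_db1} really does isolate a single re-entry point $a_{j+1} = a_0$ with $a_j = a_1$, so that the final block legitimately starts in the database; that the element $a_2$ indeed occurs within the excursion (guaranteeing $j \geq 2$ and that Lemma~\ref{lma:retrace} has something to act on); and that the remaining, possibly cyclic, database portion can be absorbed wholesale into $(\ba_2,\bt_2)$ without interfering with the retracing of the passive prefix. Once these points are settled, the bookkeeping is identical to that of Lemma~\ref{lem:cond_i}.
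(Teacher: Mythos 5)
Your proposal is correct and follows essentially the same route as the paper's own proof: both apply Lemma~\ref{lem:simple_db1} (with $i=0$) to locate the unique re-entry point $a_j = a_1$, $a_{j+1} = a_0$, observe that the intervening passive sub-tour is acyclic by quasi-acyclicity, and then apply Lemma~\ref{lma:retrace} to it to extract the same blocks $(\ba_0,\bt_0)$, $(\ba_1,\bt_1)$, $(\ba_2,\bt_2)$ with $t_1 = r_i$ and $t_2 = r_j$. The points you flag as needing care are exactly the ones the paper handles, and your treatment of them is sound.
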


\begin{proof}
Write  $\ba = a_0 \cdots a_{\ell - 1}$, $\br = r_0 \cdots r_{\ell - 1}$ and $a_\ell = a_0$; thus, $r_0 = f$ and $r_1 = \frev$. Let $j$ be the smallest index
($2 \leq j < \ell$) such that $a_{j+1}$ is active. Since $a_0$ is active and $a_1$ passive, it follows by Lemma~\ref{lem:simple_db1} that $a_j = a_1$ and $a_{j+1}= a_0$. Thus, $\ba'= a_1 a_2 \cdots a_{j-1}$ is a passive---and
hence acyclic---tour. Considering the tour $\ba'$, let $i$ be the largest index ($2 \leq i < j$) such that $a_{i} = a_2$.
By Lemma~\ref{lma:retrace}, $a_{i+1} = a_1$.
Let $\ba_0 = a_2 \cdots a_{i-1}$ and $\bt_0 = r_2 \cdots r_{i-1}$;
let $\ba_1 = a_{i+1} \cdots a_{j-1}$
and $\bt_1 = r_{i+1} \cdots r_{j-1}$; and let $\ba_2 = a_{j+1} \cdots a_{\ell-1}$ and $\bt_2 = r_{j+1} \cdots r_{\ell-1}$.
In addition, write $t_1 = r_i$ and $t_2 = r_j$. 
Thus, $\ba = a_0 a_1 \ba_0 a_2 \ba_1 a_1 \ba_2$ and $\seq = f \frev \bt_0 t_1 \bt_1 t_2 \bt_2$. Moreover,
$(\ba_0,\bt_0)$  and $(\ba_1,\bt_1)$ are acyclic tours, and  $(\ba_2,\bt_2)$ is a tour.
\end{proof}

\begin{lem}
\label{lem:cond_iii}
Suppose $\gA$ is a $\sizeOf{\seq}$-quasi-acyclic structure containing a mixed tour, $(\ba,\seq)$, whose first three elements, $a_0$, $a_1$ and $a_2$, are
distinct. Suppose further that $a_2$ is active, but $a_0$ and $a_1$ are passive. Then there exist decompositions 
$\ba = a_0 a_1 \ba_0 a_2 \ba_1 a_1 \ba_2$ and $\seq = f \frev \bt_0 t_1 \bt_1 t_2 \bt_2$,
such that 
(i) $(\ba_0,\bt_0)$ is a tour in $\gA$ starting at $a_2$;
(ii) $(\ba_1,\bt_1)$ is an acyclic tour in $\gA$ starting at $a_1$;
(iii) $(\ba_2,\bt_2)$ is an acyclic tour in $\gA$ starting at $a_0$;
(iv) $t_2(a_1,a_0)$ and $t_1(a_2,a_1)$ are true  in $\gA$ (Fig.~\ref{fig:tours}(\textsc{iii})).
\end{lem}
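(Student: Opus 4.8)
The statement is the mirror image of Lemma~\ref{lem:cond_ii}: the sole active element is now $a_2$ (the third vertex) rather than $a_0$, so the active material gets absorbed into the \emph{first} sub-tour $(\ba_0,\bt_0)$, and the two \emph{acyclic} sub-tours $(\ba_1,\bt_1)$, $(\ba_2,\bt_2)$ sit to its right. I would write $\ba = a_0 \cdots a_{\ell-1}$ and $\seq = r_0 \cdots r_{\ell-1}$ with $a_\ell = a_0$, so that $r_0 = f$, $r_1 = \frev$ and $\ell = \sizeOf{\seq}$. Since $a_0, a_1$ are passive and $a_2$ active, Lemma~\ref{lem:simple_db2} (taken with $i = 1$) furnishes the largest index $p$ ($1 < p < \ell$) at which $a_p$ is active and $a_{p+1}$ passive, and asserts $a_p = a_2$ and $a_{p+1} = a_1$. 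Putting $\ba_0 = a_2 \cdots a_{p-1}$, $\bt_0 = r_2 \cdots r_{p-1}$ and $t_1 = r_p$ makes $(\ba_0,\bt_0)$ a tour starting at $a_2$ (it closes up as $a_p = a_2$), possibly mixed — which is why clause~(i) asks only for a tour — and $t_1(a_2,a_1)$ holds. By the maximality of $p$, no $a_r$ with $p+1 \le r \le \ell-1$ is active (the rightmost such one would, as $a_\ell = a_0$ is passive, produce a later active-to-passive transition), so the whole tail $a_{p+1} \cdots a_\ell$ is passive.

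The crux is to split this passive tail as an acyclic tour at $a_1$, then a single edge $t_2$ into $a_0$, then an acyclic tour at $a_0$. Let $q$ be the largest index with $a_q = a_1$; since $a_1$ occurs at position $p+1$ and $a_0 \neq a_1$, we have $p+1 \le q \le \ell-1$, so $q$ lies in the passive tail. The essential point is $a_{q+1} = a_0$, and here Lemma~\ref{lma:retrace} cannot be applied to $(\ba,\seq)$ itself, which is mixed and need not be acyclic. Instead I would splice the edge $f$, which runs from $a_\ell = a_0$ to $a_{p+1} = a_1$ (so $\gA \models f(a_0,a_1)$), directly onto the passive tail, forming the tour
\[
T \;=\; \bigl(\, a_0\, a_1\, a_{p+2} \cdots a_{\ell-1},\ \ f\, r_{p+1} \cdots r_{\ell-1}\,\bigr)
\]
which starts at $a_0$, steps by $f$ to $a_1$, and then follows the tail edges back to $a_0$. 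This $T$ is passive and of length $\ell - p \le \sizeOf{\seq}$, hence acyclic by $\sizeOf{\seq}$-quasi-acyclicity. Applying Lemma~\ref{lma:retrace} to $T$ at the element $a_1$ (whose first occurrence in $T$ is at index $1$, with predecessor $a_0$) shows that the successor of the \emph{last} occurrence of $a_1$ in $T$ is $a_0$; as that last occurrence is exactly position $q$ of the original tour, we obtain $a_{q+1} = a_0$.

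Given $a_{q+1} = a_0$, I would set $\ba_1 = a_{p+1} \cdots a_{q-1}$, $\bt_1 = r_{p+1} \cdots r_{q-1}$, $t_2 = r_q$, $\ba_2 = a_{q+1} \cdots a_{\ell-1}$ and $\bt_2 = r_{q+1} \cdots r_{\ell-1}$, yielding $\ba = a_0 a_1 \ba_0 a_2 \ba_1 a_1 \ba_2$ and $\seq = f \frev \bt_0 t_1 \bt_1 t_2 \bt_2$. Both $(\ba_1,\bt_1)$ and $(\ba_2,\bt_2)$ are passive of length below $\sizeOf{\seq}$, hence acyclic, and $t_2(a_1,a_0)$ holds by construction, delivering clauses~(ii)--(iv). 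The main obstacle, and the only place quasi-acyclicity does genuine work, is this retrace step: a tail whose spine exceeded the single edge $a_1 a_0$ would surface as a short passive cycle in the spliced tour $T$, which $\sizeOf{\seq}$-quasi-acyclicity forbids; recognising the auxiliary \emph{passive} tour $T$ as the right object to retrace on is what lets us reuse Lemma~\ref{lma:retrace} despite the original tour being mixed.
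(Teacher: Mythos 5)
Your proof is correct and is essentially the paper's own argument: the paper disposes of this lemma with the single line ``Analogous to the proof of Lemma~\ref{lem:cond_ii},'' and what you have written is exactly that analogy carried out in full, using Lemma~\ref{lem:simple_db2} in place of Lemma~\ref{lem:simple_db1} and then Lemma~\ref{lma:retrace} on the passive part. Your explicit handling of the wrap-around---splicing the edge $f$ onto the passive tail to form the auxiliary passive tour $T$ so that Lemma~\ref{lma:retrace} applies---is precisely the detail the paper leaves implicit, and you have it right.
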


\begin{proof}
Analogous to the proof of Lemma~\ref{lem:cond_ii}.
\end{proof}

\begin{lem}
\label{lem:cond_iv}
Suppose $\gA$ is a $\sizeOf{\seq}$-quasi-acyclic structure containing a mixed tour, $(\ba,\seq)$, whose first three elements, $a_0$, $a_1$ and $a_2$, are
distinct. Suppose further that $a_1$ is active, but $a_0$ and $a_2$ are passive. Then there exist decompositions $\ba = a_0 a_1 \ba_0 a_2 \ba_1 a_1 \ba_2$ and $\seq = f \frev \bt_0 t_1 \bt_1 t_2 \bt_2$,
such that 
(i) $(\ba_0,\bt_0)$ is an acyclic tour in $\gA$ starting at $a_2$;
(ii) $(\ba_1,\bt_1)$ is a tour in $\gA$ starting at $a_1$;
(iii) $(\ba_2,\bt_2)$ is an acyclic tour in $\gA$ starting at $a_0$;
(iv) $t_2(a_1,a_0)$ and $t_1(a_2,a_1)$ are true  in $\gA$ (Fig.~\ref{fig:tours}(\textsc{iv})).
\end{lem}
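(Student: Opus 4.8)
The plan is to follow the pattern of Lemmas~\ref{lem:cond_ii} and~\ref{lem:cond_iii}, but with one structural twist: here the unique active element among $a_0, a_1, a_2$ sits in the \emph{middle}, so the non-acyclic sub-tour will be $(\ba_1, \bt_1)$, and it will be bracketed by two passive excursions (one leaving $a_1$ through $a_2$, one returning to $a_0$). I would therefore invoke \emph{both} database lemmas. As usual, write $\ba = a_0 \cdots a_{\ell-1}$ and $\seq = r_0 \cdots r_{\ell-1}$ with $\ell = \sizeOf{\seq}$ and $a_\ell = a_0$, so that $r_0 = f$ and $r_1 = \frev$; recall $a_1$ is active while $a_0$ and $a_2$ are passive.

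First I would handle the final excursion, the one returning to $a_0$. Since $a_0$ is passive and $a_1$ active, Lemma~\ref{lem:simple_db2} (taking the passive prefix to be just $a_0$) supplies the largest index $j$ with $a_j$ active and $a_{j+1}$ passive, and forces $a_j = a_1$ and $a_{j+1} = a_0$. Note $j \geq 2$, since $j = 1$ would give $a_2 = a_{j+1} = a_0$, contradicting distinctness. Because $j$ is the last active-to-passive transition and $a_\ell = a_0$ is passive, all of $a_{j+1}, \dots, a_{\ell-1}$ are passive; hence, putting $t_2 = r_j$, $\ba_2 = a_{j+1} \cdots a_{\ell-1}$ and $\bt_2 = r_{j+1} \cdots r_{\ell-1}$, the pair $(\ba_2, \bt_2)$ is a passive tour from $a_0$ of length less than $\ell$, and so is acyclic by $\sizeOf{\seq}$-quasi-acyclicity. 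This yields clause~(iii) and the claim $t_2(a_1, a_0)$ of~(iv).

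Next I would handle the initial excursion, the one leaving $a_1$ through $a_2$. Here $a_1$ is active and $a_2$ passive, so I would view the tour as beginning at $a_1$ and apply Lemma~\ref{lem:simple_db1} (whose hypothesis wants an active prefix at index $0$). Letting $i \geq 2$ be minimal with $a_{i+1}$ active, the lemma forces $a_i = a_2$ and $a_{i+1} = a_1$; minimality makes $a_2, \dots, a_i$ passive. (The active occurrence $a_j = a_1$ at position $j \geq 2$ guarantees such a re-entry exists before the wrap-around, and distinctness of $a_0, a_2$ rules out the degenerate case $a_i = a_0$.) Setting $t_1 = r_i$, $\ba_0 = a_2 \cdots a_{i-1}$ and $\bt_0 = r_2 \cdots r_{i-1}$, the pair $(\ba_0, \bt_0)$ is a passive tour from $a_2$, again acyclic, giving clause~(i) and $t_1(a_2, a_1)$.

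Finally, since $a_{i+1} = a_1 = a_j$ and position $j$ is active while $a_{i+1}$ is the \emph{first} active element past position $1$, we get $i+1 \leq j$; thus the middle segment $\ba_1 = a_{i+1} \cdots a_{j-1}$, $\bt_1 = r_{i+1} \cdots r_{j-1}$ is a (possibly empty) tour from $a_1$, establishing~(ii). It need not be acyclic, since $a_1$ lies in the database. Assembling the three pieces gives $\ba = a_0 a_1 \ba_0 a_2 \ba_1 a_1 \ba_2$ and $\seq = f \frev \bt_0 t_1 \bt_1 t_2 \bt_2$, as required. Note that, unlike Lemmas~\ref{lem:cond_ii} and~\ref{lem:cond_iii}, no appeal to retracing is needed, because the two passive excursions are already isolated at the ends rather than nested inside one long passive walk. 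I expect the main obstacle to be purely the bookkeeping: Lemma~\ref{lem:simple_db1} must be applied to the tour \emph{re-based} at $a_1$, and one must verify the chain of inequalities $2 \leq i < i+1 \leq j < \ell$ so that the three sub-tours are genuinely ordered and non-overlapping.
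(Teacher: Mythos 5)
Your proof is correct and is precisely what the paper intends: the paper's own proof of this lemma is the one-liner ``Analogous to the proof of Lemma~\ref{lem:cond_ii}'', and your argument---Lemma~\ref{lem:simple_db2} to isolate the acyclic passive tail returning to $a_0$, Lemma~\ref{lem:simple_db1} applied to the tour re-based at $a_1$ to isolate the acyclic passive excursion through $a_2$, and the segment between the re-entry point and the last exit point as the (possibly non-acyclic) middle tour at the active element $a_1$---is the natural realization of that analogy. One microscopic point: your claim that ``$j \geq 2$'' guarantees a re-entry before the wrap-around really wants $j \geq 3$ (which holds, since $a_j = a_1$ is active whereas $a_2$ is passive and distinct from $a_1$); but this is harmless, since the wrap-around occurrence of $a_1$ together with the distinctness of $a_0$ and $a_2$, which you also invoke, already yields the same conclusion.
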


\begin{proof}
Analogous to the proof of Lemma~\ref{lem:cond_ii}.
\end{proof}

\begin{lem}
\label{lem:cond_v}
Suppose $\gA$ is a $\sizeOf{\seq}$-quasi-acyclic structure containing a mixed tour, $(\ba,\seq)$, whose first three elements, $a_0$, $a_1$ and $a_2$, are
distinct. Suppose further that $a_0$ and $a_1$ are active, but $a_2$ is passive. Then there exist decompositions $\ba = a_0 a_1 \ba_0 a_2 \ba_1$ and $\seq = f \frev \bt_0 t_1 \bt_1$,
such that (i) $(\ba_0,\bt_0)$ is an acyclic tour in $\gA$ starting at $a_2$;
(ii) 
$(a_0 \ba_1,f \bt_1)$ a tour in $\gA$ where $\ba_1$ is a non-empty sequence starting with $a_1$; (iii) $t_1(a_2,a_1)$ is true in $\gA$ (Fig.~\ref{fig:tours}(\textsc{v})).
\end{lem}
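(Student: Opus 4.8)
The plan is to follow the proof of Lemma~\ref{lem:cond_ii}, exploiting the single entry/exit behaviour of passive excursions guaranteed by Lemma~\ref{lem:simple_db1}; the only structural difference is that two of the first three elements are now active, so the resulting decomposition has just two subtours and a single connecting edge (rather than three subtours and two edges). First I would write $\ba = a_0 \cdots a_{\ell-1}$, $\seq = r_0 \cdots r_{\ell-1}$ and $a_\ell = a_0$, so that $r_0 = f$ and $r_1 = \frev$; in particular $f(a_0,a_1)$ holds. Since $a_0$ and $a_1$ are active while $a_2$ is passive, the tour exits the database immediately after $a_1$.

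Next I would let $j$ be the smallest index ($2 \leq j < \ell$) such that $a_{j+1}$ is active; such a $j$ exists because $a_\ell = a_0$ is active. Applying Lemma~\ref{lem:simple_db1}, with $a_1$ playing the role of the last active element before the excursion, yields $a_j = a_2$ and $a_{j+1} = a_1$. Setting $\ba_0 = a_2 \cdots a_{j-1}$ and $\bt_0 = r_2 \cdots r_{j-1}$, I observe that $(\ba_0,\bt_0)$ is a tour starting at $a_2$ and closing at $a_j = a_2$; by minimality of $j$ every element of this segment is passive, so the tour is passive and hence acyclic, which is (i). Writing $t_1 = r_j$ immediately gives $t_1(a_2,a_1)$, which is (iii).

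For the remaining segment, set $\ba_1 = a_{j+1} \cdots a_{\ell-1}$ and $\bt_1 = r_{j+1} \cdots r_{\ell-1}$, which starts at $a_{j+1} = a_1$. To check (ii), note that $f(a_0,a_1)$ prepends an $f$-edge from $a_0$ to the start $a_1 = a_{j+1}$ of $\ba_1$, while $r_{\ell-1}$ closes the walk back to $a_\ell = a_0$; hence $(a_0\ba_1, f\bt_1)$ is a tour. Its initial segment $\ba_1$ is non-empty, for otherwise $j = \ell-1$ and then $a_1 = a_{j+1} = a_\ell = a_0$, contradicting the distinctness of $a_0$ and $a_1$. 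This gives the decomposition $\ba = a_0 a_1 \ba_0 a_2 \ba_1$ and $\seq = f \frev \bt_0 t_1 \bt_1$ asserted in the statement.

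I expect no serious obstacle: once Lemma~\ref{lem:simple_db1} pins down the unique re-entry point $a_1$, the remainder is pure index bookkeeping. The only point warranting a word of care is the degenerate case $j = 2$, in which $\ba_0$ is empty; there $(\ba_0,\bt_0)$ is simply the empty tour at $a_2$, which counts as an acyclic tour by the conventions of Sec.~\ref{sec:preliminaries}, so (i) still holds vacuously.
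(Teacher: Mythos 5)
Your proposal is correct and follows essentially the same route as the paper: both identify the smallest index $j$ with $a_{j+1}$ active, invoke Lemma~\ref{lem:simple_db1} to conclude $a_j = a_2$ and $a_{j+1} = a_1$, and read off the decomposition, with the passive subtour $(\ba_0,\bt_0)$ acyclic by quasi-acyclicity. Your additional checks (non-emptiness of $\ba_1$, the closing edge $r_{\ell-1}$ back to $a_0$, and the degenerate case $j=2$) are details the paper leaves implicit, and you even silently correct the paper's typo where $\bt_0$ is written in place of $\bt_1$ in the final assignment.
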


\begin{proof}
Write  $\ba = a_0 \cdots a_{\ell - 1}$, $\br = r_0 \cdots r_{\ell - 1}$ and $a_\ell = a_0$; thus, $r_0 = f$ and $r_1 = \frev$. Let $i$ be the smallest index ($2 \leq i < \ell$) such that $a_{i+1}$ is active.
Since $a_\ell$ is active, $i$ exists. By Lemma~\ref{lem:simple_db1}, 
$a_i = a_2$ and $a_{i+1} = a_1$. Let 
$\ba_0 = a_2 \cdots a_{i-1}$ and $\bt_0 = r_2 \dots r_{i-1}$. By construction, $(\ba_0,\bt_0)$ is a passive---hence acyclic---tour beginning at $a_2$. Now let $t_1 = r_i$, 
$\ba_1 = a_{i+1} \cdots a_{\ell-1}$ and $\bt_0 = r_{i+1} \cdots r_{\ell-1}$.
\end{proof}

\begin{lem}
\label{lem:cond_vi}
Suppose $\gA$ is a $\sizeOf{\seq}$-quasi-acyclic structure containing a mixed tour, $(\ba,\seq)$, whose first three elements, $a_0$, $a_1$ and $a_2$, are
distinct. Suppose further that $a_1$ and $a_2$ are active, but $a_0$ is passive. Then there exist decompositions 
$\ba = a_0 a_1 \ba_0 a_1 \ba_1$ and $\seq = f \frev \bt_0 t_1 \bt_1$,
such that 
(i) $(a_1 \ba_0,f^{-1} \bt_0)$ a tour in $\gA$ where $\ba_0$ is a non-empty sequence starting with $a_2$; 
(ii) $(\ba_1, \bt_1)$ is an acyclic tour in $\gA$ starting at $a_0$;
(iii) $t_1(a_1,a_0)$ is true in $\gA$ (Fig.~\ref{fig:tours}(\textsc{vi})).
\end{lem}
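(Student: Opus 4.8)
The plan is to treat this as the mirror image of Lemma~\ref{lem:cond_v}, with the roles of entering and exiting the database interchanged, so that the pumping lemma invoked is Lemma~\ref{lem:simple_db2} in place of Lemma~\ref{lem:simple_db1}. As before, write $\ba = a_0 \cdots a_{\ell-1}$, $\br = r_0 \cdots r_{\ell-1}$ and set $a_\ell = a_0$; by construction $r_0 = f$ and $r_1 = \frev$, so that $\gA \models f(a_0,a_1)$ and $\gA \models \frev(a_1,a_2)$. Here $a_0$ is passive while $a_1$ is active, so the tour enters the database immediately at step~$1$.

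First I would locate the point at which the tour finally leaves the database for good. Let $j$ be the largest index ($0 < j < \ell$) such that $a_j$ is active and $a_{j+1}$ is passive; this exists because $a_\ell = a_0$ is passive and $a_1$ is active. Applying Lemma~\ref{lem:simple_db2} with the leading passive segment taken to be just $a_0$, I get $a_j = a_1$ and $a_{j+1} = a_0$. Since $a_2$ is active and distinct from $a_1 = a_j$, we have $j \geq 3$. I then set $\ba_0 = a_2 \cdots a_{j-1}$, $\bt_0 = r_2 \cdots r_{j-1}$, $t_1 = r_j$, $\ba_1 = a_{j+1} \cdots a_{\ell-1}$ and $\bt_1 = r_{j+1} \cdots r_{\ell-1}$, yielding the required decomposition $\ba = a_0 a_1 \ba_0 a_1 \ba_1$ and $\seq = f \frev \bt_0 t_1 \bt_1$.

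It remains to verify the three conditions. Condition~(iii) is immediate: $t_1 = r_j$ witnesses $\gA \models t_1(a_j,a_{j+1})$, i.e.\ $t_1(a_1,a_0)$. For~(i), the sequence $(a_1 \ba_0, \frev \bt_0) = (a_1 a_2 \cdots a_{j-1},\, \frev r_2 \cdots r_{j-1})$ closes up into a tour at $a_1$ precisely because $a_j = a_1$, and $\ba_0$ is non-empty and begins with $a_2$ since $j \geq 3$. (No acyclicity is claimed here, and indeed this sub-tour may itself dip out of the database, later to be handled by Lemma~\ref{lem:dbcycle_decomp}.) The one point needing a small argument is the acyclicity in~(ii): I would show that every element of $a_{j+1}, \ldots, a_{\ell}$ is passive, for otherwise an active element after index $j+1$ would force a further active-to-passive transition at some index strictly greater than $j$ (since $a_\ell = a_0$ is passive), contradicting the maximality of $j$. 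Hence $(\ba_1,\bt_1)$ is a passive---and therefore acyclic---tour starting at $a_0$, as required (Fig.~\ref{fig:tours}(\textsc{vi})). The proof is entirely mechanical once the correct index $j$ is isolated, so the only genuine obstacle is recognising that Lemma~\ref{lem:simple_db2}, rather than its counterpart Lemma~\ref{lem:simple_db1}, is the right tool for pinning down the final exit.
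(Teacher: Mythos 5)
Your proposal is correct and follows exactly the route the paper intends: the paper's own proof is just the one-line remark that the case is analogous to an earlier one (it cites Lemma~\ref{lem:cond_iv}, though the true mirror is Lemma~\ref{lem:cond_v}, precisely the symmetry you exploit), and your spelled-out version---Lemma~\ref{lem:simple_db2} applied with the leading passive segment $a_0$, plus the maximality-of-$j$ argument showing the tail $(\ba_1,\bt_1)$ is passive and hence acyclic---is the correct filling-in of that analogy. No gaps.
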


\begin{proof}
Analogous to the proof of Lemma~\ref{lem:cond_iv}.
\end{proof}

We remark at this point that, if $\gA$ is a $\sizeOf{\seq}$-quasi-acyclic structure containing a mixed tour $(\ba;\seq)$ whose first three elements, $a_0$, $a_1$ and $a_2$, are distinct, then it cannot be the case that $a_0$ and $a_2$ are active, but $a_1$ passive, as $a_0a_1a_2$ would then form a cycle in the graph of $\gA$.
In the statement of the next lemma, all arithmetic performed on indices is assumed to be modulo 3.
\begin{lem}
\label{lem:cond_vii_viii_ix}
Suppose $\gA$ is a $\sizeOf{\seq}$-quasi-acyclic structure containing a mixed tour, $(\ba,\seq)$, whose first three elements, $a_0$, $a_1$ and $a_2$, are
distinct and all passive. Then there exist decompositions $\ba = a_0 a_1 \ba_0 a_2 \ba_1 a_1 \ba_2$ and
$\seq = f \frev \bt_0 t_1 \bt_1 t_2 \bt_2$, 
such that: (i) for all $j$ ($0 \leq j < 3$) $(\ba_{j},\bt_{j})$ is a tour in $\gA$ starting at $a_{2-j}$;
(ii) two of these tours are acyclic and the third is mixed;
(iii) $t_2(a_1,a_0)$ and $t_1(a_2,a_1)$ are true  in $\gA$ (Fig.~\ref{fig:tours}(\textsc{vii})--(\textsc{ix})).
\end{lem}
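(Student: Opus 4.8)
The plan is to follow the proof of Lemma~\ref{lem:cond_i} to recover the decomposition together with conditions (i) and (iii), and then to spend the real effort on (ii). The one point needing care at the outset is that, since $(\ba,\seq)$ is now \emph{mixed} rather than passive, it is no longer acyclic, so Lemma~\ref{lma:retrace} cannot be applied to it verbatim. I would get around this with the following observation: the locus of $(\ba,\seq)$ has at most $\sizeOf{\seq}$ vertices, so any cycle in it has length at most $\sizeOf{\seq}$; hence, by $\sizeOf{\seq}$-quasi-acyclicity, every such cycle is \emph{active}, since a cycle through any passive vertex would be passive or mixed, and thus of length exceeding $\sizeOf{\seq}$.

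First I would recover the backbone. Writing $\ba = a_0 \cdots a_{\ell-1}$, $\seq = r_0 \cdots r_{\ell-1}$ and $a_\ell = a_0$, so that $r_0 = f$ and $r_1 = \frev$, I let $i$ be the largest index with $a_i = a_2$ and claim $a_{i+1} = a_1$. This is the conclusion of Lemma~\ref{lma:retrace}, and its proof carries over: if $a_{i+1} \neq a_1$, then joining a simple path from $a_{i+1}$ to $a_0$ (extracted from the walk $a_{i+1}\cdots a_\ell$) with the backbone edges incident to $a_2$ produces a simple cycle through $a_2$ in the locus. Since $a_2$ is passive, this cycle is passive or mixed and of length at most $\sizeOf{\seq}$, contradicting quasi-acyclicity. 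Extracting a simple path in place of the spine used in Lemma~\ref{lma:retrace} is the only change, forced by the fact that $a_{i+1}\cdots a_\ell$ need not be acyclic. Taking then $j$ to be the largest index with $a_j = a_1$ and arguing identically gives $a_{j+1} = a_0$. Reading off $\ba_0 = a_2\cdots a_{i-1}$, $\ba_1 = a_{i+1}\cdots a_{j-1}$, $\ba_2 = a_{j+1}\cdots a_{\ell-1}$, together with the matching pieces of $\seq$ and $t_1 = r_i$, $t_2 = r_j$, yields the decomposition, the fact that each $(\ba_{j},\bt_{j})$ is a tour at $a_{2-j}$, and condition (iii).

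The heart of the matter is (ii). Since $a_0$, $a_1$, $a_2$ and the whole backbone are passive, every active element of $\ba$ lies in one of $\ba_0$, $\ba_1$, $\ba_2$; as the tour is mixed, at least one of these subtours therefore contains an active element and so is mixed, the remaining (passive) subtours being acyclic. The work is to show that \emph{at most} one subtour is mixed. Suppose two were. Then there is a passive backbone vertex, namely $a_2$ (at position $i$) or $a_1$ (at position $j$), with active elements of $\ba$ occurring both before and after it. Taking the last active position before it and the first active position after it, the intervening stretch is a maximal passive one, and Lemmas~\ref{lem:simple_db1} and~\ref{lem:simple_db2}, whose proofs need only that the two flanking elements are active and hence joined in the graph of $\gA$, force these flanking active elements to coincide in a single element $w$ and force the stretch to be an acyclic closed walk. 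The backbone edge incident to the separating vertex, together with the two passive walks joining $w$ to that vertex on either side, should then close up into a simple cycle through a passive backbone vertex, a passive or mixed cycle of length at most $\sizeOf{\seq}$, contradicting quasi-acyclicity; the degenerate case is exactly the triangle $a_1 a_2 w$. Hence exactly one subtour is mixed, and the three choices of which one give precisely configurations (vii), (viii) and (ix) of Fig.~\ref{fig:tours}. I expect this last step, ruling out a second database excursion landing in a different subtour, to be the main obstacle: the passive stretch between the two excursions need not collapse on its own, so one must use the backbone edge and the distinctness of $a_0$, $a_1$, $a_2$ to force the forbidden short cycle.
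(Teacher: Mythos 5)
Your recovery of the backbone is essentially sound: with $i$ the globally last position carrying $a_2$ one can indeed show $a_{i+1}=a_1$ (and similarly $a_{j+1}=a_0$ for the last occurrence $j$ of $a_1$), although your cycle must be short-cut at the first visit of the extracted path to $a_1$, since that path need not avoid $a_1$. The genuine gap is in (ii): for \emph{your} choice of split points the claim that at most one subtour is mixed is false, so no argument can establish it. Take $\kappa = \PFD[f, g_0g_1g_2g_3]$, so that $\seq = f f^{-1} g_0 g_1 g_2 g_3 g_3^{-1} g_2^{-1} g_1^{-1} g_0^{-1}$, and a structure with passive elements $a_0,a_1,a_2$, a single active element $d$, and $f(a_0,a_1)$, $f(a_2,a_1)$, $g_0(a_0,a_1)$, $g_0(a_2,a_1)$, $g_1(a_1,d)$, $g_2(d,a_1)$, $g_3(a_1,a_2)$ (extended to total functions by fresh passive elements). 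Its graph is a tree, so the structure is $\sizeOf{\seq}$-quasi-acyclic, and it contains the mixed $\seq$-tour whose successive elements are $a_0,\,a_1,\,a_2,\,a_1,\,d,\,a_1,\,a_2,\,a_1,\,d,\,a_1$ (closing at $a_0$); one checks directly that the edge labels match $\seq$. Here the last occurrence of $a_2$ is at position $6$ and the last occurrence of $a_1$ at position $9$, so your decomposition gives $\ba_0 = a_2\, a_1\, d\, a_1$ and $\ba_1 = a_1\, d$, \emph{both} mixed, with $\ba_2$ empty---condition (ii) fails. The lemma itself survives because a different decomposition works: splitting at the \emph{first} occurrence of $a_2$ (position $2$) puts both database excursions inside $\ba_1$, which is the paper's configuration (\textsc{viii}).

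The example also shows why your contradiction argument cannot be repaired: the two passive walks joining $w=d$ to the separating vertex $a_2$ traverse exactly the same tree path (through $a_1$), so their union contains no cycle at all, and your ``degenerate triangle $a_1 a_2 w$'' would require an edge between $a_2$ and $d$ that need not exist. When the door through which the tour reaches the database lies on the $a_1$-side of the separator, excursions on both sides of the last visit to $a_2$ are funnelled through that one door, and quasi-acyclicity forbids nothing. The paper's proof avoids this trap by choosing the split points differently: by Lemma~\ref{lem:simple_db2} the entire database portion of the tour is one contiguous block flanked on both sides by the same passive element ($a_h = a_{k+1}$); this block is excised to yield a passive, hence acyclic, tour, Lemma~\ref{lma:retrace} is applied \emph{inside that passive tour} to fix the split points, and the block is then re-inserted, landing---being contiguous and containing no split point---in exactly one of the three subtours. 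In short, the split points must be chosen relative to the excised passive tour, not, as in your proposal, relative to the whole mixed tour.
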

\begin{proof}
Write  $\ba = a_0 \cdots a_{\ell - 1}$, $\br = r_0 \cdots r_{\ell - 1}$ and $a_\ell = a_0$; thus, $r_0 = f$ and $r_1 = \frev$. 
Let 
$h$ be the smallest index $2 \leq h  < \ell$ such that $a_{h+1}$ is active, and $k$ the 
largest index $h < k < \ell$ such that $a_k$ is active and
$a_{k+1}$ is passive. By Lemma~\ref{lem:simple_db2}, 
$a_h = a_{k+1}$, whence $a_2 \cdots a_{h} a_{k+2} \cdots a_{\ell-1}$ is a passive tour. 
Suppose first that $a_0$ is encountered along the walk $a_2 \cdots a_{h}$, i.e.~there exists $j$ ($2 < j < h$) such that $a_{j+1} = a_0$ (Fig.~\ref{fig:tours}(\textsc{vii})). Denote by $\ell'$ the smallest such value of $j$. It follows
that $\bb= a_0 a_1 a_2 \cdots a_{\ell'}$ is a passive tour. Repeating the reasoning of Lemma~\ref{lem:cond_i},
let $i$ be the largest index ($2 \leq i < \ell'$) such that $a_i = a_2$, let
$\ba_0 = a_2 \cdots a_{i-1}$
and $\bt_0 = r_2 \cdots r_{i-1}$. Thus, $(\ba_0,\bt_0)$ is a passive---and hence acyclic---tour starting at $a_2$. By Lemma~\ref{lma:retrace} (applied to $\bb$), $a_{i+1} = a_1$; write $t_1 = r_i$. Now let $m$ be the largest index 
($i+1 \leq m < \ell'$) such that $a_m = a_1$. 
Let $\ba_1 = a_{i+1} \cdots a_{m-1}$
and $\bt_1 = r_{i+1} \cdots r_{m-1}$; thus, $(\ba_1,\bt_1)$ is an acyclic tour starting at $a_1$. 
By Lemma~\ref{lma:retrace} again, $a_{m+1} = a_0$, whence, in 
fact $m+1 = \ell'$; write $t_2 = r_m$.
Let $\ba_2 = a_{\ell'} \cdots a_{\ell-1}$ and $\bt_2 = r_{\ell'} \cdots r_{\ell-1}$; 
then, $(\ba_2,\bt_2)$ is a mixed tour starting at $a_0$.

Now suppose that $a_0$ is not encountered again along the walk $a_2 \cdots a_{h}$, but $a_1$ is, i.e.~there exists $j$ 
($2 \leq j < h$) such that $a_{j+1} = a_1$ (Fig.~\ref{fig:tours}(\textsc{viii})).
Denote by $\ell'$ the smallest such value of $j$. 
Again, let $i$ be the largest index ($2 \leq i < \ell'$) such that $a_i = a_2$, let
$\ba_0 = a_2 \cdots a_{i-1}$
and $\bt_0 = r_2 \cdots r_{i-1}$. Thus, $(\ba_0,\bt_0)$ is an acyclic tour starting at $a_2$. 
By Lemma~\ref{lma:retrace}, $a_{i+1} = a_1$; write $t_1 = r_i$. Since $a_{i+1} = a_1$ and
$a_{h} = a_{k+1}$, 
we know that $\bb = a_0 a_{i+1} \cdots a_h a_{k+2} \dots a_{\ell-1}$ is a passive tour. Let $m$ be the largest
index $1 \leq m < \ell$ such that $a_m$ occurs in $\bb$ and $a_m= a_1$. 
By Lemma~\ref{lma:retrace} (applied to $\bb$), the next element in the tour $\bb$ must be $a_0$. Since, by assumption, $a_0$ does not occur among $a_2 \cdots a_h$,
we have $k+1 \leq m < \ell$. 
Thus, letting  $\ba_1 = a_{i+1} \cdots a_{m-1}$ and $\bt_1 = r_{i+1} \cdots r_{m-1}$, 
$(\ba_1,\bt_1)$ is a mixed tour starting at $a_1$. Let $t_2 = r_m$. 
Let $\ba_2 = a_{m+1} \cdots a_{\ell-1}$ and $\bt_1 = r_{m+1} \cdots r_{\ell-1}$; 
thus, $(\ba_2,\bt_2)$ is a passive---and hence acyclic---tour starting at $a_0$.

Finally, suppose that neither $a_0$ nor $a_1$ is encountered along the walk $a_2 \cdots a_{h}$
(Fig.~\ref{fig:tours}(\textsc{ix})). 
Since $a_{h} = a_{k+1}$, we know that $\bb = a_0 a_1  \cdots a_h a_{k+2} \dots a_{\ell-1}$ is a passive---and
hence acyclic---tour. Let
$i$ be the largest index ($2 \leq i < \ell$) such that $a_i$ occurs in $\bb$ and $a_i = a_2$.
By Lemma~\ref{lma:retrace}, the next element in the tour $\bb$ must be $a_1$. Since, by assumption, $a_1$ does not occur among $a_2 \cdots a_h$,
we have $k+1 \leq i < \ell$, and hence $a_{i+1} = a_1$. Let
$\ba_0 = a_2 \cdots a_{i-1}$ and $\bt_0 = r_2 \cdots r_{i-1}$. Thus, $(\ba_0,\bt_0)$ is a mixed tour starting at $a_2$.
Write $t_1 = r_i$. Since $a_{i+1} = a_1$ and $k+1  \leq i$, we know that $a_0 a_{i+1} \cdots a_{\ell-1}$ is a passive tour. Let $j$ be the largest
index $i+1 \leq j < \ell$ such that $a_j = a_1$, 
let $\ba_1 = a_{i+1} \cdots a_{j-1}$, and let $\bt_1 = r_{i+1} \cdots r_{j-1}$.
Thus, $(\ba_1,\bt_1)$ is a passive---and hence acyclic---tour starting at $a_1$.
Lemma~\ref{lma:retrace}, $a_{j+1} = a_0$; let $t_2 = r_j$. 
Let $\ba_2 = a_{j+1} \cdots a_{\ell-1}$, $\bt_2 = r_{j+1} \cdots r_{\ell-1}$. 
Thus  $(\ba_2,\bt_2)$ is a passive---and hence acyclic---tour starting at $a_0$. 
\end{proof}

\begin{lem}
\label{lem:cond_x}
Suppose $\gA$ is a $\sizeOf{\seq}$-quasi-acyclic structure containing a mixed tour, $(\ba,\seq)$, whose first three elements, $a_0$, $a_1$ and $a_2$, are
distinct. Suppose further that $a_0$, $a_1$ and $a_2$ are active. Then, writing
$\ba = a_0 a_1 \bb$ and $\seq = f \frev \bt$, the pair $[\bb a_0, \bt]$ is a walk in $\gA$ starting at $a_2$ (Fig.~\ref{fig:tours}(\textsc{x})).
\end{lem}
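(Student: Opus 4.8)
The plan is to observe that, among the configurations catalogued in Fig.~\ref{fig:tours}, this is the degenerate one requiring no genuine decomposition: the conclusion follows almost immediately from the definition of a tour once the notation for $\seq$ is unwound. So my approach is simply to strip off the two leading steps of the tour and read the remaining tour conditions as the defining conditions of a walk.

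First I would write $\ba = a_0 \cdots a_{\ell - 1}$ and $\seq = r_0 \cdots r_{\ell - 1}$, adopting the convention $a_\ell = a_0$, exactly as in the proofs of Lemmas~\ref{lem:cond_i}--\ref{lem:cond_vii_viii_ix}. By the construction of $\seq = f \frev \baf^{-1} \bag \bag^{-1} \baf f$, its first two symbols are $r_0 = f$ and $r_1 = \frev$; these witness the edges $\gA \models f(a_0, a_1)$ and $\gA \models \frev(a_1, a_2)$. Discarding these two initial steps, I set $\bb = a_2 \cdots a_{\ell-1}$ and $\bt = r_2 \cdots r_{\ell - 1}$, so that indeed $\ba = a_0 a_1 \bb$ and $\seq = f \frev \bt$ as in the statement.

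Next I would invoke the tour condition directly. Since $(\ba, \seq)$ is a tour, $\gA \models r_i(a_i, a_{i+1})$ holds for every $i$ with $0 \leq i < \ell$, addition on indices being modulo $\ell$. Reading off only the cases $2 \leq i \leq \ell - 1$ (where $a_\ell = a_0$), these are precisely the adjacency conditions asserting that $[a_2 \cdots a_{\ell - 1} a_0,\ r_2 \cdots r_{\ell - 1}] = [\bb a_0, \bt]$ is a walk in $\gA$; its first element is $a_2$, as required. Note that, since a walk imposes no distinctness on its vertices, nothing further needs checking here.

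The only thing to get right is the index bookkeeping and the recognition that, unlike the preceding cases, no subtours need be split off: with $a_0$, $a_1$ and $a_2$ all active and distinct, the entire remainder of the tour is the single walk of Fig.~\ref{fig:tours}(\textsc{x}). The $\sizeOf{\seq}$-quasi-acyclicity and mixedness hypotheses, used heavily in Lemmas~\ref{lem:cond_i}--\ref{lem:cond_vii_viii_ix}, play no active role in this configuration; they are retained only so that the configurations of Fig.~\ref{fig:tours} jointly exhaust every violating $\seq$-tour. I therefore expect no real obstacle: this lemma closes the case analysis rather than contributing new combinatorial content.
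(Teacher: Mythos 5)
Your proof is correct and matches the paper's treatment: the paper's own proof of this lemma is simply ``Completely trivial,'' and your argument---stripping the two leading steps $f$, $\frev$ from the tour and reading the remaining tour conditions as the defining conditions of the walk $[\bb a_0, \bt]$---is exactly the routine verification this remark leaves implicit. Your closing observation that quasi-acyclicity plays no role here and that the lemma merely closes the case analysis is also in line with the paper's intent.
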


\begin{proof}
Completely trivial.
\end{proof}

Let us sum up what we have discussed so far. Fix a path-functional dependency $\kappa = \PFD[\baf f, \bag]$,
and let $\ell$ be the length of $\seq$.
Any violation of $\PFD[\baf f, \bag]$ in a structure $\gA$
can be identified with a $\seq$-tour in $\gA$ whose first three elements are distinct.
Thus, when determining the satisfiability of a $\GCDK$-formula $\phi \wedge \Delta \wedge K$, 
our goal is to determine whether there exists a model of $\phi \wedge \Delta$ 
containing no such tours. Indeed,
by Lemma~\ref{lem:big_cycles}, we may confine attention to $\ell$-quasi-acyclic models of $\phi \wedge \Delta$.
Now suppose $\ba$ is a sequence of elements in some such model suspected of being a
$\seq$-tour in $\gA$ whose first three elements are distinct.
Depending on which (if any) of these three initial elements
of an $\seq$-tour $\ba$ belong to the database, $\ba$ can be decomposed in different ways, as seen
in Fig.~\ref{fig:tours}. All these decompositions, except for that of
Fig.~\ref{fig:tours}(\textsc{i}), involve elements of the database.
The configuration of Fig.~\ref{fig:tours}(\textsc{i}),
it turns out, can be ruled out by means of a
$\GC$-formula. In addition,
by introducing additional predicates to our original signature, and writing formulas 
ensuring that these predicates are satisfied by database elements occurring at certain critical points in the various configurations depicted in
Fig.~\ref{fig:tours}(\textsc{ii})--(\textsc{ix}), we can reduce the problem of determining the existence of 
such violating configurations to a simple database check.
The next section addresses the task of establishing the required interpretations
of these additional predicates.

\section{Encoding critical violations in $\GC$}
\label{sec:tour_predicates}
Let $\kappa = \PFD[\baf f , \bag]$ be a binary path-functional dependency, and
let $\seq =  f f^{-1} \baf^{-1} \bag \bag^{-1}\baf$ as above.
In this section, we present three technical devices consisting of sets of $\GC$-formulas denoted, respectively,
$F_\kappa$, $B_\kappa$ and $I_\kappa$.
The first device enables us to describe acyclic tours in structures, the second to characterize certain sorts of ternary branching structures, and the third to characterize elements connected to each other by a  pair of walks forming an acyclic 
tour. In the next section, we shall employ these devices to characterize the ten cases depicted in Fig.~\ref{fig:tours},
and thus to rule out the presence of violations of $\kappa$ in a structure using $\GC$-formulas.
We write $t \in \seq$ to indicate that $t$ occurs in $\seq$, and
$\bt \vartriangleleft \seq$ to indicate that $\bt$ is a sub-word (i.e.~a contiguous sub-sequence) of $\seq$.  
Observe that the number of sub-words of $\seq$ is $|\seq|(|\seq|+1)/2$.  

Let the signature $\sigma_\kappa^1$ consist of $\sigma$ together with the unary predicates $\fan{\bt}$, one for each 
sub-word $\bt$ of $\seq$. The intention is that $\fan{\bt}$ will be satisfied by an element $a$ in certain sorts of structures 
just in case $a$ is the start of an acyclic $\bt$-tour. Accordingly, let $F_\kappa$ be the set 
consisting of the formula $\forall x \, \fan{\epsilon}(x)$ together with all the formulas
\begin{align*}
& \forall x \, 
   \big( \fan{\bt}(x) \leftrightarrow
           \bigvee_{\substack{r, \br, s, \bs:\\ \bt = r \br s \bs}}
             \exists y \, (x \neq y \land r(x, y) \land \fan{\br}(y)
             \land s(y, x) \land \fan{\bs}(x) ) \big), 
\end{align*}
where $\bt \vartriangleleft \seq$ is non-empty. Evidently the size of $F_\kappa$ is polynomial in the size of $\seq$.
\begin{lem}
\label{lem:fan}
Suppose $\gA$ is a $\sigma$-structure such that $\gA \models F_\kappa$. Let 
$a$ be an element of $A$ and $\bt$ a sub-word of $\seq$. If $\gA \models \fan{\bt}(a)$, then $a$ is the start of a
$\bt$-tour in $\gA$; conversely, if $a$ is the start of an acyclic
$\bt$-tour in $\gA$, then  $\gA \models \fan{\bt}(a)$.

\end{lem}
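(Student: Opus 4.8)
The plan is to prove both directions together by induction on the length $|\bt|$, exploiting the fact that the biconditionals in $F_\kappa$ are built to mirror, step for step, the recursive anatomy of a tour: a leading edge $r$, a nested sub-tour indexed by $\br$, a return edge $s$, and a trailing sub-tour indexed by $\bs$. The base case $\bt = \epsilon$ is immediate: the axiom $\forall x\,\fan{\epsilon}(x)$ makes the left-hand side hold everywhere, and every element trivially begins the (acyclic) empty tour, so both implications hold. When $|\bt| = 1$ there is no $\bt$-tour at all, since irreflexivity of graph predicates rules out a one-step tour, and correspondingly the disjunction over splits $\bt = r\br s\bs$ is empty (any such split has length at least two), so $\fan{\bt}$ is unsatisfiable; both implications again hold vacuously.

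For the soundness direction in the inductive step, suppose $\gA \models \fan{\bt}(a)$ with $\bt$ non-empty. The biconditional supplies a split $\bt = r\br s\bs$ and a witness $y \neq a$ with $r(a,y)$, $\fan{\br}(y)$, $s(y,a)$ and $\fan{\bs}(a)$. By the induction hypothesis applied to the strictly shorter words $\br$ and $\bs$, the element $y$ begins a $\br$-tour and $a$ begins a $\bs$-tour. Splicing these together behind the edges $r$ and $s$ yields a walk that steps from $a$ to $y$, traverses the $\br$-tour back to $y$, steps from $y$ back to $a$, and finally traverses the $\bs$-tour back to $a$; its predicate sequence is $r\br s\bs = \bt$ and it begins and ends at $a$, so $a$ begins a $\bt$-tour. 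No acyclicity is asserted here, so there is nothing to check about overlaps among the spliced pieces.

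The completeness direction is where the work lies. Given an acyclic $\bt$-tour $a = a_0, a_1, \dots, a_{\ell-1}$ with $a_\ell = a_0$ and $\bt = r_0 \cdots r_{\ell-1}$, I would take $r = r_0$ and $y = a_1$ (distinct from $a$ by irreflexivity) and invoke Lemma~\ref{lma:retrace} for the element $y$: its least index of occurrence is $i = 1$, and for its greatest index $j$ the lemma forces $a_{j+1} = a_{i-1} = a_0 = a$. This pins down the split precisely: $a_1 \cdots a_j$ begins and ends at $y$ and so is a $\br$-tour for $\br = r_1 \cdots r_{j-1}$, the edge $s = r_j$ returns from $y$ to $a$, and $a_{j+1} \cdots a_\ell$ is a $\bs$-tour at $a$ for $\bs = r_{j+1} \cdots r_{\ell-1}$, realising $\bt = r\br s\bs$. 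Because the locus of each sub-tour is contained in the (cycle-free) locus of the whole tour, each sub-tour is itself acyclic, so the induction hypothesis delivers $\fan{\br}(y)$ and $\fan{\bs}(a)$; together with $a \neq y$, $r(a,y)$ and $s(y,a)$ this satisfies one disjunct of the biconditional, giving $\gA \models \fan{\bt}(a)$.

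The main obstacle is exactly this decomposition step. What makes it go through is Lemma~\ref{lma:retrace}: the \emph{retrace} property guarantees that the portion of the tour strictly enclosed between the first and last visits to $y$ closes up into a genuine sub-tour based at $y$, and that the very next edge returns to $a$, so that the leading edge $r$ and the return edge $s$ land in exactly the right places. Without acyclicity and this lemma there would be no reason for the tour to exhibit the nested, palindromic shape that the recursion in $F_\kappa$ demands; the remaining acyclicity bookkeeping (a subgraph of a forest is again a forest) is routine.
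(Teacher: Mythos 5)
Your proof is correct and follows essentially the same route as the paper's: induction on $|\bt|$ in both directions, splicing the witnessed sub-tours for soundness, and for completeness using Lemma~\ref{lma:retrace} at the last occurrence of $a_1$ to extract the split $\bt = r\br s\bs$ with acyclicity of the sub-tours inherited from the locus of the whole tour. Your explicit treatment of the vacuous case $|\bt|=1$ and the subgraph argument for acyclicity are details the paper leaves implicit, but the argument is the same.
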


\begin{proof}
We prove by induction on the length of $\bt$ that if $\gA \models \fan{\bt}(a)$, then
$a$ is the start of a $\bt$-tour in $\gA$. If $\bt = \epsilon$, then the result is evident
(recall that every single element is the start of an $\epsilon$-tour).
Now, if $\bt \neq \epsilon$, since $\gA \models F_\kappa$, 
we can write $\bt = r \br s \bs$, for some $\br$ and $\bs$, such that there exists an
$a' \in A$ with $a \neq a'$, $\gA \models r(a, a')$, $\gA \models s(a', a)$, $\gA \models \fan{\br}(a')$ and $\gA \models \fan{\bs}(a)$. Thus, by inductive hypothesis, $a'$ is the start
of an $\br$-tour and $a$ is the start of an $\bs$-tour. Clearly, then, $a$
is the start of the $\bt$-tour.

For the converse, suppose that $a$ is the start of an acyclic
$\bt$-tour $\ba$ in $\gA$. We prove, by induction on the length of $\bt$, that $\gA \models \fan{\bt}(a_0)$. Write $\ba = a_0, \ldots, a_{\ell-1}$ and $\bt = t_0, \ldots, t_{\ell-1}$,
as usual, letting $a_\ell = a_0$.
If $\bt = \epsilon$ (i.e.~$\ell = 0$) then, since $\gA \models F_\kappa$ 
we have $\gA \models \fan{\epsilon}(a)$.
Now, suppose that $\bah \neq \epsilon$ (i.e.~$\ell > 0$).  
Then $\gA \models t_0(a_0, a_1)$. Let $j$ be the largest index ($1 \leq j < \ell$)
such that $a_j = a_1$. By Lemma~\ref{lma:retrace}, $a_{j+1} = a_0$, whence $\gA \models t_j(a_1,a_0)$. Write $r = t_0$,
$\br = t_1 \cdots t_{j-1}$, $s= t_{j}$ and
$\bs = t_{j+1} \cdots t_{\ell-1}$. 
Thus, $\bt = r \br s \bs$, $a_1$ is the start of an acyclic
$\br$-tour, and $a_0$ the start of an acyclic $\bs$-tour. By inductive hypothesis, $\gA \models \fan{\br}(a_1)$ and $\gA \models \fan{\bs}(a_0)$. Since $\gA \models F_\kappa$, we have $\gA \models \fan{\bt}(a_0)$.
\end{proof}

Now for our second device, which enables us to identify ternary-branching structures of the kind illustrated in Fig.~\ref{fig:tours}(\textsc{ii}). 
\begin{figure}
\begin{tikzpicture} [every node/.style={scale=0.8}]

\vertex(0, 0);
\vertex(2, .5);
\vertex(4, 0);

\triangleUp(2, .5);
\triangleDown(4, 0);

\draw[line width=.5pt, ->] (.15, .07) -- (1.85, .52);
\draw[line width=.5pt, <-] (.15, -.05) -- (1.85, .4);
\draw[line width=.5pt, ->] (3.85, .07) -- (2.15, .52);
\draw[line width=.5pt, ->] (3.85, -.05) -- (2.15, .39);

\node[above] at (0, 0.1) {$a_0$};
\node[below=2pt] at (2, .4) {$a_1$};
\node[above] at (4, 0.1) {$a_2$};

\node at (2, 1.7) {$(\ba_1,\bt_1)$};
\node at (4, -1.2) {$(\ba_0,\bt_0)$};

\node at (1, .5) {$g$};
\node at (3, .5) {$g$};
\node at (1.1, -.05) {$t_2$};
\node at (3, -.05) {$t_1$};

\draw[line width=.5pt, white] (-1, -1.65) -- (5, -1.65);

\end{tikzpicture}  
\caption{Satisfaction of the predicate $\branch{g,t_1,t_2,\bt_0,\bt_1}{1}$ at $a_0$.}
\label{fig:lastOneHopefully}
\end{figure}
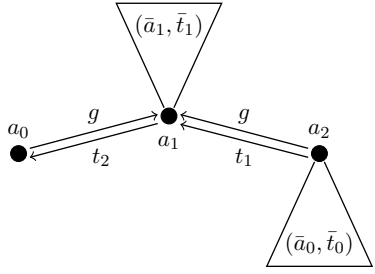
Let the signature $\sigma_\kappa^2$ consist of $\sigma_\kappa^1$ together with the unary predicates $\branch{g,t_1,t_2,\bt_0,\bt_1}{1}$, 
where $g$, $t_1$ and $t_2$ are letters in $\seq$, and
$\bt_0$, $\bt_1$ are sub-words of $\seq$. The intention is that $\branch{g,t_1,t_2,\bt_0,\bt_1}{1}$ should be satisfied 
by an element $a_0$ just in case there exists a $(g g^{-1} \bt_0 t_1 \bt_1 t_2)$-tour $a_0 a_1 \ba_0 a_2 \ba_1 a_1$, where $\ba_0$ is a $\bt_0$-tour starting at $a_2$, and
$\ba_1$ a $\bt_1$-tour starting at $a_1$, as depicted in
Fig.~\ref{fig:lastOneHopefully}. Let $B_\kappa^1$ be the set of all formulas
\begin{multline*}
\forall x (\branch{g, t_1, t_2,\bt_0, \bt_1}{1}(x) \leftrightarrow \\
\exists y \exists z (x \neq y \land  y \neq z \land  x \neq z  \, \land \hspace{6cm} \\ 
      (g(x,y) \land t_2(y, x) \land \fan{\bt_1}(y)) \land
      (g(z,y) \land t_1(z, y) \land \fan{\bt_0}(z)))),
\end{multline*}
where $g, t_1, t_2 \in \seq$, $\bt_0 \vartriangleleft \seq$ and $\bt_1 \vartriangleleft \seq$. 
\begin{lem}
\label{lma:branchSem1}
Suppose $\gA$ is a structure such that $\gA \models F_\kappa \cup B_\kappa^1$. Then $a_0 \in A$ satisfies $\branch{g,t_1,t_2,\bt_0,\bt_1}{1}$ if and only if there exist elements $a_1, a_2 \in A$ such that 
$a_0$, $a_1$, $a_2$ are distinct, 
$\gA \models g(a_0,a_1)$, $\gA \models g^{-1}(a_1,a_2)$, 
$\gA \models t_1(a_2,a_1)$, $\gA \models t_2(a_1,a_0)$, 
$\gA \models \fan{\bt_{0}}(a_2)$ and $\gA \models \fan{\bt_{1}}(a_1)$.
\end{lem}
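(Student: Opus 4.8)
The plan is to prove the biconditional by simply unfolding the defining axiom for $\branch{g,t_1,t_2,\bt_0,\bt_1}{1}$ that sits inside $B_\kappa^1$, which is itself a universally quantified biconditional. Since $\gA \models F_\kappa \cup B_\kappa^1$, the structure $\gA$ satisfies that axiom at $x = a_0$; essentially all the lemma asks is to reconcile the existential witnesses $y, z$ and the literals appearing in the axiom with the elements $a_1, a_2$ and the literals demanded in the statement. Note that here the $\fan{}$ predicates occur as plain atoms on both sides, so their intended tour-semantics (Lemma~\ref{lem:fan}) plays no role in this particular argument.

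For the left-to-right direction, I would assume $\gA \models \branch{g,t_1,t_2,\bt_0,\bt_1}{1}(a_0)$ and read off, from the right-hand side of the axiom instantiated at $a_0$, witnesses $y, z \in A$ with $a_0, y, z$ pairwise distinct such that $g(a_0,y)$, $t_2(y,a_0)$, $\fan{\bt_1}(y)$, $g(z,y)$, $t_1(z,y)$ and $\fan{\bt_0}(z)$ all hold in $\gA$. Setting $a_1 := y$ and $a_2 := z$ reproduces every conjunct required by the statement verbatim, with the sole exception of $g^{-1}(a_1,a_2)$; this I obtain from $g(z,y) = g(a_2,a_1)$ by invoking the converse requirement $\gA \models \forall x \forall y\,(g(x,y) \leftrightarrow g^{-1}(y,x))$ imposed on graph predicates in Sec.~\ref{sec:preliminaries}, which gives $g(a_2,a_1) \leftrightarrow g^{-1}(a_1,a_2)$.

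For the converse direction I would run the identical computation backwards: given $a_1, a_2$ meeting the conditions of the statement, put $y := a_1$ and $z := a_2$, translate $g^{-1}(a_1,a_2)$ back into $g(a_2,a_1) = g(z,y)$ by the same converse law, and observe that the resulting literals, together with the distinctness of $a_0, a_1, a_2$, are precisely the right-hand side of the defining biconditional instantiated at $a_0$. Hence $\gA \models \branch{g,t_1,t_2,\bt_0,\bt_1}{1}(a_0)$.

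I do not expect any genuine obstacle here: the entire argument is bookkeeping, and the only step that is not a direct transcription of $B_\kappa^1$ is the single application of the converse law relating $g$ and $g^{-1}$. The lemma is really just a readable reformulation of the axiom, phrased in the terms that the analysis of Fig.~\ref{fig:tours}(\textsc{ii}) will consume directly.
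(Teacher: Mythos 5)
Your proposal is correct and matches the paper's approach: the paper simply declares this lemma ``Immediate,'' and your argument is exactly the unfolding that justifies that word---instantiating the defining biconditional of $B_\kappa^1$ at $a_0$, matching $y, z$ with $a_1, a_2$, and translating $g(z,y)$ into $g^{-1}(a_1,a_2)$ via the converse law for graph predicates. Your observation that $F_\kappa$ and the tour-semantics of $\fan{\cdot}$ play no role here is also accurate.
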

\begin{proof}
	Immediate.
\end{proof}
We also introduce an additional family of unary predicates $\branch{g,t_1,t_2,\bt_0,\bt_2}{2}$,
satisfied by elements in configurations such as that of $a_1$ in Fig.~\ref{fig:tours}(\textsc{iv}), where $g = f$.
To this end, we define $B_\kappa^2$ to be the set of formulas
\begin{multline*}
\forall y (\branch{g,t_1,t_2,\bt_0,\bt_2}{2}(y) \leftrightarrow \\
\exists x \exists z (x \neq y \land  y \neq z \land  x \neq z \, \land \hspace{6cm} \\                 
      (g(x,y) \land t_2(y, x) \land \fan{\bt_2}(x)) \land            
      (g(z,y) \land t_1(z, y) \land \fan{\bt_0}(z)))),
\end{multline*}
where $g, t_1, t_2, \bt_0, \bt_1$ have the same range as for $B_1$.
\begin{lem}
\label{lma:branchSem2}
Suppose $\gA$ is a structure such that $\gA \models F_\kappa \cup B_\kappa^2$. Then $a_1 \in A$ satisfies $\branch{g,t_1,t_2,\bt_0,\bt_1}{2}$ if and only if there exist elements $a_0, a_2 \in A$ such that 
$a_0$, $a_1$, $a_2$ are distinct, 
$\gA \models g(a_0,a_1)$, $\gA \models g^{-1}(a_1,a_2)$, $\gA \models t_1(a_2,a_1)$, $\gA \models t_2(a_1,a_0)$, 
$\gA \models \fan{\bt_0}(a_2)$ and $\gA \models \fan{\bt_2}(a_0)$.
\end{lem}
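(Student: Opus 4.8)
The plan is to proceed exactly as for Lemma~\ref{lma:branchSem1}, reading off the claim directly from the defining biconditional that $B_\kappa^2$ places at every element. Since $\gA \models B_\kappa^2$, the equivalence
\begin{multline*}
\branch{g,t_1,t_2,\bt_0,\bt_2}{2}(y) \leftrightarrow
\exists x \exists z \big( x \neq y \land y \neq z \land x \neq z \, \land \\
(g(x,y) \land t_2(y,x) \land \fan{\bt_2}(x)) \land
(g(z,y) \land t_1(z,y) \land \fan{\bt_0}(z)) \big)
\end{multline*}
holds for every element of $A$, in particular at $a_1$. The first step is therefore simply to instantiate $y := a_1$ and rename the existential witnesses as $x =: a_0$ and $z =: a_2$; the three inequality conjuncts then become precisely the requirement that $a_0$, $a_1$ and $a_2$ be pairwise distinct, and each direction of the biconditional in the lemma corresponds to one direction of the instantiated equivalence.

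The second step is to verify that, under this renaming, every remaining conjunct matches a condition in the statement. The conjuncts $g(a_0,a_1)$, $t_2(a_1,a_0)$, $\fan{\bt_2}(a_0)$, $t_1(a_2,a_1)$ and $\fan{\bt_0}(a_2)$ appear verbatim. The only one not literally identical is $g(a_2,a_1)$ (i.e.\ $g(z,y)$), which I would reconcile with the condition $g^{-1}(a_1,a_2)$ appearing in the lemma by invoking the standing converse axiom $\forall x \forall y (g(x,y) \leftrightarrow g^{-1}(y,x))$, valid in every $\sigma$-structure; this gives $g(a_2,a_1) \leftrightarrow g^{-1}(a_1,a_2)$ at once. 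With these identifications in hand both directions follow immediately.

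There is no genuine obstacle here, as the content is entirely definitional; indeed $F_\kappa$ is never actually used, since the conclusion speaks of satisfaction of the predicate symbols $\fan{\bt_0}$ and $\fan{\bt_2}$ rather than of the acyclic tours they are intended to witness (so Lemma~\ref{lem:fan} is not invoked), and $F_\kappa$ is carried along only for uniformity with Lemma~\ref{lma:branchSem1}. The sole point demanding care is bookkeeping: one must assign $x,y,z$ to $a_0,a_1,a_2$ in the order dictated by $B_\kappa^2$ rather than by $B_\kappa^1$, because the distinguished vertex differs between the two devices---in $B_\kappa^1$ it is the starting vertex $a_0$ carrying $\fan{\bt_1}(a_1)$, whereas in $B_\kappa^2$ it is the branching vertex $a_1$ and the side condition attaches instead as $\fan{\bt_2}(a_0)$.
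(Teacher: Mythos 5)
Your proposal is correct and takes essentially the same approach as the paper, which states the lemma without proof (its twin, Lemma~\ref{lma:branchSem1}, is proved with the single word ``Immediate''): both amount to instantiating the universally quantified biconditional defining $B_\kappa^2$ at $y := a_1$, matching witnesses $x, z$ with $a_0, a_2$, and reconciling $g(a_2,a_1)$ with $g^{-1}(a_1,a_2)$ via the standing converse axiom. Your side observations---that $F_\kappa$ is carried in the hypothesis only for uniformity, and that the paper's indexing of the branch predicate is slightly inconsistent ($\bt_1$ versus $\bt_2$ in the fifth slot)---are accurate.
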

Of course, the formulas in $B_\kappa^1$ and $B_\kappa^2$ are not in $\GC$, as they feature three variables. However, Lemma~\ref{lem:rewrite} ensures that they are equivalent to $\GC$-formulas. In the sequel, therefore, we shall treat $B_\kappa^1$ and $B_\kappa^2$ as sets of
$\GC$-formulas, understanding them to be replaced by their (harder-to-read) $\GC$-equivalents. We write
$B_\kappa = B_\kappa^1 \cup B_\kappa^2$.

Our third device enables us to identify elements at the start of structures of the kind illustrated in Fig.~\ref{fig:isthmus}, whose final elements satisfy a given unary predicate $p$ of $\sigma_\kappa^2$.
\begin{figure}

\centering
\begin{tikzpicture}

\node[left=2pt] at (-1, 0) {$b$};
\node[right=2pt] at (7, 0) {$a$};

\sandglass[(-1, 0), $\br_0$, $\bs_k$];
\sandglass[(1, 0), $\br_1$, $\bs_{k-1}$];
\sandglass[(3, 0), $\br_2$, $\bs_{k-2}$];
\sandglass[(7, 0), $\br_k$, $\bs_0$];

\doubleArrow[(-1, 0), (1, 0), $r_0$, $s_{k-1}$];
\doubleArrow[(1, 0), (3, 0), $r_1$, $s_{k-2}$];
\doubleArrow[(3, 0), (4.5, 0), $r_2$, $s_{k-3}$];
\doubleArrow[((5.5, 0), (7, 0), $r_{k-1}$, $s_0$];

\draw[line width=.5pt, <-] (7.5, 0) -- (8.5, 0);
\draw (8.5, -.35) rectangle (9.3, .35);
\node at (8.9, -.05) {$p$};

\draw[dotted, line width=2pt, line cap=round, dash pattern=on 0pt off 2\pgflinewidth] (4.6, 0) -- (5.4, 0);

\end{tikzpicture}
\caption{An isthmus starting at $a$ and terminating in an element $b$ satisfying a unary predicate $p$; 
here, $\br = \br_0 r_0 \cdots \br_k r_k$ and $\bs = \bs_0 s_0 \cdots \bs_k s_k$.}
\label{fig:isthmus}
\end{figure}
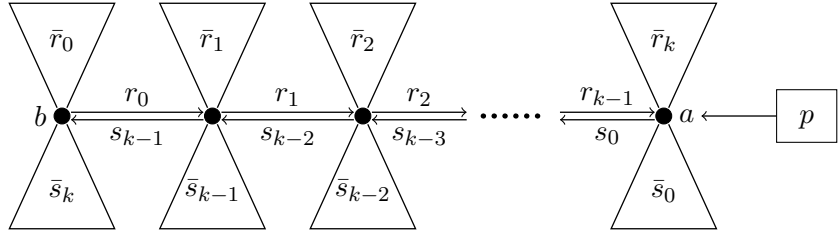
Let the signature $\sigma_\kappa^3$ 
consist of $\sigma_\kappa^2$ together with the unary predicates $\isth{\br, p, \bs}$, where $\br$ and $\bs$ are sub-words of $\seq$, and $p$ is any unary predicate of $\sigma_\kappa^2$. (Note that $\sigma_\kappa^2$ contains predicates of the forms
$\branch{g,t_1,t_2,\bt_0,\bt_1}{1}$ and $\branch{g,t_1,t_2,\bt_0,\bt_2}{2}$.) The intention is that 
$\isth{\br, p, \bs}$ will be satisfied by an element $b$ in certain sorts of structures just in case there exists an element $a$ satisfying $p$, as well as an $\br$-walk from
$b$ to $a$ and an $\bs$-walk from $a$ to $b$, which together form acyclic $\br\bs$-tour. We refer to such a tour, informally, as an {\em isthmus}. Accordingly, let $I_\kappa$ be the set of all formulas
\begin{multline*}
\forall x \big[ \isth{\br, p, \bs}(x) \leftrightarrow
                         \big((\fan{\br}(x) \land p(x) \land \fan{\bs}(x)) \lor\\
           \bigvee_{\substack{\br = \br' r \br'', \\ \bs = \bs'' s \bs'}}
             \hspace{-5mm} \big(\fan{\br'}(x) \land \fan{\bs'}(x) \land
 \exists y(x \neq y \land r(x, y) \land s(y, x) \land \isth{\br'', p, \bs''}(y) ) \big) \big) \big],
\end{multline*}
where
$\br \vartriangleleft \seq$, $\bs \vartriangleleft \seq$ and $p$ is a unary predicate of $\sigma_\kappa^2$. Again, the sizes of the signature $\sigma_\kappa^3$ and the set of formulas $I_\kappa$ are polynomially bounded in $|\seq|$.
\begin{lem}
\label{lem:isth}
Let $\br' \vartriangleleft \seq$, $\br'' \vartriangleleft \seq$ and
$p \in \sigma_\kappa^2$. 
Suppose $\gA \models F_\kappa \cup I_\kappa$. If $\gA \models \isth{\br, p, \bs}(b)$, then
there exists an element $a \in A$ such that $\gA \models p(a)$, a walk $[\bb a, \br]$ from $b$ to $a$, and a walk $[\bc b, \bs]$ from $a$ to $b$. Conversely, if such $a$, $\bb$, $\br$, $\bc$, $\bs$ exist, with $(\bb \bc, \br \bs)$ additionally an acyclic tour, then $\gA \models \isth{\br, p, \bs}(b)$.
\end{lem}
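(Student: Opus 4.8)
The plan is to prove both implications simultaneously by induction on $|\br| + |\bs|$, exploiting the fact that the biconditional defining $\isth{\br, p, \bs}$ holds in $\gA$ (since $\gA \models I_\kappa$), and using Lemma~\ref{lem:fan} to pass between the predicates $\fan{\cdot}$ and the acyclic tours they denote. The forward direction will be routine; the converse carries the real content, and its crux is a single application of Lemma~\ref{lma:retrace}.

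For the forward direction I would assume $\gA \models \isth{\br, p, \bs}(b)$ and unfold the biconditional. If the first disjunct holds, then $\gA \models \fan{\br}(b) \land p(b) \land \fan{\bs}(b)$; taking $a = b$, Lemma~\ref{lem:fan} supplies an $\br$-tour and a $\bs$-tour at $b$, which already serve as the required $\br$-walk from $b$ to $a$ and $\bs$-walk from $a$ to $b$. If the second disjunct holds, I fix the witnessing decomposition $\br = \br' r \br''$, $\bs = \bs'' s \bs'$ and element $y \neq b$ with $r(b,y)$, $s(y,b)$ and $\gA \models \isth{\br'', p, \bs''}(y)$. Since $|\br''| + |\bs''| < |\br| + |\bs|$, the induction hypothesis yields an $a$ with $p(a)$ together with an $\br''$-walk from $y$ to $a$ and a $\bs''$-walk from $a$ to $y$, while Lemma~\ref{lem:fan} provides an $\br'$-tour and a $\bs'$-tour at $b$. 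Concatenating the $\br'$-tour, the edge $r$, and the $\br''$-walk gives an $\br$-walk from $b$ to $a$, and symmetrically for $\bs$. No acyclicity is needed here.

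For the converse I would assume $a$, $\bb$, $\bc$ are given with $[\bb a, \br]$ a walk from $b$ to $a$, $[\bc b, \bs]$ a walk from $a$ to $b$, $p(a)$, and $(\bb\bc, \br\bs)$ an acyclic tour, and again induct on $|\br| + |\bs|$. If $a = b$, then $[\bb a, \br]$ and $[\bc b, \bs]$ are themselves tours at $b$, acyclic as sub-tours of the acyclic tour, so Lemma~\ref{lem:fan} gives $\fan{\br}(b)$ and $\fan{\bs}(b)$; with $p(b)$ this verifies the first disjunct. If $a \neq b$, I would list the positions at which $(\bb\bc, \br\bs)$ visits $b$; each maximal segment between consecutive visits is an acyclic tour at $b$, and exactly one such segment contains $a$ (at an interior position, since $a \neq b$). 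The concatenation of the segments preceding it is an acyclic $\br'$-tour at $b$ and that of the segments following it an acyclic $\bs'$-tour at $b$, where $\br'$ is the matching prefix of $\br$ and $\bs'$ the matching suffix of $\bs$; Lemma~\ref{lem:fan} then yields $\fan{\br'}(b)$ and $\fan{\bs'}(b)$.

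It remains to analyse the single segment containing $a$, and this is the main obstacle. Writing this segment as a tour at $b$ whose only occurrences of $b$ are its two endpoints, let $y$ be its first interior element (reached from $b$ by an edge $r$) and $y'$ its last (from which an edge $s$ returns to $b$). The decisive step is to apply Lemma~\ref{lma:retrace} to this acyclic sub-tour at the element $y$: the step immediately after the last occurrence of $y$ must return to the predecessor of its first occurrence, namely $b$, and by maximality of the segment this forces $y' = y$. With the two bridge edges thus sharing the endpoint $y$, the interior of the segment becomes an acyclic tour at $y$ through $a$, splitting at $a$ into an $\br''$-walk from $y$ to $a$ and a $\bs''$-walk from $a$ to $y$ that together form an acyclic tour, where $\br = \br' r \br''$ and $\bs = \bs'' s \bs'$ (and $\br'', \bs'' \vartriangleleft \seq$ as subwords of subwords of $\seq$). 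Since $|\br''| + |\bs''| < |\br| + |\bs|$, the induction hypothesis gives $\gA \models \isth{\br'', p, \bs''}(y)$, so the second disjunct of the defining biconditional is satisfied at $b$. I expect the coincidence $y = y'$ to be the only delicate point: without it the two bridge edges would not meet, and the recursive clause of $I_\kappa$ could not be matched; Lemma~\ref{lma:retrace} is precisely what excludes that possibility.
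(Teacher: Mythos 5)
Your proposal is correct and takes essentially the same route as the paper's proof: both directions go by induction on $|\br|+|\bs|$, unfolding the $I_\kappa$ biconditional, with Lemma~\ref{lem:fan} supplying the $\fan{\cdot}$ conjuncts and Lemma~\ref{lma:retrace} as the crux of the converse, forcing the tour to re-enter $b$ along an edge from the very element $y$ by which it first left. Your bookkeeping (cutting the tour into segments between consecutive visits to $b$ and recursing on the interior of the segment containing the junction occurrence of $a$) differs only cosmetically from the paper's cuts at last occurrences, and both are sound.
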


\begin{proof}
Suppose that $\gA \models \isth{\br, p, \bs}(b)$. To 
construct the required $a$, $\bb$, $\br$, $\bc$ and $\bs$, 
we proceed by 
induction on the length of the combined sequence $\br \bs$.
For the case $|\br \bs| = 0$,
if $\gA \models
\fan{\br}(b) \land p(b) \land \fan{\bs}(b)$, then by Lemma~\ref{lem:fan}, there
exists an $\br$-tour $\bb$ and a
$\bs$-tour $\bc$, both starting at $b$. Now set $a = b$. Thus,
$[\bb a,\br]$ is a walk from $b$ to itself, and $[\bc b,\bs]$ a walk from $a$ to itself, 
as required.
For the case $|\br \bs| > 0$, since $\gA \models  I_\kappa$,
we may write $\br = \br' r \br''$ and $\bs'' = \bs'' s \bs'$,
such that $\gA \models \fan{\br'}(b) \land \fan{\bs'}(b)$ and there exists 
$b' \in A$ with $\gA \models r(b, b')$, $\gA \models s(b', b)$ and
$\gA \models \isth{\br'', p, \bs''}(b')$. By inductive hypothesis, 
there exist an element $a \in A$ and walks $[\bb'' a, \br'']$, $[\bc'' b', \bs'']$, with $b'$ the
first element of $\bb''$, $a$ the
first element of $\bc''$, and $\gA \models p(a)$. On the other hand, from Lemma~\ref{lem:fan}, $b$ is the start of an $\br'$-tour, say $\bb'$, and of an $\bs'$-tour, say $\bc'$. Writing $\bb= 
\bb' b' \bb''$ and $\bc= 
\bc'' b' \bc'$, we see that $\bb a$ is an  
$\br$-walk from $b$ to $a$, and
$\bc b$ is an $\bs$-walk from $a$ to $b$, as required.

For the converse, suppose that there exist an element $a \in A$ and 
walks $[\bb a, \br]$, $[\bc b, \bs]$, with $b$ the
first element of $\bb$ and $a$ the
first element of $\bc$, such that 
$(\bb \bc, \br \bs)$ is an acyclic tour, and $\gA \models p(b)$.
We again proceed by 
induction on $|\br \bs|$, supposing that the result holds for sequences of smaller combined length.
Write $\bb b = b_0 \cdots b_\ell$ and $\bc a = c_0 \cdots c_m$. Thus, $b = b_0 = c_m$ and $a = b_\ell = c_0$.
Let $i$ be the largest index ($0 \leq i \leq \ell$) such that $b_{i} = b_0$. If $i = m$, then $a = b$. In that
case, $a$ satisfies $p$; moreover, $(\bb,\br)$ and $(\bc,\bs)$
are acyclic tours, and so by Lemma~\ref{lem:fan}, $b$ satisfies $\fan{\br}$ and $\fan{\bs}$.
Since $\gA \models I_\kappa$, $a$ also satisfies $\isth{\br, p, \bs}$. 
Suppose, then that $i < \ell$, so that $b_i \neq b_{i+1}$. Then $\bdd = b_i \cdots b_{\ell -1} c_0 \cdots c_{m-1}$ is an acyclic tour starting at $b$. Let $c$ be the last element of $\bdd$ equal to $b_{i+1}$.  
Applying Lemma~\ref{lma:retrace} to $\bdd$, the next element after $c$ must be $b_i =b_0 = b$; and since $c_0 \neq b_0$ and
by construction $b_0$ does not occur in $b_{i+1} \cdots b_{\ell-1}$, we have $c = c_j$ for some $j$ $(1 \leq j < m)$. Thus,
$b = b_i = c_{j+1}$. Now let $\bb' = b_0 \cdots b_{i-1}$, 
$\bb'' = b_{i+1} \cdots b_{\ell-1}$, $\bc'' = c_0 \cdots c_{j-1}$, 
$\bc' = c_{j+1} \cdots c_{m-1}$; define the sequences $\br'$, $\br''$, $\bs''$, $\bs'$ correspondingly.
Write $b' = b_{i+1} = c_j$, $r = r_i$ and $s = s_j$. Then $\gA \models r(b,b')$
and $\gA \models s(b',b)$. Moreover, $(\bb',\br')$  and $(\bc',\bs')$ are acyclic tours beginning at $b$, 
$[\bb'' a,\bs'']$ is a walk from $b'$ to $a$ and $[\bc'' b',\bs'']$ a walk from $a$ to $b'$ such that 
the tour $(\bb'' \bc'',\br'' \bs'')$ is acyclic.
By Lemma~\ref{lma:retrace}, $b$ satisfies $\fan{\br'}$ and $\fan{\bs'}$; and
by inductive hypothesis, $b'$ satisfies $\gA \models \isth{\br'', p, \bs''}$.  
Since $\gA \models I_\kappa$, $b$ satisfies $\isth{\br, p, \bs}$.
\end{proof}


\section{Complexity of $\GCDK$}
\label{sec:gcdk}
With these tools at our disposal, we may return to the task of reducing the
satisfiability and finite satisfiability problems for $\GCDK$ to the corresponding
problems for $\GCD$. As before, let $\kappa = \PFD[\baf f , \bag]$ be a binary path-functional dependency, and define $\seq =  f f^{-1} \baf^{-1} \bag \bag^{-1}\baf$. We have observed that $\kappa$ is critically violated in a structure $\gA$ if and only if $\gA$ contains an $\seq$-tour $\ba$ whose first three elements are distinct. Furthermore, we have shown that, in such a case, $\ba$ must fall under exactly one of the ten cases depicted in Fig.~\ref{fig:tours}. 

Consider first the case illustrated in Fig.~\ref{fig:tours}(\textsc{i}): $\ba$ is passive. 
Define $V_{\kappa,{\rm (i)}}$ to be the $\GC$-sentence
\begin{equation*}
\exists x \bigvee \big\{\fan{\bt_2}(x) \land \branch{f, t_1, t_2, \bt_0,\bt_1}{1}(x)
\mid \seq = f \frev \bt_0 t_1 \bt_1 t_2 \bt_2 \big\}.
\end{equation*}
The following lemma states that we can now rule out such critical violations of $\kappa$ by writing the 
$\GC$-formula $\neg V_{\kappa,{\rm (i)}}$.
\begin{lem}
\label{lem:viol_cond_i}
Suppose $\gA \models F_\kappa \wedge B_\kappa$.
If $\gA$ is $\sizeOf{\seq}$-quasi-acyclic, but contains a critical violation of $\kappa$ having the form of a passive $\seq$-tour, then $\gA \models V_{\kappa,{\rm (i)}}$. Conversely,
if $\gA \models V_{\kappa,{\rm (i)}}$, then $\gA$ contains a critical violation of $\kappa$.
\end{lem}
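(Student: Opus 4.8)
The plan is to prove the two implications separately, in each case simply chaining together Lemmas~\ref{lem:cond_i}, \ref{lem:fan} and~\ref{lma:branchSem1}; no new combinatorics are needed.

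For the forward direction, I would start from the hypothesis that $\gA$ is $\sizeOf{\seq}$-quasi-acyclic and contains a passive $\seq$-tour $(\ba,\seq)$ whose first three elements $a_0,a_1,a_2$ are distinct. This is exactly the situation of Lemma~\ref{lem:cond_i}, which I invoke to obtain decompositions $\ba = a_0 a_1 \ba_0 a_2 \ba_1 a_1 \ba_2$ and $\seq = f\frev\bt_0 t_1 \bt_1 t_2 \bt_2$ in which $(\ba_0,\bt_0)$, $(\ba_1,\bt_1)$ and $(\ba_2,\bt_2)$ are \emph{acyclic} tours starting at $a_2$, $a_1$ and $a_0$ respectively, with $t_2(a_1,a_0)$ and $t_1(a_2,a_1)$ holding in $\gA$. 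Applying the second (``acyclic tour $\Rightarrow$ $\fan{}$'') half of Lemma~\ref{lem:fan} three times gives $\gA \models \fan{\bt_0}(a_2)$, $\gA \models \fan{\bt_1}(a_1)$ and $\gA \models \fan{\bt_2}(a_0)$. Reading off the first two letters of $\seq$ yields $\gA \models f(a_0,a_1)$ and $\gA \models f^{-1}(a_1,a_2)$; together with $t_1(a_2,a_1)$, $t_2(a_1,a_0)$, $\fan{\bt_0}(a_2)$, $\fan{\bt_1}(a_1)$ and the distinctness of $a_0,a_1,a_2$, this is precisely what Lemma~\ref{lma:branchSem1} needs (with $g=f$) to conclude $\gA \models \branch{f,t_1,t_2,\bt_0,\bt_1}{1}(a_0)$. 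Since also $\gA \models \fan{\bt_2}(a_0)$, the element $a_0$ witnesses the disjunct of $V_{\kappa,{\rm (i)}}$ indexed by $\seq = f\frev\bt_0 t_1 \bt_1 t_2 \bt_2$, so $\gA \models V_{\kappa,{\rm (i)}}$.

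For the converse, I would suppose $\gA \models V_{\kappa,{\rm (i)}}$ and fix a witness $a_0$ together with the decomposition $\seq = f\frev\bt_0 t_1 \bt_1 t_2 \bt_2$ for which $\gA \models \fan{\bt_2}(a_0) \land \branch{f,t_1,t_2,\bt_0,\bt_1}{1}(a_0)$. Unpacking the branch predicate via Lemma~\ref{lma:branchSem1} produces distinct $a_1,a_2$ with $f(a_0,a_1)$, $f^{-1}(a_1,a_2)$, $t_1(a_2,a_1)$, $t_2(a_1,a_0)$, $\fan{\bt_0}(a_2)$ and $\fan{\bt_1}(a_1)$. Now I apply the first (``$\fan{}$ $\Rightarrow$ tour'') half of Lemma~\ref{lem:fan} to the four $\fan{}$-facts to obtain a $\bt_0$-tour $\ba_0$ starting at $a_2$, a $\bt_1$-tour $\ba_1$ starting at $a_1$, and a $\bt_2$-tour $\ba_2$ starting at $a_0$ (these need not be acyclic, which is immaterial here). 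Concatenating them gives the closed walk
\[
a_0 \xrightarrow{f} a_1 \xrightarrow{f^{-1}} a_2 \xrightarrow{\ba_0} a_2 \xrightarrow{t_1} a_1 \xrightarrow{\ba_1} a_1 \xrightarrow{t_2} a_0 \xrightarrow{\ba_2} a_0,
\]
whose sequence of graph predicates is exactly $f\frev\bt_0 t_1 \bt_1 t_2 \bt_2 = \seq$; this is an $\seq$-tour whose first three elements $a_0,a_1,a_2$ are distinct, i.e.\ a critical violation of $\kappa$.

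Neither implication is deep, since the real work is already done in Lemmas~\ref{lem:cond_i}, \ref{lem:fan} and~\ref{lma:branchSem1}; the only genuine obstacle is bookkeeping. I would take care to check that the decomposition delivered by Lemma~\ref{lem:cond_i} matches, letter for letter, the shapes demanded by the $\branch{}{1}$ and $\fan{}$ predicates, and that in the converse the separately obtained sub-tours glue together into a single genuine $\seq$-tour. One point worth flagging explicitly is the asymmetry in the two uses of Lemma~\ref{lem:fan}: the forward direction relies on its converse half, so acyclicity of the sub-tours (guaranteed because a passive tour of length $\sizeOf{\seq}$ in a $\sizeOf{\seq}$-quasi-acyclic structure is acyclic) is what licenses the $\fan{}$-predicates, whereas the converse direction uses only the forward half of Lemma~\ref{lem:fan} and hence needs neither acyclicity nor quasi-acyclicity — consistent with the asymmetric hypotheses of the statement.
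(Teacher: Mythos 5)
Your proof is correct and follows essentially the same route as the paper's: Lemma~\ref{lem:cond_i} plus the acyclic-to-$\fan{}$ half of Lemma~\ref{lem:fan} plus Lemma~\ref{lma:branchSem1} for the forward direction, and Lemma~\ref{lma:branchSem1} plus the $\fan{}$-to-tour half of Lemma~\ref{lem:fan} with concatenation for the converse. Your closing remark on the asymmetry between the two uses of Lemma~\ref{lem:fan} (only the forward direction needs acyclicity, hence quasi-acyclicity) correctly explains why the lemma's hypotheses are asymmetric, a point the paper leaves implicit.
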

\begin{proof}
Suppose $\gA$ contains a passive $\seq$-tour whose first three elements, $a_0$, $a_1$ and $a_2$ are distinct.
By Lemma~\ref{lem:cond_i}, there is a decomposition $f \frev \bt_0 t_1 \bt_1 t_2 \bt_2$ of $\seq$ and
distinct $a_0, a_1, a_2 \in A$ such that 
$f(a_0, a_1)$, $f^{-1}(a_1, a_2)$, $t_1(a_2, a_1)$ and $t_2(a_1, a_0)$ all hold, and each
$a_i$ ($0 \leq i < 3$) is the start of a $\bt_{2-i}$ tour.  
Since $\gA$ is $\sizeOf{\seq}$-quasi-acyclic, these tours must be acyclic.
By three applications of Lemma~\ref{lem:fan},
each $a_i$ satisfies $\fan{\bt_{2-i}}$. By Lemma~\ref{lma:branchSem1} 
and construction of $V_{\kappa,{\rm (i)}}$, $\gA \models V_{\kappa,{\rm (i)}}$.

Suppose, conversely, $\gA \models V_{\kappa,{\rm (i)}}$. 
By  construction of $V_{\kappa,{\rm (i)}}$ and Lemma~\ref{lma:branchSem1}, 
there is a decomposition
$f \frev \bt_0 t_1 \bt_1 t_2 \bt_2$ of $\seq$ and three 
distinct elements $a_0$, $a_1$, $a_2$ such that $f(a_0, a_1)$, $f^{-1}(a_1, a_2)$, $t_1(a_2, a_1)$, $t_2(a_1, a_0)$
all hold, and each $a_i$ ($0 \leq i < 3$) satisfies $\fan{\bt_{2-i}}$.
By Lemma~\ref{lem:fan}, each $a_i$ the start of a $\bt_{2-i}$ tour. 
Thus, $a_0$ is the start of an $\seq$-tour in $\gA$ whose first elements are
distinct, whence $\gA$ contains a critical violation of $\kappa$.
\end{proof}

Now consider the case where $(\ba;\seq)$ is as in Fig.~\ref{fig:tours}(\textsc{ii}): $a_0$ is active, but $a_1$ and $a_2$ are passive. The following lemma states that we can now rule out such critical violations of $\kappa$ by checking some conditions on $\Delta$. Note that, if a database $\Delta$ is complete with respect to some signature, and $p$ is a  predicate in that signature, then, for $\gA \models \Delta$ and $\ba$ a tuple of active elements of $\gA$ of the
same arity as $p$, the conditions $\gA \models p(\ba)$
and $p(\ba) \in \Delta$ are equivalent.
\begin{lem}
\label{lem:viol_cond_ii}
Suppose $\gA \models F_\kappa \wedge B_\kappa\wedge \Delta$, where $\Delta$ is complete with respect to $\sigma^3_\kappa$.
If $\gA$ is $\sizeOf{\seq}$-quasi-acyclic, but contains an $\seq$-tour $a_0 \cdots a_{\ell-1}$ such that
$a_0$, $a_1$, $a_2$ are distinct with $a_0$ active, but $a_1$ and $a_2$ passive, then there exists a 
decomposition 
\begin{equation*}
\seq = f f^{-1} \bt_0 t_1 \bt_1 t_2 \{\br_i r_i\}_{i=0}^{m-1} \br_m
\end{equation*}
and a sequence of database elements $b_0 \cdots b_m$ with $a_0 = b_0 = b_m$, such that:\linebreak
(i) $\branch{f,t_1,t_2,\bt_0,\bt_1}{1}(b_0) \in \Delta$;
(ii) $r_i(b_i,b_{i+1}) \in \Delta$ for all $i$ ($0 \leq i < m$); and
(iii) $\fan{\br_i}(b_i) \in \Delta$ for all $i$ ($0 \leq i \leq m$). Conversely,
if such a decomposition and sequence of database elements exists, then $\gA$ contains a critical violation of $\kappa$.
\end{lem}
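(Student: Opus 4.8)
The plan is to prove the two directions separately, in each case mediating between the geometric picture of Fig.~\ref{fig:tours}(\textsc{ii}) and the database facts (i)--(iii) through the semantic characterisations already established for the auxiliary predicates (Lemmas~\ref{lem:fan}, \ref{lma:branchSem1} and~\ref{lem:dbcycle_decomp}) and the completeness of $\Delta$. For the forward direction I would first apply Lemma~\ref{lem:cond_ii} to the given $\seq$-tour, obtaining decompositions $\ba = a_0 a_1 \ba_0 a_2 \ba_1 a_1 \ba_2$ and $\seq = f \frev \bt_0 t_1 \bt_1 t_2 \bt_2$ in which $(\ba_0,\bt_0)$ and $(\ba_1,\bt_1)$ are acyclic tours at $a_2$ and $a_1$, $(\ba_2,\bt_2)$ is a tour at the active element $a_0$, and the edges $f(a_0,a_1)$, $f^{-1}(a_1,a_2)$, $t_1(a_2,a_1)$ and $t_2(a_1,a_0)$ all hold. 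Lemma~\ref{lem:fan} applied to the two acyclic tours gives $\gA \models \fan{\bt_0}(a_2)$ and $\gA \models \fan{\bt_1}(a_1)$, so Lemma~\ref{lma:branchSem1} yields $\gA \models \branch{f,t_1,t_2,\bt_0,\bt_1}{1}(a_0)$. Since $a_0$ is active and $\Delta$ is complete with respect to $\sigma^3_\kappa$, which contains this branch predicate, the completeness remark preceding the lemma converts this into $\branch{f,t_1,t_2,\bt_0,\bt_1}{1}(a_0) \in \Delta$, giving (i) with $b_0 = a_0$.

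It then remains to decompose the sub-tour $(\ba_2,\bt_2)$, which starts at the active element $a_0$ and may traverse the database. Here I would appeal to Lemma~\ref{lem:dbcycle_decomp}. The only wrinkle is that the statement of that lemma identifies the tour length with the quasi-acyclicity parameter, whereas $\sizeOf{\bt_2} < \sizeOf{\seq}$; this is harmless, since a $\sizeOf{\seq}$-quasi-acyclic structure contains no passive or mixed cycle of length at most $\sizeOf{\bt_2}$, so the lemma applies verbatim to $(\ba_2,\bt_2)$. It returns $\bt_2 = \br_0 r_0 \cdots \br_{m-1} r_{m-1} \br_m$, an active backbone tour $b_0 \cdots b_{m-1}$ with $b_m = b_0 = a_0$, acyclic sub-tours $(\bb_i,\br_i)$ at each $b_i$, and the edge facts $\gA \models r_i(b_i,b_{i+1})$. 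A further application of Lemma~\ref{lem:fan} to each acyclic sub-tour gives $\gA \models \fan{\br_i}(b_i)$. As every $b_i$ is active, completeness of $\Delta$ turns the facts $r_i(b_i,b_{i+1})$ and $\fan{\br_i}(b_i)$ into membership in $\Delta$, which is exactly (ii) and (iii) together with the stated decomposition of $\seq$.

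For the converse I would simply run this construction in reverse. Since $\gA \models \Delta$, condition (iii) gives $\gA \models \fan{\br_i}(b_i)$, so by Lemma~\ref{lem:fan} each $b_i$ begins a $\br_i$-tour; splicing these along the backbone edges furnished by (ii) reconstructs a $\bt_2$-tour beginning at $b_0 = a_0$. Condition (i) gives $\gA \models \branch{f,t_1,t_2,\bt_0,\bt_1}{1}(a_0)$, so by Lemma~\ref{lma:branchSem1} there are elements $a_1,a_2$, distinct from each other and from $a_0$, with $f(a_0,a_1)$, $f^{-1}(a_1,a_2)$, $t_1(a_2,a_1)$, $t_2(a_1,a_0)$, and (via Lemma~\ref{lem:fan}) a $\bt_0$-tour at $a_2$ and a $\bt_1$-tour at $a_1$. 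Concatenating these pieces in the order prescribed by $\seq = f \frev \bt_0 t_1 \bt_1 t_2 \bt_2$ produces an $\seq$-tour whose first three elements $a_0,a_1,a_2$ are distinct, i.e.\ a critical violation of $\kappa$.

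I expect no deep obstacle: the content is carried entirely by Lemmas~\ref{lem:cond_ii}, \ref{lem:fan}, \ref{lma:branchSem1} and~\ref{lem:dbcycle_decomp}. The points requiring care are the minor length bookkeeping when invoking Lemma~\ref{lem:dbcycle_decomp} at the smaller length $\sizeOf{\bt_2}$, and checking that the reconstructed concatenations genuinely close up into tours. Crucially, the converse needs no acyclicity of the assembled $\seq$-tour, since a critical violation demands only that $a_0,a_1,a_2$ be distinct; the passage between satisfaction in $\gA$ and membership in $\Delta$ is throughout a routine appeal to completeness.
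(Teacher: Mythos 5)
Your proposal is correct and follows essentially the same route as the paper's own proof: Lemma~\ref{lem:cond_ii} to obtain the decomposition, Lemmas~\ref{lem:fan} and~\ref{lma:branchSem1} plus completeness of $\Delta$ for condition (i), Lemma~\ref{lem:dbcycle_decomp} plus Lemma~\ref{lem:fan} for conditions (ii)--(iii), and the reverse concatenation of tours for the converse. Your remark that Lemma~\ref{lem:dbcycle_decomp} is invoked at length $\sizeOf{\bt_2} < \sizeOf{\seq}$, which is harmless because a $\sizeOf{\seq}$-quasi-acyclic structure is a fortiori $\sizeOf{\bt_2}$-quasi-acyclic, is a point the paper silently glosses over, and your handling of it is right.
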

\begin{proof}
Suppose $\gA$ contains a $\seq$-tour $a_0 \cdots a_{\ell-1}$ such that
	$a_0$ is active, but $a_1$ and $a_2$ are passive. By Lemma~\ref{lem:cond_ii} there exist decompositions 
	$\ba = a_0 a_1 \ba_0 a_2 \ba_1 a_1 \ba_2$ and $\seq = f \frev \bt_0 t_1 \bt_1 t_2 \bt_2$,
	such that $f(a_0,a_1)$, $t_2(a_1,a_0)$, $f(a_2,a_1)$ and $t_1(a_2,a_1)$ are all true  in $\gA$,
	$(\ba_{2-i},\bt_{2-i})$ ($i = 1, 2$) is an acyclic tour starting at $a_i$, and $(\ba_2,\bt_2)$
	is a tour in $\gA$ starting at $a_0$.
	By two applications of Lemma~\ref{lem:fan} and Lemma~\ref{lma:branchSem1}, 
	$\branch{f,t_1,t_2,\bt_0,\bt_1}{1}(a_0) \in \Delta$.
	Furthermore, by Lemma~\ref{lem:dbcycle_decomp}, the tour
	$(\ba_2, \bt_2)$ starting at $a_0$ can be further decomposed
	as 
    $(\{\bb_i b_i\}_{i=0}^{m-1} \bb_m, \{\br_i r_i\}_{i=0}^{m-1} \br_m)$
    such that $(b_0 \cdots b_{m-1}, r_0 \cdots r_{m-1})$ is an active tour, and 
    each $(\bb_i, \br_i)$ is an acyclic tour beginning at $b_i$ ($1 \leq i \leq m$).
    Thus, we have $r_i(b_i,b_{i+1}) \in \Delta$ for all $i$ ($0 \leq i < m$).
    Finally, by 
    repeated applications of Lemma~\ref{lem:fan}, we have $\fan{\br_i}(b_i) \in \Delta$ all $i$ ($0 \leq i \leq m$).

    Conversely, suppose the decomposition $f f^{-1} \bt_0 t_1 \bt_1 t_2 \{\br_i r_i\}_{i=0}^{m-1} \br_m$
	and
	\mbox{database} elements $b_0 \cdots b_{m}$ exist with the advertised properties.
	By
 Lemma~\ref{lma:branchSem1} and two applications of Lemma~\ref{lem:fan}, $a_0 = b_0$ is the start of an $f f^{-1} \bt_0 t_1 \bt_1 t_2$-tour, say $\bdd$, whose first three elements are distinct. Moreover, by Lemma~\ref{lem:fan}, each $b_i$ ($0 \leq i \leq m$) is the start of an $\br_{i}$ tour, say $\bb_i$.
	Concatenating these tours, $a_0$ is the start of a $\seq$-tour, namely $\bdd \{\bb_i b_i\}_{i=0}^{m-1}\bb_m$, whose first three elements are
	distinct, whence $\gA$ contains a critical violation of $\kappa$.
\end{proof}

Violations of $\kappa$ having the forms shown in Fig.~\ref{fig:tours}(\textsc{iii})--(\textsc{vi}) can be ruled out in essentially the same way as for Fig.~\ref{fig:tours}(\textsc{ii}). Fig.~\ref{fig:tours}(\textsc{iii}) is symmetric to
Fig.~\ref{fig:tours}(\textsc{ii}); Fig.~\ref{fig:tours}(\textsc{iv}) is similar to Fig.~\ref{fig:tours}(\textsc{ii}), but using
the predicate $\text{branch}_2$ correspondingly. Fig.~\ref{fig:tours}(\textsc{v}) is, again, similar to Fig.~\ref{fig:tours}(\textsc{ii}), where the predicate $\branch{f,t_1,t_2,\bt_0,\bt_1}{1}$ is satisfied by
$a_0 = b_0$,
but with the extra requirement that $a_1 = b_1$, $f(b_0, b_1) \in \Delta$, and $\bt_1 = \epsilon$.
(Note, in the previous sentence, that the element $b_1$
is unique because $f$ is functional.)
Fig.~\ref{fig:tours}(\textsc{vi}) is symmetric to Fig.~\ref{fig:tours}(\textsc{v}).
Consequently, we turn our attention to the case of Fig.~\ref{fig:tours}(\textsc{vii})--(\textsc{ix}).
\begin{lem}
\label{lem:viol_cond_vii}
Suppose 
$\gA \models F_\kappa \wedge B_\kappa \land I_\kappa \wedge \Delta$, where $\Delta$ is complete with respect to $\sigma^3_\kappa$.
If $\gA$ is $\sizeOf{\seq}$-quasi-acyclic, but contains a mixed $\seq$-tour $\ba$ whose first three elements are all passive, then 
there exists a decomposition $\seq = f f^{-1} \bt_0 t_1 \bt_1 t_2 \bt_2$ and 
a sequence of active elements $b_0 \cdots b_m$, with $b_m = b_0$,
such that the following hold:
\begin{enumerate}[noitemsep, label=(\roman*)]
\item one of the three conditions
\begin{enumerate}[noitemsep, label=(\alph*)]
\item 
$\bt_2= \bs \{\br_i r_i\}_{i=0}^{m-1} \br_m \br$,
$\isth{\br, \branch{f,t_1,t_2,\bt_0,\bt_1}{1},\bs} (b_0) \in \Delta$
\item 
$\bt_1= \bs \{\br_i r_i\}_{i=0}^{m-1} \br_m \br$,
$\isth{\br, \branch{f,t_1,t_2,\bt_2,\bt_0}{2},\bs} (b_0) \in \Delta$
\item 
$\bt_0= \bs \{\br_i r_i\}_{i=0}^{m-1} \br_m \br$,
$\isth{\br, \branch{f,t^{-1}_2,t_1^{-1},\bt_1,\bt_2}{1},\bs} (b_0) \in \Delta$
\end{enumerate}
obtains;
\item $r_i(b_i,b_{i+1}) \in \Delta$ for all $i$ ($0 \leq i < m$); and
\item $\fan{\br_i}(b_i) \in \Delta$ for all $i$ ($0 \leq i \leq m$).
\end{enumerate}
Conversely,
if any such decomposition and sequence of database elements exists, then $\gA$ contains a critical violation of $\kappa$.
\end{lem}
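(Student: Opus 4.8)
The plan is to mirror the proof of Lemma~\ref{lem:viol_cond_ii}, the only new feature being that the database excursion now hangs off one of the \emph{passive} vertices $a_0,a_1,a_2$ rather than off an active $a_0$, which is precisely the situation the isthmus predicates were built to record. Both directions start from the three-way decomposition of Lemma~\ref{lem:cond_vii_viii_ix}: a mixed $\seq$-tour with distinct passive initial elements splits as $\ba = a_0 a_1 \ba_0 a_2 \ba_1 a_1 \ba_2$ and $\seq = f \frev \bt_0 t_1 \bt_1 t_2 \bt_2$, where $(\ba_0,\bt_0)$, $(\ba_1,\bt_1)$, $(\ba_2,\bt_2)$ are tours at $a_2$, $a_1$, $a_0$, exactly one of which is mixed. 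The three subcases, according as the mixed subtour is $(\ba_2,\bt_2)$, $(\ba_1,\bt_1)$ or $(\ba_0,\bt_0)$, correspond exactly to Fig.~\ref{fig:tours}(\textsc{vii}), (\textsc{viii}), (\textsc{ix}) and to clauses (a), (b), (c).

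For the forward direction I would argue case (a) in detail and the others by the same recipe. There the mixed subtour starts at $a_0$, while $(\ba_0,\bt_0)$ at $a_2$ and $(\ba_1,\bt_1)$ at $a_1$ are acyclic; so by Lemma~\ref{lem:fan} their start vertices satisfy $\fan{\bt_0}$ and $\fan{\bt_1}$, and together with the spine facts $f(a_0,a_1)$, $\frev(a_1,a_2)$, $t_1(a_2,a_1)$, $t_2(a_1,a_0)$ furnished by Lemma~\ref{lem:cond_vii_viii_ix}, Lemma~\ref{lma:branchSem1} gives $\gA \models \branch{f,t_1,t_2,\bt_0,\bt_1}{1}(a_0)$. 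I would then apply Lemma~\ref{lem:isthmus} to the mixed subtour $(\ba_2,\bt_2)$ to split it into an in-walk from $a_0$ to its first active element $b_0$, a tour at $b_0$ living in the database, and an out-walk back to $a_0$; Lemma~\ref{lem:dbcycle_decomp} then decomposes that database tour as $(\{\bb_i b_i\}_{i=0}^{m-1}\bb_m,\{\br_i r_i\}_{i=0}^{m-1}\br_m)$ with each $\br_i$ an acyclic tour at the active $b_i$. Reading off the predicate words yields $\bt_2 = \bs\{\br_i r_i\}_{i=0}^{m-1}\br_m\br$. The two isthmus walks, being passive except at $b_0$, form an acyclic tour through $a_0$, so the converse half of Lemma~\ref{lem:isth} gives $\gA \models \isth{\br,\branch{f,t_1,t_2,\bt_0,\bt_1}{1},\bs}(b_0)$; and since $b_0,\dots,b_m$ are active and $\Delta$ is complete for $\sigma^3_\kappa$, this atom, the edges $r_i(b_i,b_{i+1})$, and the atoms $\fan{\br_i}(b_i)$ (again via Lemma~\ref{lem:fan}) all lie in $\Delta$, establishing (i)(a), (ii), (iii).

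The hard part will be getting the branch predicate right in cases (b) and (c), which is essentially index bookkeeping rather than new geometry. In case (b) the mixed subtour emanates from the middle vertex $a_1$, so it is the tours at $a_0$ and $a_2$ that must be absorbed into a \emph{branch}$_2$ predicate at $a_1$, which is exactly what Lemma~\ref{lma:branchSem2} supplies; I would check by direct substitution that the predicate with the arguments recorded in clause (b) is the one satisfied at $a_1$. Case (c) is the genuinely delicate one: to present $a_2$ as the apex of a \emph{branch}$_1$-configuration one must read the spine from $a_2$'s viewpoint, which reverses the two connecting edges and therefore forces the converses $t_2^{-1}, t_1^{-1}$ and installs the surviving acyclic tours $\bt_1,\bt_2$ in the branch slots, as recorded in clause (c). I expect the only real risk of error here to be matching these argument orders and orientations to the hypotheses of Lemmas~\ref{lma:branchSem1}/\ref{lma:branchSem2}, so I would verify each by plugging the actual edge facts into those lemmas.

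For the converse I would run the same gluing argument as in Lemma~\ref{lem:viol_cond_ii}. Given a decomposition and active elements $b_0 \cdots b_m$ with the stated properties, the forward half of Lemma~\ref{lem:isth} turns $\isth{\br,p,\bs}(b_0)$ into an $\br$-walk from $b_0$ to a vertex satisfying the branch predicate $p$ and an $\bs$-walk back; Lemmas~\ref{lma:branchSem1}/\ref{lma:branchSem2} together with Lemma~\ref{lem:fan} expand $p$ into the spine plus the two acyclic hanging tours; and Lemma~\ref{lem:fan} converts each $\fan{\br_i}(b_i)$ into an $\br_i$-tour, which, threaded onto the active tour given by the $r_i(b_i,b_{i+1})$, rebuilds the database excursion. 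Concatenating all of these in the order dictated by the recorded decomposition of $\seq$ yields an $\seq$-tour whose first three elements are the distinct $a_0,a_1,a_2$, i.e.\ a critical violation of $\kappa$. The one point needing care is that distinctness of those three elements survives the reconstruction, which it does because it is forced by the $x\neq y$, $y\neq z$, $x\neq z$ conjuncts built into the branch predicates.
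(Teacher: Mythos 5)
Your proposal is correct and follows essentially the same route as the paper's own proof: Lemma~\ref{lem:cond_vii_viii_ix} for the three-way split, Lemmas~\ref{lem:fan} and~\ref{lma:branchSem1}/\ref{lma:branchSem2} for the branch predicate at the apex, Lemma~\ref{lem:isthmus} followed by Lemmas~\ref{lem:isth} and~\ref{lem:dbcycle_decomp} for the database excursion, and the same gluing argument (with distinctness supplied by the inequalities built into the branch predicates) for the converse. The paper treats only case (a) explicitly and dismisses (b) and (c) as ``similar arguments,'' so your explicit handling of the index/orientation bookkeeping in those cases is, if anything, more detailed than the original.
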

\begin{proof}
Suppose $\gA$
contains a mixed $\seq$-tour $\ba$ whose first three elements are all passive. By Lemma~\ref{lem:cond_vii_viii_ix}, there exist decompositions $\ba = a_0 a_1 \ba_0 a_2 \ba_1 a_1 \ba_2$ and
$\seq = f \frev \bt_0 t_1 \bt_1 t_2 \bt_2$, 
such that:
(i) for all $j$ ($0 \leq j < 3$) $(\ba_{j},\bt_{j})$ is a tour in $\gA$ starting at $a_{2-j}$;
(ii) two of these tours are acyclic and the third is mixed;
(iii) $t_2(a_1,a_0)$ and $t_1(a_2,a_1)$ are true  in $\gA$.

Assume, for definiteness, that $(\ba_{0},\bt_{0})$ and $(\ba_{1},\bt_{1})$ are acyclic, and $(\ba_{2},\bt_{2})$ is mixed. By two applications of Lemma~\ref{lem:fan} and Lemma~\ref{lma:branchSem1}, 
$\branch{f,t_1,t_2,\bt_0,\bt_1}{1}(a_0) \in \Delta$. Now consider the mixed tour $(\ba_{2},\bt_{2})$, starting
at the passive element $a_0$. Write $a = a_0$, the initial element of this tour. By 
Lemma~\ref{lem:isthmus}, 
there exists an element $b$ such that we can write $\ba_{2} = \bc \ba^* \bb$ and $\bt_2 = \bs \bt^* \br$,  where 
$(\bb a,\br)$ is a walk from $b$ to $a$, $(\bc b,\bs)$ is a walk from $a$ to $b$, and  $(\ba^*,\bt^*)$ is a tour starting
at $b$. Write $b_0 = b$. Now applying Lemma~\ref{lem:isth}, we see that  
$\isth{\br, \branch{f,t_1,t_2,\bt_0,\bt_1}{1},\bs}(b_0) \in \Delta$. Further, applying Lemma~\ref{lem:dbcycle_decomp} to the tour
$(\ba^*,\bt^*)$, we may write 
$\ba^* = \{\bb_i b_i\}_{i=0}^{m-1} \bb_m$ and
$\bt^* = \{\br_i r_i\}_{i=0}^{m-1} \br_m$
such that $r_i(b_i,b_{i+1}) \in \Delta$ for all $i$ ($0 \leq i < m$), and
$(\bb_i, \br_i)$ is an acyclic tour starting at $b_i$ for all $i$ ($0 \leq i \leq m$). By 
repeated applications of Lemma~\ref{lem:fan}, $\fan{\br_i}(b_i) \in \Delta$ for all $i$ ($0 \leq i \leq m$).
This yields the conditions required in the lemma, with disjunct (a) realized in condition (i). 
If $(\ba_{1},\bt_{1})$ is mixed, then we obtain disjunct (b), and 
if $(\ba_{0},\bt_{0})$ is mixed, disjunct (c), by similar arguments.

Conversely, suppose that 
there exists a decomposition as described by the lemma, with disjunct (a) realized in condition (i).
By (i)(a) and Lemma~\ref{lem:isth}, there exist an element $a \in A$, a walk $[\bb a, \br]$ from $b_0$ to $a$, and a walk $[\bc b_0, \bs]$ from $a$ to $b_0$, such that 
$(\bb \bc; \br \bs)$ is an acyclic tour, and
$\gA \models \branch{f,t_1,t_2,\bt_0,\bt_1}{1}(a)$.
Write $a_0 = a$, so that, by Lemma~\ref{lma:branchSem1},
there exist elements $a_1, a_2 \in A$ such that 
$a_0$, $a_1$, $a_2$ are distinct, 
$\gA \models f(a_0,a_1)$, $\gA \models f(a_2,a_1)$, 
$\gA \models t_1(a_2,a_1)$, $\gA \models t_2(a_1,a_0)$ and for $i = 1, 2$, 
$\gA \models \fan{\bt_{2-i}}(a_i)$.  By Lemma~\ref{lem:fan}, 
there is an
$\bt_{2-i}$-tour starting at $a_i$ for $i = 1, 2$.  
Starting at the element $ a= a_0$, and concatenating these various
tours and single steps, we obtain the tour
\begin{equation*}
(a_0 a_1 \ba_0 a_2 \ba_1 a_1 \bc \bb,\     f f^{-1} \bt_0 t_1 \bt_1 t_2 \bs \br)
\end{equation*}
passing through $b_0$. Note that $b_0$ is the initial element of $\bb$. 
From condition (ii) and repeated applications of Lemma~\ref{lem:fan},
there exists a tour $(\bb_j,\br_j)$ starting at $b_j$ for all $j$ ($0 \leq j < m$)
and a tour $(\bb_m,\br_m)$ starting at $b_0$. Thus, from condition (iii), there exists
a tour
\begin{equation*}
(\bb_0 b_0  \dots \bb_{m-1} b_{m-1} \bb_m,\ r_0 \br_0 \dots \br_{m-1} r_{m-1} \br_m)
\end{equation*}
starting at $b_0$. Inserting the second of these tours into the first at the element $b_0$, we obtain the tour 
\begin{equation*}
(a_0 a_1 \ba_0 a_2 \ba_1 a_1 \bc \{\bb_i b_i\}_{i=0}^{m-1} \bb_m  \bb,\     
f f^{-1} \bt_0 t_1 \bt_1 t_2 \bs \{\br_i r_i\}_{i=0}^{m-1} \br_m \br).
\end{equation*}
But this is just an
$\seq$-tour whose first three elements, $a_0, a_1, a_2$,  are distinct, so that $\kappa$ has a critical violation.
The disjuncts (b) and (c) in condition (i) are dealt with similarly.
\end{proof}

This leaves just the final case of Fig.~\ref{fig:tours} to cover.
\begin{lem}
\label{lem:viol_cond_x}
Suppose $\gA \models F_\kappa \cup B_\kappa \cup I_\kappa$.
If $\gA$ is $\sizeOf{\seq}$-quasi-acyclic, but contains a mixed $\seq$-tour  whose three initial  elements are all active, then 
there exist a decomposition $\seq = r_0 r_1 \{\br_i r_i\}_{i=2}^m \br_m$
and a sequence of active elements $b_0 \cdots b_m$ with $b_0, b_1, b_2$ distinct and $b_m = b_0$,
such that
\begin{enumerate}[noitemsep, label=(\roman*)]
\item $r_i(b_i,b_{i+1}) \in \Delta$ for all $i$ ($2 \leq i < m$); and
\item $\fan{\br_i}(b_i) \in \Delta$ for all $i$ ($0 \leq i \leq m$).
\end{enumerate}
Conversely,
if such a decomposition exists, then $\gA$ contains a critical violation of $\kappa$.
\end{lem}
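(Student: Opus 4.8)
The plan is to treat this, the last and simplest of the ten configurations, by a direct application of the db-cycle decomposition of Lemma~\ref{lem:dbcycle_decomp}. In contrast to Fig.~\ref{fig:tours}(\textsc{vii})--(\textsc{ix}), here all three of $a_0,a_1,a_2$ are active, so there is no single passive excursion to be captured by an isthmus: the whole violating tour runs along an entirely active spine with acyclic sub-tours hanging off each spine vertex, and the configuration is certified by a database check together with the $\fan{\cdot}$ predicates of $F_\kappa$. In particular, neither $B_\kappa$ nor $I_\kappa$ is actually needed here; they are carried in the hypotheses only to match the global construction, and, as before, I will use the equivalence between $\gA\models p(\ba)$ and $p(\ba)\in\Delta$ for active tuples noted before Lemma~\ref{lem:viol_cond_ii}.

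For the forward direction, I would observe that $(\ba,\seq)$ is a tour of length $\sizeOf{\seq}$ whose initial element $a_0$ is active, so Lemma~\ref{lem:dbcycle_decomp} applies and yields decompositions $\ba = \bb_0 b_0 \cdots \bb_{m-1} b_{m-1} \bb_m$ and $\seq = \br_0 r_0 \cdots \br_{m-1} r_{m-1} \br_m$, in which $(b_0 \cdots b_{m-1}, r_0 \cdots r_{m-1})$ is an active tour and each $(\bb_i,\br_i)$ is an acyclic tour beginning at $b_i$. Since $a_0,a_1,a_2$ are distinct and active, the greedy selection of spine vertices in Lemma~\ref{lem:dbcycle_decomp} picks $a_1$ and then $a_2$ as the next two spine vertices after $a_0$; hence $b_0=a_0$, $b_1=a_1$, $b_2=a_2$ are distinct, the interposed sub-tours $\br_0,\br_1$ are empty, and $r_0=f$, $r_1=f^{-1}$, matching the advertised decomposition $\seq = r_0 r_1 \{\br_i r_i\}_{i=2}^{m}\br_m$. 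Because the spine is active, every spine edge $r_i(b_i,b_{i+1})$ holds in $\gA$ and so lies in $\Delta$, giving (i); and because each hanging sub-tour $(\bb_i,\br_i)$ is acyclic, the converse half of Lemma~\ref{lem:fan} gives $\gA\models\fan{\br_i}(b_i)$, which lies in $\Delta$ as well, giving (ii).

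For the converse, I would simply reverse this assembly. Given the decomposition and the active elements $b_0,\dots,b_m=b_0$ satisfying (i) and (ii), the forward half of Lemma~\ref{lem:fan} turns each $\fan{\br_i}(b_i)\in\Delta$ into an actual $\br_i$-tour $\bb_i$ starting at $b_i$. Threading these sub-tours through the active spine along the recorded spine edges $r_i(b_i,b_{i+1})$, and prepending the two steps $f(b_0,b_1)$ and $f^{-1}(b_1,b_2)$ (which are among the spine edges recorded in $\Delta$, since $r_0=f$ and $r_1=f^{-1}$ are forced by the shape of $\seq$), reconstitutes an $\seq$-tour beginning $b_0 b_1 b_2\cdots$. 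Its first three elements $b_0,b_1,b_2$ are distinct, so by the characterization recalled at the start of this section it is exactly a critical violation of $\kappa$.

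The bulk of the work is the purely combinatorial bookkeeping of the decomposition indices, which is routine given Lemmas~\ref{lem:dbcycle_decomp} and~\ref{lem:fan}. The one step deserving genuine care---and the only place the hypotheses of this case are used---is the opening argument that the first two spine steps are forced to be $f$ and $f^{-1}$ through the three distinct active vertices $b_0,b_1,b_2$: it is precisely this that guarantees the reconstructed tour starts with three distinct elements, and hence certifies a critical, rather than a merely ordinary, violation of $\kappa$.
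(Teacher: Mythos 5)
Your proof is correct and is essentially the paper's own argument made explicit: the paper's proof consists of the single remark that the lemma is ``easy using by-now familiar reasoning from Lemmas~\ref{lem:dbcycle_decomp}, \ref{lem:cond_x} and~\ref{lem:fan}'', and your application of the db-cycle decomposition to the whole tour---with the observation that the greedy spine construction makes the first two sub-tours empty and forces $r_0=f$, $r_1=f^{-1}$ through the distinct active elements $b_0,b_1,b_2$---is exactly that reasoning, with the (trivial) content of Lemma~\ref{lem:cond_x} absorbed into it, and with Lemma~\ref{lem:fan} plus completeness of $\Delta$ handling the sub-tours just as in Lemmas~\ref{lem:viol_cond_ii} and~\ref{lem:viol_cond_vii}. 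One remark on the converse: you treat $f(b_0,b_1)$ and $f^{-1}(b_1,b_2)$ as recorded in $\Delta$, whereas condition (i) as literally stated covers only $2\le i<m$; this is a defect of the lemma's statement rather than of your argument (without those two edges the converse is simply false), and your reading---condition (i) extended to all $0\le i<m$, which is precisely what the forward direction delivers---is clearly the intended one.
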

\begin{proof}
Easy using by-now familiar reasoning from Lemmas~\ref{lem:dbcycle_decomp}, \ref{lem:cond_x} and~\ref{lem:fan}.
\end{proof}

Lemmas~\ref{lem:viol_cond_i}--\ref{lem:viol_cond_x} yield a reduction from the satisfiability and finite satisfiability problems
for $\GCDK$ to the corresponding problems for $\GCD$. For definiteness, we consider
satisfiability; the reduction for finite satisfiability is identical.
By Lemma~\ref{lma:splitting}, we may confine attention to $\GC$-formulas of the form $\psi \wedge \Gamma \wedge K$,
where $\psi$ is a $\GC$-formula, $\Gamma$ a database and $K$ a set of unary and binary path-functional dependencies.
We may assume that $\Gamma$ is consistent, since if not, $\psi$ is certainly unsatisfiable.
Furthermore, we may suppose that every $\kappa \in K$ is binary, since unary path-functional dependencies can be easily
eliminated. Let $F_K$ be the conjunction of
all the $F_\kappa$ for $\kappa \in K$; similarly for $B_K$ and $I_K$. Let $N_K$ be the conjunction of
all the formulas $\neg V_{\kappa}$, for $\kappa \in K$. Finally,
let the signature $\sigma_K^3$ be the union of
all the $\sigma_\kappa^3$ for $\kappa \in K$. 
Consider each of the exponentially many
consistent completions $\Delta \supseteq \Gamma$ over $\sigma_K^3$, in turn. For each such
$\Delta$,  
check the $\GCD$-formula
\begin{equation*}
\phi \wedge B_K \wedge F_K \wedge I_K \wedge N_K,
\end{equation*}
is satisfiable, and
that none of the ten types of violations of $\kappa$ depicted in Fig.~\ref{fig:tours}(\textsc{ii})--(\textsc{x}) arises, 
by checking that $\Delta$ satisfies the relevant conditions in 
Lemmas~\ref{lem:viol_cond_ii}--\ref{lem:viol_cond_x}. 
(Note that the truth of formulas $N_K$ rules out violations of $\kappa$ of the type
depicted in Fig.~\ref{fig:tours}(\textsc{i}), by Lemma~\ref{lem:viol_cond_i}.)
If, for some $\Delta$, these checks succeed,
we may report that $\phi$ is satisfiable. Otherwise, we may report that it is unsatisfiable.

Anticipating Theorems~\ref{thm:main} and~\ref{thm:mainGeneral}, 
stating that the satisfiability and finite satisfiability
problems for $\GCD$ are in $\dexp$, we have our main result:
\begin{thm}
\label{thm:main_key}
The satisfiability and finite satisfiability problems for $\GCDK$ are in $\dexp$.
\end{thm}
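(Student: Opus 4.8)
The plan is to complete the reduction of the (finite) satisfiability problem for $\GCDK$ to that for $\GCD$ sketched in the paragraphs above, and then to appeal to Theorems~\ref{thm:main} and~\ref{thm:mainGeneral}. By Lemma~\ref{lma:splitting} it suffices to treat formulas of the form $\psi \wedge \Gamma \wedge K$, with $\psi$ a $\GC$-sentence, $\Gamma$ a consistent complete database and $K$ a set of path-functional dependencies; there are exponentially many such formulas, each of polynomial size. I would first discard the unary dependencies using the encoding of Section~\ref{sec:preliminaries}, fold the resulting $\GC$-formulas into $\psi$, and then (by adding linearly many dependencies) assume the remaining binary set $K$ to be left-prefix-closed, so that any violation yields a critical one. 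Writing $\ell = \max_{\kappa \in K} \sizeOf{\seq}$, I would form $F_K$, $B_K$, $I_K$, the sentence $N_K = \bigwedge_{\kappa} \neg V_{\kappa,{\rm (i)}}$, and the expanded signature $\sigma_K^3$, all of polynomial size.

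The correctness claim I would establish is this: $\psi \wedge \Gamma \wedge K$ is (finitely) satisfiable iff for some consistent completion $\Delta \supseteq \Gamma$ over $\sigma_K^3$, the $\GCD$-formula $\psi \wedge F_K \wedge B_K \wedge I_K \wedge N_K \wedge \Delta$ is (finitely) satisfiable and $\Delta$ contains none of the database configurations appearing in the conclusions of Lemmas~\ref{lem:viol_cond_ii}--\ref{lem:viol_cond_x}. The one delicate point, which I expect to be the main obstacle, is an asymmetry between the two directions. For the forward direction I would take any model $\gM \models \psi \wedge \Gamma \wedge K$ and expand it, interpreting $\fan{\cdot}$, $\branch{\cdot}{\cdot}$ and $\isth{\cdot}$ in accordance with the defining (bi)conditionals of $F_K$, $B_K$, $I_K$ — which, being recursions on word-length, fix these interpretations uniquely — so that the expansion $\gM'$ models $F_K \wedge B_K \wedge I_K$; I then set $\Delta$ to be its complete diagram on the active domain. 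Since $\gM$ (hence $\gM'$) satisfies $K$, the \emph{converse} halves of Lemmas~\ref{lem:viol_cond_i}--\ref{lem:viol_cond_x}, which hold in \emph{arbitrary} structures with no acyclicity hypothesis, at once force $\gM' \models N_K$ and force $\Delta$ to avoid every violation configuration; and $\gM'$ itself witnesses satisfiability of the $\GCD$-formula. No appeal to Lemma~\ref{lem:big_cycles} is needed here.

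For the reverse direction I would instead deploy Lemma~\ref{lem:big_cycles}. Given a (finite) model of $\psi \wedge F_K \wedge B_K \wedge I_K \wedge N_K \wedge \Delta$, I would pump it to an $\ell$-quasi-acyclic (finite) model $\gB$ with the same active diagram $\Delta$; this is legitimate precisely because that formula is a $\GCD$-formula, even though $K$ itself is not. I would then show $\gB \models K$: were some $\kappa$ critically violated, the resulting $\seq$-tour with distinct initial elements would, by $\ell$-quasi-acyclicity, fall under exactly one of the ten configurations of Fig.~\ref{fig:tours}, and the \emph{forward} halves of Lemmas~\ref{lem:viol_cond_i}--\ref{lem:viol_cond_x} (which \emph{do} require quasi-acyclicity) would yield either $\gB \models V_{\kappa,{\rm (i)}}$, contradicting $N_K$, or a database configuration present in $\Delta$, contradicting our hypothesis. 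Hence $\gB \models \psi \wedge \Gamma \wedge K$. This separation — using the acyclicity-free converse directions on the way out, and installing quasi-acyclicity via Lemma~\ref{lem:big_cycles} before invoking the forward directions on the way back — is exactly what lets us sidestep the fact that pumping need not preserve $K$, and is the crux of the argument.

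Finally I would account for the complexity. There are exponentially many split formulas and, for each, exponentially many completions $\Delta$ over $\sigma_K^3$ (the active domain and $\sigma_K^3$ being of polynomial size, so that there are only polynomially many ground atoms). For each $\Delta$ the $\GCD$-satisfiability test runs in $\dexp$ by Theorems~\ref{thm:main} and~\ref{thm:mainGeneral}, while checking that $\Delta$ avoids the configurations of Lemmas~\ref{lem:viol_cond_ii}--\ref{lem:viol_cond_x} reduces, for each of polynomially many decompositions of $\seq$, to searching for a suitably labelled closed walk $b_0 \cdots b_m$ in the graph on the active domain — a polynomial-time reachability computation. Since a product of exponentially many $\dexp$ steps is again $\dexp$, the whole procedure runs in $\dexp$, which establishes the theorem.
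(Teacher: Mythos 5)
Your proposal is correct and follows essentially the same route as the paper: split via Lemma~\ref{lma:splitting}, eliminate unary dependencies, enumerate consistent completions $\Delta \supseteq \Gamma$ over $\sigma_K^3$, test the $\GCD$-formula $\psi \wedge F_K \wedge B_K \wedge I_K \wedge N_K \wedge \Delta$ together with the database conditions of Lemmas~\ref{lem:viol_cond_ii}--\ref{lem:viol_cond_x}, and appeal to Theorems~\ref{thm:main} and~\ref{thm:mainGeneral}. In fact you make explicit the two-directional correctness argument (using the acyclicity-free converse halves of Lemmas~\ref{lem:viol_cond_i}--\ref{lem:viol_cond_x} in one direction and Lemma~\ref{lem:big_cycles} plus their forward halves in the other) that the paper's own terse proof leaves implicit.
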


\section{Complexity of $\GCD$: preliminaries}
\label{sec:types}

It remains only to establish that the satisfiability and finite satisfiability 
problems for $\GCD$ are in $\dexp$ (Theorems~\ref{thm:main} and~\ref{thm:mainGeneral}). The proof given here is a
modification of the proof in~\cite{pratt07} that the finite satisfiability
problem for $\GC$ is in $\dexp$. Unfortunately, guarded formulas cannot straightforwardly 
be used to express databases, and the authors know of no natural reduction from (finite) $\GCD$-satisfiability to (finite) $\GC$-satisfiability: the proof, even though its broad outlines are the same, 
has to be re-written from scratch. Many of the following lemmas are taken---modulo inessential reformulation---directly from~\cite{pratt07}.
We begin with a normal-form lemma for $\GC$-formulas.
\begin{lem}[\cite{pratt07}, Lemma~2]
	\label{lem:gc2_nf}
	Let $\psi$ be a $\GC$-formula. We can compute in time polynomial in $\|\psi\|$, a sentence
	\begin{equation}
	\begin{split}
	\phi \: := \: \forall x \, \alpha \land \bigwedge_{1 \leq j \leq n} \forall x \forall y (e_j(x,y) \rightarrow (\beta_j \lor x = y)) \ \ \land  \\
	\hspace{3cm} \bigwedge_{1 \leq i \leq m} \forall x \exists_{=C_i} \, y (o_i(x, y) \land x \neq y),
	\end{split}
	\label{eq:nf}
	\end{equation}
	such that:
	(i) $\alpha$ is a quantifier- and equality-free formula in one variable, $x$;
	(ii) $n$, $m$ are positive integers;
	(iii) $e_j$ is a binary predicate different from $=$, for all $j$ ($1 \leq j \leq n$);
	(iv) $\beta_j$ is a quantifier- and equality-free formula;
	(v) $C_i$ is a positive integer;
	(vi) $o_i$ is a binary predicate other than =, for all $i$ ($1 \leq i \leq m$);
	(vii) $\phi$ entails $\psi$; and
	(viii) any model of $\psi$ can be expanded to a model of $\phi$.
\end{lem}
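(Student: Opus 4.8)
The plan is to follow the classical route to Scott normal form via \emph{structural renaming}, adapted so that the guardedness of \GC{} is respected at every step. First I would put $\psi$ into negation normal form and enumerate its quantified subformulas $\eta_1,\dots,\eta_N$ (there are at most $\|\psi\|$ of them) in an order refining the subformula order. Since $\psi$ is a two-variable formula, every quantified subformula has at most one free variable; so for each $\eta_k$ I introduce a fresh \emph{unary} predicate $p_k$ (a propositional constant when $\eta_k$ is a sentence), and I let $\tilde\eta_k$ be the result of replacing each maximal properly nested quantified subformula $\eta_j$ of $\eta_k$ by the atom $p_j$. Each $\tilde\eta_k$ then has quantifier depth $1$ and a quantifier-free matrix over the extended signature. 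Taking $\phi_0$ to be the conjunction of the top-level renaming $\hat\psi$ with all definitional biconditionals $\forall u\,(p_k(u)\leftrightarrow\tilde\eta_k)$ yields a sentence satisfying (vii) and (viii): in any model of $\phi_0$ an easy induction on subformula depth shows that each $p_k$ coincides with $\eta_k$, so $\hat\psi$ forces $\psi$; conversely any model of $\psi$ is expanded to a model of $\phi_0$ by interpreting each $p_k$ as the extension of $\eta_k$, without touching the domain.

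Next I would massage each depth-$1$ definition into conjuncts of the three shapes appearing in \eqref{eq:nf}. Definitions with a quantifier-free right-hand side are one-variable universal statements and are absorbed into the single monadic conjunct $\forall x\,\alpha$. A definition of guarded universal form $\forall x(p(x)\leftrightarrow\forall y(e(x,y)\to\gamma))$ splits into a guarded binary implication $\forall x\forall y(e(x,y)\to(\dots))$ (the forward direction) and an existential requirement $\forall x(\lnot p(x)\to\exists y(e(x,y)\land\lnot\gamma))$, which is just a counting constraint with bound $1$. The genuinely new work is the treatment of the guarded counting definitions $\forall x(p(x)\leftrightarrow \exists_{\bowtie C}y(e(x,y)\land q(y)))$ with $\bowtie\in\{\le,\ge,=\}$: writing $=C$ as the conjunction of $\le C$ and $\ge C$ and introducing fresh binary \emph{witness} predicates $o_i$ together with the guarded clause $\forall x\forall y(o_i(x,y)\to(e(x,y)\land q(y)))$, I turn each threshold into the requirement that every $x$ have \emph{exactly} $C_i$ many $o_i$-successors. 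Because the target conjunct $\forall x\exists_{=C_i}y(o_i(x,y)\land x\ne y)$ is unconditional and exact, I must add \emph{slack} predicates that let superfluous $o_i$-edges be parked harmlessly wherever the count is not actually needed, and verify that a model of $\psi$ always supplies the resulting interpretation on its own domain.

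The unguarded monadic quantifiers of \GC{} (over one-free-variable formulas) degenerate to monadic sentences $\forall u\,\gamma(u)$ and $\exists u\,\gamma(u)$: the former fold directly into $\forall x\,\alpha$, while the latter are forced by a small gadget (a fresh unary $g$ with $g\Rightarrow\gamma$ and a functional guard making $g$ nonempty), with the usual duplication argument disposing of the small-domain case. Throughout, the number of fresh predicates and clauses is linear in $\|\psi\|$, each clause is a fragment of $\psi$ or has constant size (counts written in binary, so size $O(\log C)$), and the whole construction runs in polynomial time, delivering (i)--(vi). I expect the main obstacle to be precisely the counting step: converting the \emph{conditional, inequality} counting quantifiers of $\psi$ into the \emph{unconditional, exact} form $\forall x\exists_{=C_i}y(o_i(x,y)\land x\ne y)$ while preserving models on the nose, as demanded by (viii). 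This is where the slack predicates and a careful bookkeeping of which elements owe how many witnesses are required, and it is the one place where guardedness — ensuring every witness predicate $o_i$ is introduced under a genuine binary guard — must be watched carefully.
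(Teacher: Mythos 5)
The paper does not prove this lemma at all: it is imported verbatim from \cite{pratt07} (Lemma~2), so the only meaningful comparison is with that standard proof, and your reconstruction follows the same route --- Scott-style renaming of the (one-free-variable) quantified subformulas by fresh unary predicates, followed by absorption of the guarded counting quantifiers into unconditional exact-counting conjuncts over fresh binary witness predicates. In outline this is exactly right, and your identification of the counting step as the crux is accurate.

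Two concrete points in your sketch would fail as written, however. First, the clause $\forall x \forall y\,(o_i(x,y) \rightarrow (e(x,y) \land q(y)))$ is incompatible with the unconditional conjunct $\forall x \exists_{=C_i} y\,(o_i(x,y) \land x \neq y)$: together they give \emph{every} element $C_i$ distinct successors satisfying $e \land q$, so the converse direction of your definitional biconditional forces $p$ to hold everywhere, which changes the meaning of $\psi$. The conditionality must be pushed into the guarded clause itself, e.g.\ $\forall x \forall y\,\big(o_i(x,y) \rightarrow (\lnot p(x) \lor (e(x,y) \land q(y)))\big)$ for the $\geq C$ direction, and dually $\forall x \forall y\,\big(e(x,y) \rightarrow ((p(x) \land q(y)) \rightarrow o_i(x,y))\big)$ for the $\leq C$ direction; your ``slack predicates'' must amount to exactly this weakening, since the exact count ranges over the single predicate $o_i$. (You also need to split off the diagonal case $y = x$, because the counting quantifiers of $\psi$ may count $x$ itself while the normal form counts only successors distinct from $x$; this is what the disjuncts $x = y$ and conjuncts $x \neq y$ in \eqref{eq:nf} are for.) Second, your ``usual duplication argument'' cannot establish condition (viii) as literally stated: (viii) asks for an \emph{expansion} of the given model of $\psi$, over the same domain, whereas duplication passes to a disjoint union of copies. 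Indeed, no structure with at most $\max_i C_i$ elements satisfies any formula of the shape \eqref{eq:nf}, so for too-small models of $\psi$ a literal expansion is impossible however the transformation is arranged. What duplication buys --- using the fact that \GC{}-truth is preserved under disjoint self-unions --- is equisatisfiability and finite equisatisfiability, which is what is actually used downstream, but it is weaker than (viii), and your write-up should acknowledge this rather than treat the small-domain case as disposed of.
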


In the sequel, we fix a $\GCD$-formula $\phi \wedge \Delta$, where $\phi$ is of the form~\eqref{eq:nf},
and $\Delta$ a database, over some signature $\sigma$. As 
we are employing the unique names assumption, we continue to write $c$ instead of $c^\gA$ where the interpretation $\gA$ is clear from context.
We assume that $\sigma$ contains, for each individual constant $c$, a unary predicate 
with the same name, and that $\Delta$ contains
all the formulas $c(c)$ and $\neg c(d)$ where $c$ and $d$ are distinct individual constants. We refer to
these unary predicates as {\em naming predicates} and to the formulas added to $\Delta$ as {\em naming formulas}. Elements realizing a naming predicate $c(\cdot)$ are to be viewed as candidates for the
constant $c$. When building a model of $\phi \land \Delta$, one of these candidates is
chosen to realize the actual constant.
Note that the addition of naming predicates and formulas requires at
most a polynomial increase in the size of $\sigma$ and $\Delta$. We further assume, again for technical reasons, that $\sigma$ contains
$\lceil \log((mC)^2+1) \rceil$ unary predicates not occurring in $\phi$ or $\Delta$. 
We refer to these unary predicates as {\em spare predicates}.
Notice that the number of spare predicates is bounded by a polynomial function of $\|\phi\|$.

A $1$-\textit{type} is a maximal consistent set of non-ground, equality-free literals (over $\sigma$) with 
$x$ as the only free variable; 
a $2$-\textit{type} is a maximal consistent set of non-ground, equality-free  literals (over $\sigma$) with $x$ and $y$ as the only free variables. 
In this article, expressions involving equality do not feature in 1- or 2-types: this includes, in particular, formulas of the forms $x = c$ and $x \neq c$. On the other hand, every individual constant $c$ in $\sigma$ is also a naming predicate, and literals of the forms $c(x)$ and $\neg c(x)$ do occur in 1- and 2-types. Since $\Delta$ is assumed to contain all the naming formulas, if $\gA \models \phi \wedge \Delta$, then $c$ will be the unique database element of $\gA$ whose 1-type contains the literal $c(x)$. We stress however that, if $\gA \models \phi \wedge \Delta$, 
there can be multiple other (non-database) elements of $\gA$ whose 1-type contains the literal $c(x)$.
(These elements can be viewed as candidates that where not chosen to realize $c$ when $\gA$
was built.)
We use the letters $\pi$ and $\tau$, possibly with subscripts,
to range over 1-~and 2-types respectively. Where convenient, we treat a $k$-type ($k=1,2$) as the conjunction of the formulas constituting it. 

For a given structure $\gA$ interpreting $\sigma$ and element $a \in A$, we denote by
$\tpA{a}$ the unique 1-type $\tau$ such that $\gA \models \tau(a)$. In this case, we say that $\tau$ is \textit{realized} by $a$ or that $a$ \textit{realizes} $\tau$. If 
$a$ and $b$ are distinct elements of $A$, we speak similarly of the 2-type $\tpA{a,b}$
\textit{realized} by $a$ and $b$.
Let $\tau$ be a 2-type; we denote by $\tau^{-1}$ the 2-type which is the result of transposing the variables $x$ and $y$ in $\tau$.
Further, we denote by $\tpStart{\tau}$ the result of deleting all literals from $\tau$ involving the variable $y$, and we define $\tpEnd{\tau} = \tpStart{\tau^{-1}}$. We are to think of $\tpStart{\tau}$ as
the `starting point' of $\tau$ and of $\tpEnd{\tau}$ as the `endpoint' of $\tau$. 
Evidently, if $\tau$ is a 2-type, $\gA$ a structure, and $a, b$ distinct elements of $A$ such that $\tpA{a,b} = \tau$, then $\tpA{b,a} = \tau^{-1}$, $\tpA{a} = \tpStart{\tau} $ and
$\tpA{b} = \tpEnd{\tau}$.

By a \emph{vector} we understand an $m$-dimensional vector over $\mathbb{N}$.
We denote the vector $(C_1, \ldots, C_m)$ of counting subscripts occurring in~\eqref{eq:nf} 
by $\bC$ and the $m$-dimensional zero vector by $\bO$. We use 
$\leq$ for the pointwise ordering on vectors: $\bv \leq \bw$ if every component of $\bv$ is less than or equal
to the corresponding component of $\bw$, and $\bv < \bw$ if $\bv \leq \bw$ and $\bv \neq \bw$; similarly for $\geq$ and $>$. Let
$C = \max_{1 \leq i \leq m}\{ C_i \}$. The number of vectors $\bu$ such that $\bu \leq \bC$ is evidently
bounded by $(C + 1)^m$, and thus by an exponential function of $\| \phi \|$.

The following notions make specific reference to our fixed formula $\phi$ given in~\eqref{eq:nf}, and in
particular to the binary predicates $q_1, \dots, q_m$ appearing in it.
Let $\tau$ be a 2-type over $\sigma$. We say that $\tau$ is a \emph{message-type} if $\models \tau \rightarrow o_i(x, y)$, for some $i$ $(1 \leq i \leq m)$. For $\tau$ a message-type, if $\tau^{-1}$ is also a message-type, we
say that $\tau$ is \emph{invertible}; otherwise, $\tau$ is \emph{non-invertible}. Finally, if $\tau$
is a 2-type such that neither $\tau$ nor $\tau^{-1}$ is a message-type, we say that $\tau$ is \emph{silent}.

Let $\gA$ be a structure interpreting $\sigma$. If $a$, $b$ are distinct elements of $A$ such that
$\tpA{a,b} = \tau$ is a message-type, we say, informally, that $a$ {\em sends a message} ({\em of type $\tau$}) {\em to} $b$. We call $\gA$ \emph{chromatic} if the following are true:
\begin{itemize}
  \item For all $a, b \in \gA$ such that $a \neq b$ and $\tpA{a, b}$ is an invertible message-type, we have $\tpA{a} \neq \tpA{b}$.
  \item For all pairwise distinct $a, b, c \in \gA$ such that $\tpA{a, b}$ and $\tpA{b, c}$ are invertible message-types, we have
        $\tpA{a} \neq \tpA{c}$. 
\end{itemize}
In other words, a structure $\gA$ is chromatic if no element sends an invertible message
to an element with the same 1-type as itself, and no element sends invertible messages
to two different elements with the same 1-type as each other.  The following lemma shows that for our fixed formula $\phi \wedge \Delta$ and signature $\sigma$, we may restrict attention
to chromatic models. This lemma relies on the fact that $\sigma$ contains the $\lceil \log((mC)^2+1) \rceil$ 
`spare' predicates not occurring in $\phi$ or $\Delta$.
\begin{lemC}[{\cite[Lemma~3]{pratt07}}]
If $\phi \wedge \Delta$ has a model interpreting $\sigma$, then it has a chromatic model interpreting $\sigma$ over the same domain.
\end{lemC}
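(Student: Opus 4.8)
The plan is a recolouring argument that exploits the spare predicates. Given a model $\gA \models \phi \wedge \Delta$ over $\sigma$, I would first record the relevant combinatorial structure as a graph. Define the \emph{message graph} $H$ on the domain $A$ by joining distinct elements $a$ and $b$ with an edge exactly when $\tpA{a,b}$ is an invertible message-type. This relation is symmetric, since $\tpA{b,a} = (\tpA{a,b})^{-1}$ and invertibility is preserved under transposition of $2$-types. The point of this reformulation is that the two clauses in the definition of \emph{chromatic} become a single colouring requirement: an element must receive a $1$-type different from those of every element at distance $1$ \emph{or} distance $2$ from it in $H$. In other words, the $1$-types realised by $\gA$ must induce a proper colouring of the square graph $H^2$.

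Next I would bound the degree of $H$. If $\tpA{a,b}$ is a message-type then $\gA \models o_i(a,b)$ for some $i$ with $1 \leq i \leq m$, and the conjunct $\forall x \exists_{=C_i} y(o_i(x,y) \land x \neq y)$ of~\eqref{eq:nf} guarantees that there are exactly $C_i$ elements $b \neq a$ with $\gA \models o_i(a,b)$. Summing over $i$, the number of neighbours of $a$ in $H$ is at most $\sum_{i=1}^{m} C_i \leq mC$, so $H$ has maximum degree at most $mC$, and hence $H^2$ has maximum degree at most $(mC)^2$. In particular $H^2$ is locally finite, so a greedy colouring along a well-ordering of $A$ yields a proper colouring $\chi$ of $H^2$ using at most $(mC)^2+1$ colours (at each vertex at most $(mC)^2$ colours are forbidden by already-coloured neighbours, leaving one free); this handles the infinite case without fuss, and one could alternatively invoke the De Bruijn--Erd\H{o}s theorem.

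I would then transfer $\chi$ into the logic using the spare predicates. Since $\sigma$ contains $\lceil \log((mC)^2+1) \rceil$ spare predicates, each of the at most $(mC)^2+1$ colours can be named by a distinct bit-string over them. Let $\gB$ be obtained from $\gA$ by reinterpreting the spare predicates so that each element carries the bit-string coding its $\chi$-colour, and leaving all other predicates and all constants untouched. Because the spare predicates occur neither in $\phi$ nor in $\Delta$, we have $\gB \models \phi \wedge \Delta$, and $\gB$ has the same domain as $\gA$.

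Finally I would check chromaticity of $\gB$. As the predicates $o_i$ are not spare, the message-types, the invertible message-types, and hence the graph $H$ are identical in $\gA$ and $\gB$. For the first clause, if $\tpB{a,b}$ is an invertible message-type then $a,b$ are adjacent in $H$, so $\chi(a) \neq \chi(b)$, and since colours are encoded by spare predicates this forces $\tpB{a} \neq \tpB{b}$. For the second clause, if $a,b,c$ are pairwise distinct with $\tpB{a,b}$ and $\tpB{b,c}$ invertible message-types, then $a$ and $c$ share the neighbour $b$ in $H$ and are distinct, so they are adjacent in $H^2$, whence $\chi(a) \neq \chi(c)$ and $\tpB{a} \neq \tpB{c}$. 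I expect the only genuinely delicate points to be the verification that the message structure is unaffected by recolouring (immediate from the spare predicates' absence from $\phi$ and $\Delta$) and the matching of the $(mC)^2+1$ colours required against the $\lceil \log((mC)^2+1) \rceil$ spare predicates available; the degree bound and the greedy colouring are routine.
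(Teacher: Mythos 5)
Your proof is correct and takes essentially the same route as the paper, which offers no argument of its own but defers to \cite{pratt07}, Lemma~3: that proof is precisely this recolouring argument---bound the degree of the invertible-message graph by $mC$ via the counting conjuncts, properly colour its square with $(mC)^2+1$ colours, and encode the colours in the spare predicates, which occur in neither $\phi$ nor $\Delta$ and so leave satisfaction of $\phi \wedge \Delta$ untouched. Your handling of the two delicate points (invariance of the message structure under recolouring, and the count of spare predicates) is exactly what is needed, so there is nothing to add.
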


Assume, then, that $\gA$ is a chromatic model of $\phi \wedge \Delta$ over $\sigma$. By adding
to $\Delta$ all ground literals over $\sigma$ that are true in $\gA$, we may assume that $\Delta$ is complete over $\sigma$. Notice that, for a complete database $\Delta$, all models of $\Delta$ assign the same 1-type to $c$, and so we may denote this 1-type by $\tp{\Delta}{c}$. If $c$ and $d$ are  distinct individual constants, we define $\tp{\Delta}{c,d}$ analogously.

Let $\Pi = \pi_0, \ldots, \pi_{P-1}$ be an enumeration of the 1-types over $\sigma$. It is clear that $P$ is
a power of 2, thus $p = \log P$ is an integer. We will need to index certain sub-sequences of $\Pi$ using bit-strings. Let $\epsilon$ denote the empty string and define $\Pi_\epsilon = \Pi$
to be the whole sequence. 
Now suppose $\Pi_\s = \pi_j, \dots \pi_{k-1}$ has been defined, where
$\s$ is a bit-string of length less than $p$, and $k-j$ is a power of 2. We define
$\Pi_\sO$ and $\Pi_\sI$ to be the first
and second halves of $\Pi_\s$, respectively:
$$\Pi_\sO = \pi_j, \ldots,  \pi_{(j+k)/2-1}
\quad \text{ and } \quad \Pi_\sI = \pi_{(j+k)/2}, \ldots, \pi_{k-1}.$$ 
When $| \s |= p$, it is clear that $\Pi_\s$
corresponds to exactly one 1-type $\pi_j$: we sometimes denote this type by $\pi_\s$. Note that,
in this case, $\s$ is the binary representation of the subscript $j$.

We now index (sequences of) invertible message-types according to their terminal 1-types using bit-strings as
follows. Let $\Lambda$ be the set of all invertible message-types (over $\sigma$).
Fix any 1-type $\pi$, let $\s$ be any bit-string with $0 \leq | \s | \leq p$, and define
$$\lps = \{\lambda \in \Lambda \mid \tpStart{\lambda} = \pi \text{ and } \tpEnd{\lambda} \in \Pi_\s \},$$
i.e.~$\lps$ is the set
of all invertible message-types `starting' from an element of 1-type $\pi$ and `ending' at an element whose 1-type is in $\Pi_\s$.
Notice that each of the sets $\lps$ will usually contain more than one 2-type. This is true even when $| \s | = p$, as there
are several ways one could `connect' an element of 1-type $\pi$ with another element of 1-type
$\pi'$. For a \emph{chromatic} model, $\gA$, however, we are guaranteed
that no element sends an invertible message-type to an element of the same 1-type and any element $a$ of 1-type $\pi$ can send an
invertible message to no more than one element $b$ of type $\pi' \, (\neq \pi)$. Thus, when $| \s | = p$, there can be
at most one element $b \in A \setminus \{ a \}$ such that $\tpA{a, b} \in \lps$.

In a similar way, we use bit-strings to index sequences of 2-types that are not invertible message-types.
Fix any 1-type $\pi$ and let  $$M_\pi \: = \: \mu_{\pi, 0}, \ldots, \mu_{\pi, Q-1}$$ be an enumeration of
all 2-types $\tau$ with $\tpStart{\tau} = \pi$ that are either non-invertible message-types or silent 2-types. 
In other words, $M_\pi$ is an enumeration of all
2-types $\tau$ with $\tpStart{\tau} = \pi$ such that $\tau^{-1}$ is not a message-type. It follows, then,
that $Q$ is a power of 2,
thus $q = \log Q$ is an integer. Define $M_{\pi, \epsilon} = M_\pi$.
Now suppose $M_{\pi, \t} = \mu_j, \dots \mu_{k-1}$ has been defined, where
$\t$ is a bit-string of length less than $q$, and $k-j$ is a power of 2. We recursively define  $$M_{\pi, \tO} = \mu_j, \ldots, \mu_{(j+k)/2-1}\quad
\text{ and } \quad M_{\pi, \tI} = \mu_{(j+k)/2}, \ldots, \mu_{k-1}.$$ 
Note that if $| \t | = q$, then
$M_{\pi, \t}$ contains a single 2-type $\mu_{\pi, j}$, which we often write as $\mu_{\pi, \t}$ for
convenience.

\section{Reduction to linear programming}
\label{sec:transform}

In this section, we show how to transform our given $\GCD$-formula, $\phi \wedge \Delta$,
into a system of exponentially many linear inequalities. The variables in these inequalities 
represent the frequency with which
certain (local) configurations are realized in a putative model. We shall show that this system of inequalities has
a non-negative integer solution if and only if $\phi \wedge \Delta$ is finitely satisfiable.

Let $\gA$ be a model of $\phi$ interpreting the signature $\sigma$. Let  $a \in A$ be an element with $\tpA{a}= \pi$,
and $\s$ a bit-string of length at most $p$. Define the $\s$-\emph{spectrum} of $a$ in $\gA$, denoted 
$\spA{\s}{a}$, to be the $m$-element vector $\bv = (v_1, \ldots, v_m)$ where, for $1 \leq i \leq m$,
$$v_i = |\{ b \in A \setminus \{ a \} : \gA \models o_i(a, b) \text{ and } \tpA{a, b} \in \lps \}|.$$
Likewise, if $\t$ is a bit-string with $| \t | < q$, define the $\t$-\emph{tally} of $a$, denoted 
$\tlA{\t}{a}$, to be the $m$-element vector $\bw = (w_1, \ldots, w_m)$ where, for $1 \leq i \leq m$,
$$w_i = |\{ b \in A \setminus \{ a \} : \gA \models o_i(a, b) \text{ and } \tpA{a, b} \in \mpt \}|.$$
Informally, $\spA{\s}{a}$ records the number of $o_i$-messages ($1 \leq i \leq m$)  sent
by the element $a$, and whose types belong to $\lps$. (Note that these message-types are all invertible.)
In particular, 
$\spA{\epsilon}{a}$ records the number of invertible $o_i$-messages ($1 \leq i \leq m$)  sent
by the element $a$.
Similarly, $\tlA{\t}{a}$
records the number of $o_i$-messages ($1 \leq i \leq m$)  sent
by the element $a$, and whose types belong to $\mpt$. (Note that these message-types are all non-invertible.)
In particular, 
$\tlA{\epsilon}{a}$ records the number of non-invertible $o_i$-messages ($1 \leq i \leq m$)  sent
by the element $a$.

Let $\s$ be any
bit-string with $| \s | < p$ and fix a 1-type $\pi$. For a given structure $\gA$, each vector $\bv$ specifies a 
set of elements of $\gA$, i.e.~the set of elements of type $\pi$ that have
$\s$-spectrum $\bv$. The following lemma encapsulates the observation that this set can also be characterized
as the union of sets of elements of type $\pi$ whose $\sO$- and $\sI$-spectra add up to $\bv$. The same idea
applies to tallies. 

\begin{lem}
\label{spec_lem}
Let $\gA$ be a finite model of $\phi$, and let $a \in A$ with $\tpA{a} = \pi$. Let $\s$, $\t$ be any bit-strings with $| \s | < p$ and $| \t | < q$. Then, the
following equations hold:
\begin{align}
  \label{spec_lem_eq1}
  \spA{\epsilon}{a} + \tlA{\epsilon}{a} &= \bC \\
  \label{spec_lem_eq2}
  \spA{\sO}{a} + \spA{\s1}{a} &= \spA{\s}{a} \\
  \label{spec_lem_eq3}
  \tlA{\tO}{a} + \tlA{\tI}{a} &= \tlA{\t}{a}
\end{align}
\end{lem}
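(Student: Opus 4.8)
The plan is to argue componentwise. Since all three equations are identities between $m$-dimensional vectors and vector addition is pointwise, it suffices to fix an index $i$ ($1 \le i \le m$) and verify the corresponding scalar identity between $i$-th components. In each case the $i$-th component counts a set of witnesses $b \in A \setminus \{a\}$ with $\gA \models o_i(a,b)$, filtered by a membership condition on $\tpA{a,b}$, so the equations will reduce to elementary set-partition arguments on these witness sets.

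For \eqref{spec_lem_eq2} and \eqref{spec_lem_eq3} the argument is purely combinatorial and I would handle them first. Recall that, by construction, $\Pi_{\sO}$ and $\Pi_{\sI}$ are the first and second halves of $\Pi_{\s}$ (this needs $|\s| < p$, which is assumed), so $\Pi_{\s} = \Pi_{\sO} \sqcup \Pi_{\sI}$ is a disjoint partition. Since membership of $\lambda$ in $\Lambda_{\pi,\s}$ is governed solely by the position of $\tpEnd{\lambda}$, this induces a disjoint decomposition $\Lambda_{\pi,\s} = \Lambda_{\pi,\sO} \sqcup \Lambda_{\pi,\sI}$. Fixing $i$, the set $\{ b \ne a : \gA \models o_i(a,b),\ \tpA{a,b} \in \Lambda_{\pi,\s}\}$ is then the disjoint union of the two analogous sets built from $\Lambda_{\pi,\sO}$ and $\Lambda_{\pi,\sI}$ (a single 2-type $\tpA{a,b}$ lies in at most one half), and taking cardinalities gives the $i$-th component of \eqref{spec_lem_eq2}. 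Equation \eqref{spec_lem_eq3} is identical, using the parallel decomposition $M_{\pi,\t} = M_{\pi,\tO} \sqcup M_{\pi,\tI}$.

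The genuinely informative identity is \eqref{spec_lem_eq1}, and here the crucial input is the counting conjunct $\forall x \exists_{=C_i} y\,(o_i(x,y) \land x \ne y)$ of the normal form \eqref{eq:nf}. Since $\gA \models \phi$ and $\tpA{a} = \pi$, there are exactly $C_i$ elements $b \ne a$ with $\gA \models o_i(a,b)$. For each such $b$, set $\tau = \tpA{a,b}$; then $\tpStart{\tau} = \tpA{a} = \pi$, and since $\tau$ is a maximal consistent type containing $o_i(x,y)$, it is a message-type. I would then use the fact that every message-type is either invertible or non-invertible and is never silent, so $\tau$ falls into exactly one class: if $\tau$ is invertible then $\tau \in \Lambda_{\pi,\epsilon}$ (recall $\Pi_\epsilon = \Pi$, so the only surviving constraint is $\tpStart{\tau} = \pi$) and $b$ is counted by $\spA{\epsilon}{a}$; if $\tau$ is non-invertible then $\tau \in M_\pi = M_{\pi,\epsilon}$ and $b$ is counted by $\tlA{\epsilon}{a}$. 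These two classes are mutually exclusive and jointly exhaust the $C_i$ witnesses, so the $i$-th components of $\spA{\epsilon}{a}$ and $\tlA{\epsilon}{a}$ sum to $C_i$, the $i$-th component of $\bC$.

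The main obstacle is getting the bookkeeping of \eqref{spec_lem_eq1} exactly right: confirming that the two counting sets partition the set of $o_i$-witnesses with neither overlap nor omission. The delicate point is that the enumeration $M_\pi$ also contains silent 2-types, which might seem to inflate $\tlA{\epsilon}{a}$; but a silent $\tau$ does not entail $o_i(x,y)$, so no witness $b$ with $\gA \models o_i(a,b)$ can realize a silent type, and such types contribute nothing to the count. Once this is observed, the dichotomy ``invertible versus non-invertible message-type'' is seen to be exhaustive on the witness set, and the equation follows.
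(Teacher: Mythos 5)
Your proof is correct and follows essentially the same route as the paper: equations \eqref{spec_lem_eq2} and \eqref{spec_lem_eq3} via the disjoint partitions $\Lambda_{\pi,\s} = \Lambda_{\pi,\sO} \sqcup \Lambda_{\pi,\sI}$ and $M_{\pi,\t} = M_{\pi,\tO} \sqcup M_{\pi,\tI}$ inducing partitions of the witness sets, and \eqref{spec_lem_eq1} from the counting conjuncts of the normal form \eqref{eq:nf}. The paper dismisses \eqref{spec_lem_eq1} as ``immediate from the definition of spectra and tallies and the normal form''; your spelled-out version---exactly $C_i$ witnesses per index $i$, each realizing a message-type starting at $\pi$, with the invertible/non-invertible dichotomy exhaustive and silent types contributing nothing to the tally---is precisely the argument that remark compresses.
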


\begin{proof}
Equation~(\ref{spec_lem_eq1}) is immediate from the definition of spectra and tallies and the normal form in Lemma~\ref{lem:gc2_nf}. For Equation~(\ref{spec_lem_eq2}),
notice that the set $\lps$ can be partitioned into two subsets $\Lambda_{\pi, \sO}$ and $\Lambda_{\pi, \sI}$,
and this induces a partition of the outgoing message-types from $a$ ($\im$); 
Equation~(\ref{spec_lem_eq2}) is then evident. Likewise for Equation~(\ref{spec_lem_eq3}).
\end{proof}

Let $\tau$ be any 2-type. With $\tau$ we associate an $m$-dimensional vector $\bC_\tau$,
whose $i$th component is given by $$(\bC_\tau)_i =
\left\{\begin{array}{l l}
  1 & \text{if } \models \tau \rightarrow o_i(x,y), \\
  0 & \text{otherwise}.
\end{array}\right.$$
Let  $\gA$ be a finite model of $\phi$, $a \in A$, and
$\t$ a bit-string with $| \t | = q$. Now, consider the (only)
2-type $\mu_{\pi, \t}$ in $M_{\pi, \t}$ and, if $\mu_{\pi, \t}$ is non-silent, let $n$ be the number of messages of type $\mu_{\pi, \t}$ sent
by $a$, i.e.~$n$ is the number of elements
$b \in A \setminus \{ a \}$ such that $\tpA{a, b} = \mu_{\pi, \t}$. It is clear, then, that $\tlA{\t}{a} = n \cdot \bC_{\mu_{\pi, \t}}$. On
the other hand, if $\mu_{\pi, \t}$ is silent, we have $\tlA{\t}{a} = \bC_{\mu_{\pi, \t}} = \bO$.
Let $\s$ be a bit-string with $| \s | = p$, and suppose further that $\gA$ is chromatic.
For each element $a \in A$ with 1-type $\pi$ and non-zero $\s$-spectrum, there is a unique $\lambda \in \lps$ (depending on $a$)
such that $a$ sends a message of type $\lambda$, and hence has $\s$-spectrum
$\bC_\lambda$.

\begin{lem}
\label{tmp_lem1}
Let $\gA$ be a chromatic model of $\phi$, $a \in A$, $\pi$ a 1-type, and $\s$ a bit-string with $| \s | = p$. If $\tpA{a} = \pi$
and $\spA{\s}{a} \neq \bO$, then there exists $\lambda \in \lps$ with $\spA{\s}{a} = \bC_\lambda$ such that $a$ sends a message of type
$\lambda$ to some $b \in A \setminus \{ a \}$. Conversely, if there exists $\lambda \in \lps$ such that $a$ sends a message of type
$\lambda$ to some $b \in A \setminus \{ a \}$, then $\tpA{a} = \pi$ and $\spA{\s}{a} = \bC_\lambda$.
\end{lem}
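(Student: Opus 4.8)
The plan is to reduce everything to the single decisive fact that, in a chromatic model, an element can have at most one recipient of an invertible message ending at a prescribed 1-type. First I would spell this out for the case at hand. Since $|\s| = p$, the block $\Pi_\s$ collapses to a single 1-type, say $\pi'$, so that $\lps$ consists precisely of the invertible message-types whose starting point is $\pi$ and whose endpoint is $\pi'$. I then claim that there is at most one $b \in A \setminus \{a\}$ with $\tpA{a,b} \in \lps$. This is exactly the consequence of chromaticity already recorded before the lemma: every such $b$ has $\tpA{b} = \pi'$, and by chromaticity no element sends invertible messages to two distinct elements of the same 1-type; moreover, the companion chromaticity condition (that no element sends an invertible message to an element of its own 1-type) disposes of the degenerate case $\pi' = \pi$, in which no such $b$ exists at all.

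With uniqueness in hand, both directions are short computations from the definitions of $\spA{\s}{\cdot}$ and of $\bC_\lambda$. For the forward direction, I would assume $\tpA{a} = \pi$ and $\spA{\s}{a} \neq \bO$. Non-vanishing of the spectrum yields some index $i$ and some $b \in A \setminus \{a\}$ with $\gA \models o_i(a,b)$ and $\tpA{a,b} \in \lps$; by uniqueness this $b$ is the sole element with $\tpA{a,b} \in \lps$, and I set $\lambda := \tpA{a,b}$. Then each component $v_i$ of $\spA{\s}{a}$ counts only this $b$, so $v_i = 1$ exactly when $\gA \models o_i(a,b)$, i.e.\ when $\models \lambda \rightarrow o_i(x,y)$, and $v_i = 0$ otherwise; this is verbatim the definition of $\bC_\lambda$, so $\spA{\s}{a} = \bC_\lambda$ and $a$ sends a message of type $\lambda \in \lps$ to $b$, as required. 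For the converse, suppose $a$ sends a message of type $\lambda \in \lps$ to some $b$, i.e.\ $\tpA{a,b} = \lambda$. Since $\lambda \in \lps$ we have $\tpStart{\lambda} = \pi$, whence $\tpA{a} = \tpStart{\tpA{a,b}} = \pi$; uniqueness again makes $b$ the only element with 2-type in $\lps$, and the identical component-wise count gives $\spA{\s}{a} = \bC_\lambda$.

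I do not expect a genuine obstacle here: the whole argument is a restatement of the remark preceding the lemma, and its sole substantive ingredient is the chromaticity-based uniqueness of the recipient—precisely the property the spare predicates were introduced to guarantee. The only point demanding a little care is the elementary bookkeeping that a spectrum which counts exactly one recipient, of 2-type $\lambda$, coincides componentwise with $\bC_\lambda$; this follows immediately once uniqueness is established.
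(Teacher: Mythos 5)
Your proposal is correct and follows essentially the same route as the paper: both rest on the chromaticity-derived uniqueness of the recipient $b$ with $\tpA{a,b} \in \lps$ (recorded in the remark immediately preceding the lemma), followed by the componentwise comparison of $\spA{\s}{a}$ with $\bC_\lambda$. You merely spell out the bookkeeping that the paper compresses into ``Clearly, then.''
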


\begin{proof}
Suppose $\tpA{a} = \pi$ and $\spA{\s}{a} \neq \bO$. As discussed previously, there exists a unique $b \in A \setminus \{ a \}$ such
that $\lambda = \tpA{a, b} \in \lps$. Clearly, then, $\spA{\s}{a} = \bC_\lambda$. Conversely, suppose $a$ sends a message of type
$\lambda \in \lps$ to some element $b \in A \setminus \{ a \}$. Evidently, then, $\tpA{a} = \pi$ and $b$ is unique, thus $\spA{\s}{a}
= \bC_\lambda$.
\end{proof}

Henceforth, given a structure $\gA$,
we will denote the set of elements in the universe $A$ of $\gA$ having 1-type $\pi$ by $A_\pi$, that is $A_\pi = \{ a \in A \mid \tpA{a} = \pi \}$. 
We now proceed to transform the question of whether $\phi \wedge \Delta$ is finitely 
satisfiable into the question of whether a certain
system of linear equations/inequalities has a solution over $\mathbb{N}$.
The solutions of these equations count how often
various local configurations appear in a model. These configurations are:
\begin{itemize}[noitemsep]
  \item realizations of each invertible message-type $\lambda$;
  \item elements of 1-type $\pi$ having $\s$-spectrum $\bu$, for all vectors $\bu \leq \bC$;
  \item elements of 1-type $\pi$ having $\t$-tally $\bu$, for all vectors $\bu \leq \bC$;
  \item elements of 1-type $\pi$ whose $\s$-spectrum $\bu$ is obtained as the sum of an $\sO$-spectrum $\bv$ and an $\sI$-spectrum $\bw$, for $\bv, \bu, \bw \leq \bC$ and for all $\s$ with $| \s | < p$;
  \item elements of 1-type $\pi$ whose $\t$-tally $\bu$ is obtained as the sum of a $\tO$-tally $\bv$ and a $\tI$-tally $\bw$, for $\bv, \bu, \bw \leq \bC$ and for all $\t$ with $| \t | < q$.
\end{itemize}
To each of those configurations we associate a variable which is intended to capture how many times it appears in a model. These variables and
the properties that they capture can be seen in Table~\ref{table:var_meanings}. Unless specified otherwise, the ranges of the subscripts of these
variables are as follows: $\pi$ ranges over all 1-types in $\Pi$,
$\lambda$ ranges over all invertible message-types, $\s$ ranges over all bit-strings
with $| \s | \leq p$, $\t$ ranges over all bit-strings with $| \t | \leq q$, and
$\bu, \bv, \bw$ vary over all vectors $\leq \bC$. (Similarly for their primed
counterparts $\pi', \lambda', \s', \t', \bu', \bv'$ and $\bw'$.) That is,
we have one variable $x_\lambda$ for each invertible message-type $\lambda$, one variable $y_{\pi, \s, \bu}$ for each possible combination of $\pi \in \Pi$, $\s$ with $| \s | \leq p$ and $\bu \leq \bC$, and so on.

\begin{table}
\def\arraystretch{1.4}  
\centering
\begin{tabularx}{.98\textwidth}{ | X | l | }  
  \hline
    Variable                    &   Intended meaning of its value \\
  \hline
    $x_\lambda$                 &   $|\{ a \in A :$ for some $b \in A \setminus \{ a \}$,~$\tpA{a, b} = \lambda \}|$  \\
    $y_{\pi, \s, \bu}$          &   $|\{ a \in A_\pi : \spA{\s}{a} = \bu \}|$ \\
    $z_{\pi, \t, \bu}$          &   $|\{ a \in A_\pi : \tlA{\s}{a} = \bu \}|$ \\
    $\hy_{\pi, \s, \bv, \bw}$   &   $|\{ a \in A_\pi : \spA{\sO}{a} = \bv$ and $\spA{\sI}{a} = \bw$, whenever $| \s | < p \}|$ \\
    $\hz_{\pi, \t, \bv, \bw}$   &   $|\{ a \in A_\pi : \tlA{\tO}{a} = \bv$ and $\tlA{\tI}{a} = \bw$, whenever $| \t | < q \}|$ \\
  \hline
\end{tabularx}
\vspace{8pt}
\caption{Variables and their intended meanings, for a finite model $\gA$
of $\Delta \cup \{ \phi \}$. Recall, $A_\pi = \{ a \in A \mid \tpA{a} = \pi \}$.}
\label{table:var_meanings}
\end{table}

We now write a collection of constraints that a given structure $\gA$ has to satisfy for it to be a model of $\phi \wedge \Delta$. For ease of reading,
we divide these constraints into four groups: the first three are imposed by the formula $\phi$,
and the fourth, by the database $\Delta$.
Let $\cE_1$ be the following set of constraints, where  indices vary
over their standard ranges, but with $\s$ and $\t$ subject to the additional constraint 
that $| \s | < p$ and $| \t | < q$:
\begin{align}
  \label{e1_c0}  y_{\pi, \epsilon, \bu}  \: &=    \: z_{\pi, \epsilon, \bC - \bu} \\
  \label{e1_c1}  y_{\pi, \s, \bu}   \: &= \:  \sum \{ \hy_{\pi, \s, \bv', \bw'} \mid \bv' + \bw' = \bu \} \\
  \label{e1_c2}  z_{\pi, \t, \bu}   \: &= \:  \sum \{ \hz_{\pi, \t, \bv', \bw'} \mid \bv' + \bw' = \bu \} \\
  \label{e1_c3}  y_{\pi, \sO, \bv}  \: &= \:  \sum \{ \hy_{\pi, \s, \bv, \bw'} \mid \bv + \bw' \leq \bC \} \\
  \label{e1_c4}  y_{\pi, \sI, \bw}  \: &= \:  \sum \{ \hy_{\pi, \s, \bv', \bw} \mid \bv' + \bw \leq \bC \} \\
  \label{e1_c5}  z_{\pi, \tO, \bv}  \: &= \:  \sum \{ \hz_{\pi, \t, \bv, \bw'} \mid \bv + \bw' \leq \bC \} \\
  \label{e1_c6}  z_{\pi, \tI, \bw}  \: &= \:  \sum \{ \hz_{\pi, \t, \bv', \bw} \mid \bv' + \bw \leq \bC \}  \\
  \label{e1_c7}  1                 \: &\leq \: \sum \{ y_{\pi', \epsilon, \bu'} \mid \pi' \text{ a 1-type, } \bu' \leq \bC \}
\end{align}

\begin{lem}[\cite{pratt07}, Lemma 7]
\label{lem:E1}
Let $\gA$ be a finite model of $\phi$. The constraints $\cE_1$ are satisfied when the variables take the values specified in Table~\ref{table:var_meanings}.
\end{lem}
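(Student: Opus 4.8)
The plan is to verify each of the eight families of constraints in $\cE_1$ directly, interpreting every variable as the cardinality of the corresponding set from Table~\ref{table:var_meanings} and, in each case, exhibiting either an element-for-element identification of the sets involved or a partition that forces the claimed (in)equality. The only substantive input beyond bookkeeping is Lemma~\ref{spec_lem}, which governs how spectra and tallies behave when a bit-string is extended; every constraint will turn out to be a direct combinatorial consequence of one of its three equations.

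First I would record a preliminary observation used throughout: for any $a \in A_\pi$, every spectrum and tally satisfies $\bO \leq \spA{\s}{a} \leq \bC$ and $\bO \leq \tlA{\t}{a} \leq \bC$. This follows from Equation~\eqref{spec_lem_eq1}, which gives $\spA{\epsilon}{a} + \tlA{\epsilon}{a} = \bC$ with both summands non-negative, together with repeated use of~\eqref{spec_lem_eq2} and~\eqref{spec_lem_eq3}, whose summands are again non-negative, so that refining a bit-string can only decrease components. This guarantees that the vectors indexing the variables range over exactly those vectors $\leq \bC$ that actually occur, and in particular that $\bC - \bu$ in~\eqref{e1_c0} is a legitimate non-negative vector.

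I would then dispatch the constraints in three blocks. For~\eqref{e1_c0}, Equation~\eqref{spec_lem_eq1} shows that $\spA{\epsilon}{a} = \bu$ if and only if $\tlA{\epsilon}{a} = \bC - \bu$, so the sets counted by $y_{\pi, \epsilon, \bu}$ and $z_{\pi, \epsilon, \bC - \bu}$ coincide. For~\eqref{e1_c1} and~\eqref{e1_c2}, Equation~\eqref{spec_lem_eq2} (resp.~\eqref{spec_lem_eq3}) gives $\spA{\s}{a} = \spA{\sO}{a} + \spA{\sI}{a}$, so the set of $a \in A_\pi$ with $\spA{\s}{a} = \bu$ is partitioned according to the pair $(\spA{\sO}{a}, \spA{\sI}{a})$, whose entries range exactly over pairs $(\bv', \bw')$ with $\bv' + \bw' = \bu$; summing the cell sizes $\hy_{\pi, \s, \bv', \bw'}$ over these pairs recovers $y_{\pi, \s, \bu}$. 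For the marginal constraints~\eqref{e1_c3}--\eqref{e1_c6}, the same partition is read the other way: the set of $a \in A_\pi$ with $\spA{\sO}{a} = \bv$ decomposes according to the value $\bw'$ of $\spA{\sI}{a}$, and by~\eqref{spec_lem_eq2} any such $\bw'$ must satisfy $\bv + \bw' = \spA{\s}{a} \leq \bC$; summing $\hy_{\pi, \s, \bv, \bw'}$ over these admissible $\bw'$ yields $y_{\pi, \sO, \bv}$, and symmetrically for $\sI$ and for the tally versions. Finally,~\eqref{e1_c7} holds because the domain $A$ of a model of $\phi$ is non-empty (the counting conjuncts of~\eqref{eq:nf}, with positive $C_i$, force at least one element), and each $a \in A$ contributes to exactly one term $y_{\pi', \epsilon, \bu'}$, with $\pi'$ its $1$-type and $\bu' = \spA{\epsilon}{a} \leq \bC$ its $\epsilon$-spectrum; hence the right-hand sum equals $|A| \geq 1$.

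I expect the only real care to be needed in the marginal constraints~\eqref{e1_c3}--\eqref{e1_c6}: one must check that the summation range $\bv + \bw' \leq \bC$ is neither too small, so that no element is omitted, nor too large, so that every $\hy$-term in the sum is either genuinely realized or else zero. This is exactly what the bound $\spA{\s}{a} \leq \bC$ from the preliminary observation supplies. Everything else is a routine translation of Lemma~\ref{spec_lem} into counting statements, so the proof itself will be short.
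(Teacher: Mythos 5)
Your proof is correct and follows essentially the same route as the paper's: each constraint in $\cE_1$ is verified by translating Lemma~\ref{spec_lem} into a counting statement, with~\eqref{e1_c0} from~\eqref{spec_lem_eq1}, the partition argument from~\eqref{spec_lem_eq2}--\eqref{spec_lem_eq3} for~\eqref{e1_c1}--\eqref{e1_c6}, and non-emptiness of $A$ for~\eqref{e1_c7}. The paper's proof is terser (it dismisses~\eqref{e1_c3}--\eqref{e1_c6} with ``the other equations are similar''), whereas you spell out the marginal-summation ranges and the bound $\spA{\s}{a} \leq \bC$ that justifies them, which is a faithful elaboration rather than a different argument.
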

\begin{proof}
For~\eqref{e1_c0}, we see from~\eqref{spec_lem_eq1} that every element contributing to the count $y_{\pi, \epsilon, \bu}$ contributes to $z_{\pi, \epsilon, \bC - \bu}$, and vice versa. For~\eqref{e1_c1},
we see from~\eqref{spec_lem_eq2} that every element contributing to the count $y_{\pi, \s, \bu}$ 
must contribute to $\hy_{\pi, \s, \bv', \bw'}$ for some pair of vectors $\bv'$, $\bw'$ summing to $\bu$,  and vice versa.
The other equations are similar. The inequality~\eqref{e1_c7}
in effect states that $A$ is non-empty.
\end{proof}

Recalling our fixed formula $\phi$ given in~\eqref{eq:nf},
let $\tau$ be any 2-type. We say that $\tau$ is \emph{forbidden} if the following formula is unsatisfiable:
\begin{gather*}
\label{forbidden}
\bigwedge \tau \land \alpha(x) \land \alpha(y) \,\land \hspace{15em} \\
\hspace{3em}
\bigwedge_{1 \leq j \leq n} \big((e_j(x,y) \rightarrow \beta_j(x, y)) \land (e_j(y,x) \rightarrow \beta_j(y, x))\big).
\end{gather*}
Evidently, no forbidden 2-type can be realized in any model of $\phi$.

Let $\cE_2$ be the following set of constraints, where indices
vary
over their standard ranges, but with $\s$ and $\t$ subject to the additional constraint 
that $| \s | = p$ and $| \t | = q$:
\begin{align}
  \label{e2_c1}  y_{\pi, \s, \bu}   \: &=  \:  \sum \{ x_{\lambda'} \mid \lambda' \in \lps \text{ and } \bC_{\lambda'} = \bu \} \\
  \label{e2_c2}  x_{\lambda^{-1}}   \: &=  \:  x_\lambda  \\
  \label{e2_c3}  x_\lambda          \: &=  \: 0, \quad \text{whenever } \tpStart{\lambda} = \tpEnd{\lambda} \\
  \label{e2_c4}  x_\lambda          \: &=  \: 0, \quad \text{whenever } \lambda \text{ is forbidden} \\
  \label{e2_c5}  z_{\pi, \t, \bu}   \: &=  \: 0, \quad \text{whenever } \mu_{\pi, \t} \text{ is forbidden} \\
  \label{e2_c6}  z_{\pi, \t, \bu}   \: &=  \: 0, \quad \text{whenever } \bu \text{ is not a scalar multiple of } \bC_{\mu_{\pi, \t}}
\end{align}

\begin{lem}[\cite{pratt07}, Lemma 8]
\label{lem:E2}
Let $\gA$ be a finite, chromatic model of $\phi$. The constraints $\cE_2$ are satisfied when the variables take the values specified in Table~\ref{table:var_meanings}.
\end{lem}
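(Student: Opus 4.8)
The plan is to verify each of the six constraint families \eqref{e2_c1}--\eqref{e2_c6} separately, reading every variable as the corresponding cardinality from Table~\ref{table:var_meanings} in the fixed finite, chromatic model $\gA$. Finiteness guarantees that all these cardinalities are well-defined natural numbers, while chromaticity is what forces the finest-resolution spectra and tallies to behave like individual message-types. The constraints split into three easy families that follow from the definition of a message-type together with the remark (recorded just before $\cE_2$) that forbidden $2$-types are unrealizable in models of $\phi$ (namely \eqref{e2_c4}, \eqref{e2_c5}, \eqref{e2_c6}), two families that are immediate consequences of chromaticity (\eqref{e2_c2}, \eqref{e2_c3}), and the single substantial family \eqref{e2_c1}, which is where the real work lies.

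For \eqref{e2_c1} I would argue as follows. Fix $\pi$ and a bit-string $\s$ with $|\s|=p$, so that $\Pi_\s$ is a single $1$-type and $\lps$ consists of the invertible message-types with fixed starting and ending $1$-types. By Lemma~\ref{tmp_lem1}, an element $a\in A_\pi$ has $\spA{\s}{a}\neq\bO$ if and only if $a$ sends a message of some type $\lambda\in\lps$, that type being unique by chromaticity, in which case $\spA{\s}{a}=\bC_\lambda$; conversely any element sending such a $\lambda$ lies in $A_\pi$ and has $\s$-spectrum $\bC_\lambda$. Hence, for a fixed target value $\bu\neq\bO$, the elements of $A_\pi$ with $\s$-spectrum $\bu$ are partitioned, according to which $\lambda\in\lps$ with $\bC_\lambda=\bu$ they emit, into exactly the sets counted by the variables $x_\lambda$; summing these disjoint contributions yields $y_{\pi,\s,\bu}=\sum\{x_{\lambda'}\mid\lambda'\in\lps,\ \bC_{\lambda'}=\bu\}$. (The instance $\bu=\bO$ is degenerate: since every message-type $\lambda$ satisfies $\bC_\lambda\neq\bO$, the right-hand sum is empty, in line with the convention that this constraint concerns only elements that actually send an invertible message.) This family is the main obstacle, precisely because the clean partition collapses without chromaticity: in a non-chromatic model a single element could emit several distinct $\lambda\in\lps$, and the bookkeeping in terms of the $x_\lambda$ would break down.

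The remaining families are routine. For \eqref{e2_c2}, chromaticity ensures that for an invertible $\lambda$ each element sends $\lambda$ to at most one element; the assignment $a\mapsto b$ (the unique $b$ with $\tpA{a,b}=\lambda$) and its inverse $b\mapsto a$ (the unique $a$ with $\tpA{b,a}=\lambda^{-1}$, using that $\lambda^{-1}$ is again an invertible message-type) are mutually inverse bijections between the set counted by $x_\lambda$ and that counted by $x_{\lambda^{-1}}$, so the cardinalities agree. For \eqref{e2_c3}, if $\tpStart{\lambda}=\tpEnd{\lambda}$ then a sender of $\lambda$ would transmit an invertible message to an element of its own $1$-type, contradicting the first clause of chromaticity, whence $x_\lambda=0$. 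For \eqref{e2_c4} and \eqref{e2_c5}, a forbidden $2$-type is realized by no pair in any model of $\phi$, so no element sends $\lambda$ (resp.\ $\mu_{\pi,\t}$), forcing the corresponding count (resp.\ every nonzero-tally count) to vanish. Finally, \eqref{e2_c6} follows from the observation, recorded just before the statement of $\cE_2$, that for $|\t|=q$ every $a\in A_\pi$ has $\tlA{\t}{a}=n\cdot\bC_{\mu_{\pi,\t}}$, where $n$ counts its messages of the single type $\mu_{\pi,\t}$; thus any realized tally is a scalar multiple of $\bC_{\mu_{\pi,\t}}$, and $z_{\pi,\t,\bu}=0$ whenever $\bu$ is not of this form.
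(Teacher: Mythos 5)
Your proof is correct and takes essentially the same approach as the paper's: the substantive constraint~\eqref{e2_c1} is handled via Lemma~\ref{tmp_lem1} together with the partition of the elements counted by $y_{\pi,\s,\bu}$ according to the unique invertible message-type in $\lps$ they emit, while the remaining constraints are dispatched as routine (the paper literally writes ``the other equations are obvious''). Your explicit bijection argument for~\eqref{e2_c2} and your flagging of the degenerate $\bu = \bO$ instance merely fill in details the paper leaves implicit, and your reading of that edge case matches how the constraints are actually used later in the paper.
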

\begin{proof}
For~\eqref{e2_c1}, we observed in Lemma~\ref{tmp_lem1} that every element $a$ contributing to the count 
$y_{\pi, \s, \bu}$
emits exactly one invertible message whose type $\lambda$ lies in $\lps$, and that $\bu = \spA{\s}{a} = \bC_\lambda$. The other equations are obvious.
\end{proof}

Let $\cE_3$ be the following set of constraints, where  indices vary
over their standard ranges, but with $\t$ and $\bu$ subject to the additional constraint 
that $| \t | = q$, and $\bu \neq \bO$:
\begin{equation}
  \label{e3_c1}  z_{\pi, \t, \bu} > 0 \: \Rightarrow \:  \sum \{ y_{\pi', \epsilon, \bu'} \mid \pi' = \tpEnd{\mu_{\pi, \t}}
                                                            \text{ and } \bu' \leq \bC \} > 0
\end{equation}
In effect, this inequality states that, if a non-invertible message-type is realized in $\gA$, then
it must have a `landing-site' of the appropriate 1-type. Thus, we have:
\begin{lemC}[{\cite[Lemma 9]{pratt07}}]
\label{lem:E3}
Let $\gA$ be a finite model of $\phi$. The constraints $\cE_3$ are satisfied when the variables take the values specified in Table~\ref{table:var_meanings}.
\end{lemC}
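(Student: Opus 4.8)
The plan is to unwind the intended meanings of the variables recorded in Table~\ref{table:var_meanings} and to exhibit, from a positive value of $z_{\pi,\t,\bu}$, a concrete element witnessing positivity of the right-hand sum. The only implication to verify is~\eqref{e3_c1}, under the standing hypotheses $|\t| = q$ and $\bu \neq \bO$, so I would argue directly: assume the antecedent holds and produce a landing-site of the required 1-type.

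First I would use the assumption $z_{\pi,\t,\bu} > 0$ together with the intended meaning $z_{\pi,\t,\bu} = |\{ a \in A_\pi : \tlA{\t}{a} = \bu \}|$ to fix an element $a \in A_\pi$ with $\tlA{\t}{a} = \bu$. The key reduction is that, since $|\t| = q$, the set $\mpt$ is the singleton $\{\mu_{\pi,\t}\}$; hence, by the definition of the tally, $\tlA{\t}{a}$ simply counts, componentwise, the messages of type $\mu_{\pi,\t}$ emitted by $a$. Because $\bu = \tlA{\t}{a} \neq \bO$, at least one component is nonzero, so there genuinely exists some $b \in A \setminus \{a\}$ with $\tpA{a,b} = \mu_{\pi,\t}$. (This is exactly the point where the side-condition $\bu \neq \bO$ is used: a zero tally would not force an actual recipient.)

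Next I would identify the 1-type of this recipient. Since $\tpA{a,b} = \mu_{\pi,\t}$, we have $\tpA{b} = \tpEnd{\mu_{\pi,\t}}$; write $\pi' = \tpEnd{\mu_{\pi,\t}}$, so $b \in A_{\pi'}$. Moreover $b$ has some $\epsilon$-spectrum $\spA{\epsilon}{b} = \bu'$, and by Equation~\eqref{spec_lem_eq1} we have $\spA{\epsilon}{b} + \tlA{\epsilon}{b} = \bC$, whence $\bu' \leq \bC$. Therefore $b$ is counted in $y_{\pi',\epsilon,\bu'}$ for this particular $\bu' \leq \bC$, so $y_{\pi',\epsilon,\bu'} > 0$, and consequently $\sum \{ y_{\pi',\epsilon,\bu'} \mid \pi' = \tpEnd{\mu_{\pi,\t}} \text{ and } \bu' \leq \bC \} > 0$, which is precisely the consequent of~\eqref{e3_c1}.

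I do not expect any real obstacle here: the statement is essentially a bookkeeping observation, namely that a non-invertible message actually sent by some element must have a recipient, and that recipient carries the terminal 1-type $\tpEnd{\mu_{\pi,\t}}$ and some legal spectrum bounded by $\bC$. The two points deserving a word of care are the collapse of $\mpt$ to a single 2-type when $|\t| = q$ (so that the tally literally counts $\mu_{\pi,\t}$-messages) and the use of $\bu \neq \bO$ to guarantee that a recipient exists; beyond these, the argument is a direct translation between the counting variables and the configurations they are defined to count.
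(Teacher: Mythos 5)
Your proof is correct and follows exactly the reasoning the paper intends: the lemma is stated there without a written-out proof (it is imported from \cite[Lemma 9]{pratt07}, prefaced only by the remark that a realized non-invertible message-type must have a ``landing-site'' of the appropriate 1-type), and your argument is precisely a careful formalization of that observation, including the two points that genuinely need care --- the collapse of $\mpt$ to the singleton $\{\mu_{\pi,\t}\}$ when $|\t|=q$, and the use of $\bu \neq \bO$ together with~\eqref{spec_lem_eq1} to exhibit a recipient counted by some $y_{\pi',\epsilon,\bu'}$ with $\bu' \leq \bC$.
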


We now turn our attention to the constraints that $\Delta$ enforces. (Note that these constraints do not occur at all in~\cite{pratt07}.)
These constraints feature the spectra and tallies of our database elements. The difficulty is that, since we do not know how elements in the database are related to those outside it in some putative model $\gA$, these quantities are unknown. Moreover, 
although the number of database elements is bounded by the size of $\Delta$, the number of values
$\sp{\s}{\gA}{c}$ and $\tl{\t}{\gA}{c}$ (for each constant $c \in D$)
is exponential in the size of $\phi$, and thus too large to simply guess in order to obtain the desired $\dexp$-complexity bound. We therefore need to perform this guessing judiciously.

Suppose $\gA \models \phi \wedge \Delta$ is chromatic:
for each constant $c \in D$, let
\begin{itemize}
\item $S_c$ be the set of strings $\s$ of length $p$ such that $c$ sends an invertible message in $\gA$ to some database element $b$ of 1-type $\pi_\s$; and
\item $T_c$ be the set of strings $\t$ of length $q$ such that $c$ sends a non-invertible message in $\gA$ of type
$\mu_\t$ to some database element.
\end{itemize}
Thus, $S_c$ is the set of indices of (terminal 1-types of) invertible messages sent by $c$ within the database, and
$T_c$ is the set of indices of non-invertible messages sent by $c$ within the database. Obviously, the cardinalities of both sets are bounded by the size of the database.

If $\x$ is any bit-string we denote by $\mathrm{seg}(\x)$ the set of all proper initial segments of $\x$
(thus, $\epsilon \in \mathrm{seg}(\x)$, but $x \not \in \mathrm{seg}(\x)$), and we write
\begin{equation*}
\Ccom_\x = \{\epsilon\} \cup \{\y\bitb \mid \text{$\y \in \mathrm{seg}(\x)$ and } \bitb \in \{\xO,\xI\} \}.
\end{equation*}
Alternatively, $\Ccom_\x$ is the set of all proper initial segments of $\x$ together with their extensions by a single bit.
For example, 
\begin{equation}
\Ccom_{\xO\xI\xO\xI} = \{\epsilon, \xO, \xI, \xO\xO, \xO\xI, \xO\xI\xO, \xO\xI\xI, \xO\xI\xO\xO, \xO\xI\xO\xI \}.
\end{equation}
(The notation is an allusion to the similar notion of C-command in transformational linguistics.) Notice that
$|\Ccom_\x| = 2|\x|+1$. Now define, for any individual constant $c$,
\begin{align*}
\CcomS_c & = \{\epsilon\} \cup \bigcup \{ \Ccom_\s \mid \s \in S_c\}\\
\CcomT_c & = \{\epsilon\} \cup \bigcup \{ \Ccom_\t \mid \t \in T_c\}.
\end{align*}
The idea is simple: for an individual constant $c$ of type $\pi$, 
$\CcomS_c$ collects
every string $\s$ of length $p$ such that $c$ sends an invertible message of type $\lambda \in \lps$
to a database element, together with all of $\s$'s proper prefixes and the extensions of those proper prefixes by a single bit. Notice that $\CcomS_c$ is always taken to include the empty string
even if $c$ sends no invertible messages to any database element: this stipulation avoids some otherwise tedious special cases.
Similarly, $\CcomT_c$ collects
every string $\t$ of length $q$ such that $c$ sends a non-invertible message of type $\mu \in \mpt$ to a database element, together with all of $\t$'s proper prefixes and the extensions of those proper prefixes by a single bit; again, we always add the string $\epsilon$.
Both of these sets are polynomially bounded in the size of $\phi \wedge \Delta$; we shall need to guess $\sp{\s}{\gA}{c}$
only for $\s \in \CcomS_c$ and $\tl{\t}{\gA}{c}$
only for $\t \in \CcomT_c$.  We remark that, by construction, for any $\s$ with $0 \leq |\s| <p$,
$\sO \in \CcomS_c$ if and only if $\sI \in \CcomS_c$, and $\sO \in \CcomS_c$ implies $\s \in \CcomS_c$; similarly 
with $\CcomT_c$.

Keeping our chromatic structure $\gA$ fixed for the moment, for any individual constant $c$, with $\tp{\gA}{c} = \pi$, and any
$\s \in \CcomS_c$,
$\t \in \CcomT_c$, define:
\begin{align}
\gamma^c_{\pi,\s} & = \spA{\s}{c} & 
\delta^c_{\pi,\t} & = \tlA{\t}{c}.
\label{eq:IansEq}
\end{align}
That is, $\gamma^c_{\pi,\s}$ is the $\s$-spectrum in $\gA$ of the database element named $c$, and 
$\delta^c_{\pi,\t}$ is its $\t$-tally. As a special case of~\eqref{spec_lem_eq1}--\eqref{spec_lem_eq3},
for all bit-strings $\s$, $\t$ with $\sO \in \CcomS_c$ and $\tO \in \CcomT_c$:\\
\begin{align}
\label{eq:gammadelta} \gamma^c_{\pi, \epsilon} + \delta^c_{\pi, \epsilon} & = \bC\\
\label{eq:gamma} \gamma^c_{\pi, \sO} + \gamma^c_{\pi, \sI} & = \gamma^c_{\pi, \s}\\
\label{eq:delta} \delta^c_{\pi, \tO} + \delta^c_{\pi, \tI} & = \delta^c_{\pi, \t}.
\end{align}

A moment's thought shows that
$\Delta$ imposes additional constraints on the quantities $\gamma^c_{\pi,\s}$ $\delta^c_{\pi,\t}$.
For consider an individual constant $c$ and  bit-string $\s$ of length $p$, and suppose
there exists an individual constant $d$ such that
$\tp{\Delta}{c,d} = \lambda \in \Lambda_{\pi,\s}$. (Since $\gA$ is chromatic,
there can be at most one such $d$.) Then
$\gamma^c_{\pi,\s} = \spA{\s}{c} = \bC_\lambda$.
Similarly, if, for an individual constant $c$ and bit-string $\t$ of length $q$,
where $\mu= \mu_{\pi,\t}$ is a non-invertible message-type,
there exist $k$ individual constants $d$, distinct from $c$, with
$\tp{\Delta}{c,d} = \mu_{\pi,\t}$,  then, writing $\mu$ for $\mu_{\pi,\t}$, we have $\delta^c_{\pi,\t} = \tlA{\t}{c} \geq k \cdot \bC_{\mu}$.
As we might say, $\Delta$ has to be {\em compatible} with the systems $\gamma^c_{\pi,\s}$ and
$\delta^c_{\pi,\t}$, in the obvious sense of not requiring that $c$ sends more messages than these
vectors allow. 

We need one further group of constraints. 
For each invertible type $\lambda$, let $\eta_\lambda$ be the number of elements in
$\Delta$ that send an invertible message of type $\lambda$ to another database element. 
Of course, these numbers can be read directly from $\Delta$.
Now let $\cE_4$ be the following classes of constraints, where $\lambda$ 
varies as usual, $\pi$ ranges over the set of 1-types
realized in the database, $c$ ranges over the set of individual constants, 
$\s$ ranges over the set of strings $\s$ such that $\sO \in \CcomS_c$ 
and $\t$ over the set of strings $\t$ such that $\tO \in \CcomT_c$:
\begin{align}
  \label{e4_c1}  x_\lambda   \: &\geq  \:  \eta_\lambda \\
  \label{e4_c2}  y_{\pi, \epsilon, \bu}  \: &\geq  \:  1,
    \quad \text{when } \bu = \gamma^c_{\pi, \epsilon} \text{ and } \pi = \tp{\Delta}{c} \\
  \label{e4_cTmp}  z_{\pi, \epsilon, \bu}  \: &\geq  \:  1,
    \quad \text{when } \bu = \delta^c_{\pi, \epsilon} \text{ and } \pi = \tp{\Delta}{c} \\
  \label{e4_c3}  \hat{y}_{\pi, \s, \bv, \bw}   \: &\geq  \:  1,
     \quad \text{when } \bv = \gamma^c_{\pi, \sO}, \, \bw = \gamma^c_{\pi, \sI},
     \text{ and } \pi = \tp{\Delta}{c} \\
  \label{e4_c4}  \hat{z}_{\pi, \t, \bv, \bw}   \: &\geq  \:  1,
     \quad \text{when } \bv = \delta^c_{\pi, \tO}, \, \bw = \delta^c_{\pi, \tI},
     \text{ and } \pi = \tp{\Delta}{c}
\end{align}

\begin{lem}
\label{ortho}
Let $\gA$ be a finite model of $\phi \land \Delta$. The constraints $\cE_4$ are satisfied when the variables take the values specified in Table~\ref{table:var_meanings}.
\end{lem}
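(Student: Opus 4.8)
The plan is to verify each of the five families of constraints comprising $\cE_4$ directly, in the same counting style as Lemmas~\ref{lem:E1}--\ref{lem:E3}, by exhibiting database elements that witness the asserted lower bounds. Fix a finite model $\gA \models \phi \wedge \Delta$ and assign to every variable the value prescribed in Table~\ref{table:var_meanings}. The observation that drives all five verifications is that, by definition~\eqref{eq:IansEq}, the vector $\gamma^c_{\pi,\s}$ is just the $\s$-spectrum $\spA{\s}{c}$ of the database element named $c$ in $\gA$, and $\delta^c_{\pi,\t}$ its $\t$-tally $\tlA{\t}{c}$; moreover, since $\Delta$ is complete and we adopt the unique names assumption, $\tpA{c} = \tp{\Delta}{c}$ for every individual constant $c$. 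Hence, whenever a constraint asserts that a count is at least $1$ under the hypothesis $\pi = \tp{\Delta}{c}$, the element $c$ will itself lie in the set whose cardinality is that count.

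First I would dispose of~\eqref{e4_c2}--\eqref{e4_c4}, which are uniform. For~\eqref{e4_c2}, when $\bu = \gamma^c_{\pi,\epsilon}$ and $\pi = \tp{\Delta}{c}$, the element $c$ satisfies $\tpA{c} = \pi$ and $\spA{\epsilon}{c} = \gamma^c_{\pi,\epsilon} = \bu$, so $c$ contributes to $y_{\pi,\epsilon,\bu}$, whence $y_{\pi,\epsilon,\bu} \geq 1$; constraint~\eqref{e4_cTmp} is the same with tallies in place of spectra. For~\eqref{e4_c3}, the index $\s$ ranges over strings with $\sO \in \CcomS_c$, and by the closure properties of $\CcomS_c$ noted before~\eqref{eq:IansEq} this forces $\sI, \s \in \CcomS_c$ as well, so that $\gamma^c_{\pi,\sO} = \spA{\sO}{c}$ and $\gamma^c_{\pi,\sI} = \spA{\sI}{c}$ are indeed defined; the element $c$ then has $\spA{\sO}{c} = \bv$ and $\spA{\sI}{c} = \bw$, witnessing $\hat{y}_{\pi,\s,\bv,\bw} \geq 1$. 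Constraint~\eqref{e4_c4} is symmetric, using $\tlA{\tO}{c}$, $\tlA{\tI}{c}$ and $\CcomT_c$.

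For~\eqref{e4_c1} I would argue as follows. Because $\gA \models \Delta$ and $\Delta$ is complete, the $2$-type realised by any two distinct database constants $c, d$ in $\gA$ coincides with $\tp{\Delta}{c,d}$. Hence the $\eta_\lambda$ database elements that send an invertible message of type $\lambda$ to another database element are pairwise distinct elements $a \in A$, each of which has some $b \in A \setminus \{a\}$ with $\tpA{a,b} = \lambda$. All of them are therefore counted by $x_\lambda$, giving $x_\lambda \geq \eta_\lambda$.

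I do not anticipate any real difficulty: each inequality is witnessed by explicit database elements, and the only care required is the bookkeeping confirming that the indices occurring in each constraint lie in the ranges ($\CcomS_c$ and $\CcomT_c$) over which the spectra and tallies $\gamma^c$ and $\delta^c$ were declared in~\eqref{eq:IansEq}. This is exactly what the closure properties of $\CcomS_c$ and $\CcomT_c$ guarantee, so the whole verification reduces to matching definitions.
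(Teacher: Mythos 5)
Your proof is correct and follows essentially the same route as the paper's: each lower bound is witnessed by the database element $c$ itself (respectively, by the database elements counted by $\eta_\lambda$), with the closure properties of $\CcomS_c$ and $\CcomT_c$ guaranteeing that the relevant $\gamma^c$ and $\delta^c$ are defined. The only difference is that you spell out constraints~\eqref{e4_c1}, \eqref{e4_c2} and \eqref{e4_cTmp}, which the paper dismisses as evident, while the paper details only \eqref{e4_c3} and \eqref{e4_c4} exactly as you do.
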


\begin{proof}
The constraints~(\ref{e4_c1}), (\ref{e4_c2}) and (\ref{e4_cTmp}) are evident.
For (\ref{e4_c3}), let $c \in \cO$ be a constant, with 1-type $\pi = \tp{\Delta}{c}$. 
Pick any $\s$ such that $\sO \in \CcomS_c$. (Hence, also, $\sI, \s \in \CcomS_c$.)
By definition, $\gamma^c_{\pi, \sO}$ and $\gamma^c_{\pi, \sI}$ are the $\sO$- and
$\sI$-spectrum (respectively) of the (database) element $c \in A$. But then,
referring to Table~\ref{table:var_meanings}, $c$ is one of the elements `recorded'
by the variable $\hat{y}_{\pi, \s, \bv, \bw}$, for $\bv = \gamma^c_{\pi, \sO}$ and
$\bw = \gamma^c_{\pi, \sI}$. Consequently, $\hat{y}_{\pi, \s, \bv, \bw} \geq 1$,
when $\bv = \gamma^c_{\pi, \sO}$ and $\bw = \gamma^c_{\pi, \sI}$. Thus, the
constraints~(\ref{e4_c3}) are satisfied. Similarly for~(\ref{e4_c4}).
\end{proof}

Let $\cE = \cE_1 \cup \cE_2 \cup \cE_3 \cup \cE_4$. Note that the size $\|\cE\|$ of $\cE$ is bounded above by an exponential function of $\| \phi \|$.
\begin{lem}
\label{lma:orthoPlus}
Let $\phi$ and $\Delta$ be as above.
If $\phi \wedge \Delta$ is finitely satisfiable, then $\cE$ has a solution over $\mathbb{N}$.
\end{lem}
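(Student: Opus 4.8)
The plan is to exploit the fact that essentially all of the work has already been done in Lemmas~\ref{lem:E1}--\ref{ortho}, each of which asserts that one of the four constraint-groups $\cE_1,\cE_2,\cE_3,\cE_4$ is satisfied once the variables are assigned the values prescribed in Table~\ref{table:var_meanings} relative to a suitable model. The whole task therefore reduces to exhibiting a \emph{single} finite structure from which all four assignments can be read off at once, and observing that the resulting values lie in $\mathbb{N}$.

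First I would take a finite model $\gA \models \phi \wedge \Delta$, which exists by the hypothesis of finite satisfiability. Since Lemma~\ref{lem:E2} requires chromaticity, I would invoke the chromatic lemma (\cite[Lemma~3]{pratt07}) to replace $\gA$ by a chromatic model over the same (finite) domain; because the recolouring affects only the spare predicates, which occur in neither $\phi$ nor $\Delta$, the resulting structure still satisfies $\phi \wedge \Delta$ and is still finite. As in the running convention of this section, I would also assume $\Delta$ complete over $\sigma$ (adding to it every ground $\sigma$-literal true in $\gA$, which preserves $\gA \models \phi \wedge \Delta$), so that the quantities $\tp{\Delta}{c}$, $\gamma^c_{\pi,\s}$ and $\delta^c_{\pi,\t}$ figuring in $\cE_4$ are well defined.

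With this one chromatic, finite model $\gA$ in hand, I would set each variable of $\cE$ equal to the cardinality prescribed for it in Table~\ref{table:var_meanings}, namely $x_\lambda$, $y_{\pi,\s,\bu}$, $z_{\pi,\t,\bu}$, $\hy_{\pi,\s,\bv,\bw}$ and $\hz_{\pi,\t,\bv,\bw}$. Since $\gA$ is finite, every such cardinality is a well-defined natural number, so this assignment is a genuine candidate solution over $\mathbb{N}$. It then remains only to verify the four constraint-groups, and here I would appeal directly to the preceding lemmas, all applied to the \emph{same} $\gA$: Lemma~\ref{lem:E1} gives $\cE_1$, Lemma~\ref{lem:E2} (using chromaticity) gives $\cE_2$, Lemma~\ref{lem:E3} gives $\cE_3$, and Lemma~\ref{ortho} (using $\gA \models \Delta$) gives $\cE_4$. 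Since all four hold simultaneously for this assignment, $\cE = \cE_1 \cup \cE_2 \cup \cE_3 \cup \cE_4$ is satisfied, which is exactly the required solution.

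The only genuinely delicate point is the bookkeeping: one must arrange a single structure that is at the same time finite, chromatic, and a model of a complete $\Delta$, so that the hypotheses of every one of Lemmas~\ref{lem:E1}--\ref{ortho} are met without conflict. A finite chromatic model of $\phi \wedge \Delta$ meets all of these hypotheses at once, so once it is secured the verification is purely mechanical, each constraint-group having already been discharged by its corresponding lemma. (The harder, converse direction — building a model from a solution — is handled separately and is not at issue here.)
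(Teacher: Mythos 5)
Your proposal is correct and takes essentially the same route as the paper, whose entire proof is the citation ``Lemmas~\ref{lem:E1}--\ref{ortho}''; you simply spell out the bookkeeping (passing to a finite chromatic model, completing $\Delta$, and reading off the Table~\ref{table:var_meanings} values, which are finite) that the paper leaves implicit in the surrounding text of Sections~\ref{sec:types} and~\ref{sec:transform}.
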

\begin{proof}
Lemmas~\ref{lem:E1}--\ref{ortho}.
\end{proof}

\section{Obtaining a Model from the Solutions}
\label{sec:main_result}

In the previous section, we established a system of constants $\gamma^c_{\pi, \s}$, $\delta^c_{\pi, \t}$ and $\eta_\lambda$, and constructed a
system $\cE$ of linear equations and inequalities featuring these constants. We showed that, if $\phi \wedge \Delta$ has a finite model $\gA$,
then, by interpreting the constants $\gamma^c_{\pi, \s}$, $\delta^c_{\pi, \t}$ and $\eta_\lambda$,
with respect to $\Delta$ and $\gA$ as suggested, we find that the
$\gamma^c_{\pi, \s}$, $\delta^c_{\pi, \t}$ satisfy \eqref{eq:gammadelta}--\eqref{eq:delta}, and, moreover, that $\cE$ has a solution over $\mathbb{N}$. In this section, we establish a converse result. Assuming that the constants $\gamma^c_{\pi, \s}$ and $\delta^c_{\pi, \t}$ satisfy \eqref{eq:gammadelta}--\eqref{eq:delta},
and that the $\gamma^c_{\pi, \s}$, $\delta^c_{\pi, \t}$ and $\eta_\lambda$ are compatible with $\Delta$ as described,
we construct, from any solution of $\cE$ over $\mathbb{N}$, a finite model of $\phi \wedge \Delta$.

Our strategy is as follows. We start with sets of elements $A_\pi$ of the right cardinality, for each 1-type $\pi$, and
gradually build the message-types
that those elements `want' to send. Fix some 1-type $\pi$ and let $A_\pi$ be a set with cardinality
$$|A_\pi| = \sum \{ y_{\pi, \epsilon, \bu'} \mid \bu' \leq \bC \}.$$ Think of $A_\pi$ as
the set of elements that `want' to have 1-type $\pi$. We shall define functions $\bf_{\pi, \s}$ and $\bg_{\pi, \t}$ that give us the spectra and tallies
for each element $a \in A_\pi$. Think of $\bf_{\pi, \s}(a)$ as the $\s$-spectrum that $a$ `wants' to have and $\bg_{\pi, \t}(a)$ as the $\t$-tally
that $a$ `wants' to have (when a model is eventually built). For those functions to agree with the solutions of the previous system of constraints, we shall ensure that
\begin{align}
  \label{tmp1} | \bf^{-1}_{\pi, \s}(\bu)| &= y_{\pi, \s, \bu} \\
  \label{tmp2} | \bg^{-1}_{\pi, \t}(\bu)| &= z_{\pi, \t, \bu}.
\end{align}
Furthermore, we shall ensure that, for all $a \in A_\pi$,
\begin{align}
  \label{tmp3} \bf_{\pi, \epsilon}(a) + \bg_{\pi, \epsilon}(a) &= \bC \\
  \label{tmp4} \bf_{\pi, \sO}(a) + \bf_{\pi, \sI}(a) &= \bf_{\pi, \s}(a) \\
  \label{tmp5} \bg_{\pi, \tO}(a) + \bg_{\pi, \tI}(a) &= \bg_{\pi, \t}(a).
\end{align}
Finally, for each individual constant $c$, setting $\pi = \tp{\Delta}{c}$, we shall ensure the existence of
an element $b_c \in A_\pi$ such that, for all $\s \in \CcomS_c$ and all $\t \in \CcomT_c$,
\begin{align}
\label{tmpNew1} \bf_{\pi, \s}(b_c) = \gamma_{\pi, \s}^c\\
\label{tmpNew2} \bg_{\pi, \t}(b_c) = \delta_{\pi, \t}^c.
\end{align}
The idea is that such an element $b_c$ can be used to realize the constant $c$, when a model is
eventually built. 

The following lemma guarantees that the above requirements can be satisfied.
\begin{lem}
\label{tech_lem}
Suppose that $x_\lambda$, $y_{\pi, \s, \bu}$, $z_{\pi, \t, \bu}$, $\hy_{\pi, \s, \bv, \bw}$, $\hz_{\pi, \t, \bv, \bw}$ are (classes of) natural numbers satisfying 
the constraints $\cE$. Fix any 1-type $\pi \in \Pi$, 
and let $A_\pi$ be a set of cardinality $\sum \{ y_{\pi, \epsilon, \bu'} \mid \bu' \leq \bC \}$. Then
there exists a system of functions on $A_\pi$ $$\bf_{\pi, \s} : A_\pi \rightarrow \{ \bu \mid \bu \leq \bC \} \quad \text{and} \quad \bg_{\pi, \t} : A_\pi
\rightarrow \{ \bu \mid \bu \leq \bC \},$$ for each bit-string $\s$ with $| \s | \leq p$ and $\t$ with $| \t | \leq q$,   
satisfying the following conditions:
\begin{enumerate}[label=(\roman{enumi}), noitemsep]
  \item Equations (\ref{tmp1}) and (\ref{tmp2}) hold for all vectors $\bu \leq \bC$;
  \item Equations (\ref{tmp3})--(\ref{tmp5}) hold for all $a \in A_\pi$ and all $\s$, $\t$ with
  $| \s | < p$ and $| \t | < q$.
\end{enumerate}
Furthermore, if $\pi = \tp{\Delta}{c}$ for some individual constant $c$, then
there exists in addition an element $b_c \in A_\pi$ such that
\begin{enumerate}[label=(\roman{enumi}), noitemsep]
\setcounter{enumi}{2}
  \item Equations (\ref{tmpNew1})--(\ref{tmpNew2}) hold for all $\s \in \CcomS_c$
  and $\t \in \CcomT_c$.
\end{enumerate}
\end{lem}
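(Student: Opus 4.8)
The plan is to build, for each 1-type $\pi$, the two families of functions by a top-down refinement along a pair of binary trees: a tree of depth $p$ whose nodes are the bit-strings $\s$ with $|\s| \le p$ (carrying the spectra $\bf_{\pi,\s}$), and a tree of depth $q$ indexed by the strings $\t$ with $|\t| \le q$ (carrying the tallies $\bg_{\pi,\t}$). Each node $\s$ of the spectrum tree will be identified with a partition of $A_\pi$ into blocks $\bf_{\pi,\s}^{-1}(\bu)$, and passing from $\s$ to its children $\sO$, $\sI$ will refine that partition so that the two child-values sum to the parent value, as demanded by \eqref{tmp4}. The two trees are coupled only at their roots, through \eqref{e1_c0}.

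First I would fix $A_\pi$ of the prescribed cardinality and partition it into blocks of sizes $y_{\pi,\epsilon,\bu}$ ($\bu \le \bC$); this is legitimate since these sizes sum to $|A_\pi|$ by the choice of $A_\pi$, and it defines $\bf_{\pi,\epsilon}$ so that \eqref{tmp1} holds at $\epsilon$. I would then set $\bg_{\pi,\epsilon}(a) = \bC - \bf_{\pi,\epsilon}(a)$, which secures \eqref{tmp3} outright and, via \eqref{e1_c0}, yields \eqref{tmp2} at $\epsilon$. To descend the spectrum tree, at a node $\s$ with $|\s| < p$ I would subdivide each block $\bf_{\pi,\s}^{-1}(\bu)$ into sub-blocks of sizes $\hy_{\pi,\s,\bv',\bw'}$ over the pairs with $\bv' + \bw' = \bu$ — possible because these sizes sum to $y_{\pi,\s,\bu}$ by \eqref{e1_c1} — and declare $\bf_{\pi,\sO}$, $\bf_{\pi,\sI}$ to take the values $\bv'$, $\bw'$ on the corresponding sub-block. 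Then \eqref{tmp4} holds by construction, while \eqref{e1_c3} and \eqref{e1_c4} guarantee that the refined blocks have the right sizes $y_{\pi,\sO,\bv}$, $y_{\pi,\sI,\bw}$, preserving \eqref{tmp1}. The tally tree is built identically, using $\hz$ together with \eqref{e1_c2}, \eqref{e1_c5} and \eqref{e1_c6}. This settles parts (i) and (ii).

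It remains to realize the constant $c$ of type $\pi = \tp{\Delta}{c}$ by a single designated element $b_c$, and here I would exploit two facts. The first is that the naming predicates force $c \mapsto \tp{\Delta}{c}$ to be injective, so each tree harbours at most one designated element and there is no competition. The second is that $\CcomS_c$ and $\CcomT_c$ are closed under taking parents (the remark that $\sO \in \CcomS_c$ implies $\s,\sI \in \CcomS_c$), so I can process their strings in order of increasing length. I would choose $b_c$ in the block $\bf_{\pi,\epsilon}^{-1}(\gamma^c_{\pi,\epsilon})$, non-empty by \eqref{e4_c2}; then \eqref{eq:gammadelta} makes $\bg_{\pi,\epsilon}(b_c) = \bC - \gamma^c_{\pi,\epsilon} = \delta^c_{\pi,\epsilon}$ automatically, so \eqref{tmpNew1}--\eqref{tmpNew2} hold at $\epsilon$. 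Inductively, whenever $\s \in \CcomS_c$ with $\sO \in \CcomS_c$, the pair $(\gamma^c_{\pi,\sO}, \gamma^c_{\pi,\sI})$ is a genuine split of $\gamma^c_{\pi,\s}$ by \eqref{eq:gamma} and the matching sub-block is non-empty by \eqref{e4_c3}; since $b_c$ already lies in $\bf_{\pi,\s}^{-1}(\gamma^c_{\pi,\s})$, I can arrange the subdivision at $\s$ so that $b_c$ falls into that sub-block, fixing $\bf_{\pi,\sO}(b_c)$ and $\bf_{\pi,\sI}(b_c)$ to the prescribed values. The analogous routing through the tally tree, using \eqref{eq:delta} and \eqref{e4_c4}, handles the $\delta$-side.

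The hard part is exactly this last coordination: $b_c$ must be threaded through a non-empty block at every level while the global block sizes are pinned down by the $y$, $z$, $\hy$, $\hz$ values. What makes it go through is that, once the block sizes are fixed, the choice of partition is free — I place $b_c$ into the target sub-block first (it is non-empty, and $b_c$ belongs to its parent by the induction hypothesis) and distribute the remaining elements arbitrarily. Because there is at most one designated element per tree and the index sets $\CcomS_c$, $\CcomT_c$ are parent-closed, no two routing demands ever collide. Assembling the functions $\bf_{\pi,\s}$, $\bg_{\pi,\t}$ and the elements $b_c$ over all $\pi$ then yields the system asserted by the lemma.
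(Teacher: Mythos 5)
Your proposal is correct and follows essentially the same route as the paper's own proof: partition $A_\pi$ at the root into blocks of sizes $y_{\pi,\epsilon,\bu}$, set $\bg_{\pi,\epsilon} = \bC - \bf_{\pi,\epsilon}$ (using \eqref{e1_c0}), refine blocks inductively along bit-strings into sub-blocks of sizes $\hy_{\pi,\s,\bv,\bw}$ justified by \eqref{e1_c1} and \eqref{e1_c3}--\eqref{e1_c4}, and thread $b_c$ through the non-empty designated sub-blocks guaranteed by \eqref{e4_c2}, \eqref{e4_c3} and the compatibility conditions \eqref{eq:gammadelta}--\eqref{eq:gamma}, with the tally side handled analogously. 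The binary-tree framing and the explicit remark that naming predicates prevent two constants from competing for the same tree are presentational additions, not a different argument.
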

\begin{proof}
Decompose $A_\pi$ into pairwise disjoint sets $A_{\bu}$ of cardinality $|A_{\bu}| = y_{\pi, \epsilon, \bu}$, for each vector $\bu \leq \bC$. Note that, since $y_{\pi, \epsilon, \bu}$ might be zero, some of these sets may be empty.

We construct the functions $\bf_{\pi, \s}$, where $0 < |\s| \leq p$ by induction on $\s$.
Suppose $\s = \epsilon$; for each $\bu \leq \bC$, and for all $a \in A_{\bu}$, set
$\bf_{\pi, \epsilon}(a) = \bu$ and $\bg_{\pi, \epsilon}(a) = \bC - \bu$. These assignments
clearly satisfy (\ref{tmp1}) and (\ref{tmp2}), keeping in mind the constraints~(\ref{e1_c0}).
Now suppose $\pi = \tp{\Delta}{c}$, for some individual constant $c$.
By \eqref{e4_c2}, and setting $\bu$ to be the vector $\gamma_{\pi, \epsilon}^c$, we have $A_{\bu} \neq \emptyset$. Choose $b_c$ to be any element in $A_{\bu}$. This immediately secures~\eqref{tmpNew1} for
$\s = \epsilon$, and given the condition~\eqref{eq:gammadelta} relating $\gamma_{\pi, \epsilon}^c$ and
$\delta_{\pi, \epsilon}^c$, also~\eqref{tmpNew2}.

Now, assume that $\bf_{\pi, \s}$ ($0 \leq |\s| < p$) has been defined and satisfies~(\ref{tmp1}) and~\eqref{tmpNew1}. For every vector $\bu \leq \bC$, decompose the set $\bf_{\pi, \s}^{-1}(\bu)$ into subsets
$A_{\bv, \bw}$ with cardinality $|A_{\bv, \bw}| = \hy_{\pi, \s, \bv, \bw}$, where
$\bv$, $\bw$ range over all vectors $\leq \bC$ such that $\bv + \bw = \bu$.
This is possible from the constraints~(\ref{e1_c1}) and Equation~(\ref{tmp1}).
Set
\begin{equation*}
\bf_{\pi, \sO}(a) = \bv \quad \text{and} \quad \bf_{\pi, \sI}(a) = \bw,
\end{equation*} 
for all $a \in A_{\bv, \bw}$. Notice that Equation~(\ref{tmp4}) holds as required.

To see that $\bf_{\pi, \sO}$ and $\bf_{\pi,\sI}$ both satisfy
Equation~\eqref{tmp1}, note that $\bf_{\pi,\sO}(a) = \bv$ if and
only if, for some vector $\bw'$ such that $\bv +
\bw' \leq \bC$, $a \in A_{\bv,
\bw'}$. Similarly, $\bf_{\pi,\sI}(a) = \bw$ if and only if,
for some vector $\bv'$ such that $\bv' + \bw \leq
\bC$, $a \in A_{\bv', \bw}$. That is,
\begin{eqnarray*}
\bf_{\pi,\sO}^{-1}(\bv) & = & 
  \bigcup \{ A_{\bv,\bw'} \mid 
     \bv + \bw' \leq \bC \} \\
\bf_{\pi,\sI}^{-1}(\bw) & = & 
  \bigcup \{ A_{\bv',\bw} \mid 
     \bv' + \bw \leq \bC \},
\end{eqnarray*}
with the collections of sets on the respective right-hand sides being
pairwise disjoint.  By the constraints~\eqref{e1_c3}--\eqref{e1_c4},
together with the fact that $|A_{\bv,\bw}| =
\hat{y}_{\pi,s,\bv,\bw}$ for all $\bv,\bw$, we
have:
\begin{eqnarray*}
| \bf_{\pi,s0}^{-1}(\bv) | & = & y_{\pi,s0,\bv}\\
| \bf_{\pi,s1}^{-1}(\bw) | & = & y_{\pi,s1,\bw},
\end{eqnarray*}
which establishes \eqref{tmp1} for the functions
$\bf_{\pi,\sO}$ and $\bf_{\pi,\sI}$.

It remains only to secure~\eqref{tmpNew1}
for the functions $\bf_{\pi,\sO}$ and $\bf_{\pi,\sI}$ in the case where $\pi = \tp{\Delta}{c}$, for some individual constant $c$, and
$\sO$ (and hence $\sI$ and $\s$) is in $\CcomS_c$. By inductive hypothesis, $\bf_{\pi, \s}(b_c) = \gamma^c_{\pi,\s}$, i.e., 
$b_c \in \bf_{\pi, \s}^{-1}(\bu)$ where
$\bu$ is the vector $\gamma^c_{\pi,\s}$.
Let $\bv = \gamma^c_{\pi,\sO}$ and $\bw = \gamma^c_{\pi,\sI}$. Then, by
\eqref{eq:gamma}, these vectors satisfy the condition $\bv + \bw = \bu$, so that, in the decomposition 
of $\bf_{\pi, \s}^{-1}(\bu)$, the set $A_{\bv, \bw}$ will have cardinality $\hat{y}_{\pi,\bv,\bw}$.
Thus, by ~\eqref{e4_c3}, $A_{\bv, \bw} \neq \emptyset$, so that we may ensure that $b_c$ is contained in this
set when we perform the decomposition of $\bf_{\pi, \s}^{-1}(\bu)$. But then we will have set 
$\bf_{\pi, \sO}(b_c) = \bv = \gamma^c_{\pi,\sO}$ and $\bf_{\pi, \sI}(b_c) = \bw = \gamma^c_{\pi,\sI}$, so that~\eqref{tmpNew1} holds with $\s$ replaced by $\sO$ and $\sI$.
This completes the induction. 

The construction of the functions $\bg_{\pi, \t}$ is completely analogous.
\end{proof}

\begin{lemC}[{\cite[Lemma 12]{pratt07}}]
\label{lem:total_count}
Let the functions $\bf$ and $\bg$ be constructed as in Lemma~\ref{tech_lem}.
Then, for all $a \in A_\pi$, we have
$$\sum \{ \bf_{\pi, \s'}(a) : |\s'| = p \} + \sum \{ \bg_{\pi, \t'}(a) : |\t'| = q \} = \bC.$$
\end{lemC}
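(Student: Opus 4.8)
The plan is to observe that the bit-strings indexing the functions $\bf_{\pi,\s}$ form a complete binary tree of depth $p$, and that Equation~\eqref{tmp4} asserts exactly that the value at each internal node is the sum of the values at its two children. Consequently the value at the root, $\bf_{\pi,\epsilon}(a)$, must equal the sum of the values at all the leaves, which are precisely the bit-strings of length $p$. The same reasoning, using Equation~\eqref{tmp5}, gives the analogous statement for $\bg$. Adding the two identities and invoking Equation~\eqref{tmp3} then yields the result.

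Concretely, first I would fix $a \in A_\pi$ and prove the following claim by induction on $p - |\s|$: for every bit-string $\s$ with $|\s| \le p$,
$$\bf_{\pi,\s}(a) = \sum \{ \bf_{\pi,\s'}(a) \mid |\s'| = p \text{ and } \s \text{ is a prefix of } \s' \}.$$
The base case $|\s| = p$ is immediate, since the only length-$p$ extension of $\s$ is $\s$ itself. For the inductive step, with $|\s| < p$, Equation~\eqref{tmp4} gives $\bf_{\pi,\s}(a) = \bf_{\pi,\sO}(a) + \bf_{\pi,\sI}(a)$; applying the inductive hypothesis to $\sO$ and $\sI$, each summand expands into the sum of $\bf_{\pi,\s'}(a)$ taken over the length-$p$ extensions of $\sO$, respectively $\sI$. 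Since every length-$p$ extension of $\s$ extends exactly one of $\sO$, $\sI$, these two collections partition the desired index set, establishing the claim. Taking $\s = \epsilon$ yields $\bf_{\pi,\epsilon}(a) = \sum \{ \bf_{\pi,\s'}(a) \mid |\s'| = p \}$.

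An entirely parallel induction on $q - |\t|$, using Equation~\eqref{tmp5} in place of Equation~\eqref{tmp4}, gives $\bg_{\pi,\epsilon}(a) = \sum \{ \bg_{\pi,\t'}(a) \mid |\t'| = q \}$. Finally, adding these two identities and applying Equation~\eqref{tmp3}, I obtain
$$\sum \{ \bf_{\pi,\s'}(a) \mid |\s'| = p \} + \sum \{ \bg_{\pi,\t'}(a) \mid |\t'| = q \} = \bf_{\pi,\epsilon}(a) + \bg_{\pi,\epsilon}(a) = \bC,$$
as required. There is really no hard step here: the only point demanding care is the bookkeeping in the inductive step, namely verifying that the length-$p$ extensions of $\s$ are partitioned cleanly by those of $\sO$ and $\sI$. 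This is exactly the defining structure of the binary tree of bit-strings and presents no genuine obstacle.
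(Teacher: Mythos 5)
Your proof is correct and follows essentially the same route as the paper's: both arguments telescope Equations~(\ref{tmp4}) and~(\ref{tmp5}) down the binary tree of bit-strings and anchor the computation with Equation~(\ref{tmp3}) at the root. The only difference is organizational---you run two separate root-to-leaves inductions (one for $\bf$, one for $\bg$) and add them at the end, whereas the paper performs a single double induction on the pair of levels $(j,k)$, maintaining the combined sum equal to $\bC$ throughout---but the underlying idea is identical.
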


\begin{proof}
We prove the stronger result that, for all $a \in A_\pi$, $j$ ($0 \leq
j \leq p$) and $k$ ($0 \leq k \leq q$),
\begin{eqnarray}
\sum \{\bf_{\pi,s'}(a) : |s'| = j \} + 
\sum \{\bg_{\pi,t'}(a) : |t'| = k \} & = & \bC,
\label{suff:2i}
\end{eqnarray}
using a double induction on $j$ and $k$.  If $j = k = 0$, then the
left-hand side of~\eqref{suff:2i} is simply $\bf_{\pi,\epsilon}(a) +
\bg_{\pi,\epsilon}(a)$, which is equal to $\bC$ by~\eqref{tmp3}.
Suppose now that the result holds for the pair $j$, $k$, with $j < p$.
Then
\begin{eqnarray*}
\ & \ & \sum \{\bf_{\pi,s'}(a) : |s'| = (j+1) \} + 
        \sum \{\bg_{\pi,t'}(a) : |t'| = k \} \\
\ & = & \sum \{\bf_{\pi,s'0}(a) + \bf_{\pi,s'1}(a)  : |s'| = j \} + 
        \sum \{\bg_{\pi,t'}(a) : |t'| = k \} \\
\ & = & \sum \{\bf_{\pi,s'}(a) : |s'| = j \} + 
        \sum \{\bg_{\pi,t'}(a) : |t'| = k \} \qquad \text{ by~\eqref{tmp4} } \\
\ & = & \bC \qquad \text{ by inductive hypothesis.} 
\end{eqnarray*}
This establishes the result for the pair $j+1,k$.  An analogous
argument using~\eqref{tmp5} applies when $k < m$, completing the
induction.
\end{proof}
We are now ready to prove the converse of Lemma~\ref{lma:orthoPlus}.

\begin{lem}
\label{antistropho}
Let $\Delta$, $\phi$ and $\cE$ be as given above. If $\cE$ has a solution over $\mathbb{N}$, then $\phi \wedge \Delta$ is finitely satisfiable.
\end{lem}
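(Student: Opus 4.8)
The plan is to reverse the construction of Section~\ref{sec:transform}, building a finite model of $\phi \wedge \Delta$ from a given $\mathbb{N}$-valued solution of $\cE$. First I would invoke Lemma~\ref{tech_lem} to obtain, for each 1-type $\pi$, a set $A_\pi$ of the prescribed cardinality together with the functions $\bf_{\pi,\s}$ and $\bg_{\pi,\t}$ recording the spectra and tallies the elements of $A_\pi$ `want', along with the distinguished element $b_c$ whenever $\pi = \tp{\Delta}{c}$. The domain is the disjoint union $A = \bigcup_\pi A_\pi$; each $a \in A_\pi$ is assigned 1-type $\pi$, and each constant $c$ is interpreted as $b_c$. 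Since $\Delta$ contains the naming formulas, distinct constants receive distinct 1-types (the 1-type of $c$ contains $c(x)$, that of $d$ contains $\neg c(x)$), so the $b_c$ lie in distinct sets $A_\pi$ and the unique names assumption is respected. Only 1-types $\pi$ with $\models \pi \rightarrow \alpha$ are ever populated: if $\tpStart{\tau} = \pi$ with $\pi \not\models \alpha$ then every such 2-type $\tau$ is forbidden, so by~\eqref{e2_c5} every $z_{\pi,\t,\bu}$ with $|\t|=q$ vanishes, whence $|A_\pi| = 0$ by~\eqref{e1_c0} and the preservation of total counts along $\cE_1$. This secures $\gA \models \forall x\,\alpha$.

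It remains to assign a 2-type to each unordered pair, i.e.~to wire up the messages. The invertible messages I would realize by a perfect matching. For each string $\s$ with $|\s|=p$ and terminal 1-type $\pi_\s$, and each invertible $\lambda \in \Lambda_{\pi,\s}$, constraint~\eqref{e2_c1} together with Equation~\eqref{tmp1} lets me designate exactly $x_\lambda$ elements of $A_\pi$ (those with $\bf_{\pi,\s}$-value $\bC_\lambda$) as senders of $\lambda$; since $x_{\lambda^{-1}} = x_\lambda$ by~\eqref{e2_c2}, while $\lambda$ is neither reflexive nor forbidden by~\eqref{e2_c3}--\eqref{e2_c4}, I pair these bijectively with the $x_\lambda$ senders of $\lambda^{-1}$ in $A_{\pi_\s}$, setting $\tpA{a,b} = \lambda$. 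By chromaticity each element sends at most one invertible message per terminal 1-type, so for a fixed sender the recipients have pairwise distinct 1-types. The database-internal invertible edges, of which $\Delta$ specifies $\eta_\lambda$ of type $\lambda$, form a partial matching; since $x_\lambda \geq \eta_\lambda$ by~\eqref{e4_c1} and, by compatibility of the $\gamma^c_{\pi,\s}$ with $\Delta$, each relevant $b_c$ is among the designated $\lambda$-senders, I fix these edges first and then complete the matching arbitrarily.

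The non-invertible messages are routed similarly but without a partner edge. For each $\t$ with $|\t|=q$, constraint~\eqref{e2_c6} with Equation~\eqref{tmp2} forces $\bg_{\pi,\t}(a) = n\,\bC_{\mu_{\pi,\t}}$ for some $n$, and I send $a$ exactly $n$ copies of $\mu_{\pi,\t}$, each to a distinct element of $A_{\tpEnd{\mu_{\pi,\t}}}$; such elements exist by~\eqref{e3_c1}, and the database recipients required by $\Delta$ (budgeted into $\delta^c_{\pi,\t}$) are routed to the corresponding $b_d$ first. Every remaining pair receives a silent, non-forbidden 2-type (the one making all binary atoms false), which exists because both endpoints satisfy $\alpha$. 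I then verify $\gA \models \phi \wedge \Delta$: the conjuncts $\forall x\forall y(e_j(x,y)\rightarrow(\beta_j\lor x=y))$ hold since no forbidden 2-type is realized; the counting conjuncts $\forall x\,\exists_{=C_i}y(o_i(x,y)\wedge x\neq y)$ hold because, by Lemma~\ref{lem:total_count}, the spectra and tallies of each $a$ sum componentwise to $\bC$, so when the recipients of its distinct messages are pairwise distinct, $a$ has exactly $C_i$ distinct $o_i$-successors; and $\gA \models \Delta$ because $\tpA{b_c} = \tp{\Delta}{c}$, $\tpA{b_c,b_d} = \tp{\Delta}{c,d}$, and $\Delta$ is complete over $\sigma$.

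The hard part is carrying out this wiring globally so that each unordered pair receives a single 2-type while (i) every element's realized messages match its prescribed spectrum and tally exactly, (ii) chromaticity is respected, and (iii) the pre-existing database edges are slotted in consistently; in particular, one must find, for each sender, enough distinct recipients for its non-invertible messages that also avoid its invertible partners. This distinctness can fail if some populated $A_\pi$ is too small. I would circumvent it by first scaling the given solution of $\cE$ by a large integer $N$: every equation of $\cE$ is homogeneous and hence preserved, while the inequalities~\eqref{e1_c7},~\eqref{e3_c1} and~\eqref{e4_c1}--\eqref{e4_c4} are only relaxed, and the fixed parameters $\gamma^c_{\pi,\s}$, $\delta^c_{\pi,\t}$, $\eta_\lambda$ are unaffected. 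After scaling, every non-empty $A_\pi$ has cardinality exceeding $mC$, which bounds the number of messages any single element directs at a given target type, so a straightforward greedy (Hall-type) argument supplies the required distinct recipients and the construction goes through.
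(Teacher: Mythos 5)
Your overall route is the paper's: obtain the sets $A_\pi$, the functions $\bf_{\pi,\s}$, $\bg_{\pi,\t}$ and the distinguished elements $b_c$ from Lemma~\ref{tech_lem}, realize the invertible message-types by a matching justified by \eqref{e2_c1}--\eqref{e2_c4} and \eqref{e4_c1}, scale the solution (legitimate, since $\cE$ is preserved under multiplication by a positive integer), and close off all remaining pairs with silent 2-types, which is where guardedness enters. The genuine gap is in your treatment of the non-invertible messages. The conflicts you must avoid are not only ``recipient already paired with the sender'': if $a$ sets $\tpA{a,b} = \mu$ with $\mu$ non-invertible and, later, $b$ wants to send some non-invertible $\mu'$ to $a$, then $\tpA{b,a}$ would have to equal both $\mu^{-1}$ and $\mu'$, which is impossible because $\mu'$ is a message-type and $\mu^{-1}$ is not. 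Hence, when it is $b$'s turn, every element that earlier chose $b$ as a recipient is forbidden to $b$---and nothing in $\cE$ bounds this in-degree. Concretely, take two populated 1-types with $|A_\pi| = |A_\rho| > mC$ and tallies demanding that every element of each set send a non-invertible message into the other; process all of $A_\rho$ first and let each of its elements direct one of its messages at a single fixed $a_1 \in A_\pi$. Every pair $(a_1, b)$ with $b \in A_\rho$ is now assigned a 2-type whose $\tpA{a_1,b}$-side is not a message-type, so $a_1$ cannot place even one of its own non-invertible messages into $A_\rho$, although its quota requires it. So ``scale past $mC$ and greedily pick fresh recipients'' does not go through: a sender's forbidden set is not bounded by $mC$, and no Hall-type count rescues a per-sender greedy of the kind you describe.

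This is exactly what Step~3 of the paper's proof is engineered to prevent, and why it scales the solution to $3mC$ rather than to $mC$: it uses the circular-witnessing technique of~\cite{gradel1997decision}. Each non-empty $A_\pi$ is split into three classes $A_{\pi,0}$, $A_{\pi,1}$, $A_{\pi,2}$, each of size at least $mC$, with all database elements placed in $A_{\pi,0}$; an element of class $j$ sends its non-invertible messages (beyond those dictated by $\Delta$) only into class $j+1 \pmod 3$ of the target 1-type. Mutual sends are then structurally impossible, so the only recipients a sender must avoid are the at most $mC$ elements it has itself already engaged, and a class of size at least $mC$ always suffices; confining the database to class $0$ simultaneously guarantees that no pair of database elements ever receives a 2-type other than the one $\Delta$ dictates. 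With this device substituted for your greedy step, the rest of your argument (Steps 1, 2, 4, the argument that only 1-types satisfying $\alpha$ are populated, and the final count verification via Lemma~\ref{lem:total_count}) matches the paper and is sound.
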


\begin{proof}
Let $x_\lambda$, $y_{\pi, \s, \bu}$, $z_{\pi, \t, \bu}$, $\hy_{\pi, \s, \bv, \bw}$, $\hz_{\pi, \t, \bv, \bw}$ be natural numbers satisfying $\cE$ (with the indices $\pi$, $\s$, $\t$, $\bu$, $\bv$, $\bw$ 
varying as usual). Notice that, for all positive integers $k$,
the (sets of) natural numbers $kx_\lambda$, $ky_{\pi, \s, \bu}$, $kz_{\pi, \t, \bu}$, $k\hy_{\pi, \s, \bv, \bw}$, $k\hz_{\pi, \t, \bv, \bw}$ also satisfy $\cE$.
Thus, we may assume that all values in the sought-after solution are either 0 or $\geq 3mC$.

We start by defining the universe $A$ of our model $\gA$.
Let $$A = \bigcup \{ A_\pi \mid \pi \text{ is any 1-type over } \sigma \},$$ where
each set $A_\pi$ has cardinality $$|A_\pi| = \sum \{ y_{\pi, \epsilon, \bu'} \mid \bu' \leq \bC \}$$ and the
sets $A_\pi$ are pairwise disjoint. Think of $A_\pi$ as the elements of $A$ that `want' to have 1-type $\pi$. Note that $A \neq \emptyset$, by the constraint~(\ref{e1_c7}).

Let the functions $\bf_{\pi, \s}$, $\bg_{\pi, \t}$ and elements $b_c$ be as constructed in Lemma~\ref{tech_lem}.
Think of $\bf_{\pi, \s}(a)$ as the $\s$-spectrum that $a$ `wants' to
have, $\bg_{\pi, \t}(a)$ as the $\t$-tally that $a$ `wants' to have, and $b_c$ the database element that
`wants' to be the denotation of $c$.
We are only interested in the values of these
functions when $|\s| = p$ and $|\t| = q$. Fix some such $\s$.
Decompose each $A_\pi$ into sets $\bf_{\pi, \s}^{-1}(\bu)$ (with $\bu$ varying),
for each vector $\bu$ with $\bO < \bu \leq \bC$. 
By the constraints~(\ref{e2_c1}) and Equation~(\ref{tmp1}),
decompose each of those $\bf_{\pi, \s}^{-1}(\bu)$
into pairwise disjoint (possibly empty) sets $A_\lambda$ with $|A_\lambda| = x_\lambda$, for all invertible message-types $\lambda \in \lps$ with $\bC_\lambda = \bu$. When performing this decomposition, for each individual constant $c$, if $\Delta$ specifies that $c$ sends an invertible message of type $\lambda$
to some other database element, then make sure that 
$b_c$ is assigned to the set $A_\lambda$: by~(\ref{e4_c1}), this set is at least as numerous as the number
of database elements sending a message of type $\lambda$, so that we never run out of elements.
Think of $A_\lambda$ as the set of elements in $A_\pi$ that `want' to send a single
invertible message of type $\lambda$.
The above process is repeated for all possible different
values of $\s$ (with $|\s| = p$), and each decomposition should be thought of as independent of each
other. Analogously, for each $\t$ with $|\t| = q$, $A_\pi$ is decomposed into pairwise disjoint sets $\bg_{\pi, \t}^{-1}(\bu)$ (with $\bu$ varying); and, again, those decompositions should be thought of as independent of each other.

Based on the above decompositions, we specify for each $a \in A_\pi$ a `mosaic piece' and show how to
assemble these pieces into a model of $\phi$. A mosaic piece is, informally, a collection of the messages
that $a$ `wants' to send. This collection might contain more than one message of
each type (or zero for that matter). Let $a \in A_\pi$; the mosaic piece corresponding to $a$ contains:
\begin{enumerate}[label=(\roman{enumi})]
  \item a single message labelled $\lambda_{a, \s}$ for each bit-string $\s$ with $|\s| = p$ if 
        $\bf_{\pi, \s}(a) \neq \bO$, where $\lambda_{a, \s}$ is the (unique) 2-type $\lambda$ for which
        $a \in A_\lambda$;
  
  \item $n_{a, \t}$ messages labelled $\mu_{\pi, \t}$ for each bit-string $\t$ with $|\t| = q$, where
        $n_{a, \t}$ is the (unique) natural number such that
        $\bg_{\pi, \t}(a) = n_{a, \t} \cdot \bC_{\mu_{\pi, \t}}$. Note that if $\bg_{\pi, \t}(a) = 0$
        then $n_{a, \t} = 0$, otherwise $n_{a, \t}$ exists by the constraints~(\ref{e2_c6}) and
        Equation~(\ref{tmp2}).
\end{enumerate}
The mosaic piece corresponding to $a$ is depicted in Fig.~\ref{fig:messages}. Here, for legibility,
we have identified the
strings $\s$ of length $p$ with the numbers $0, \dots, P-1$, and the 
strings $\t$ of length $q$ for which $\mu_{\pi,\t}$ is a (non-invertible) message-type with the
numbers $0, \dots, R-1$, for suitable $P$, $R$. The dashed radial arrows 
indicate that $a$ may or may not send a message of invertible type $\lambda_{a,\s} \in \Lambda_{\pi,\s}$;
the solid radial arrows 
indicate that $a$ sends $n_{a,\t}$ (possibly zero) messages of non-invertible type $\mu_{\pi,\t}$.

\def\arrowDashedAngleRadius(#1, #2, #3); {  
  \pgfmathparse{(#2)*cos(#1)} \let\x\pgfmathresult;
  \pgfmathparse{(#2)*sin(#1)} \let\y\pgfmathresult;
  \vertex(\x, \y);
  
  \pgfmathparse{(#2-.13)*cos(#1)} \let\z\pgfmathresult;
  \pgfmathparse{(#2-.13)*sin(#1)} \let\w\pgfmathresult;
  \draw[line width=.5pt, densely dashed, ->] (0, 0) -- (\z, \w);
  
  \pgfmathparse{(#2-.9)*cos(#1 - 8)} \let\a\pgfmathresult;
  \pgfmathparse{(#2-.9)*sin(#1 - 8)} \let\b\pgfmathresult;
  \node at (\a, \b) {#3};
}

\def\arrowNormalAngleRadius(#1, #2, #3); {  
  \pgfmathparse{(#2)*cos(#1)} \let\x\pgfmathresult;
  \pgfmathparse{(#2)*sin(#1)} \let\y\pgfmathresult;
  \vertex(\x, \y);
  
  \pgfmathparse{(.16)*cos(#1)} \let\zz\pgfmathresult;
  \pgfmathparse{(.16)*sin(#1)} \let\ww\pgfmathresult;
  \pgfmathparse{(#2-.13)*cos(#1)} \let\z\pgfmathresult;
  \pgfmathparse{(#2-.13)*sin(#1)} \let\w\pgfmathresult;
  \draw[line width=.5pt, ->] (\zz, \ww) -- (\z, \w);
  
  \pgfmathparse{(#2-.9)*cos(#1 - 8)} \let\a\pgfmathresult;
  \pgfmathparse{(#2-.9)*sin(#1 - 8)} \let\b\pgfmathresult;
  \node at (\a, \b) {#3};
}

\def\lineDashedAngleRadius(#1, #2); {  
  \pgfmathparse{(#2)*cos(#1)} \let\z\pgfmathresult;
  \pgfmathparse{(#2)*sin(#1)} \let\w\pgfmathresult;
  \draw[line width=.5pt, densely dashed] (0, 0) -- (\z, \w);
}

\def\lineNormalAngleRadius(#1, #2); {  
  \pgfmathparse{(.16)*cos(#1)} \let\zz\pgfmathresult;
  \pgfmathparse{(.16)*sin(#1)} \let\ww\pgfmathresult;
  \pgfmathparse{(#2)*cos(#1)} \let\z\pgfmathresult;
  \pgfmathparse{(#2)*sin(#1)} \let\w\pgfmathresult;
  \draw[line width=.5pt] (\zz, \ww) -- (\z, \w);
}

\def\arc(#1, #2, #3, #4); {  
  \draw[line width=.5pt, dashed, <->] (#1:#3) arc[radius=#3, start angle=#1, end angle=#2];
  \pgfmathparse{(#3+.3)*cos((#1 + #2)/2)} \let\x\pgfmathresult;
  \pgfmathparse{(#3+.3)*sin((#1 + #2)/2)} \let\y\pgfmathresult;
  \node at (\x, \y) {#4};
}

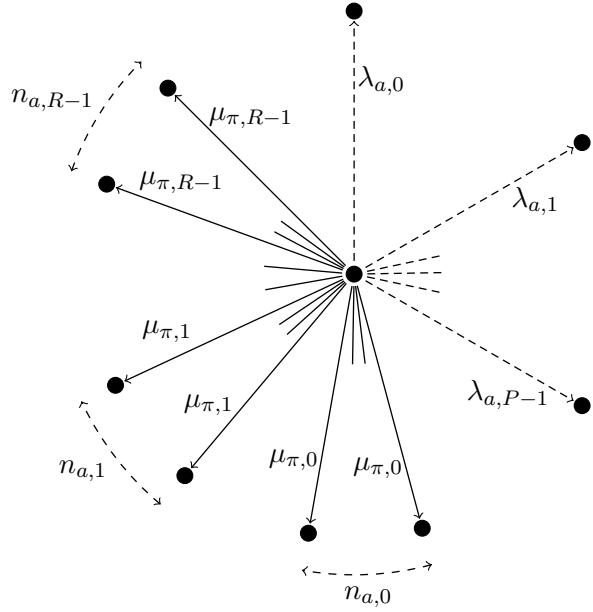
\begin{figure}
\centering
\begin{tikzpicture}

\vertex(0, 0);
\arrowDashedAngleRadius(90, 3.5, $\lambda_{a,0}$);
\arrowDashedAngleRadius(30, 3.5, $\lambda_{a,1}$);

\lineDashedAngleRadius(12, 1.2);
\lineDashedAngleRadius(1, 1.2);
\lineDashedAngleRadius(-12, 1.2);

\arrowDashedAngleRadius(-30, 3.5, $\lambda_{a,P-1}$);
\arrowNormalAngleRadius(-75, 3.5, $\mu_{\pi,0}$);

\lineNormalAngleRadius(-83, 1.2);
\lineNormalAngleRadius(-91, 1.2);
\arc(-75, -100, 4, $n_{a,0}$);

\arrowNormalAngleRadius(-100, 3.5, $\mu_{\pi,0}$);
\arrowNormalAngleRadius(-130, 3.5, $\mu_{\pi,1}$);

\lineNormalAngleRadius(-138, 1.2);
\lineNormalAngleRadius(-146, 1.2);
\arc(-130, -155, 4, $n_{a,1}\quad$);

\arrowNormalAngleRadius(-155, 3.5, $\mu_{\pi,1}$);
\lineNormalAngleRadius(-170, 1.2);
\lineNormalAngleRadius(-185, 1.2);
\arrowNormalAngleRadius(-200, 3.5, $\mu_{\pi,R-1}$);

\lineNormalAngleRadius(-208, 1.2);
\lineNormalAngleRadius(-216, 1.2);
\arc(-200, -225, 4, $n_{a,R-1}\qquad$);

\arrowNormalAngleRadius(-225, 3.5, $\quad\mu_{\pi,R-1}$);

\end{tikzpicture}
\caption{The messages sent by $a \in A_\pi$.}
\label{fig:messages}
\end{figure}

Let $a \in A$ and define $\bC_a$ to be the vector $(C_{a, 1}, \ldots, C_{a, m})$ whose $i$th
coordinate $C_{a, i}$ records the number of messages in the mosaic piece of $a$ containing
an outgoing $o_i$ arrow---i.e.~messages having label $\nu$ for which $o_i(x, y) \in \nu$.
Clearly (see Fig.~\ref{fig:messages}), $$\bC_a = \sum \{ \bf_{\pi, \s'}(a) : |\s'| = p \}
+ \sum \{ \bg_{\pi, \t'} : |\t'| = q \}$$ and, by Lemma~\ref{lem:total_count},
\begin{equation}
\label{tmp10}
\bC_a = \bC.
\end{equation}

We now build $\gA$ in four steps as follows.

\vspace{1em}
\noindent\textbf{Step 1} (Fixing the 1-types) \hspace{.5em} For all 1-types $\pi$ and all
$a \in A_\pi$, set $\tpA{a} = \pi$. Since 
 1-types do not contain equality literals such as $x = c$ or $x \neq d$, this is meaningful.
Moreover, since
the sets $A_\pi$ are pairwise disjoint, no clashes arise.
For each individual constant $c$, let $c^\gA = b_c$. Since $\Delta$ contains naming formulas
$c(c)$ and $\neg c(d)$, where $c$ and $d$ are distinct, it is obvious that the unique names assumption
is respected: that is, $c^\gA \neq d^\gA$ for $c$ and $d$ distinct. 
Therefore, we may, as before, write $c$ instead of the more correct $c^\gA$, since no confusion arises.

\vspace{1em}
\noindent\textbf{Step 2} (Fixing the invertible message-types) \hspace{.5em} We first assign the invertible
message-types for all pairs $c, d \in A$, where $c$ and $d$ are distinct individual constants,
as dictated by $\Delta$. To see that this is possible, suppose $\Delta \models \lambda(c,d)$, where
$\lambda \in \lps$, say. Then by Lemma~\ref{tech_lem}, $c$ will `want' to have $\s$-spectrum $\gamma^c_{\pi,\s}$, which, by the assumed compatibility of $\Delta$ with these constants, implies
that $\gamma^c_{\pi,\s} = \bC_\lambda$. That is: the mosaic piece for $c$ will send an invertible message of type $\lambda$. Similarly,  
the mosaic piece for $d$ will send an invertible message of type $\lambda^{-1}$; and these two messages may be paired up with
each other. We then put, by the constraints~(\ref{e2_c2}), all other $\lambda$-labelled messages and all $\lambda^{-1}$-labelled messages in one-to-one correspondence, for each invertible message of type $\lambda$.
Thus, if $a$ sends a $\lambda$-labelled message and $b$ `wants to receive it' (i.e.~sends a
$\lambda^{-1}$-labelled message), we set $\tpA{a, b} = \lambda$. 
To ensure that each assignment $\tpA{a, b}$ ($a, b \in A$) is valid, we
need only check that $a$ and $b$ are distinct. But, since $x_\lambda > 0$, by the
constraints~(\ref{e2_c3}) we must have $\tpA{a} \neq \tpA{b}$ hence, by construction, $a$ and $b$
belong to the disjoint sets $A_{\tpA{a}}$ and $A_{\tpA{b}}$. Moreover, since every element
sends at most one invertible message of each type, no conflicts with the present assignment will
arise in future assignments. 

\vspace{1em}
\noindent\textbf{Step 3} (Fixing the non-invertible message-types) \hspace{.5em}
Start by decomposing
each non-empty set $A_\pi$ into three pairwise disjoint (possibly empty) sets $A_{\pi, 0}$, $A_{\pi, 1}$ and
$A_{\pi, 2}$ having at least $mC$ elements each, since $|A_\pi| \geq 3mC$, and with the following
restriction: if $c^\gA \in A_\pi$, for some constant $c\in \cO$ (i.e.~$\pi = \tp{\Delta}{c}$),
pick these three sets such that $c^\gA \in A_{\pi, 0}$. This is possible by our choise of solution
of $\cE$.

Let $\mu_{\pi, \t}$ be any
non-invertible message type, with $\pi = \tpStart{\mu_{\pi, \t}}$ and $\rho = \tpEnd{\mu_{\pi, \t}}$
being its starting and terminal 1-types. Let $a \in A$ be an element that sends $n_{a, \t} > 0$ messages
of type $\mu_{\pi, \t}$. Clearly, then, $a \in A_\pi$ and there is a vector $\bu > \bO$ such that
$\bg_{\pi, \t}(a) = \bu$, hence $\bg_{\pi, \t}^{-1}(\bu)$ is non-empty. As a result, $z_{\pi, \t, \bu} =
|\bg_{\pi, \t}^{-1}(\bu)|$ is positive, thus, by the constraints~(\ref{e3_c1}), $\sum \{ y_{\rho, \epsilon,
\bu'} \mid \bu' \leq \bC \}$ is also positive. This implies that $A_\rho$ is non-empty since, clearly,
$|A_\rho| = \sum \{ y_{\rho, \epsilon, \bu'} \mid \bu' \leq \bC \}$ and hence has been partitioned
into three sets $A_{\rho, 0}$, $A_{\rho, 1}$ and $A_{\rho, 2}$ having at least $mC$ elements each.

Let us assume for the moment that $a$ is not equal to the interpretation of any constant. We employ 
the standard `circular witnessing' technique of ~\cite{gradel1997decision}.
The non-empty set $A_\pi$ has been partitioned into 
$A_{\pi, 0}$, $A_{\pi, 1}$ and $A_{\pi, 2}$; since $a \in A_\pi$, let $j$ be such that $a \in
A_{\pi, j}$, $0 \leq j \leq 2$. Now, let $k = j + 1 \, (\text{mod } 3)$ and select $n_{a, \t}$ elements
$b$ from $A_{\rho, k}$ that have not already been chosen to receive any messages (invertible or
non-invertible) and set, for each one of those, $\tpA{a, b} = \mu_{\pi, \t}$. Note that there are enough
elements in $A_{\rho, k}$ to choose from, as $a$ can send at most $mC$ messages (of invertible
or non-invertible type). Suppose, on the other hand, 
$a$ is (the interpretation of) some constant, say $c$, and that $\Delta$ specifies that
$c$ sends messages of type $\mu$ to $k$ other database elements.
Then by Lemma~\ref{tech_lem}, $c$ will `want' to have $\t$-spectrum $\delta^c_{\pi,\t}$, which, by the assumed compatibility of $\Delta$ with these constants, implies
that $\delta^c_{\pi,\t} \geq k \cdot \bC_\mu$. That is: the mosaic piece for $c$ will send at least $k$ non-invertible messages of type $\lambda$. Now set $\tpA{c, d} = \mu_{\pi, \t}$ for each of the $k$ elements as required by $\Delta$,
subtract $k$ from the value $n_{a, \t}$ to take account of the fact that these non-invertible message of type $\mu_{\pi, \t}$ have been dealt with, and proceed as 
before. Since $\mu$
is not an invertible 2-type, we can be sure that it has not already been set during Step 2. Moreover, `circular witnessing' ensures that no clashes arise during Step 3. 

\vspace{1em}
\noindent\textbf{Step 4} (Fixing the remaining 2-types) \hspace{.5em} 
If $c$ and $d$ are distinct individual constants for which $\tp{\gA}{c,d}$ has not been defined, set
$\tp{\gA}{c,d} = \tp{\Delta}{c,d}$; of course, this must a silent type, since all messages sent within the database have been accounted for.
If $\tpA{a, b}$ has not yet been defined, set it to be the 2-type $$\pi \cup \rho[y / x]
\cup \{ \lnot e \mid e \text{ is a guard-atom not involving} = \},$$ where $\pi = 
\tpA{a}$, $\rho = \tpA{b}$ and $\rho[y / x]$ is the result of replacing $x$ by $y$ in $\rho$. Note
that, since $C_1, \ldots, C_m$ are by assumption positive integers, $a$ and $b$ certainly send
some messages and, thus, the constraints~(\ref{e2_c4}) and~(\ref{e2_c5}) ensure that both
$\alpha \land \pi$ and $\alpha \land \rho$ are satisfiable.
Note also that this is the point in the proof where we make essential use of
of the fact that $\phi$ is guarded. In particular, the conjuncts $\forall x \forall y(e_j(x, y) \rightarrow (\ldots))$
in (\ref{eq:nf}) are satisfied by the assignments in this step, because the antecedents
are false. 

\vspace{1em}
This completes the definition of $\gA$ and we now show that $\gA \models \phi \land \Delta$. Referring
to the normal form in Lemma~\ref{lem:gc2_nf}, notice that none of the above steps violates the conjuncts
$$\forall x \, \alpha \land \bigwedge_{1 \leq j \leq n} \forall x \forall y (e_j(x,y) \rightarrow
(\beta_j \lor x = y)).$$ Furthermore, the conjuncts $$\bigwedge_{1 \leq i \leq m} \forall x \exists_{=C_i} \, y (o_i(x, y) \land x \neq y)$$ are all satisfied taking into account
Equation~(\ref{tmp10}) and
the fact that none of the 2-types assigned in Step 4 is a message type. The database $\Delta$ is evidently
satisfied by the above construction. 
\end{proof}

Now, observe that all the constraints in $\cE$ have the forms
\begin{align*}
  x_1 + \ldots + x_n \: &= \:  x, \\
  x_1 + \ldots + x_n \: &\geq \:  c, \\
                   x \: &= \:    0, \\
                x_1  \: &= \: x_2, \\
                  x \: &\geq \: c, \\
  x > 0 \: \Rightarrow \: x_1 + \ldots + x_n \: &> \: 0,
\end{align*}
where $x$, $x_1, \ldots x_n$ are variables and $c$ is a constant. Recall that the size $\|\cE\|$ of $\cE$
is exponential in $\sizeOf{\phi \wedge \Delta}$. Our goal is to find a solution
of $\cE$ in $\mathbb{N}$. The following lemma shows that we can transform a system of the
above form into an integer programming problem which, in turn, can be regarded as a linear
programming problem. This is important because linear programming
is in $\p$, whereas integer programming is in $\np$.
\begin{lem}[\cite{calvanese1996unrestricted}; see also \cite{pratt07}, Lemma 15]
\label{papadimitriou}
Let $\Delta$, $\phi$ and $\cE$ be as above. An algorithm exists to determine whether $\cE$ has a
solution over $\mathbb{N}$ in time bounded by a polynomial function of $\|\cE\|$.
\end{lem}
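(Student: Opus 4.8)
The plan is to reduce feasibility of $\cE$ over $\mathbb{N}$ to a bounded number of linear-programming feasibility queries, exploiting the fact that linear programming feasibility is in $\p$ whereas general integer programming is only in $\np$. The first step is to observe that feasibility of $\cE$ over $\mathbb{N}$ is equivalent to its feasibility over the non-negative rationals $\mathbb{Q}_{\geq 0}$, with the strict inequalities in each conditional constraint $x > 0 \Rightarrow x_1 + \dots + x_n > 0$ interpreted literally. The forward direction is immediate, since $\mathbb{N} \subseteq \mathbb{Q}_{\geq 0}$. For the converse, given a rational solution I would clear denominators by multiplying every value by a common positive integer $d$: the homogeneous equations $x_1 + \dots + x_n = x$, the equalities $x_1 = x_2$, and the constraints $x = 0$ are plainly preserved; each lower bound $x \geq c$ or $x_1 + \dots + x_n \geq c$ survives because $c \geq 0$ and $d \geq 1$; and the positivity of any linear form is unchanged by scaling through a positive factor, so every conditional constraint is preserved as well. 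Thus it suffices to decide $\mathbb{Q}_{\geq 0}$-feasibility.

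The only constraints of $\cE$ that are not linear are the conditionals $x > 0 \Rightarrow x_1 + \dots + x_n > 0$, which are genuinely disjunctive (``$x = 0$ or $x_1 + \dots + x_n > 0$''). The key device for handling them is a support-maximization argument. Call a subset $S$ of the variables a \emph{good support} if some point of $\mathbb{Q}_{\geq 0}$ whose set of strictly positive variables is exactly $S$ satisfies all of $\cE$, linear constraints and conditionals alike. The central observation is that good supports are closed under union: given solutions $t_1, t_2$ with supports $S_1$ and $S_2$, the average $\tfrac{1}{2}(t_1 + t_2)$ has support $S_1 \cup S_2$, satisfies every homogeneous equation and equality by linearity, satisfies every lower bound by convexity, and satisfies every conditional because, if $x > 0$ in the average, then $x$ was already positive in $t_1$ or $t_2$, forcing the corresponding consequent positive there and hence in the average. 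Consequently there is a well-defined maximal good support $S_{\max}$ whenever any good support exists, together with a witnessing solution, and $\cE$ is feasible over $\mathbb{Q}_{\geq 0}$ precisely when such a support exists.

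To compute $S_{\max}$ in polynomial time I would run a decreasing fixpoint over linear-programming queries. Starting from the maximal support of the purely linear part of $\cE$ --- obtained by, for each variable, solving the linear program that maximizes that variable and recording it as positive exactly when the optimum is strictly positive, then summing the witnesses --- I would repeatedly pin a variable $x$ to zero whenever some conditional $x > 0 \Rightarrow x_1 + \dots + x_n > 0$ has all of $x_1, \dots, x_n$ already pinned to zero, and then recompute the maximal support of the linear system augmented with the new zero-constraints. Each round strictly enlarges the set of variables pinned to zero, so the process halts after at most polynomially many rounds, each consisting of polynomially many linear programs; if at some stage the augmented linear system becomes infeasible, $\cE$ is infeasible. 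At the fixpoint the witnessing solution satisfies the linear constraints by construction and every conditional by the stopping condition, so it certifies feasibility.

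The main obstacle is precisely the treatment of the conditional constraints: a naive ``maximize every support simultaneously'' fails, since forcing one variable positive can make satisfying a chain of implications impossible --- for instance the conditionals $x > 0 \Rightarrow y > 0$ and $y > 0 \Rightarrow z > 0$ together with $z = 0$ are consistent only by setting $x = y = z = 0$. The decreasing fixpoint above is what resolves this, and its termination in a polynomial number of $\p$-solvable linear-programming steps, each of whose programs has size polynomial in $\|\cE\|$, yields the stated polynomial bound. The details of this reduction are given in~\cite{calvanese1996unrestricted} and~\cite[Lemma~15]{pratt07}.
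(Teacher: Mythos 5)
Your proposal is correct: the scaling argument showing that $\mathbb{N}$-feasibility of $\cE$ coincides with $\mathbb{Q}_{\geq 0}$-feasibility (the non-negative lower bounds and homogeneous equations are all preserved under multiplication by a positive integer), together with the zero-propagation fixpoint that handles the conditional constraints $x > 0 \Rightarrow x_1 + \dots + x_n > 0$ via polynomially many linear-programming queries, is sound and yields the stated bound. The paper itself gives no proof---it defers entirely to \cite{calvanese1996unrestricted} and \cite[Lemma~15]{pratt07}---and your argument is essentially a correct reconstruction of the approach taken in those references (LP relaxation over the rationals plus iterated forcing of variables to zero), so it matches the intended proof rather than departing from it.
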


We have now established the required upper complexity-bound for finite satisfiability in $\GCD$. 
\begin{thm}
\label{thm:main}
The finite satisfiability problem in $\GCD$ is in $\dexp$.
\end{thm}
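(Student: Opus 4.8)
The plan is to assemble the machinery of the preceding two sections into a single deterministic algorithm and bound its running time. Given a $\GCD$-formula, I would first put it in the standard shape $\psi \wedge \Delta_0$ with $\psi$ a $\GC$-formula and $\Delta_0$ a database, and then apply Lemma~\ref{lem:gc2_nf} to compute, in polynomial time, a normal-form sentence $\phi$ as in~\eqref{eq:nf} that entails $\psi$ and to a model of which every model of $\psi$ can be expanded; hence $\psi \wedge \Delta_0$ and $\phi \wedge \Delta_0$ are equisatisfiable over finite structures. I would then augment $\sigma$ with the naming predicates, the naming formulas and the $\lceil \log((mC)^2+1)\rceil$ spare predicates, all of which enlarge $\sizeOf{\phi \wedge \Delta_0}$ only polynomially.

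The heart of the argument is the exact characterization of finite satisfiability given by Lemmas~\ref{lma:orthoPlus} and~\ref{antistropho}: $\phi \wedge \Delta_0$ is finitely satisfiable if and only if there exist a completion $\Delta \supseteq \Delta_0$ over $\sigma$ and a system of constants $\gamma^c_{\pi, \s}$, $\delta^c_{\pi, \t}$, $\eta_\lambda$ that is compatible with $\Delta$ and satisfies~\eqref{eq:gammadelta}--\eqref{eq:delta}, such that the associated system $\cE$ has a solution over $\mathbb{N}$. For the forward direction one replaces a finite model by a chromatic one over the same (still finite) domain, uses it to complete $\Delta$ and to read off $\gamma^c_{\pi, \s} = \spA{\s}{c}$, $\delta^c_{\pi, \t} = \tlA{\t}{c}$ and the $\eta_\lambda$, whereupon Lemma~\ref{lma:orthoPlus} produces a solution of $\cE$; the converse is precisely Lemma~\ref{antistropho}. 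The decision procedure I propose therefore enumerates every completion $\Delta$ of $\Delta_0$ over $\sigma$, reads the numbers $\eta_\lambda$ off $\Delta$, and guesses the vectors $\gamma^c_{\pi, \s}$ (for $\s \in \CcomS_c$) and $\delta^c_{\pi, \t}$ (for $\t \in \CcomT_c$). For each guess that is compatible with $\Delta$ and satisfies~\eqref{eq:gammadelta}--\eqref{eq:delta} it builds $\cE$ and uses Lemma~\ref{papadimitriou} to test feasibility over $\mathbb{N}$, accepting iff some branch succeeds. Soundness and completeness of this procedure are immediate from the characterization.

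The step I expect to be the main obstacle is the complexity accounting, since the algorithm branches twice before ever invoking the linear-programming test. Here I would argue as follows. The number of ground $\sigma$-literals over the constants of $\Delta_0$ is polynomial in $\sizeOf{\phi \wedge \Delta_0}$, so there are at most exponentially many completions $\Delta$. For each completion the index sets $\CcomS_c$ and $\CcomT_c$ have polynomial size, so only polynomially many vectors $\gamma^c_{\pi, \s}$, $\delta^c_{\pi, \t}$ must be guessed; crucially, each is a vector $\bu \leq \bC$ and so is representable in $m\lceil\log(C+1)\rceil$ bits, which stays polynomial \emph{even though} $C$ is presented in binary and may be exponentially large in value. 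Thus the whole system of constants is specified by polynomially many bits, giving at most exponentially many guesses per completion and hence exponentially many (completion, guess) pairs overall; checking compatibility and~\eqref{eq:gammadelta}--\eqref{eq:delta} costs only polynomial time per pair.

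Finally, since $\sizeOf{\cE}$ is bounded by an exponential function of $\sizeOf{\phi}$, Lemma~\ref{papadimitriou} decides feasibility of $\cE$ in time polynomial in $\sizeOf{\cE}$, and therefore exponential in $\sizeOf{\phi \wedge \Delta_0}$. The product of exponentially many branches with an exponential per-branch cost is again exponential, so the entire procedure runs in $\dexp$, establishing that finite satisfiability for $\GCD$ is in $\dexp$.
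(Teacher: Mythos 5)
Your proposal is correct and follows essentially the same route as the paper's own proof: normal form via Lemma~\ref{lem:gc2_nf}, augmentation with naming and spare predicates, enumeration/guessing of a consistent completion $\Delta$ together with the vectors $\gamma^c_{\pi,\s}$, $\delta^c_{\pi,\t}$ (checked against \eqref{eq:gammadelta}--\eqref{eq:delta} and compatibility with $\Delta$), feasibility of $\cE$ via Lemma~\ref{papadimitriou}, and correctness via Lemmas~\ref{lma:orthoPlus} and~\ref{antistropho}. Your more explicit accounting of why all guesses fit into polynomially many bits (so that the exhaustive search stays exponential) is exactly the observation the paper compresses into its final sentence.
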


\begin{proof}
Let a $\GCD$-formula $\psi \wedge \Gamma$ and $\psi$ be given. By Lemma~\ref{lem:gc2_nf}, convert $\psi$ to a
formula $\phi$ of the form~\eqref{eq:nf}, and let $\sigma$ be the signature of $\phi$ together with additional
unary predicates:  
all the naming predicates for the individual constants in $\Delta$ and $\lceil \log((mC)^2+1) \rceil$ spare predicates. If $\Gamma$ is inconsistent, fail; otherwise, 
add all the naming formulas to $\Gamma$, and 
guess a consistent completion $\Delta \supseteq \Gamma$.
Having fixed $\phi$, $\sigma$  and $\Delta$, compute,
for each individual constant $c$, the sets of strings $\CcomS_c$ and $\CcomT_c$, and then guess
the 
vectors $\gamma_{\pi, \s}^c$, for
$\s \in \CcomS_c$
and $\delta_{\pi, \t}^c$, for 
$\t \in \CcomT_c$. Check that these vectors satisfy the 
conditions~\eqref{eq:gammadelta}--\eqref{eq:delta}, and that they are consistent with $\Delta$, failing if not. Compute the numbers $\eta_\lambda$ from $\Delta$.
Now write the system of equations $\cE$. Determine whether $\cE$ has a solution over $\mathbb{N}$.
By Lemma~\ref{papadimitriou}, this can be determined in exponential time. 
If $\cE$ has a solution for any of these guesses, then succeed; otherwise fail.
By Lemmas~\ref{ortho} and~\ref{antistropho}, the above procedure has a successful run (for some guess) 
if and only if $\psi \wedge \Gamma$ has a finite model. All guesses involve only a polynomial amount of
data, and so may be exhaustively searched in exponential time.
\end{proof}

Having dealt with the finite satisfiability problem for $\GC$, we turn now to the corresponding
satisfiability problem. We employ `extended arithmetic', over the set $\mathbb{N} \cup \{ \aleph_0 \}$,
where addition and multiplication are extended in the obvious way, i.e.~$\aleph_0 + \aleph_0 =
\aleph_0 \cdot \aleph_0 = \aleph_0$, $n + \aleph_0 = \aleph_0 + n = \aleph_0$, for all
$n \in \mathbb{N}$, etc. After all the gruel we have just chomped our way through, Theorem~\ref{thm:mainGeneral} is dessert.
\begin{thm}
\label{thm:mainGeneral}
The satisfiability problem in $\GCD$ is in $\dexp$.
\end{thm}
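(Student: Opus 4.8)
The plan is to re-run the development of Sections~\ref{sec:transform} and~\ref{sec:main_result} almost verbatim, replacing $\mathbb{N}$ everywhere by the extended domain $\mathbb{N} \cup \{\aleph_0\}$ and reading all arithmetic in the extended sense. The crucial structural point that makes this painless is that, although the variables of Table~\ref{table:var_meanings} count \emph{elements} and hence may take the value $\aleph_0$ in an infinite model, the spectra and tallies themselves---the vectors $\bu,\bv,\bw$---remain bounded by $\bC$, since each element still sends exactly $C_i$ many $o_i$-messages. The index sets are therefore unchanged and the system $\cE$ is literally the same; only its solution domain is enlarged. The goal is to prove that $\phi \wedge \Delta$ is satisfiable over an arbitrary (possibly infinite) domain if and only if $\cE$ has a solution over $\mathbb{N} \cup \{\aleph_0\}$.

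For the forward direction I would observe that, given any model $\gA \models \phi \wedge \Delta$, we may first pass to a chromatic model on the same domain (the chromatic-model lemma of \cite{pratt07} does not assume finiteness), and then read off the values of the variables in Table~\ref{table:var_meanings}, now interpreted in $\mathbb{N} \cup \{\aleph_0\}$. I would then revisit Lemmas~\ref{lem:E1}--\ref{ortho} and check that each goes through under extended arithmetic. None of those proofs uses finiteness in an essential way: the spectrum/tally identities of Lemma~\ref{spec_lem} are partitions of the (finite) bundle of messages emitted by a single element, and the remaining constraints ($\cE_2$--$\cE_4$) are pairing identities, local existence conditions, or lower bounds supplied by $\Delta$, all of which are insensitive to whether the counted sets are finite or countably infinite.

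For the converse I would adapt the model construction of Lemma~\ref{antistropho} to allow the sets $A_\pi$ to be countably infinite. Lemmas~\ref{tech_lem} and~\ref{lem:total_count} are unaffected, since the functions $\bf_{\pi,\s}$ and $\bg_{\pi,\t}$ still take values among the finitely many vectors $\leq \bC$. In Step~2 the invertible-message pairing uses $x_\lambda = x_{\lambda^{-1}}$ (constraint~\eqref{e2_c2}), so the two sides are simultaneously finite or simultaneously $\aleph_0$ and a bijection always exists; in Step~3 non-invertible receivers may be shared among arbitrarily many senders, and the three-way circular-witnessing partition of an infinite $A_\pi$ simply yields three infinite classes, so witnesses are never exhausted. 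The scaling trick that forced finite nonzero values to be $\geq 3mC$ is applied only to the finite components of the solution, the $\aleph_0$ components being left untouched. The verification that the resulting $\gA$ satisfies $\phi \wedge \Delta$ is then identical to the finite case.

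Finally, for the complexity bound I would invoke the extended form of Lemma~\ref{papadimitriou}: the method of \cite{calvanese1996unrestricted} decides solvability of a system of the given constraint-forms over $\mathbb{N} \cup \{\aleph_0\}$ in time polynomial in $\|\cE\|$, hence exponential in $\sizeOf{\phi \wedge \Delta}$. The algorithm of Theorem~\ref{thm:main} is then reused unchanged, except that the final feasibility test is taken over $\mathbb{N} \cup \{\aleph_0\}$ rather than $\mathbb{N}$. The only point demanding genuine care---the part I would expect to be the main obstacle---is this last step: one cannot afford to guess which of the exponentially many variables equal $\aleph_0$, so one exploits the monotonicity of the constraints to compute, by a single fixpoint, the largest consistent set of variables that may be sent to $\aleph_0$, and then checks feasibility of the residual finite system by linear programming. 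Given this, the satisfiability problem in $\GCD$ is in $\dexp$.
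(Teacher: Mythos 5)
Your reduction of satisfiability to solvability of $\cE$ over $\mathbb{N} \cup \{\aleph_0\}$ matches the paper's: the paper likewise just ``proceeds exactly as in the finite case,'' and your checks that Lemmas~\ref{lem:E1}--\ref{ortho} survive extended arithmetic and that the construction of Lemma~\ref{antistropho} tolerates countably infinite sets $A_\pi$ are the right ones. The gap lies in the final step---deciding solvability over extended arithmetic---which you yourself flag as the main obstacle. The paper disposes of it with a single observation you are missing: the constraint forms of $\cE$ are closed under multiplying a solution through by $\aleph_0$ (for instance, if $x_1 + \cdots + x_n \geq c$ holds with $c \geq 1$, some $x_i$ is positive and becomes $\aleph_0$ after scaling, so the inequality is preserved). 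Hence if $\cE$ is solvable at all, it is solvable over the two-element set $\{0, \aleph_0\}$; the system then becomes purely Boolean, its constraints can with a little care be written as Horn clauses, and satisfiability is checked in time polynomial in $\|\cE\|$. No linear programming and no mixed finite/infinite solutions ever arise.

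Your substitute---a fixpoint computing ``the largest consistent set of variables that may be sent to $\aleph_0$,'' followed by a linear program on the residual finite system---is not justified as stated. You specify no propagation rules for the fixpoint, and you do not prove that (a) a unique largest such set exists, (b) it can be computed without already solving the problem, and (c) testing only this maximal set is complete, i.e.\ that no solution with a strictly smaller $\aleph_0$-part but nonzero finite values is thereby missed. All three facts are true, but their natural proofs go through precisely the scaling observation above: solutions of $\cE$ are closed under addition and under scaling by $\aleph_0$, so their supports are closed under union, and any solution of the residual system with a nonzero finite component could be rescaled to enlarge the $\aleph_0$-set, contradicting maximality---which also shows your residual LP is vacuous (only the all-zero assignment can survive off the maximal set). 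Moreover, your appeal to an ``extended form'' of Lemma~\ref{papadimitriou} is unsupported: that lemma, and the method of \cite{calvanese1996unrestricted}, concern solutions over $\mathbb{N}$ only. In short, the skeleton of your argument is correct, but the decision step needs the $\aleph_0$-scaling argument (or an equivalent), and once that is in place your fixpoint-plus-LP machinery collapses to the paper's Boolean/Horn-clause check.
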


\begin{proof}[Sketch proof]
We proceed exactly as in the finite case, except that we seek solutions to $\cE$ over extended arithmetic. 
It is evident that, if $\bv$ is a solution of $\cE$, then so is $\aleph_0 \bv$. Thus, we may confine attention
to solutions over the 2-element set $\{0, \aleph_0\}$. Such a system is
essentially Boolean, so its constraints can be viewed as formulas of propositional logic; and,
with a little care, can be written as Horn clauses. 
(See \cite{pratt07} for more details.) 
This establishes membership in
$\dexp$.
\end{proof}

It is well known that the satisfiability and finite satisfiability problems for $\GC$ are
$\dexp$-hard. Thus, all the complexity bounds given above are tight.

\bibliographystyle{alpha}

\begin{thebibliography}{ANvB98}

\bibitem[ANvB98]{andreka1998modal}
Hajnal Andr{\'e}ka, Istv{\'a}n N{\'e}meti, and Johan van Benthem.
\newblock Modal languages and bounded fragments of predicate logic.
\newblock {\em Journal of Philosophical Logic}, 27(3):217--274, 1998.

\bibitem[BP00]{bp00}
J.~Biskup and Torsten Polle.
\newblock Decomposition of database classes under path functional dependencies
  and onto constraints.
\newblock In {\em Proc.~Foundations of Information and Knowledge Systems,
  Proceedings of First International Symposium, (FoIKS)}, volume 1762 of {\em
  Lecture Notes in Computer Science}, pages 31--49. Springer, 2000.

\bibitem[Cal96]{calvanese1996unrestricted}
Diego Calvanese.
\newblock Unrestricted and finite model reasoning in class-based representation
  formalisms. {P}h{D} thesis, {D}ipartimento di {I}nformatica e {S}istemistica,
  {U}niversit{\`a} di {R}oma {L}a {S}apienza.
\newblock In {\em In Proc. of the 4th Int. Conf. on Deductive and
  Object-Oriented Databases (DOOD-95), number 1013 in Lecture Notes in Computer
  Science}. Citeseer, 1996.

\bibitem[GKV97]{gradel1997decision}
Erich Gr{\"a}del, Phokion~G Kolaitis, and Moshe~Y Vardi.
\newblock On the decision problem for two-variable first-order logic.
\newblock {\em Bulletin of symbolic logic}, 3(01):53--69, 1997.

\bibitem[Gr{\"a}99a]{gradel1999restraining}
Erich Gr{\"a}del.
\newblock On the restraining power of guards.
\newblock {\em Journal of Symbolic Logic}, pages 1719--1742, 1999.

\bibitem[Gr{\"a}99b]{gradel1999modal}
Erich Gr{\"a}del.
\newblock Why are modal logics so robustly decidable?
\newblock In {\em Bulletin EATCS}. Citeseer, 1999.

\bibitem[IW94]{iw94}
M.~Ito and G.~Weddell.
\newblock Implication problems for functional constraints on databases
  supporting complex objects.
\newblock {\em Journal of Computer and System Sciences}, 49(3):726--768, 1994.

\bibitem[Kaz04]{kazakov2004polynomial}
Yevgeny Kazakov.
\newblock A polynomial translation from the two-variable guarded fragment with
  number restrictions to the guarded fragment.
\newblock In {\em Logics in Artificial Intelligence}, pages 372--384. Springer,
  2004.

\bibitem[PH07]{pratt07}
Ian Pratt-Hartmann.
\newblock Complexity of the guarded two-variable fragment with counting
  quantifiers.
\newblock {\em J. Log. Comput.}, 17(1):133--155, 2007.

\bibitem[PH09]{pratt2009data}
Ian Pratt-Hartmann.
\newblock Data-complexity of the two-variable fragment with counting
  quantifiers.
\newblock {\em Information and Computation}, 207(8):867--888, 2009.

\bibitem[TW05a]{tw05b}
David Toman and Grant Weddell.
\newblock On path-functional dependencies as first-class citizens in
  description logicsion between inverse features and path-functional
  dependencies in description logics.
\newblock In {\em Proc.~Description Logics (DL)}, 2005.

\bibitem[TW05b]{tw05a}
David Toman and Grant Weddell.
\newblock On the interaction between inverse features and path-functional
  dependencies in description logics.
\newblock In {\em Proc.~International Joint Conference on Artificial
  Intelligence (IJCAI)}, pages 603--608, 2005.

\bibitem[TW08]{tw08}
David Toman and Grant Weddell.
\newblock On keys and functional dependencies as first-class citizens in
  description logics.
\newblock {\em Journal of Automated Reasoning}, 40(2--3):117--132, 2008.

\bibitem[Var96]{vardi1996modal}
Moshe~Y Vardi.
\newblock Why is modal logic so robustly decidable?
\newblock {\em Descriptive complexity and finite models}, 31:149--184, 1996.

\bibitem[Wed89]{weddell89}
G.~Weddell.
\newblock A theory of functional dependencies for object oriented data models.
\newblock In {\em Proc.~International Conference on Deductive and
  Object-Oriented Databases}, pages 165--184, 1989.

\end{thebibliography}

\end{document}